\documentclass[11pt,a4paper]{article}
\usepackage{jheppub}
\usepackage{amsthm}
\usepackage{microtype}
\usepackage[section]{placeins}
\hypersetup{pdftitle={A gauge-invariant measure for lattice chiral fermions},
  pdfauthor={},pdfsubject={}}
\newtheorem{theorem}{Theorem}[section]
\newtheorem{proposition}[theorem]{Proposition}
\newtheorem{lemma}[theorem]{Lemma}

\theoremstyle{remark}

\DeclareMathOperator{\Tr}{Tr}
\DeclareMathOperator{\tr}{tr}
\DeclareMathOperator{\SU}{SU}
\DeclareMathOperator{\U}{U}
\DeclareMathOperator{\Ad}{Ad}
\DeclareMathOperator{\im}{im}
\newcommand{\Id}{\mathbf 1}
\newcommand{\dd}{\mathrm d}
\newcommand{\ii}{\mathrm i}
\newcommand{\eps}{\varepsilon}
\allowdisplaybreaks[1]

\title{A gauge-invariant measure for lattice chiral fermions}

\author[a,b,c]{Nathaniel Craig}
\affiliation[a]{Kavli Institute for Theoretical Physics,
  University of California,\\
  Santa Barbara, CA 93106, USA}

\affiliation[b]{Department of Physics,
  University of California,\\
  Santa Barbara, CA 93106, USA}

\affiliation[c]{OpenAI,\\
  San Francisco, CA 94158, USA}

\emailAdd{ncraig@ucsb.edu}

\abstract{We nonperturbatively construct a gauge-invariant measure for overlap fermions  on four-dimensional Euclidean lattices in a class of nonabelian chiral gauge theories that includes the Standard Model. On admissible gauge fields, comparing chiral projectors at successive lattice spacings yields an exponentially local current and establishes the global properties needed for L\"{u}scher’s reconstruction of a smooth, gauge-invariant fermion measure. This provides a candidate nonperturbative lattice definition of the Standard Model.
}

\begin{document}
\maketitle
\flushbottom

\section{Introduction}
\label{sec:introduction}

What is the Standard Model? On one hand the answer is obvious: it is the astonishingly successful theory of the known fundamental particles and three of the four forces that bind them \cite{GlashowElectroweak,WeinbergLeptons,SalamElectroweak,FritzschGellMannLeutwyler}, validated by experiments across eighteen decades of energy yet simple enough to put on a coffee cup. On the other hand, it's fair to say that we don't even know what the theory {\it is}. Without a nonperturbative regulator respecting both gauge invariance and chiral symmetry, we have lacked a complete definition of the theory since its inception.\footnote{We also don't know the global structure of the Standard Model gauge group, which is invisible to perturbation theory and could take one of four possible forms \cite{HucksGlobal,TongGlobal}; here we treat all four possibilities.}

Lattice gauge theory provides a promising route to such a definition, via either a path integral formulation on a Euclidean spacetime lattice or a Hamiltonian formulation with continuous time on a spatial lattice. Either approach furnishes a nonperturbative, gauge invariant regulator that allows gauge theories to be defined at finite lattice spacing. Lattice QCD has turned this idea into a powerful quantitative framework for the strong interactions, but the weak interactions pose a challenge absent in QCD: left- and right-handed fermions carry different gauge charges. Accommodating this chiral structure without sacrificing gauge invariance remains a longstanding open challenge in extending lattice methods to the full Standard Model. More broadly, a lattice formulation of nonabelian chiral gauge theories in four dimensions remains a major open problem in theoretical physics.

In this paper we present a Euclidean lattice definition of a class of nonabelian gauge theories with chiral fermions in appropriate representations, including the Standard Model. Our construction uses overlap fermions with an admissibility constraint on the gauge fields. Overlap fermions shift the burden from chiral symmetry to gauge invariance, where we construct the necessary gauge-invariant measure satisfying L\"{u}scher's reconstruction criteria.

\subsection{Overlap fermions}

The challenge of realizing chiral fermions on the lattice is encoded in the Nielsen-Ninomiya theorem \cite{NielsenNinomiya}: under modest locality and regularity assumptions, a translation-invariant lattice Dirac operator cannot simultaneously have the desired continuum behavior, avoid extra species, and satisfy the conventional chirality condition $\{\gamma_5,D\}=0$. Wilson showed how to eliminate the extra species at the expense of chiral symmetry by adding a momentum-dependent mass \cite{WilsonFermions}. This can be achieved by replacing the massless Dirac operator in the fermion action with the Wilson operator
\begin{equation}
D_W(p)
= \frac{i}{a}\sum_{\mu=1}^{4}\gamma_\mu \sin(ap_\mu)
+ \frac{r}{a}\sum_{\mu=1}^{4}\left[1-\cos(ap_\mu)\right]
\end{equation}
where $a$ is the lattice spacing, $p$ is the lattice momentum, and $r$ is the Wilson parameter, which we henceforth set to one. The second term vanishes near the physical zero but gives the extra species masses of order $1/a$. The price to be paid is that $D_W$ no longer anticommutes with the chirality matrix $\gamma_5$, and conventional chiral symmetry is explicitly broken.

However, Narayanan and Neuberger \cite{NarayananNeuberger,NarayananNeubergerConstruction} identified a route to restoring chirality. Starting from the Wilson operator $D_W$, consider the Hermitian Wilson operator
\begin{equation}
H = \gamma_5(a D_W - 1)
\end{equation}
whose sign ${\rm sign} H \equiv  H/ \sqrt{H^2}$ returns $+1$ for every positive eigenvalue of $H$ and $-1$ for every negative eigenvalue while preserving the corresponding eigenspaces. We can use the Hermitian Wilson operator to construct the {\it overlap operator}
\begin{equation}
D = \frac{1}{a} (1 + \gamma_5 \, {\rm sign} H) .
\end{equation}
(Here {\it overlap} refers to Narayanan and Neuberger's original construction using overlaps of auxiliary many-fermion ground states, while Neuberger later introduced the above formulation in terms of the Hermitian Wilson operator \cite{Neuberger}.) In the free theory, $D = 0$ at $p = 0$ as desired, while $D = 2/a$ at each of the extra zeros of the naive lattice Dirac operator. The physical massless mode survives while the extra species are lifted, as with the Wilson operator.

The fact that $({\rm sign} H)^2 =1$ implies that the overlap operator satisfies the Ginsparg-Wilson relation \cite{GinspargWilson,NeubergerGW}
\begin{equation}
\{ \gamma_5,D \} = a D \gamma_5 D
\end{equation}
which restores the usual chirality condition in the continuum limit, exploiting a clear loophole in the Nielsen-Ninomiya theorem. However, as L\"{u}scher observed \cite{LuscherExactChiral}, overlap fermions do not merely enjoy chiral symmetry as $a \rightarrow 0$, but in fact possess an {\it exact} chiral symmetry at finite lattice spacing. To see this, define the modified chirality operator
\begin{equation}
\widehat{\gamma}_5 = \gamma_5 (1 - a D)
\end{equation}
in terms of which we can implement an infinitesimal lattice axial rotation by acting separately on the independent Euclidean Grassmann fields $\psi, \bar \psi$ as 
\begin{equation}
\delta \psi = i \alpha \widehat{\gamma}_5 \psi \hspace{1cm} \delta \bar{\psi} = i \alpha \bar{\psi} \gamma_5 \, .
\end{equation}
This transformation clearly leaves the free action invariant since $\gamma_5 D + D \widehat{\gamma}_5 = \gamma_5 D + D \gamma_5 - a D \gamma_5 D = 0,$ while recovering the familiar axial rotation in the continuum limit. Since $\widehat{\gamma}_5$ is Hermitian and squares to one, we can use it (along with the ordinary $\gamma_5$) to define genuine chiral projectors
\begin{equation}
\widehat{P}_- = \frac{1 - \widehat{\gamma}_5}{2} \hspace{1cm} P_+ = \frac{1 + \gamma_5}{2} .
\end{equation}
These projectors can be used to define a chiral theory by imposing e.g. $\widehat P_- \psi = \psi$ and $\bar \psi P_+ = \bar \psi$. We thus obtain lattice chiral fermions without extra species by constructing a fermion action from $D$ and imposing chirality conditions on the fermion and antifermion fields.

The discussion carries over to the interacting theory upon including gauge interactions, which also reveals the catch. The Wilson operator depends on gauge links $U$ via
\begin{equation}
D_W[U]=\frac{1}{2a}\sum_{\mu=1}^4
\left[
\gamma_\mu\bigl(T_\mu[U]-T_\mu[U]^\dagger\bigr)
+r\bigl(2-T_\mu[U]-T_\mu[U]^\dagger\bigr)
\right]
\end{equation}
where $(T_\mu[U]\psi)(x) = R(U(x,\mu))\,\psi(x+a\hat\mu)$; here $\hat \mu$ is the unit vector in the direction $\mu$, $U(x,\mu)$ carries out parallel transport from $x+a \hat \mu$ to $x$, and $R$ is the fermion representation. This introduces $U$-dependence into $H$, $D$, $\widehat{\gamma}_5$, and $\widehat P_-$. (Of course, the action and chirality constraints remain invariant when the gauge and fermion fields are transformed together.) Some care is required to keep the sign of $H[U]$ well-defined as $U$ varies; this can be accomplished by restricting to {\it admissible gauge fields} in which plaquettes are sufficiently close to the identity, which keeps the spectrum of $H[U]$ away from zero and thus makes $D[U]$ a smooth function of $U$~\cite{HJL}. 

But even with this restriction, there is a seemingly-insuperable problem: while we can again construct a chiral theory from $D[U]$, the subspace of fermion fields we integrate over now changes with $U$. The chirality condition $\widehat{P}_-[U] \psi = \psi$ specifies that subspace, but does not fix the phase of its integration measure. The difficulty is choosing this phase consistently as $U$ varies, while preserving locality, smoothness, and gauge invariance. Overlap fermions have apparently transferred the lattice challenge from chirality to gauge invariance.

To see the phase ambiguity explicitly, choose a smooth orthonormal basis $v_k(U)$ of the allowed fermion subspace near a given gauge field $U$, and a fixed orthonormal basis $\bar v_l$ for the antifermions. We can then write the fermion fields as
\begin{equation}
 \psi(x)=\sum_k v_k(x;U)c_k,
 \qquad
 \bar\psi(x)=\sum_l\bar c_l\bar v_l(x),
\end{equation}
where $c_k$ and $\bar c_l$ are independent Grassmann integration variables, and the sums run over the respective bases. The fermion path integral is then an integral over these coefficients, using the product of their Grassmann differentials. However, changing the orthonormal fermion basis while keeping $\psi$ fixed transforms the coefficients $c_k$ by a unitary matrix. Their integration differentials acquire the inverse determinant of this matrix, which is a phase; simply taking the product of differentials therefore leaves the measure dependent on an arbitrary basis choice. We can make the phase choice explicit by writing
\begin{equation}
 \mathcal D[\psi]\,\mathcal D[\bar\psi]
 =e^{i\varphi_v(U)}
   \left(\prod_k dc_k\right)
   \left(\prod_l d\bar c_l\right).
\end{equation}
Here $\varphi_v(U)$ is a real angle, defined modulo $2\pi$. If we change the basis, the factor $e^{i\varphi_v(U)}$ must compensate the phase acquired by the differentials, so that we still describe the same measure. This tells us how to compare different bases at the same gauge field. The remaining problem is to choose the phase as $U$ varies, so that transforming the gauge and fermion fields together leaves the full measure invariant.

\subsection{The measure current}

L\"uscher showed that this could be accomplished nonperturbatively for abelian gauge theories \cite{LuscherAbelian}. Rather than choosing $\varphi_v(U)$ directly, he first determined how the measure should respond to infinitesimal changes in the links $U$. Both the changing fermion basis $v_k(U)$ and the derivative of $\varphi_v(U)$ contribute to this response; together they define the \emph{measure term} $\mathfrak L$, which is linear in the link variation. The coefficient of each link variation is a component of the \emph{measure current}. This current must be smooth and gauge covariant, and its value at a given link must depend only exponentially weakly on distant links.

Gauge invariance constrains this response. Varying the fermion effective action gives one contribution from the change in the overlap operator $D[U]$ and another from the measure term $\mathfrak L$. Under a gauge transformation, these contributions must cancel, although neither need vanish separately. The operator contribution is expressed in terms of the lattice gauge anomaly; its cancellation requires the covariant divergence of the measure current to equal this anomaly.

This is not quite enough to define a consistent measure, however. As $U$ varies, the chiral projector $\widehat P_-[U]$ supplies a rule for transporting a fermion basis between the allowed subspaces. After a closed loop of gauge fields, the transported basis can differ from the initial basis by a unitary rotation, whose determinant gives the relative phase of the products $\prod_k dc_k$. (The antifermion basis $\bar v_l$ remains fixed.) The phase obtained by integrating $\mathfrak L$ around the loop and exponentiating must match this relative phase. This is the \emph{integrability condition}: it allows the compensating factor $e^{i\varphi_v(U)}$ to make the full measure return to itself. The condition must hold for every closed loop, not just infinitesimal ones; a global gauge anomaly can obstruct it even when the local anomaly cancels~\cite{WittenSU2,BarCampos}.

L\"uscher's abelian construction begins on the infinite lattice, with an exact classification of the local terms that can appear in the lattice anomaly~\cite{LuscherAbelianTopology}. For $U(1)$, the term that cannot be written as the divergence of a gauge-invariant local current has a coefficient proportional to the sum of the fermion charges cubed. When this sum vanishes, the entire lattice anomaly is such a divergence, at nonzero lattice spacing and not just in the continuum limit.

To impose integrability as well, he interpolates the abelian gauge potential from zero to its given value. The abelian field strength is linear in the potential, so the whole interpolation remains admissible. Varying the endpoint gauge field $U$ gives two nearby interpolation paths, enclosing a narrow strip in the space of gauge fields. The chiral projector $\widehat P_-[U]$ determines the phase around infinitesimal loops, described by its curvature. Integrating this projector curvature over the strip gives a quantity linear in the endpoint variation, and hence a candidate measure term. This construction supplies the required infinitesimal-loop phases. A correction built from the local current supplied by the anomaly classification then gives the measure current the required gauge invariance and divergence, without changing the phases around infinitesimal loops. To reconstruct the measure on a finite periodic lattice, one must also account for magnetic flux and the phases acquired by parallel transport around the periodic directions. On sufficiently large periodic lattices, L\"uscher treats these separately, adds finite-volume corrections, and verifies the remaining closed-loop conditions, with an additional restriction on the charges in nontrivial flux sectors~\cite{LuscherAbelian}.

For anomaly-free nonabelian theories, L\"uscher constructed measures to all orders in perturbation theory~\cite{LuscherPerturbative}. Nonperturbative extensions to nonabelian theories include Suzuki's measure for real representations~\cite{SuzukiReal} and Kadoh and Kikukawa's electroweak measure for vanishing hypercharge magnetic flux on sufficiently large lattices~\cite{KadohKikukawaEW,KadohKikukawaErratum}. Kikukawa also proposed a gauge-invariant measure for the sixteen-dimensional spinor representation of $\mathrm{Spin}(10)$; its smoothness and locality on general admissible fields remain open~\cite{KikukawaSpinTen}. Many other approaches to four-dimensional lattice chiral gauge theories have also been proposed and investigated~\cite{Swift1984,Smit1986,EichtenPreskill1986,Montvay1987,CreutzTytgatRebbiXue1997,Xue1997,BorrelliEtAl1990,Vink1994,Shamir1998Phase,GoltermanShamir1997,BockGoltermanShamir1998,GoltermanShamir2004,Smit1992Staggered,BockSmitVink1994,AlonsoEtAl1992,DrellWeinsteinYankielowicz1976,QuinnWeinstein1986,Rebbi1987,BietenholzWiese1996,FrolovSlavnov1994,Slavnov1995,GockelerSchierholz1992,tHooft1995,Hsu1995,Kronfeld1995,HernandezSundrumChiral,Bodwin1996,Kaplan1992,Shamir1993,GoltermanJansenPetcherVink1994,Kikukawa2002,GrabowskaKaplan2016Regulator,GrabowskaKaplan2016Chiral,FukayaEtAl2017,Kaplan2024,KaplanSen2026,BhattacharyaMartinPoppitz2006,PoppitzShang2007,Wen2013,YouBenTovXu2014,YouXu2015,WangWen2020,Catterall2021,Catterall2024,ThorngrenPreskillFidkowski2026,Zenkin1993,AoyamaKikukawa2019,PedersenKikukawa2023,GoltermanPetcher1989,GoltermanPetcherRivas1993,BockGoltermanShamir1998Perturbation,BockGoltermanLeungShamir2000,GoltermanShamir2003,FurmanShamir1995,GockelerKronfeldSchierholzWiese,Kronfeld1996,GoltermanJansenKaplan1993,KaplanSchmaltz1996,KaplanSen2024,ClancyKaplan2025}. The general problem for nonabelian gauge theories remains unsolved.

\subsection{This work}

L\"uscher's abelian construction does not extend directly to nonabelian gauge fields. One clear obstacle is that shrinking an abelian gauge potential also shrinks its field strength, whereas for a nonabelian potential the derivative and commutator terms scale differently, so shrinking the potential can spoil cancellations that kept the field strength small. The interpolation toward zero can therefore leave the admissible domain. Moreover, the abelian classification of the lattice anomaly does not supply the corresponding nonperturbative result for nonabelian theories~\cite{LuscherNonabelian}.

\begin{figure}[p]
\centering
\includegraphics[width=0.94\linewidth]{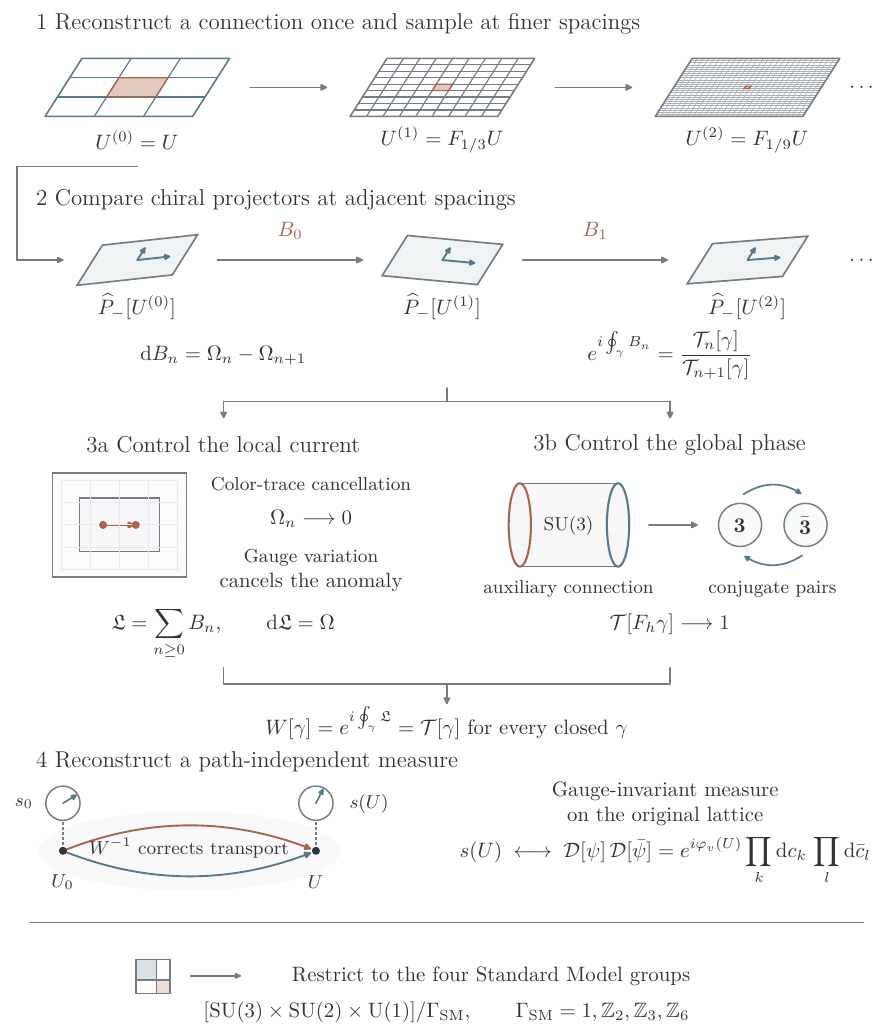}
\caption{Construction of the measure for $\mathrm{SU}(5)$ with $R=\mathbf{10}\oplus\bar{\mathbf5}$. Here $U$ denotes the original gauge field and $F_hU$ its reconstructed connection sampled at finer spacing $h$. At refinement step $n$, $\widehat P_-$ selects the allowed fermion states and $B_n$ contributes to the measure term $\mathfrak L$. The curvature $\Omega_n$ describes infinitesimal-loop phases; $\mathcal T_n[\gamma]$ and $W[\gamma]$ are phases around a closed gauge-field loop $\gamma$, obtained from the projectors and $\mathfrak L$, respectively. The correction $W^{-1}$ makes transport from $s_0$ at $U_0$ to $s(U)$ path independent. For fermion basis $v$ and a fixed antifermion basis, $s(U)$ is encoded in $e^{i\varphi_v(U)}$, multiplying the integration element for Grassmann coefficients $c_k,\bar c_l$. The subgroup $\Gamma_{\rm SM}$ specifies the Standard Model quotient.}
\label{fig:measure-construction}
\end{figure}

We instead construct the measure by comparing overlap fermions at successive lattice spacings. We begin with $\SU(5)$ and left-handed fermions in $R=\mathbf{10}\oplus\bar{\mathbf5}$, which contains one Standard Model generation, later extending our results to more general groups and fermion representations. The construction applies on finite periodic four-dimensional lattices for which the gauge fields satisfy a sufficiently small, volume-independent plaquette bound. The main steps are shown in Figure~\ref{fig:measure-construction}.

\begin{figure}[htbp]
  \centering
  \includegraphics[width=\linewidth]{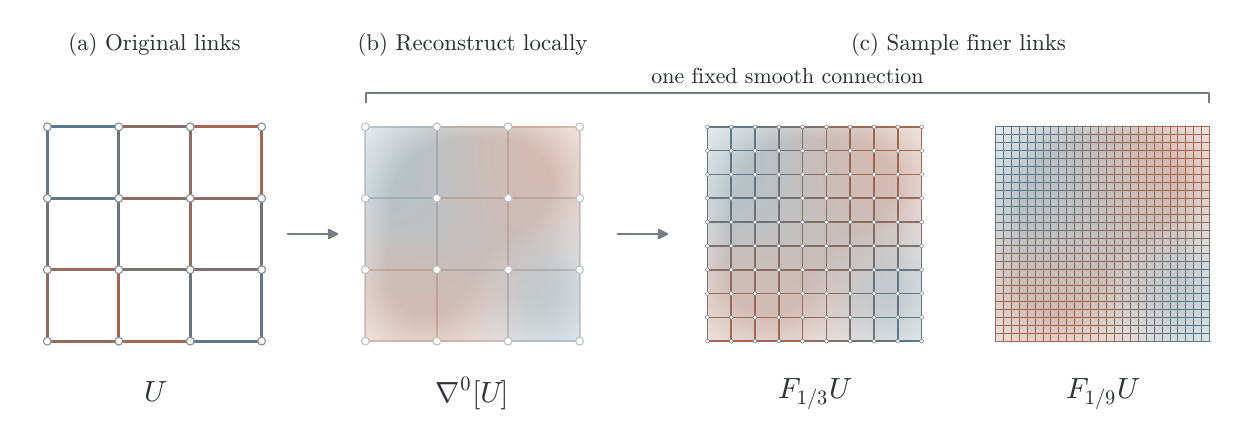}
  \caption{Reconstruction and sampling of the gauge links. (a) The original gauge links $U$, shown schematically in two dimensions. (b) We reconstruct a smooth gauge connection $\nabla^0[U]$ locally from these links, represented here by the shading. (c) We use parallel transport of the same connection along shorter edges to obtain (e.g.) the finer link fields $F_{1/3}U$ and $F_{1/9}U$, with spacings one-third and one-ninth of the original lattice spacing.}
  \label{fig:gauge_sampling}
\end{figure}

For a given gauge configuration, the original links $U$ specify parallel transport along the original lattice edges. To define links on finer lattices, we reconstruct a smooth gauge connection locally from $U$ and use its parallel transport along the finer edges. This procedure is illustrated schematically in Figure~\ref{fig:gauge_sampling}. Each sampled link field determines a Hermitian Wilson operator $H$ and a chiral projector $\widehat P_-$ by the same formulas as on the original lattice. We retain the actual original projector $\widehat P_-[U]$ first, followed by the projectors of the finer samples. During refinement, both the reconstructed connection and the original physical lattice spacing are held fixed. The physical fermion operator remains $D[U]$.

To compare the chiral projectors at two successive spacings, we first put their Hermitian Wilson operators on the same fermion space. We embed coarse fermion fields in the fine space by parallel-transporting each coarse-site spinor to nearby fine sites, with normalization preserving its norm. On the remaining fine-space degrees of freedom, we extend the coarse Hermitian Wilson operator by the fixed chirality matrix $\gamma_5$. On these added states, the chiral projector is $P_+=(1+\gamma_5)/2$, the same projector used for the antifermions. The added fermion and antifermion determinant factors therefore cancel, including their return phases.

We then interpolate linearly from the fine Hermitian Wilson operator to this extended coarse operator. We prove that the spectrum of the interpolating Hermitian operator stays away from zero throughout the interpolation. Consequently its sign, and hence the projector $(1+\operatorname{sign}H)/2$, vary smoothly between the fine and enlarged coarse chiral projectors. The intermediate Hermitian operators need not arise from gauge links. We therefore establish a gap for this operator interpolation without requiring an admissible interpolation of the gauge potential toward zero.

Next we vary the original links $U$ infinitesimally. Both endpoint operators change, giving a neighboring path of projectors; these two paths bound a narrow strip in the space of projectors. Integrating the projector curvature over this strip gives a phase angle that is linear in the original link variation, and hence a contribution to the measure term. We denote this contribution by $B_n$, with $n=0,1,\ldots$ labeling successive comparisons. Its coefficients in the original link variation give the corresponding contribution to the measure current. We propose the full measure term as the sum $\mathfrak L=\sum_{n=0}^{\infty}B_n$.

To see why this sum is useful, follow a closed loop of the original links $U$. Refinement gives a corresponding loop on each finer lattice. Each comparison term $B_n$ accounts for the change in projector return phase between two successive lattice spacings: $e^{i\oint B_n}$ is the coarser return phase divided by the finer one. The phase obtained from a finite sum of comparison terms is the product of these ratios, so all intermediate return phases cancel. It therefore equals the original return phase divided by the phase on the finest lattice included. Thus our proposed $\mathfrak L$ reproduces the original return phase required by integrability if the current converges and the remaining fine-lattice phase tends to one.

Anomaly cancellation makes the current converge. Finer sampling makes plaquettes approach the identity, but also increases the number of sites in the traces defining each $B_n$. An exact color-trace identity for $\mathbf{10}\oplus\bar{\mathbf5}$ cancels the leading terms, giving sufficient decay with the auxiliary spacing to overcome this growth. Estimates on these contributions and their original-link derivatives establish smoothness and exponential locality of the summed current. They also make the fine-lattice contribution to infinitesimal-loop phases tend to zero, giving local integrability. These estimates hold throughout the admissible domain, without assuming weak coupling or that the original links are near the identity.

The divergence equation follows from the same comparisons. Evaluating $B_n$ on an infinitesimal gauge transformation gives the anomaly contribution from the coarse overlap operator minus that from the fine overlap operator. Summing cancels the intermediate anomalies, and the fine-lattice remainder tends to zero. The covariant divergence of the resulting measure current therefore equals the original lattice anomaly. The construction also preserves gauge covariance and the required lattice transformation laws.

Full integrability requires the remaining fine-lattice projector phase to tend to one for every closed loop. We compare the reconstructed connection along each loop with an auxiliary $\SU(3)$ connection under which the charged fermions occur in conjugate pairs. For this auxiliary connection, including color parallel transport in the projector transport preserves the pairing and fixes the return phase to $+1$. We show that the corrections needed to return to the reconstructed connection and remove the extra color transport vanish under refinement, establishing the required limit.

We then choose a reference measure at one field in each connected component and transport it using $\widehat P_-[U]$. Multiplying by the inverse phase obtained by integrating the measure term $\mathfrak L$ along the transport path makes the result independent of that path. Together with gauge covariance and the divergence equation, this gives a gauge-invariant measure with $e^{i\varphi_v(U)}$ fixed up to one constant phase per disconnected component of admissible gauge fields. Physical requirements such as symmetry and cluster decomposition can constrain them, leading to familiar choices such as topological $\theta$ angles.

The Standard Model follows by mapping its gauge links into $\SU(5)$ and evaluating this measure on the resulting links. Choosing plaquette bounds that keep the image links admissible gives a measure for all four global Standard Model gauge groups shown in Fig.~\ref{fig:measure-construction}. Restricting $\mathbf{10}\oplus\bar{\mathbf5}$ gives one generation, and the generalization to three generations is immediate.

Our construction extends to other $\SU(N_{\rm c})$ theories under two sufficient conditions. Cancellation of the total cubic gauge anomaly supplies the local estimates. As for the global part, we employ a slightly more general procedure than we used for the Standard Model: under the $\SU(2)$ subgroup acting on two fundamental colors, each half-integer-isospin representation must occur with even multiplicity. Pairing these copies provides the conjugation symmetry needed for the global argument. Restricting these measures along group homomorphisms gives further compact gauge groups and their corresponding fermion representations. The additional even-multiplicity condition is tied to our specific approach to controlling the global phase; it is plausible that the measure construction can be extended to broader classes of anomaly-free chiral gauge theories, but we do not do so here. Representative theories covered by our construction are listed in
Table~\ref{tab:intro-covered-theories}.

\begin{table}[htbp]
\centering
\renewcommand{\arraystretch}{1.2}
\begin{tabular}{p{0.43\linewidth}p{0.48\linewidth}}
\hline
Gauge group & Left-handed fermion content \\
\hline
$[\SU(3)\times\SU(2)\times\mathrm{U}(1)]/\Gamma_{\rm SM}$\newline
$\Gamma_{\rm SM}=1,\mathbb Z_2,\mathbb Z_3,\mathbb Z_6$
& Any number of Standard Model generations \\
$\SU(N_{\rm c})$, $N_{\rm c}\ge5$
& One antisymmetric two-index tensor and $N_{\rm c}-4$ antifundamentals \\
$\SU(N_{\rm c})$, $N_{\rm c}\ge3$
& One symmetric two-index tensor and $N_{\rm c}+4$ antifundamentals \\
\hline
\end{tabular}
\caption{Representative theories covered by the construction on admissible
 gauge fields. The $\SU(5)$ member of the antisymmetric family has
 $R=\mathbf{10}\oplus\bar{\mathbf5}$. Section~\ref{sec:other-representations}
 gives broader families and the precise conditions on the fermion content.}
\label{tab:intro-covered-theories}
\end{table}

The paper is organized as follows: Sections~\ref{sec:lattice}--\ref{sec:measure} introduce the lattice fields and fermion measure and state L\"uscher's reconstruction conditions. Sections~\ref{sec:comparison}--\ref{sec:locality} construct the comparison current and establish its convergence, smoothness, and exponential locality. Section~\ref{sec:symmetries} proves the gauge and lattice transformation laws and the required divergence equation. Section~\ref{sec:global} establishes global integrability and reconstructs the measure. Section~\ref{sec:standard-model} restricts the result to the four Standard Model gauge groups, and section~\ref{sec:other-representations} extends it to other fermion representations and gauge groups. Section~\ref{sec:verification} discusses various checks of the construction, and section~\ref{sec:conclusions} summarizes our results and remaining open questions.

Detailed constructions and estimates are reserved for a series of appendices. Appendix~\ref{app:g:geometry} reconstructs the smooth connection from the original links. Appendix~\ref{app:s:spectral} proves that the comparison operators remain gapped. Appendix~\ref{app:a:locality} establishes the derivative and locality bounds, and appendix~\ref{app:d:cylinder-traces} supplies the phase estimates used in the global argument. Appendix~\ref{poly:section} expresses the current as an exact convergent series of polynomials in the original link matrices and their adjoints. Appendix~\ref{phys:section} studies the classical continuum behavior of the current and the convergence of polynomial approximations in fermion amplitudes at fixed cutoff. Finally, we reserve Lean verification of the main proof and supporting evidence from an analogous two-dimensional chiral gauge theory for supplementary material.

\section{Lattice fields and chiral projectors}
\label{sec:lattice}

As we have seen, the measure for overlap fermions will be defined on the fermion subspace selected by a
gauge-field-dependent chiral projector. We first fix the lattice and operator conventions that determine this subspace.

\subsection{Gauge fields and admissibility}

We set the original lattice spacing to one (until appendix~\ref{phys:section}, where we discuss aspects of the continuum limit). The Euclidean spacetime lattice is
\(\Gamma=\prod_{\mu=1}^4\mathbb Z/L_\mu\mathbb Z\), with
positive integer periods \(L_\mu\) and unit coordinate vectors
\(\hat\mu\). The continuous torus with these periods is
\(T^4=\prod_{\mu=1}^4\mathbb R/L_\mu\mathbb Z\).
Gauge fields and gauge transformations are periodic. Fermions
have independently chosen periodic or antiperiodic boundary
conditions in the four directions.

The matrix associated with the link \((x,x+\hat\mu)\) is
\(U(x,\mu)\in\SU(5)\). It transports a color vector at
\(x+\hat\mu\) to a color vector at \(x\). The reverse link
carries \(U(x,\mu)^{-1}\); only positive links are independent
variables. A gauge transformation \(g(x)\in\SU(5)\) acts as
\begin{equation}
 U^g(x,\mu)=g(x)U(x,\mu)g(x+\hat\mu)^{-1}.
\end{equation}
In geometric and operator formulas an edge may also be denoted
by \(e=(x,x+\hat\mu)\), with \(U_e=U(x,\mu)\).
Thus a sum over independent edges is a sum over \(x\in\Gamma\)
and \(\mu=1,\ldots,4\); it does not include a Lie-algebra index.

For the elementary plaquette \(p=(x;\mu,\nu)\), define
\begin{equation}
 U(p)=U(x,\mu)U(x+\hat\mu,\nu)
       U(x+\hat\nu,\mu)^{-1}U(x,\nu)^{-1}.
\end{equation}
The admissible field space is determined by the requirement
\begin{equation}
 \mathcal U_\eps=
 \{U:\|1-U(p)\|<\eps\text{ for every plaquette }p\}.
 \label{eq:admissibility}
\end{equation}
Here the norm is the operator norm in the fundamental
representation. Matrix norms below are also operator norms
unless specified otherwise. The admissibility threshold \(\eps>0\) will be
chosen independently of the lattice volume. A component means
a connected component of \(\mathcal U_\eps\), before taking
the quotient by gauge transformations. This condition does not require individual link matrices to be
close to the identity. In particular, it includes flat fields
with nontrivial parallel transport around the periodic directions.

\subsection{Fermions and the overlap operator}

We use anti-Hermitian generators \(T^a\), \(a=1,\ldots,24\),
of \(\mathfrak{su}(5)\), chosen with algebraic matrix entries
and normalized by
\begin{equation}
 -\tr_{\mathbf5}(T^aT^b)=\delta^{ab}.
\end{equation}
The symbol \(R\) denotes both the unitary group representation
\(\mathbf{10}\oplus\bar{\mathbf5}\) and its derived
Lie-algebra representation:
\begin{equation}
 R(X)=\left.\frac{\dd}{\dd u}R(e^{uX})\right|_{u=0},
 \qquad X\in\mathfrak{su}(5).
\end{equation}
For this fermion representation,
\(\|1-R[U(p)]\|\le2\|1-U(p)\|\).
Thus a sufficiently small plaquette bound in the fundamental
representation also implies the bound in \(R\) used by
L\"uscher~\cite[eq.~(2.1)]{LuscherNonabelian}.
The conjugate representation is denoted by \(R^*\), realized as
\(R^*(g)=\overline{R(g)}\) in fixed orthonormal bases.

The fermion Hilbert space \(\mathcal H\) consists of complex spinor
wavefunctions
\(\psi(x)\in\mathbb C^4\otimes\mathbb C^{15}\), where the
factors carry the spin and representation \(R\) indices.
The fields obey the chosen boundary conditions. The inner product on \(\mathcal H\) is
\((\psi,\chi)=\sum_{x\in\Gamma}\psi(x)^\dagger\chi(x)\),
antilinear in the first argument. A dagger denotes the
Hilbert-space or matrix adjoint. We use \(\Tr\) for the full
trace on a specified fermion space, including sites, spin,
color and any auxiliary factors, and \(\tr\) for a
finite-dimensional trace over the indices indicated.
The integration field \(\psi\) has Grassmann coefficients in a
basis of its allowed subspace of \(\mathcal H\).

The unitary covariant shift \(T_\mu=T_\mu[U]\) introduced above
translates a spinor by one site and transports its color components
back to \(x\):
\((T_\mu\psi)(x)=R(U(x,\mu))\psi(x+\hat\mu)\), with
the boundary sign included when a link crosses a periodic boundary.
Choose Hermitian Euclidean gamma matrices satisfying
\(\{\gamma_\mu,\gamma_\nu\}=2\delta_{\mu\nu}\), and
set \(\gamma_5=\gamma_1\gamma_2\gamma_3\gamma_4\).
With the Wilson parameter fixed to one, the Wilson operator
\(D_W\), Hermitian Wilson operator \(H\), and overlap operator \(D\) are
\begin{equation}
 \begin{split}
 D_W&=\frac12\sum_{\mu=1}^4
 \{\gamma_\mu(T_\mu-T_\mu^\dagger)
                  +2-T_\mu-T_\mu^\dagger\},\\
 H&=\gamma_5(D_W-1),\qquad
 D=1+\gamma_5\operatorname{sign}H.
 \end{split}
 \label{eq:wilson}
\end{equation}
The sign is defined by spectral calculus,
\(\operatorname{sign}H=H(H^2)^{-1/2}\). For sufficiently small \(\eps\), the spectrum of \(H\) stays a fixed
positive distance from zero, uniformly over admissible fields and
lattice volumes. The overlap operator \(D[U]\) is smooth in \(U\)
and exponentially local~\cite{HJL,Neuberger}. On each finer
auxiliary lattice, we again use dimensionless operators, with
its own spacing absorbed into \(D_W\) and \(D\). As illustrated in Fig.~\ref{fig:kernel-gap-and-overlap},
the gap in $H$ does not exclude zero modes of $D$.

As in the introduction, we impose chirality using projectors
on the fermion and antifermion fields that are compatible with \(D\). The overlap
operator satisfies the Ginsparg--Wilson relation
\cite{GinspargWilson,NeubergerGW}, which supplies the
gauge-field-dependent chirality operator \(\widehat\gamma_5\):
\begin{align}
 \gamma_5D+D\gamma_5&=D\gamma_5D,
 \label{eq:gw}\\
 \widehat\gamma_5&=\gamma_5(1-D)=-\operatorname{sign}H,
 &\widehat P_\pm&=\frac{1\pm\widehat\gamma_5}{2},
 &P_\pm&=\frac{1\pm\gamma_5}{2}.
 \label{eq:projectors}
\end{align}
In particular, \(\widehat P_-=(1+\operatorname{sign}H)/2\).
The fermion integration fields are Grassmann-valued:
\(\psi\) is a column field and \(\bar\psi\) an independent
row field. Their left-handed components obey
\begin{equation}
 \widehat P_-\psi=\psi,\qquad \bar\psi P_+=\bar\psi,
\end{equation}
and their action is \(S_F=\sum_x\bar\psi(x)(D\psi)(x)\).
The Ginsparg--Wilson relation gives
\(D\widehat\gamma_5=-\gamma_5D\), or equivalently
\(D\widehat P_-=P_+D\), so the two chirality constraints
are compatible with this action.
At unit lattice spacing the position-space kernel is defined by
\((D\psi)(x)=\sum_yD(x,y)\psi(y)\).
Because no eigenvalue of \(H\) can cross zero within the admissible
domain, the rank of \(\widehat P_-\) is constant on each connected
component of \(\mathcal U_\eps\). It can differ between components
and need not equal the fixed rank of \(P_+\).

\begin{figure}[htbp]
  \centering
  \includegraphics[width=\linewidth]{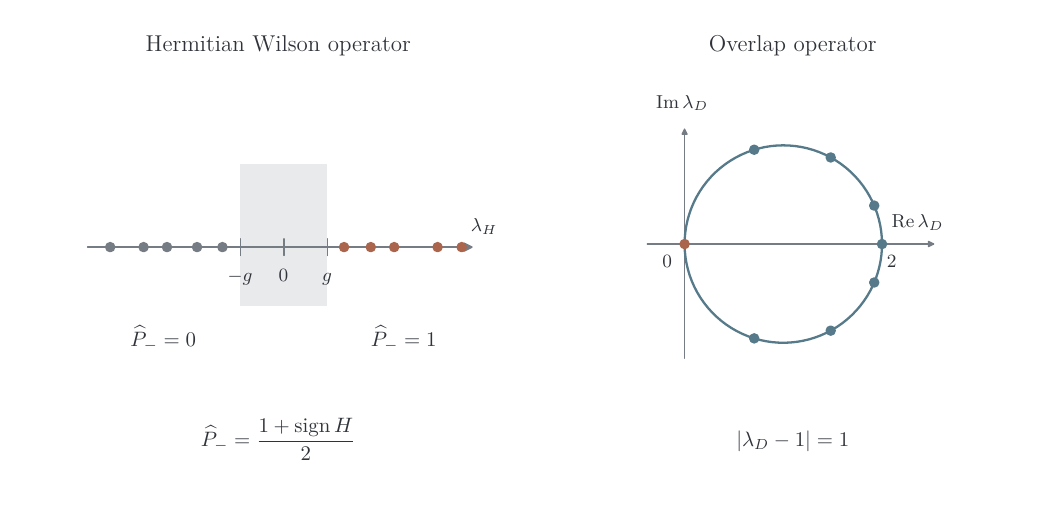}
  \caption{Schematic spectra of the Hermitian Wilson operator $H$ (left) and overlap operator $D$ (right). Here $\lambda_H$ and $\lambda_D$ denote their eigenvalues. The shaded interval $(-g,g)$ is the uniform spectral gap of $H$, with $g>0$; $\widehat P_-$ retains states with positive eigenvalues of $H$ and removes those with negative eigenvalues. The operator $\operatorname{sign}H$ assigns the sign of each eigenvalue of $H$. The overlap circle passes through zero, so the gap of $H$ does not exclude zero modes of $D$. The dots mark illustrative spectral locations, not eigenvalues computed for a particular gauge field.}
  \label{fig:kernel-gap-and-overlap}
\end{figure}

\section{Fermion integration measure}
\label{sec:fermion-measure}

Next we define the fermion integration measure and its phase
freedom, then explain how the measure term enters variations
of the fermion effective action.

\subsection{Basis dependence and the phase of the measure}

The chirality constraint fixes the allowed fermion fields,
but their integration measure still has a phase freedom
that must be fixed consistently with gauge invariance.
We first describe this freedom directly in the measure,
then give its expression in Grassmann coordinates.

On a sufficiently small neighborhood of an admissible gauge
field, choose a smooth orthonormal reference basis \(v_k(U)\),
\(k=1,\ldots,N\), of the allowed fermion subspace, where
\(N=\operatorname{rank}\widehat P_-\):
\begin{equation}
 \widehat P_-[U]v_k(U)=v_k(U),\qquad
 (v_k(U),v_l(U))=\delta_{kl}.
\end{equation}
The rank \(N\) is constant on the connected component.
The antifermion projector \(P_+\) is independent of \(U\).
Choose once an orthonormal basis of row vectors \(\bar v_l\),
\(l=1,\ldots,\bar N\), satisfying
\(\bar v_lP_+=\bar v_l\), where
\(\bar N=\operatorname{rank}P_+\).

Under a change of orthonormal basis, the product of Grassmann
integration differentials acquires a phase, the Berezin Jacobian.
The exterior product of the basis vectors acquires the same phase,
so it can represent the measure independently of Grassmann coordinates.
For an \(n\)-dimensional complex vector
space \(V\) with ordered basis
\(e_1,\ldots,e_n\), the determinant line is the
one-dimensional space
\begin{equation}
 \det V:=\bigwedge^n V
 =\{z\,e_1\wedge\cdots\wedge e_n:z\in\mathbb C\}.
\end{equation}
The wedge denotes the fully antisymmetrized product. Thus \(\det V\) denotes a
vector space, not a numerical determinant of an operator.
Its name reflects the fact that a change of basis
multiplies the exterior product by the determinant
of the change-of-basis matrix. For orthonormal bases,
this determinant is a phase.

For Grassmann fields, the fermion integration measure
is represented by an element of the determinant line
of the allowed field space. Here the fermion vectors
belong to \(\im\widehat P_-[U]\), while the antifermion
rows are linear functionals on \(\im P_+\). Together
they give the relative determinant line~\cite[sect.~6]{AdamsFamilies}
\begin{equation}
 \mathcal D(\widehat P_-[U])
 =\det\im\widehat P_-[U]\otimes(\det\im P_+)^*,
 \label{eq:measure-line}
\end{equation}
where the star denotes the complex-linear dual.

With a fixed ordering, the reference bases determine
a unit-norm element \(s_v(U)\) of this line. A general
normalized measure \(s(U)\) can therefore be written as
\begin{equation}
 \begin{split}
 s_v(U)
 &=\bigl[v_1(U)\wedge\cdots\wedge v_N(U)\bigr]
   \otimes
   \bigl[\bar v_1\wedge\cdots\wedge\bar v_{\bar N}\bigr],
 \\
 s(U)&=e^{\ii\varphi_v(U)}s_v(U).
 \end{split}
 \label{eq:measure-exterior}
\end{equation}
The angle \(\varphi_v(U)\) is real and defined modulo
\(2\pi\). It specifies the phase of the measure relative
to the reference element \(s_v(U)\).
Even with \(\varphi_v=0\), the measure depends on \(U\)
through the fermion basis vectors in the exterior product.
Section~\ref{sec:measure-reconstruction} constructs
compatible phase choices giving a smooth, gauge-invariant
measure.

To express this measure in Grassmann coordinates,
introduce Grassmann coefficients \(c_k,\bar c_l\), independent of
\(U\), and write
\begin{equation}
 \psi(x)=\sum_{k=1}^N v_k(x;U)c_k,\qquad
 \bar\psi(x)=\sum_{l=1}^{\bar N}\bar c_l\bar v_l(x).
 \label{eq:fermion-coordinate-expansion}
\end{equation}
Using the corresponding fixed ordering of the integration
differentials, the coordinate expression of \(s(U)\) is
\begin{equation}
 \mathcal D[\psi]\,\mathcal D[\bar\psi]
 =e^{\ii\varphi_v(U)}
   \left(\prod_{k=1}^N\dd c_k\right)
   \left(\prod_{l=1}^{\bar N}\dd\bar c_l\right).
 \label{eq:measure-coordinates}
\end{equation}
The Grassmann coefficients are held fixed when the gauge
field is varied.

To compare two reference bases \(v_k(U)\) and
\(\widetilde v_k(U)\) at the same gauge field, define the
\(N\times N\) overlap matrix
\begin{equation}
 \mathcal Q_{kl}(U)
   =(\widetilde v_k(U),v_l(U)),\qquad k,l=1,\ldots,N.
\end{equation}
Both bases are orthonormal and span the same subspace,
so \(\mathcal Q\) is unitary. The basis vectors and the
coordinates of the same fermion field are related by
\begin{equation}
 \widetilde v_k=\sum_{l=1}^N
                  v_l(\mathcal Q^{-1})_{lk},\qquad
 \widetilde c_k=\sum_{l=1}^N\mathcal Q_{kl}c_l.
\end{equation}
These relations compare coordinates at a fixed \(U\).
The exterior product and the Berezin Jacobian transform
by the same phase:
\begin{equation}
 \begin{split}
 s_{\widetilde v}(U)
   &=(\det\mathcal Q)^{-1}s_v(U),\\
 \prod_{k=1}^N\dd\widetilde c_k
   &=(\det\mathcal Q)^{-1}\prod_{k=1}^N\dd c_k.
 \end{split}
\end{equation}
The two descriptions therefore represent the same
measure when their relative phases obey
\begin{equation}
 e^{\ii\varphi_{\widetilde v}(U)}
   =\det\mathcal Q(U)\,e^{\ii\varphi_v(U)}.
 \label{eq:measure-phase-change}
\end{equation}
At fixed relative phase, this leaves precisely the
determinant-one freedom in the basis described by
L\"uscher~\cite[sect.~3]{LuscherNonabelian}.

Figure~\ref{fig:subspace-and-measure-phase} illustrates how
the allowed fermion subspace changes with $U$ and how
different reference bases describe the same measure.

\begin{figure}[htbp]
  \centering
  \includegraphics[width=\linewidth]{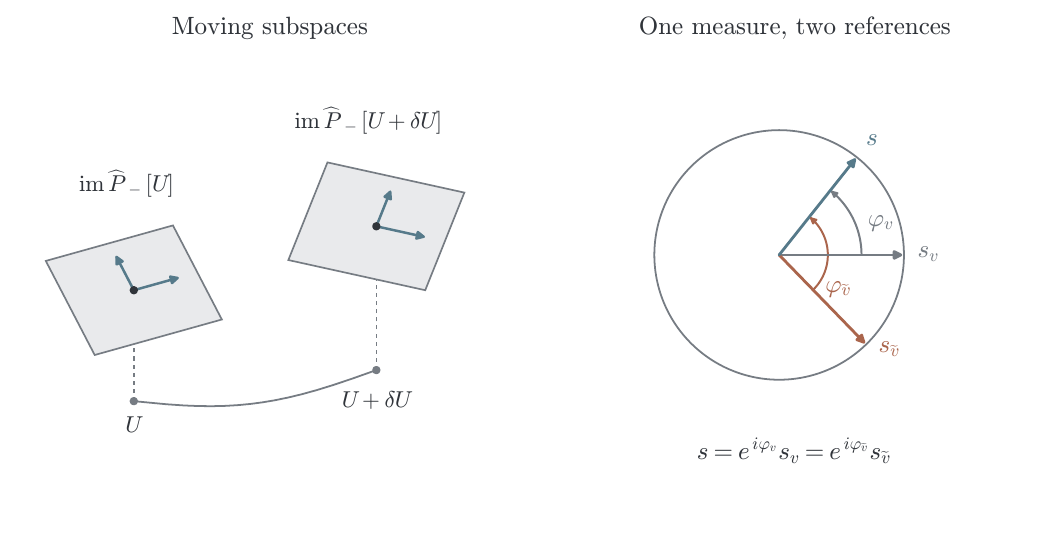}
  \caption{Left: varying the gauge field from $U$ to $U+\delta U$ moves the allowed fermion subspace $\operatorname{im}\widehat P_-$; the planes and arrows represent subspaces and their bases. Right: at fixed $U$, two orthonormal fermion bases define reference measures $s_v$ and $s_{\widetilde v}$. The angles $\varphi_v$ and $\varphi_{\widetilde v}$ specify the same measure $s$ relative to these references. Each reference includes the fixed antifermion factor.}
  \label{fig:subspace-and-measure-phase}
\end{figure}

Under a gauge transformation, the fields transform as
\begin{equation}
 \psi(x)\mapsto R(g(x))\psi(x),\qquad
 \bar\psi(x)\mapsto\bar\psi(x)R(g(x))^{-1}.
\end{equation}
Applying these transformations to each factor in the
exterior products defines the induced action on the
measure, denoted by \(g\,s(U)\).
Gauge invariance requires \(s(U^g)=g\,s(U)\), or
equivalently
\begin{equation}
 e^{\ii\varphi_v(U^g)}s_v(U^g)
 =e^{\ii\varphi_v(U)}\,g\,s_v(U).
\end{equation}
The reference element chosen at \(U^g\) can differ by a
phase from the gauge transform of the reference element
at \(U\). The relative phases must compensate this
difference, so that the full measure is gauge invariant.

A smooth measure requires smooth local phase choices
obeying \eqref{eq:measure-phase-change} on overlapping
neighborhoods. A single reference basis over the whole
component is not required. This definition also applies
when \(N\ne\bar N\) or when \(D\) has zero modes.

\subsection{The measure term and its current}

We now describe the contribution of the measure to variations
of the fermion effective action in terms of a current, which
will be the starting point of our construction.

Consider an infinitesimal variation of the links,
\begin{equation}
 \delta_\eta U(x,\mu)=\eta_\mu(x)U(x,\mu),\qquad
 \eta_\mu(x)=\sum_{a=1}^{24}\eta_\mu^a(x)T^a.
 \label{eq:link-variation}
\end{equation}
The components \(\eta_\mu^a(x)\) are real, and
\(\delta_\eta\) denotes differentiation in this direction.
The Grassmann coefficients remain fixed. The field \(\psi\)
therefore varies through \(v_k(U)\), while \(\bar\psi\) remains fixed.

As \(U\) changes, the measure vector \(s(U)\) moves in the
fixed exterior-product space built from \(\mathcal H\).
Its derivative includes motion of the allowed fermion subspace.
The component along \(s(U)\) records the phase change. We extract
this component with the inner product induced by that on
\(\mathcal H\), and define the \emph{measure term} by
\begin{equation}
 \mathfrak L_\eta
 =\ii(s,\delta_\eta s)
 =\ii\sum_{k=1}^N(v_k,\delta_\eta v_k)
                   -\delta_\eta\varphi_v.
 \label{eq:measure-basis}
\end{equation}

The two terms account for the changing reference basis and
relative phase. L\"uscher incorporates the phase choice into
the basis representing the measure~\cite{LuscherNonabelian};
keeping it explicit here produces the second term.

Orthonormality makes each \((v_k,\delta_\eta v_k)\) purely
imaginary, and \(\varphi_v\) is real, so
\(\mathfrak L_\eta\) is real. Under a change of reference
basis, the phase transformation
\eqref{eq:measure-phase-change} makes the changes in the two
terms cancel. Thus \(\mathfrak L_\eta\) depends on the chosen measure,
but not on the reference basis used to express it.

Its linearity in the components \(\eta_\mu^a(x)\) defines
the \emph{measure current}:
\begin{equation}
 \mathfrak L_\eta
 =\sum_{x\in\Gamma}\sum_{\mu=1}^4\sum_{a=1}^{24}
       \eta_\mu^a(x)j_\mu^a(x;U).
 \label{eq:measure-term}
\end{equation}

To see how the measure term enters the fermion integral,
substitute the field expansions into the action:
\begin{equation}
 \begin{split}
 S_F&=\sum_{l=1}^{\bar N}\sum_{k=1}^N
                 \bar c_l M_{lk}(U)c_k,\\
 M_{lk}(U)&=\sum_{x\in\Gamma}\bar v_l(x)(Dv_k)(x).
 \end{split}
\end{equation}
Here \(M\) contains the matrix elements of \(D\) between
the two reference bases; it has \(\bar N\) rows and \(N\)
columns. When \(N=\bar N\), we fix the Berezin normalization
so that the Gaussian integral over the coefficients \(c_k,\bar c_l\)
equals \(\det M\).
The fermion partition function is then
\begin{equation}
 Z_F(U)=\int\mathcal D[\psi]\,\mathcal D[\bar\psi]\,
                    e^{-S_F}
       =e^{\ii\varphi_v(U)}\det M(U).
\end{equation}
Where \(D\) is invertible, the relation
\(D\widehat P_-=P_+D\) makes its restriction an invertible map
from \(\im\widehat P_-\) to \(\im P_+\). Thus
\(N=\bar N\) and \(M\) is invertible. Differentiating
\(M\) changes both \(D\) and the fermion reference basis.
Since \(\bar v_lP_+=\bar v_l\), the same relation ensures
that only the component of \(\delta_\eta v_k\) within
\(\im\widehat P_-\) contributes to \(\delta_\eta M\)
through the basis variation. Its contribution
to \(\delta_\eta\log\det M\) is
\(\sum_k(v_k,\delta_\eta v_k)\). Together with the derivative
of the explicit phase in \(Z_F\), this is
\(-\ii\mathfrak L_\eta\). The remaining contribution comes
from varying \(D\), giving
\begin{equation}
 \delta_\eta\log Z_F
 =\Tr\{(\delta_\eta D)\widehat P_-D^{-1}P_+\}
       -\ii\mathfrak L_\eta.
 \label{eq:determinant-variation}
\end{equation}
The fermion
effective action is \(-\log Z_F\), so its variation receives
the measure contribution \(+\ii\mathfrak L_\eta\).
Locality of the measure current ensures that the interaction
vertices arising from this contribution are exponentially local.
The definitions of the measure term and current use no
inverse of \(D\) and remain applicable when \(D\) has zero
modes or \(N\ne\bar N\).

We will construct the current first and prove that it arises
from a smooth, gauge-invariant fermion measure. The next
section states the conditions needed for this reconstruction.

\section{Conditions on the measure current}
\label{sec:measure}

We now state the conditions under which a proposed current
determines a smooth, local, gauge-invariant fermion
measure~\cite{LuscherNonabelian}. We then state the
existence result established by our construction.

\subsection{Smoothness and locality}
\label{sec:locality-definition}

The current at an original link may depend on the whole gauge
field, but its response to a distant link must be exponentially
small. We impose this requirement also on higher derivatives,
which describe its response to several link variations. The
bounds below control the total response after summing over
the positions of those variations, uniformly in the volume.

To keep track of how far the dependence reaches, write
\begin{equation}
 j_\mu^a(x;U)=\sum_{r\ge0}j_{\mu,r}^a(x;U),
\end{equation}
where the term with integer radius \(r\) uses only links
within distance \(r\) of \((x,x+\hat\mu)\).
Distances are shortest periodic graph distances in original
lattice units; the distance between edges is the minimum
distance between their endpoints. Locality will follow if
the terms that reach farther away become sufficiently small,
including after differentiation.

For each finite-support term we require a smooth extension
to independent \(\SU(5)\) matrices on its input links.
We choose and fix these extensions for the estimates.
This lets us bound a term using only its own finite set of
inputs, without asking whether those inputs extend to an
admissible field elsewhere. The suprema below impose no
plaquette constraint on these extensions; the fermion measure
itself is still defined on \(\mathcal U_\eps\).

Let \(F\) be a scalar or matrix-valued function of the gauge
field. Varying just the link \((y,y+\hat\nu)\) in the
direction \(T^b\) defines
\begin{equation}
 D_{y,\nu,b}F(U)=
 \left.\frac{\dd}{\dd u}
 F(\ldots,e^{uT^b}U(y,\nu),\ldots)\right|_{u=0}.
 \label{eq:link-derivative}
\end{equation}
This is \eqref{eq:link-variation} with generator \(T^b\)
on the chosen link and zero elsewhere. To write several
such derivatives, let \(I=(x,\mu,a)\) label the current
component and \(J=(y,\nu,b)\) label a varied link and its
generator. Set \(j_{I,r}=j_{\mu,r}^a(x)\) and
\(D_J=D_{y,\nu,b}\).

We require a positive constant \(\lambda\) for which
\begin{equation}
 \sup_I\sum_{r\ge0}e^{\lambda r}
 \sum_{J_1,\ldots,J_\ell}
 \sup_U\left|D_{J_1}\cdots D_{J_\ell}j_{I,r}(U)\right|
 \le C_\ell
 \qquad(\ell=0,1,\ldots).
 \label{eq:current-locality}
\end{equation}
The exponential weight requires decay as the support radius
grows. The inner sum controls the response to variations at
all positions and in all generator directions, while the
outer supremum makes the bound uniform in the current
component. For \(\ell=0\), the inner sum is absent.
Derivatives on the same link need not commute, so their
displayed order is retained. Each supremum over \(U\)
ranges over the independent input matrices of the chosen
extension.

The constants \(C_\ell\) may depend on the derivative
order and the fixed construction, but not on volume,
admissible component or refinement level. One positive
\(\lambda\) must work for every derivative order, volume
and component. 
For example, a link at distance \(d\) from
\((x,x+\hat\mu)\) cannot affect any term with \(r<d\).
Only the tail \(r\ge d\) contributes to its derivative,
and \eqref{eq:current-locality} gives
\begin{equation}
 |D_{y,\nu,b}j_\mu^a(x;U)|\le C_1e^{-\lambda d}.
\end{equation}
The corresponding bounds at every fixed derivative order
allow differentiation of the series and make the current
smooth.

We use \(C\) when no derivative order is displayed; its
value may change between estimates. Lower-bound constants
denoted by \(c\) are positive.

\subsection{Gauge invariance and the anomaly}

Under an infinitesimal gauge transformation, the two
contributions in \eqref{eq:determinant-variation} must
cancel. This cancellation gives a covariant divergence equation for the measure current.

In matrix notation the current is
\(j_\mu(x;U)=\sum_a j_\mu^a(x;U)T^a\). We require it to
transform covariantly:
\begin{equation}
 j_\mu(x;U^g)=g(x)j_\mu(x;U)g(x)^{-1}.
\end{equation}
For \(X,Y\in\mathfrak{su}(5)\), write
\(\Ad_gX=gXg^{-1}\) for the adjoint action and
\(\langle X,Y\rangle=-\tr_{\mathbf5}(XY)\) for the
invariant inner product. Thus \eqref{eq:measure-term} can also be
written as \(\sum_{x,\mu}\langle j_\mu(x),\eta_\mu(x)\rangle\).
The inner product sums over the generator index only.

Let \(\omega(x)=\sum_a\omega^a(x)T^a\), with real
coefficients \(\omega^a(x)\), be a gauge parameter.
For \(g_t(x)=e^{t\omega(x)}\), differentiating the transformed
link gives
\begin{equation}
 \left.\partial_tU^{g_t}(x,\mu)\right|_{t=0}
 =\omega(x)U(x,\mu)-U(x,\mu)\omega(x+\hat\mu).
\end{equation}
The corresponding generator in \eqref{eq:link-variation} is
\(\eta_\mu(x)=-\nabla_\mu\omega(x)\), where
\begin{equation}
 \nabla_\mu\omega(x)
 =U(x,\mu)\omega(x+\hat\mu)U(x,\mu)^{-1}-\omega(x).
 \label{eq:forward-difference}
\end{equation}
The backward covariant difference on the current is
\begin{equation}
 (\nabla_\mu^*j_\mu)(x)
 =j_\mu(x)-U(x-\hat\mu,\mu)^{-1}
             j_\mu(x-\hat\mu)U(x-\hat\mu,\mu),
 \label{eq:backward-difference}
\end{equation}
with no sum over \(\mu\) in this definition.

On \(\mathcal H\), the gauge parameter acts pointwise:
\((R(\omega)v)(x)=R(\omega(x))v(x)\) for \(v\in\mathcal H\).
Gauge covariance of the overlap operator gives
\begin{equation}
 \delta_{-\nabla\omega}D=[R(\omega),D].
\end{equation}
Define the lattice gauge anomaly by
\begin{equation}
 \mathcal A^a(x)=\frac{\ii}{2}
 \tr_{\rm spin,color}\{\gamma_5R(T^a)D(x,x)\}.
\end{equation}
The trace is over spin and color at the indicated site \(x\).
When \(D\) is invertible, substituting its gauge variation
into \eqref{eq:determinant-variation} and summing the
current contribution by parts gives
\begin{equation}
 \delta_{-\nabla\omega}\log Z_F
 =\ii\sum_{x,a}\omega^a(x)
 \left\{\mathcal A^a(x)
       -\sum_{\mu=1}^4(\nabla_\mu^*j_\mu)^a(x)\right\}.
 \label{eq:gauge-determinant-variation}
\end{equation}
Vanishing of this variation for every \(\omega\) requires
the covariant divergence equation
\begin{equation}
 \sum_{\mu=1}^4(\nabla_\mu^*j_\mu)^a(x)=\mathcal A^a(x).
 \label{eq:anomaly-equation}
\end{equation}
Cancellation of the local continuum gauge anomaly in
\(R=\mathbf{10}\oplus\bar{\mathbf5}\) does not require
\(\mathcal A^a(x)\) to vanish at nonzero lattice spacing.
Equation~\eqref{eq:anomaly-equation} specifies how the
measure current cancels its contribution to the gauge
variation of the fermion integral.
The anomaly equation and the measure reconstruction
themselves remain defined in zero-mode sectors.

For the proof, we also use the divergence equation summed against
the gauge parameter \(\omega\). Discrete integration by parts
gives the equivalent condition
\begin{equation}
 \mathfrak L_{-\nabla\omega}=m_\omega,
 \qquad
 m_\omega:=\sum_{x,a}\omega^a(x)\mathcal A^a(x)
 =\ii\Tr\{(\widehat P_--P_+)R(\omega)\}.
 \label{eq:ward}
\end{equation}
The last equality follows from
\(\widehat P_--P_+=\gamma_5D/2-\gamma_5\) and
\(\tr_{\rm spin}\gamma_5=0\). Thus \(m_\omega\) is the gauge
anomaly summed against \(\omega\), and \eqref{eq:ward} is
the integrated form of the same divergence equation.

Once a smooth measure has been reconstructed, infinitesimal
gauge invariance implies invariance under finite gauge
transformations, since the site gauge group
\(\prod_{x\in\Gamma}\SU(5)\) is connected.
Section~\ref{sec:measure-reconstruction} gives this argument
for the reconstructed measure, including zero-mode sectors.

\subsection{The full integrability condition}

The next condition makes reconstruction of the measure
independent of the path used to reach a gauge field.
It compares the phase obtained from the current with
the phase produced by transporting the allowed fermion
subspace around a closed curve.

Let \(\gamma:t\mapsto U_t\), \(0\le t\le1\), be a
piecewise smooth curve in \(\mathcal U_\eps\). Define its
link generator \(\eta(t)\) and its measure Wilson line
\(W[\gamma]\), the phase obtained by integrating the measure
term along the curve and exponentiating:
\begin{equation}
 \eta_\mu(x,t)=\partial_tU_t(x,\mu)U_t(x,\mu)^{-1},
 \qquad
 W[\gamma]=\exp\left\{\ii\int_0^1
                         \mathfrak L_{\eta(t)}[U_t]\,\dd t\right\}.
 \label{eq:measure-wilson-line}
\end{equation}
For a current arising from a measure, the Wilson line
integrates both terms in \eqref{eq:measure-basis}.
In reconstruction, \(W[\gamma]\) is computed directly
from the proposed current before the relative phase
is chosen.

Write \(P_t=\widehat P_-[U_t]\), and define the transport
operator \(Q_t\) on the fixed Hilbert space \(\mathcal H\) by
\begin{equation}
 \partial_tQ_t=[\partial_tP_t,P_t]Q_t,\qquad Q_0=1.
 \label{eq:projector-transport}
\end{equation}
This is Kato transport~\cite{Kato1950}. For a piecewise smooth
curve, the Wilson integral is the sum over its finitely many smooth pieces.
Solve the transport equation on each piece, continuing from the
preceding endpoint value. Thus $Q_t$ is continuous at the joins,
where no derivative is required. The projector is differentiated
as an operator on the fixed space $\mathcal H$.
The solution is unitary and satisfies \(P_tQ_t=Q_tP_0\), so
\(v_t=Q_tv_0\) stays in \(\im P_t\) whenever
\(v_0\in\im P_0\). Moreover, $P_t\partial_t v_t=0$ on each smooth piece:
the transported vector's derivative has no component within the
instantaneous chiral subspace. This condition fixes the rotation of
vectors within that subspace. Taking their exterior product therefore gives
a definite reference transport for the measure phase.

For a closed curve \(U_1=U_0\), the determinant of this
transport on the chiral subspace is
\begin{equation}
 \mathcal T[\gamma]=\det(1-P_0+P_0Q_1).
 \label{eq:twist}
\end{equation}
The operator inside the determinant acts as the identity
on the complement of \(\im P_0\). The determinant \(\mathcal T[\gamma]\)
is the return phase acquired by the exterior product of the
transported fermion basis;
the antifermion basis remains fixed. We call \(\mathcal T\)
the projector determinant phase.

Starting from a measure at \(U_0\), reconstruction combines
projector transport with the factor \(W[\gamma]^{-1}\).
For a closed curve, the accumulated phase is
\(\mathcal T[\gamma]W[\gamma]^{-1}\); returning to the
same measure requires this phase to be one.
The full integrability condition is
\begin{equation}
 W[\gamma]=\mathcal T[\gamma]
 \qquad\text{for every closed admissible curve }\gamma.
 \label{eq:integrability}
\end{equation}
It is required even for curves that cannot be contracted to
a point while remaining in the admissible field space.
Both sides can be nontrivial phases.
L\"uscher's reconstruction then gives a smooth measure
from the current, up to a constant phase in each component.
We give the corresponding reconstruction in
section~\ref{sec:global}.

\subsection{The differential integrability condition}

Full integrability requires the current and projector
transport to assign the same phase to every closed curve.
For a small curve generated by two link variations, the
leading phase is determined by a curvature. Equating the
two curvatures gives the differential integrability
condition. It is enough for contractible curves, but
noncontractible curves require a further argument.

For gauge-field-independent link generators \(\eta\) and
\(\zeta\), differentiation of \eqref{eq:measure-basis} gives
\begin{equation}
 \delta_\eta\mathfrak L_\zeta
 -\delta_\zeta\mathfrak L_\eta
 +\mathfrak L_{[\eta,\zeta]}
 =\ii\Tr\{\widehat P_-
        [\delta_\eta\widehat P_-,\delta_\zeta\widehat P_-]\}.
 \label{eq:curvature}
\end{equation}
The left side is the curvature obtained from the current;
the right side depends only on the chiral projector.
Here \([\eta,\zeta]_\mu(x)=[\eta_\mu(x),\zeta_\mu(x)]\)
is the matrix commutator. Its measure term is needed because
the two link variations need not commute. The convention
\([\delta_\eta,\delta_\zeta]=-\delta_{[\eta,\zeta]}\)
fixes the positive sign in \eqref{eq:curvature} and makes
the terms containing \(\varphi_v\) cancel.

We will use differential forms to express the same
condition in the projector comparisons below. At each admissible
field \(U\), a one-form assigns a number linearly to a link
variation \(\eta\). The measure term is therefore the real one-form
\(\mathfrak L(\eta)=\mathfrak L_\eta\).
Its integral along \(\gamma\) is the integral already used
to define \(W\):
\(\int_\gamma\mathfrak L=\int_0^1
\mathfrak L_{\eta(t)}[U_t]\,\dd t\).
Write \(\dd\) for the exterior derivative on gauge-field
space and name the projector curvature
\begin{equation}
 \Omega(\eta,\zeta)=\ii\Tr\{\widehat P_-
        [\delta_\eta\widehat P_-,\delta_\zeta\widehat P_-]\}.
 \label{eq:projector-curvature}
\end{equation}
Equation~\eqref{eq:curvature} is then
\(\dd\mathfrak L=\Omega\).

For a smooth deformation,
this equality controls how the two phases change. The deformation sweeps out
a surface; its curvature integral gives the phase change
for each transport law. Equal curvatures therefore leave
the ratio \(W[\gamma]/\mathcal T[\gamma]\) unchanged.

The same conclusion holds for a continuous deformation between
piecewise smooth loops. First reparametrize the boundary loops to
be stationary near their finitely many joins, including the closing
join; this leaves both phases unchanged. Extend these changes to
the deformation and make it constant in the deformation parameter
near each boundary loop. The resulting deformation can be smoothed while retaining the two
boundary loops and keeping each intermediate curve closed. Its
compact image lies in the open admissible domain, so sufficiently
close smoothing remains admissible.

For a contractible curve this ratio is one, since the
curve can be shrunk to a constant curve, for which both
phases are one. For a noncontractible curve, requiring
the ratio to be one is additional information in
\eqref{eq:integrability}. Our local and global proofs
address these two requirements separately.
Figure~\ref{fig:local-and-global-integrability} illustrates why
matching curvatures does not by itself establish full integrability.

\begin{figure}[htbp]
  \centering
  \includegraphics[width=\linewidth]{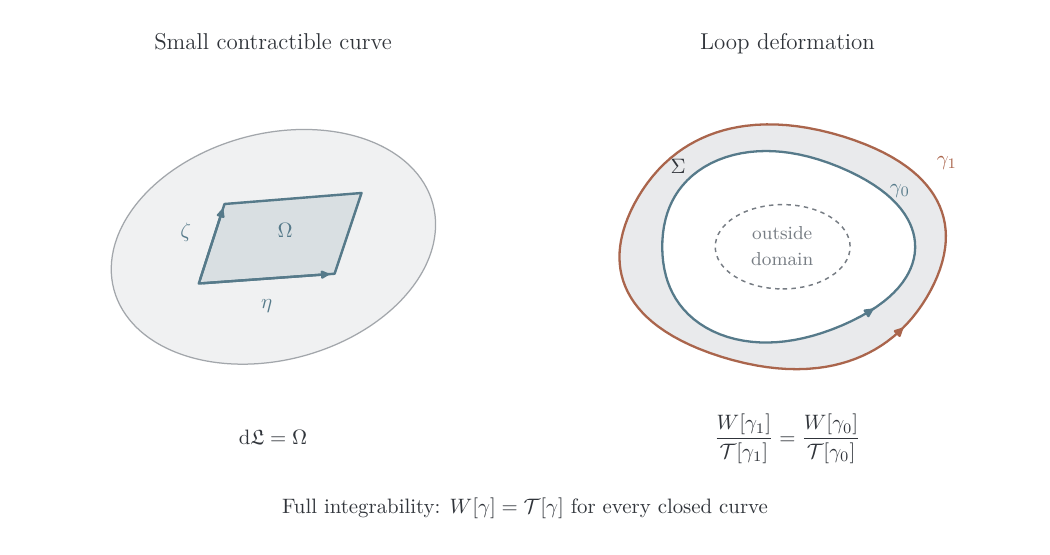}
  \caption{Left: a small contractible curve spans link variations $\eta$ and $\zeta$. Here $\mathfrak L$ is the measure term, $\Omega$ the projector curvature, and $\mathrm d$ the exterior derivative on gauge-field space. Right: deforming $\gamma_0$ into $\gamma_1$ sweeps out the shaded surface $\Sigma$. Here $W$ is the measure Wilson line and $\mathcal T$ the projector determinant phase; their ratio is unchanged by the deformation when the curvature condition at left holds. The bottom condition applies to every closed admissible curve $\gamma$. The central hole schematically represents an obstruction to contraction.}
  \label{fig:local-and-global-integrability}
\end{figure}

When the link generators \(\eta\) and \(\zeta\) depend on
\(U\), differentiating \(\mathfrak L_\zeta\) also differentiates
\(\zeta\). The exterior derivative is then
\begin{equation}
 (\dd\mathfrak L)(\eta,\zeta)
 =\delta_\eta\mathfrak L_\zeta-\delta_\zeta\mathfrak L_\eta
       -\mathfrak L([\delta_\eta,\delta_\zeta]).
\end{equation}
Here \([\delta_\eta,\delta_\zeta]\) is the commutator of the
two variations acting on functions of \(U\). Its link generator is
\(\delta_\eta\zeta-\delta_\zeta\eta-[\eta,\zeta]\), which reduces
to \(-[\eta,\zeta]\) when the generators are independent of \(U\).
In particular, denote by \(X_\omega\)
the vector field generated by a gauge transformation.
Its link generator is \(-\nabla_\mu\omega(x)\), which
depends on \(U\). In this notation the divergence equation \eqref{eq:ward} is
\(\mathfrak L(X_\omega)=m_\omega\).

\subsection{Statement of the result}

We now state the existence result for the overlap operator
and the representation \(R=\mathbf{10}\oplus\bar{\mathbf5}\)
specified in section~\ref{sec:lattice}.

\begin{theorem}
\label{thm:measure}
For the overlap operator \eqref{eq:wilson}, there is
an \(\eps>0\) and a smooth gauge-covariant current
\(j_\mu^a(x;U)\) on \(\mathcal U_\eps\) which satisfies
the covariant divergence equation \eqref{eq:anomaly-equation}
and the full integrability condition \eqref{eq:integrability}.
It has the local bounds \eqref{eq:current-locality} at every fixed
derivative order. Under lattice rotations and reflections it
transforms as an axial vector; its charge-conjugation law
relates the currents in \(R\) and \(R^*\).
These laws are stated in
\eqref{eq:axial}--\eqref{eq:axial-components}.
The admissibility threshold and positive locality rate
are independent of volume, component and derivative order.
Consequently this current determines a smooth gauge-invariant
fermion measure for \(\mathbf{10}\oplus\bar{\mathbf5}\) on
every nonempty component, up to a constant phase on each
component.
\end{theorem}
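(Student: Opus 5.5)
The proof follows the refinement strategy outlined in the introduction. The current will be built as a telescoping sum of projector comparisons between successive auxiliary lattices.

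\emph{Step 1: reconstruction and sampling.} From admissible links \(U\) I would reconstruct a smooth connection \(\nabla^0[U]\) on \(T^4\) by a local, gauge-covariant construction. Its curvature is bounded by \(C\eps\) and each of its \(U\)-derivatives depends exponentially weakly on distant links. Parallel transport along the edges of the lattice of spacing \(h_n=3^{-n}\) then gives sampled links \(F_{h_n}U\). Their plaquettes are \(O(h_n^2\eps)\), so they are admissible at every level with a margin that improves. For each \(n\) I embed the coarse fermion space isometrically into the fine one by parallel transport to nearby fine sites. On the orthogonal complement I extend the coarse \(H\) by \(\gamma_5\), and then interpolate linearly \(H_n(s)=(1-s)\widetilde H_n+sH_{n+1}\).

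\emph{Step 2: the gap.} The key spectral input is that \(H_n(s)^2\ge c^2>0\) uniformly in \(s\), \(n\), the volume and \(U\). I would prove this by a local argument. Cover the lattice by blocks where \(\nabla^0\) is gauge-equivalent to a nearly trivial connection. On each block compare both endpoint operators with the free ones, using the explicit free gap of the Wilson operator at \(r=1\) and of the \(\gamma_5\)-extension. Then patch the blocks together with an IMS-type localization estimate, where the commutator errors are \(O(\eps)\) and \(O(h)\). With the gap in hand, the projector \(P_n(s)=(1+\operatorname{sign}H_n(s))/2\) is smooth and exponentially local by the resolvent/Combes--Thomas methods of \cite{HJL}.

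\emph{Step 3: comparison terms, local integrability and locality.} Set \(B_n(\eta)=\int_0^1\Omega_{P_n(s)}(\partial_s,\eta)\,\dd s\), the curvature integrated over the strip. Stokes' theorem on \([0,1]\times\gamma\) gives \(e^{\ii\oint B_n}=\mathcal T_n/\mathcal T_{n+1}\), including the cancelling \(P_+\) factors on the added states. For contractible loops it gives \(\dd B_n=\Omega_n-\Omega_{n+1}\). I then define \(\mathfrak L=\sum_nB_n\).

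Convergence is the main obstacle. The terms \(B_n\) have trace norms growing like \(h_n^{-4}\) times the number of sites, while naive smallness gives only a few powers of \(h_n\). I would expand \(B_n\) in the connection and its derivatives in each local patch. The lattice symmetries and the Ginsparg--Wilson structure kill the low-order terms, and the leading surviving term is proportional to the cubic trace \(\tr_R(X\{Y,Z\})\). For \(\mathbf{10}\oplus\bar{\mathbf5}\) this trace vanishes identically, since \(A(\mathbf{10})=-A(\bar{\mathbf5})\). This leaves a remainder \(O(h_n^{4+\delta})\) per unit original volume, together with all its \(U\)-derivatives and with exponential locality inherited from the projectors. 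Summing over \(n\) then gives \eqref{eq:current-locality}, and \(\Omega_{n}\to0\) gives \(\dd\mathfrak L=\Omega\).

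\emph{Step 4: symmetries, divergence and global integrability.} Covariance of every step under gauge transformations, lattice rotations and reflections, and charge conjugation yields the transformation laws. Evaluating \(B_n\) on \(X_\omega\) gives \(m^{(n)}_\omega-m^{(n+1)}_\omega\), since contracting with a gauge direction turns the curvature into the \(s\)-derivative of \(\ii\Tr\{(P-P_+)R(\omega)\}\). The sum telescopes to \(m_\omega\) because the fine anomaly tends to zero by the same cubic cancellation, which gives \eqref{eq:ward}.

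For noncontractible loops, Step 3 reduces the problem to showing \(\mathcal T_n[\gamma]\to1\). Following the stated strategy, I would deform the reconstructed connection along \(\gamma\) to one valued in an \(\SU(3)\) subgroup under which \(R\) splits into conjugate pairs. Including color parallel transport then makes the fine return phase exactly \(+1\) by the antiunitary pairing, and the deformation corrections are bounded by the same \(h_n\)-decay estimates.

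Finally, I fix the measure at a reference field in each component and transport it by \(\widehat P_-\) times \(W^{-1}\). Path independence follows from \eqref{eq:integrability}, and gauge invariance follows from \eqref{eq:ward} together with the connectedness of the gauge group.
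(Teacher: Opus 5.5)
Your outline follows the paper's architecture, but the gap lemma in Step 2 is missing its essential ingredient. After you compare each block with free coefficients, the localization argument needs a lower bound for the \emph{free interpolation} $\widehat H_s(p)=(1-s)H_{\rm f}+s\{IH(p)I^\dagger+\gamma_5(1-II^\dagger)\}$ at every $s\in[0,1]$. You only invoke the free Wilson gap and the gap of the $\gamma_5$ extension, i.e.\ the endpoints $s=0,1$. A convex combination of two gapped Hermitian operators can be singular, and here the danger is concrete. The endpoints are diagonal in different momentum bases (81 fine momenta per coarse momentum, mixed only through the embedded spinor). On $(\im I)^\perp$ the coarse endpoint is $+\gamma_5$, while the fine operator near the physical mode is close to $H(0)=-\gamma_5$. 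Indeed the diagonal part $L_s(k)=(1-s)H(k)+s\gamma_5$ vanishes at $k=0$, $s=1/2$. At $p=0$ the full operator survives only because the $k=0$ mode lies entirely in $\im I$, and for nearby $p$ the near-zero fine mode $k=p/3$ is only partially embedded.

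The paper supplies the missing free lemma in appendix~\ref{app:s:spectral}. It reduces a putative zero mode to the four-component equation $(1+s\mathcal R_sC_p)\zeta=0$, with $\mathcal R_s=I^\dagger L_s^{-1}I$ and $C_p=H(p)-\gamma_5$. The Clifford structure shows this forces the coefficient vectors of $\mathcal R_s$ and $C_p$ to be antiparallel. That is excluded by proving $\operatorname{sign}V_{s,\mu}=\operatorname{sign}\sin p_\mu$ from the monotonicity of $\cot(t/2)/\Delta_s(t)$, with the corner momenta handled separately. This is where the centered threefold refinement, the constant copying and the sign of the $\gamma_5$ extension are actually used.

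Relatedly, your level-$0$ coarse field must be the actual $U$, not $F_1U$ (they differ by $O(\eps)$). Otherwise the telescoped identities yield $\dd\mathfrak L=F_1^*\Omega$ and $W[\gamma]=\mathcal T[F_1\gamma]$, a measure for $D[F_1U]$ rather than $D[U]$. The first comparison's gap estimate must then also absorb the discrepancy $U-F_1U$.

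Two further steps need ideas you have not stated. In the global argument, an antiunitary commuting with the return operator only pairs $\lambda$ with $\bar\lambda$. If it squares to $+1$, eigenvalues $\pm1$ can remain unpaired and the determinant is fixed only up to sign, which is exactly the kind of global obstruction that must be excluded. The paper combines the color conjugation $J_{\rm col}$ on $2\mathbf3\oplus2\bar{\mathbf3}\oplus3\mathbf1$ (square $+1$) with the spin antiunitary $J_{\rm spin}=\mathcal B\mathsf C$ (square $-1$). Then $J^2=-1$, and $(Jv,v)=-(Jv,v)$ forces orthogonal pairs even for real eigenvalues.

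You must also show that the $\SU(3)$ reduction exists. The paper forms the rank-five history bundle over $T^4\times S^1$ and obtains two orthonormal sections by transversality: generic zeros would have real codimension $10$ and then $8$, exceeding $5$. The auxiliary connection keeps these sections parallel in spacetime and along the loop.

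Finally, the low-order cancellation in Step 3 needs a specific mechanism rather than generic lattice symmetry or the Ginsparg--Wilson relation. Spin conjugation makes each real local trace contribution odd under $R\to R^*$, and cubic anomaly cancellation gives \eqref{gen:ordered-traces}, equality of ordered traces in $R$ and $R^*$ through \emph{four} generator factors. The four-factor case matters. The induced fine-link generators are only $O(1)$ and each background insertion costs $h^2$. If only three-factor terms cancelled, $F_h^*\Omega$ and the cylinder curvature $\kappa_h$ would be $O(h^4)$ per site, hence $O(1)$ after the $h^{-4}$ site sum, and neither would tend to zero.
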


Translations, coordinate permutations and reflections act
within the same lattice theory when they preserve its periods
and fermion boundary conditions. Otherwise the transformation
laws relate the theories on the corresponding transformed tori.
The transformation laws of the current do not fix relative
phases between components or guarantee that every choice
of those phases respects discrete symmetries.

The proof of theorem~\ref{thm:measure} uses the finite
comparison identity established in section~\ref{sec:comparison}
to construct the measure current in section~\ref{sec:construction}.
Section~\ref{sec:locality} establishes smoothness, the uniform
locality bounds and differential integrability.
Section~\ref{sec:symmetries} proves gauge covariance, the
covariant divergence equation and the stated symmetry properties.
Section~\ref{sec:global} establishes the full integrability
condition for all closed admissible loops, including
noncontractible ones. These results are combined in
section~\ref{sec:measure-reconstruction} to reconstruct a
smooth gauge-invariant fermion measure on each component
and complete the proof. The supporting geometric constructions
and analytic estimates are given in
appendices~\ref{app:g:geometry}--\ref{app:d:cylinder-traces}.

\section{A finite comparison identity}
\label{sec:comparison}

We first compare two families of chiral subspaces by smoothly
interpolating between their projectors. The resulting identity
relates their curvatures and their phases around closed curves.
It will supply the contribution to the measure term from each
pair of successive lattices.

Let \(P_s(U)\), \(0\le s\le1\), be the projector onto the positive spectrum of a smooth family of
Hermitian operators on a fixed finite-dimensional Hilbert space.
Assume that their spectra stay uniformly away from zero. Here \(U\) ranges over a common open set of
original gauge fields, and \(s\) is an auxiliary comparison
parameter. The projector at an intermediate \(s\) need not
be the overlap projector of a gauge-link field.

We seek a contribution to the measure term whose exterior
derivative is the difference between the curvatures of
the endpoint projectors \(P_1\) and \(P_0\).
The interpolation supplies such a one-form: we evaluate
the projector curvature \eqref{eq:projector-curvature}
with one variation along \(s\) and the other along the
original gauge-field variation \(\eta\), then integrate
over \(s\):
\begin{equation}
 b_\eta=\ii\int_0^1\Tr\{P_s
             [\partial_sP_s,\delta_\eta P_s]\}\,\dd s,
 \qquad b(\eta)=b_\eta.
 \label{eq:transgression}
\end{equation}
Thus \(b\) is a real one-form: the trace in
\eqref{eq:transgression} is purely imaginary.
The lemma below establishes the required curvature
difference and shows that the same one-form also
relates the endpoint phases around closed curves.
These two properties allow us to assemble the measure
term from successive projector comparisons.
The gap keeps the rank of \(P_s\) fixed during the comparison.
The notation \(\dd P_s\) means the matrix-valued one-form
whose value on \(\eta\) is \(\delta_\eta P_s\).
The Hilbert space is held fixed during differentiation,
and \(\dd\) acts independently of \(\partial_s\).

For this comparison, we apply the determinant-line construction
of section~\ref{sec:fermion-measure} to the interpolating
subspaces \(\im P_s\). Let \(P_+\) be a fixed reference
projector on the same Hilbert space; in our applications
it is the positive-chirality projector. The corresponding
relative determinant line is
\begin{equation}
 \mathcal D(P_s)=\det\im P_s\otimes(\det\im P_+)^*.
\end{equation}
For a local orthonormal basis \(v_k\) of \(\im P_s\)
and a fixed orthonormal row basis of \((\im P_+)^*\),
define \(s_v\) by the exterior products in
\eqref{eq:measure-exterior}.

To compare phases at different gauge fields, we use the
projector parallel transport of
\eqref{eq:projector-transport}. Transporting each fermion
basis vector and taking their exterior product gives
parallel transport in the determinant line; the
antifermion factor remains fixed. Denote the associated
connection by \(\nabla^{\det}\). Its connection one-form
records the infinitesimal phase change of a chosen local
reference element relative to this parallel transport.

In the reference element \(s_v\), the connection one-form
is \(\sum_k(v_k,\dd v_k)\), obtained by differentiating
the factors in the exterior product. The fixed
antifermion basis contributes zero. For the original
measure, the reference element is multiplied by
\(e^{\ii\varphi_v}\). Differentiating this additional
phase adds \(\ii\dd\varphi_v\), giving
\begin{equation}
 \sum_k(v_k,\dd v_k)+\ii\dd\varphi_v
       =-\ii\mathfrak L,
\end{equation}
where the equality follows from
\eqref{eq:measure-basis}. Thus the measure term describes
how the phase of the chosen measure changes relative
to projector parallel transport.

\begin{lemma}
\label{lem:transgression}
Let \(\Omega_s(\eta,\zeta)=\ii\Tr\{P_s
[\delta_\eta P_s,\delta_\zeta P_s]\}\) be the curvature
at parameter \(s\). For a closed curve \(\gamma\) in
the common field domain, let \(\mathcal T_0[\gamma]\) and
\(\mathcal T_1[\gamma]\) be the projector determinant phases of
\eqref{eq:projector-transport}--\eqref{eq:twist} for the
endpoint projectors. Then the comparison term
\eqref{eq:transgression} satisfies
\begin{equation}
 \dd b=\Omega_1-\Omega_0,\qquad
 \exp\left\{\ii\oint_\gamma b\right\}
       =\frac{\mathcal T_1[\gamma]}{\mathcal T_0[\gamma]}.
 \label{eq:comparison-identities}
\end{equation}

\end{lemma}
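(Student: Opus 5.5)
We treat the two-parameter family \(P_s(U)\) as a single smooth family of gapped projectors on the product space \([0,1]\times\mathcal O\), where \(\mathcal O\) is the common open set of gauge fields. Its projector curvature \(\widetilde\Omega\), formed with the total exterior derivative \(\widetilde{\dd}=\dd s\,\partial_s+\dd\), is closed on this product. We then split it by form degree in \(s\). Concretely, \(\widetilde\Omega=\ii\Tr\{P[\widetilde{\dd}P,\widetilde{\dd}P]\}\) decomposes as
\begin{equation}
 \widetilde\Omega=\Omega_s+\dd s\wedge\beta_s,
 \qquad
 \beta_s(\eta)=\ii\Tr\{P_s[\partial_sP_s,\delta_\eta P_s]\}\cdot 2\cdot\tfrac12,
\end{equation}
with normalization fixed so that \(b=\int_0^1\beta_s\,\dd s\). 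We will check this normalization against the convention in \eqref{eq:projector-curvature}, where \(\Omega(\eta,\zeta)\) already contains the commutator.

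For the first identity we use closedness, \(\widetilde{\dd}\widetilde\Omega=0\). A projector curvature is locally \(\widetilde{\dd}\) of the determinant-line connection form \(\sum_k(v_k,\widetilde{\dd}v_k)\), up to a factor of \(\ii\). Alternatively, closedness can be verified directly from \(P\,\dd P\,P=0\) and cyclicity of the trace. The component of \(\widetilde{\dd}\widetilde\Omega=0\) containing \(\dd s\) reads \(\partial_s\Omega_s=\dd\beta_s\). Integrating in \(s\) from \(0\) to \(1\), and using that \(\dd\) commutes with \(\int\dd s\) on the fixed Hilbert space, gives \(\dd b=\Omega_1-\Omega_0\). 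A useful cross-check on signs is the case where \(P_s\) is independent of \(U\), for which both sides vanish.

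For the second identity we use the holonomy argument of the deformation paragraph in the previous subsection, now applied on the product space. Take the cylinder \(\Sigma=[0,1]\times\gamma\), i.e.\ the map \((s,t)\mapsto P_s(U_t)\). Its boundary is \(\gamma\) at \(s=1\) minus \(\gamma\) at \(s=0\); the two seams \(t=0\) and \(t=1\) coincide with opposite orientation because \(\gamma\) is closed. We then apply the determinant-line Stokes/holonomy formula: the ratio of Kato holonomies \(\mathcal T_1/\mathcal T_0\) of the boundary loops equals \(\exp\{\ii\int_\Sigma\widetilde\Omega\}\) up to the sign convention, where \(\mathcal T\) is the determinant of the restricted transport, which is a \(\U(1)\) holonomy. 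The required identification is that the Kato transport of \eqref{eq:projector-transport}, taken in exterior powers, is exactly parallel transport for \(\nabla^{\det}\) with curvature \(\Omega\). This follows from \(P\partial_tv=0\) and the computation of the connection form given above. Pulling back, only the \(\dd s\wedge\beta_s\) part of \(\widetilde\Omega\) contributes on \(\Sigma\), and \(\int_\Sigma\dd s\wedge\beta_s=\oint_\gamma b\). Stokes' theorem is legitimate because \(\Sigma\) is a compact smooth (piecewise in \(t\)) surface in the product domain, and the gap guarantees that the projectors and all the derivatives involved are smooth there.

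The main obstacle is making the holonomy–curvature relation exact, rather than only infinitesimal, for a non-simply-connected cylinder, and tracking signs and factors of two. We plan to avoid the general non-abelian Stokes theorem. Instead we choose a smooth orthonormal frame \(v_k(s,t)\) of \(\im P_s(U_t)\) on the whole cylinder. Such a frame exists: take the Kato transport of a frame at \(U_0\) first in \(s\) and then in \(t\). It is periodic in \(t\) up to a unitary \(\mathcal Q(s)\) relating the frames at \(t=1\) and \(t=0\), and \(\det\mathcal Q(s)\) varies continuously. Writing \(a=\sum_k(v_k,\widetilde{\dd}v_k)\), we have \(\widetilde{\dd}a=-\ii\widetilde\Omega\) by direct computation, and \(\mathcal T_s=\det\mathcal Q(s)^{\pm1}\exp\{-\oint a\}\). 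Applying the ordinary Stokes theorem on the rectangle \([0,1]\times[0,1]\) then produces the seam terms; these cancel by the transformation rule \eqref{eq:measure-phase-change}, because the frames at \(t=0\) and \(t=1\) differ by \(\mathcal Q(s)\), whose determinant-log derivative integrates to the change in \(\det\mathcal Q\).
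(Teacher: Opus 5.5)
Your proposal is correct and is essentially the paper's proof. It uses the closed total curvature $\Omega_{\rm tot}$ on the product of the field domain with the $s$-interval, whose $\dd s$-component gives $\partial_s\Omega_s=\dd\beta_s$ (your factor $2\cdot\tfrac12$ is just $1$, so $\beta_s$ is exactly the integrand of $b$), together with Stokes' formula on the cylinder $[0,1]\times\gamma$ for the determinant-line connection of curvature $-\ii\Omega_{\rm tot}$. Your rectangle trivialization is a spelled-out version of that Stokes step: with the Kato-transported frame only the $t=1$ edge contributes, namely $\partial_s\log\det\mathcal Q(s)$ with $\det\mathcal Q(s)=\mathcal T_s[\gamma]$, which fixes your $\pm$ and gives the stated ratio $\mathcal T_1/\mathcal T_0$ rather than its inverse.
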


\begin{figure}[htbp]
  \centering
  \includegraphics[width=\linewidth]{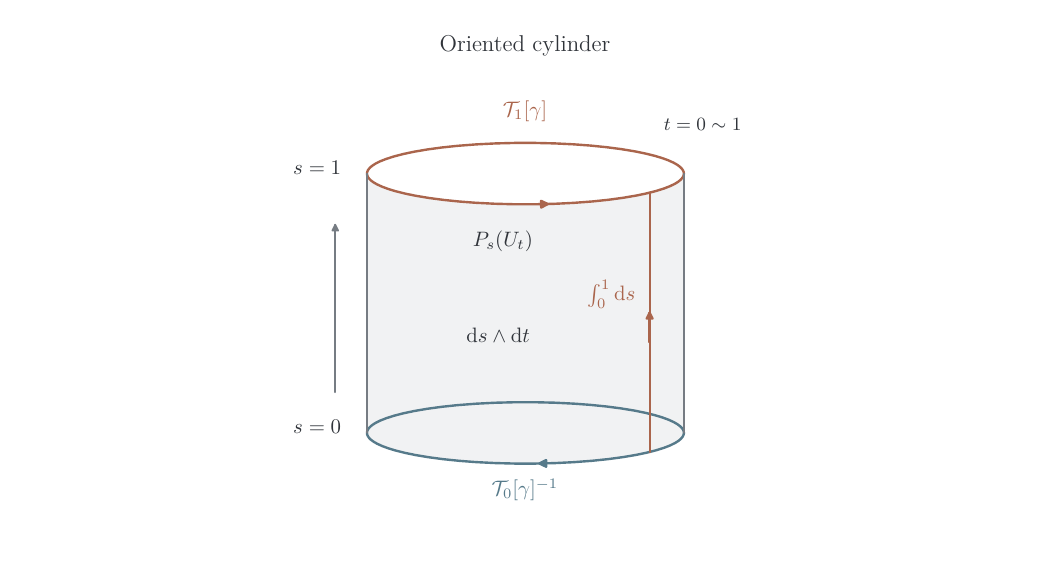}
  \caption{A schematic cylinder of interpolating projectors $P_s(U_t)$ on one fixed fermion space. The parameter $t$ runs around the closed gauge-field curve $\gamma$, with $t=0$ and $t=1$ identified; $s$ runs between the two endpoint projectors. The orientation $\mathrm d s\wedge\mathrm d t$ gives the opposite boundary arrows. Here $\mathcal T_1[\gamma]$ and $\mathcal T_0[\gamma]^{-1}$ are the endpoint determinant phases with these orientations. The highlighted vertical line marks integration over $s$. Intermediate projectors need not come from physical gauge-link fields, and the cylinder does not require a disk filling of $\gamma$.}
  \label{fig:oriented-comparison-cylinder}
\end{figure}

\begin{proof}
The curvature at fixed \(s\) and the integrand of \(b\)
are components of a single two-form on the product of the
field domain with the \(s\) interval. Put
\(\dd_{\rm tot}=\dd+\dd s\,\partial_s\).
The two-form
\begin{equation}
 \Omega_{\rm tot}=\ii\Tr\{P_s\dd_{\rm tot}P_s
                                  \wedge\dd_{\rm tot}P_s\}
\end{equation}
is closed. Differentiating \(P_s^2=P_s\) shows that
\(\dd_{\rm tot}P_s\) has only off-diagonal blocks between
\(\im P_s\) and its orthogonal complement. The exterior
derivative of \(\Omega_{\rm tot}\) is a trace of three
such off-diagonal operators and therefore vanishes.
The component of this closure equation containing one
\(\dd s\) equates \(\partial_s\Omega_s\) to the
exterior derivative of the integrand in \eqref{eq:transgression}.
Integrating from \(s=0\) to \(s=1\) gives the first identity.

For the second identity, consider the cylinder
\([0,1]\times\gamma\) shown in
Fig.~\ref{fig:oriented-comparison-cylinder}, with orientation
\(\dd s\wedge\dd t\), where \(t\) parametrizes \(\gamma\).
This cylinder exists
even when \(\gamma\) is not contractible in the field domain:
the extra direction is the projector interpolation itself.
Its boundary traverses \(\gamma\) at \(s=1\) and the reverse curve at \(s=0\),
so the boundary phase is \(\mathcal T_1/\mathcal T_0\).
The same projector transport defines \(\nabla^{\det}\) for
variations of \(s\) as well as \(U\). On this cylinder its
curvature is \(-\ii\Omega_{\rm tot}\). Since parallel transport uses
the negative of the connection one-form, Stokes' formula gives
\(\mathcal T_1/\mathcal T_0=\exp(\ii\int\Omega_{\rm tot})\).
The cylinder integral is \(\oint_\gamma b\), as required.
This identity determines the phase itself, without choosing
a branch of its logarithm.
\end{proof}

To compare different lattice spacings, we will transport coarse
spinors and their color bases together to nearby fine sites.
In these bases, the embedding is a constant matrix and its
orthogonal complement in the fine fermion space is fixed.
At the coarse endpoint, the fermion and reference subspaces
share the positive-chirality subspace of this complement.
Its determinant factor cancels against its dual in
\(\mathcal D(P_s)\). The constant endpoint identification adds
no connection term to projector transport in these coordinates.
The color bases themselves remain \(U\)-dependent, and their
derivatives enter the sampled link matrices.
Section~\ref{sec:construction} gives the construction.

\section{Construction of the current}
\label{sec:construction}

We now construct the comparisons whose sum will give the measure term
$\mathfrak L$. Starting from the original links $U$, we reconstruct one
smooth connection and sample it on successively finer lattices. At each
step we put the coarse and fine Hermitian Wilson operators on a common
fermion space and interpolate between them. The comparison identity of
section~\ref{sec:comparison} then supplies a contribution $B_n$ to the
measure term. We define these contributions here and derive the exact
identities that their finite sums satisfy. The following sections establish
convergence and the reconstruction conditions.

\subsection{A connection determined by the original links}

To control the sum over refinement steps, we need parallel transport
around a shrinking plaquette to approach the identity.
We achieve this by sampling one smooth connection, held fixed during
refinement. Earlier approaches to chiral gauge theory also used gauge-field
interpolation to couple fermions on a finer lattice~\cite{HernandezSundrumChiral,HernandezSundrumInterpolation}.
Here the finer lattices are auxiliary: we use them to construct the measure
for the original overlap operator $D[U]$. We retain $U$ itself as the
first field and use the reconstructed connection to supply the finer fields.

At an original lattice vertex, the fundamental representation of $\SU(5)$
acts on a five-dimensional complex color space. The links $U$ specify
parallel transport between these spaces along the original edges. To sample
finer edges, we need color spaces and transport between their endpoints as
well. We therefore reconstruct a five-dimensional color space $E_y(U)$ at
each point $y$ of the continuous spacetime torus $T^4$, together with a
smooth $\SU(5)$ connection $\nabla^0[U]$. This connection supplies transport
along paths in $T^4$. An orthonormal basis of $E_y(U)$ is a color basis at
$y$; a smoothly varying choice on a neighborhood is called a local frame.

Appendix~\ref{app:g:geometry} gives the reconstruction explicitly. Short
original-lattice transports between nearby vertices are combined with
weights that vary smoothly with $y$. The resulting matrix acts on the
color spaces at those vertices. The plaquette bound makes it close to a
rank-five projection. The eigenspaces of its five eigenvalues near one
together form the color space $E_y(U)$.

To differentiate a color vector as $y$ changes, we project
its derivative back into $E_y(U)$; this defines a connection. The appendix
fixes a normalized volume element, meaning a unit exterior product of five
color vectors, and removes the trace part of the connection so that
transport preserves this element. The resulting connection is $\nabla^0[U]$.
This construction uses local frames and does not require the gauge bundle
to be trivial.

We will also need to compare reconstructed color spaces when $U$ itself
varies along a smooth family. Projecting derivatives in the family
parameters and removing the trace part in the same way supplies transport
in these additional directions. Thus $\nabla^0$ has components along the
family as well as along spacetime. Section~\ref{sec:global} uses the family
component along a closed loop of original gauge fields. During refinement
at a fixed $U$, no such parameter variation is made.

Let $h_n=3^{-n}$, $n=0,1,\ldots$, denote the auxiliary lattice spacings in
original units. At spacing $h=h_n$, the integer site label $x$ denotes the
point $y=hx$ of $T^4$, and $x_\mu$ has period $3^nL_\mu$. Write $f_h(y)$
for the matrix whose five columns form the chosen orthonormal color basis
at that point. We require their exterior product to equal the volume
element fixed by the reconstruction. In such bases, transport by
$\nabla^0[U]$ is represented by an $\SU(5)$ matrix.

The choice of these bases also determines how we will compare fermions at
successive spacings. A coarse-site spinor is transported to nearby fine
sites. If its color basis is transported with it, its numerical components
stay unchanged. We can then express this transport as a constant copying
matrix, even though the transported vectors and bases depend on $U$.

To implement this choice, start at each original vertex $i$ with the
orthonormal basis $C_i(i,U)$ supplied by appendix~\ref{app:g:geometry}, and
set $f_1(i)=C_i(i,U)$. A site with integer label $x$ at spacing $h$ is
assigned the 81 sites with labels $3x+k$ at spacing $h/3$, where
$k\in\{-1,0,1\}^4$. Their spacetime positions are $hx+hk/3$; we call them
the children of $x$. Transport the parent's basis along the straight
segment to each child, and repeat at every refinement step. This determines
$f_h$ at all sampled sites.

Let $\tau(y\leftarrow z)$ denote transport by $\nabla^0[U]$ along the
straight segment from $z$ to $y$. The sampled link matrix is
\begin{equation}
 (F_hU)(x,\mu)
 =f_h(y)^\dagger
       \tau(y\leftarrow y+h\hat\mu)f_h(y+h\hat\mu),
 \qquad y=hx.
 \label{eq:sampled-link-field}
\end{equation}
Reading from right to left, this formula turns numerical components at the
terminal site into a color vector, transports it to the initial site, and
expresses it in the initial basis. It therefore has the terminal-to-initial
orientation of section~\ref{sec:lattice}. These matrices are the gauge
links entering the fine-lattice Hermitian Wilson operator.

The repeated transport of bases does not lead to paths of growing length.
Following a refined site's parents back to an original vertex gives a
total one-norm length at most $4\sum_{m\ge1}3^{-m}=2$ in original units.
Together with the original-link derivative bounds for the connection and
initial bases in appendix~\ref{app:g:geometry}, this bounds every fixed
order of derivatives of $f_h$ uniformly in the refinement level.

We construct the paths and bases first on $\mathbb R^4$, then impose the
periodic identifications. The sampled gauge links are periodic; the
chosen periodic or antiperiodic fermion boundary signs enter separately
when these links act on fermions across a boundary.

There is one distinction between reconstruction and refinement that matters
at the first step. Sampling at spacing one need not give back exactly $U$:
the construction only guarantees $\|F_1U-U\|\le C\eps$ link by link.
We must keep the actual original overlap operator, so we define
\begin{equation}
 U^{(0)}[U]=U,\qquad
 U^{(n)}[U]=F_{h_n}U\quad(n\ge1).
 \label{eq:coarse-fields}
\end{equation}
The first comparison is therefore between $U$ and $F_{1/3}U$. All later
comparisons are between successive samples of the same reconstructed
connection: $F_{1/3}U$, $F_{1/9}U$, and so on.

\subsection{Comparison of the coarse and fine operators}

At step $n$ we compare the Hermitian Wilson operators
$H(U^{(n)})$ and $H(F_{h_{n+1}}U)$ defined by \eqref{eq:wilson}.
They act on fermion spaces of different dimensions, since the fine lattice
has 81 times as many sites. To use the comparison identity, we first embed
the coarse space in the fine space and extend the coarse operator to the
remaining states. Both endpoints will then act on the same space.

The fundamental color bases just constructed induce bases in the fermion
representation $R=\mathbf{10}\oplus\bar{\mathbf5}$. In these bases,
transporting a spinor from a parent to a child leaves its color components
unchanged. The embedding $I_n$ therefore copies a coarse field $\psi$ to
its 81 children:
\begin{equation}
 (I_n\psi)(3x+k)=\psi(x)/9,
 \qquad k\in\{-1,0,1\}^4.
 \label{eq:constant-inclusion}
\end{equation}
The factor $1/9$ makes the 81 copies have the same total squared norm as
the original, so $I_n^\dagger I_n=1$ for the site-sum inner product.
There is no mixing of spin or color components, and $I_n$ commutes with
$\gamma_5$. As with the color bases, we define the assignment on the
covering space before identifying points on the torus; coarse and fine
fermions have the same prescribed boundary signs.

The matrix $I_n$ is constant in these component coordinates:
$\dd I_n=0$. This does not make the color bases independent of $U$.
Their dependence is contained in the sampled link matrices
\eqref{eq:sampled-link-field} and must be differentiated when the
original links vary.

The projector $I_nI_n^\dagger$ selects the embedded coarse space. On its
orthogonal complement, we extend the coarse operator by $\gamma_5$.
This supplies fixed eigenvalues $\pm1$ on all additional states. On these
states, the positive spectral projector is $P_+=(1+\gamma_5)/2$, the same
projector used for the antifermions. The added fermion and reference determinant
factors will therefore cancel, as described in
section~\ref{sec:comparison}.

At fixed original field $U$, define the interpolation
\begin{equation}
 H_{n,s}=(1-s)H(F_{h_{n+1}}U)
 +s\{I_nH(U^{(n)})I_n^\dagger
                    +\gamma_5(1-I_nI_n^\dagger)\},
 \qquad 0\le s\le1.
 \label{eq:direct-comparison}
\end{equation}
It starts at the fine Hermitian Wilson operator when $s=0$ and ends at
the extended coarse operator when $s=1$. Both Hermitian Wilson operators
entering this formula have the dimensionless normalization of \eqref{eq:wilson}, in their own
lattice units. The intermediate Hermitian operators need not be Wilson
operators of gauge-link fields: we interpolate the operators, not the
links. Figure~\ref{fig:centered-refinement} illustrates the sampling,
choice of bases, and embedding used in this comparison.

\begin{figure}[htbp]
  \centering
  \includegraphics[width=\linewidth]{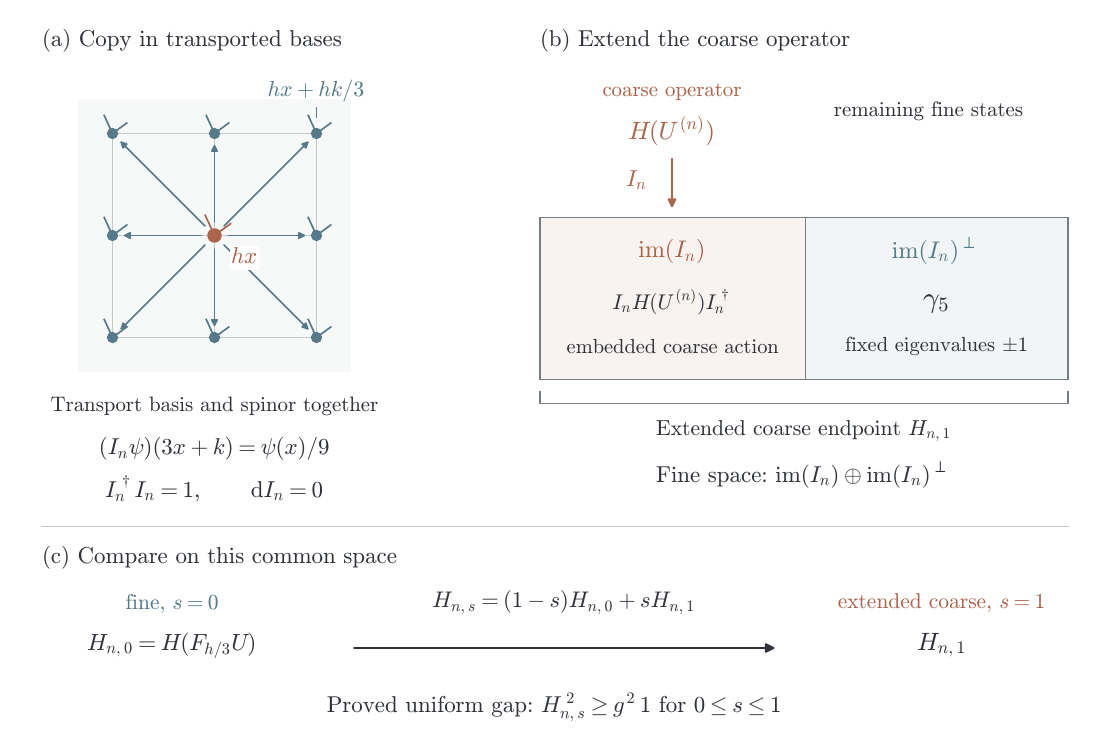}
  \caption{Comparing successive operators in one fermion space.
(a) Transporting a parent's basis and spinor together lets $I_n$ copy
the numerical components with normalization $1/9$. Nine of the 81
children are shown, with $h=h_n=3^{-n}$ and $k\in\{-1,0,1\}^4$.
The bases depend on $U$, although $\dd I_n=0$ in these coordinates.
(b) The extended coarse operator $H_{n,1}$ acts as the coarse operator
on $\im(I_n)$ and as $\gamma_5$ on the orthogonal complement.
The boxes represent subspaces, not spatial regions.
(c) Both endpoints now act on the fine space; their interpolation has
the proved uniform gap $g$. The first coarse field is $U^{(0)}=U$.}
  \label{fig:centered-refinement}
\end{figure}

Smooth endpoint projectors do not by themselves ensure a smooth
interpolation: an eigenvalue of $H_{n,s}$ could cross zero between the
endpoints. We must exclude such crossings. The next proposition states
the required uniform gap, together with the sampling properties used to
prove it and to control the current.

\begin{proposition}
\label{prop:comparison}
Choose the admissibility threshold $\eps$ sufficiently small. Each
sampled link depends on original links in a fixed finite neighborhood of
its position, whose size does not grow with refinement. Its derivatives
of every fixed order with respect to the original links are bounded
uniformly in the sampling spacing. The sampled fields satisfy
\begin{equation}
 \|(F_1U)(x,\mu)-U(x,\mu)\|\le C\eps,
 \qquad
 \|1-(F_hU)(p)\|\le C\eps h^2.
 \label{eq:refinement-estimates}
\end{equation}
Here $p$ is a plaquette of the spacing-$h$ lattice. If all original
plaquette matrices equal the identity, then $F_1U=U$, including fields
with nontrivial transport around a periodic direction.

In the transported bases, an original gauge transformation acts at each
sampled site through the gauge matrix at its original ancestor. Sampled
links obey the usual endpoint transformation law with these matrices.
Sampling also respects translations, coordinate permutations, reflections,
and complex conjugation.

There is a positive constant $g$ such that
\begin{equation}
 H_{n,s}^2\ge g^2\,1
\end{equation}
for every $n\ge0$, $0\le s\le1$, admissible component, and lattice
volume. The admissibility domain and the gap constant are chosen
independently of derivative order.
\end{proposition}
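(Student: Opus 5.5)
The proof splits into two parts: the sampling statements, which are geometric facts about the reconstructed connection $\nabla^0[U]$, and the uniform gap, which is the analytic core. For the sampling part I would invoke the construction of appendix~\ref{app:g:geometry} directly.

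\emph{Sampling properties.}
\begin{enumerate}
\item Locality: $E_y(U)$ and $\nabla^0[U]$ near $y$ are built from short original transports with smoothly varying weights, so they depend only on links in a fixed neighborhood of $y$. Each sampled link \eqref{eq:sampled-link-field} involves a segment of length $h\le1$ and bases $f_h$ obtained along parent paths of total one-norm length at most $2$. Hence its dependence region does not grow with $n$.
\item Derivative bounds: these follow from the original-link derivative bounds for $\nabla^0$ and $C_i(i,U)$, composed along paths of bounded length, with constants independent of $h$.
\item First estimate in \eqref{eq:refinement-estimates}: compare the reconstructed transport along an original edge with $U(x,\mu)$. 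The reconstruction reproduces $U$ exactly when all plaquettes are trivial, since then the averaged transports form a flat connection agreeing with $U$ on edges. A Taylor estimate in $1-U(p)$ then gives an error $C\eps$.
\item Second estimate: the curvature of $\nabla^0$ is bounded by $C\eps$, because it is built from plaquette deviations. A plaquette of side $h$ encloses area $h^2$, and the non-abelian Stokes bound gives $\|1-(F_hU)(p)\|\le C\eps h^2$.
\item Covariance: gauge covariance, lattice symmetries and conjugation hold because every step of the construction is natural under these operations, with the transported bases carrying the ancestor's gauge matrix.
\end{enumerate}

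\emph{Gap, free case.} I would first prove the gap for trivial links. Then every $F_hU=1$ and $I_n$ is the block-copy map, so the problem is identical at every $n$ after rescaling to fine-lattice units; this is why the constant can be independent of $n$. Write $A=H(\text{fine})$ and $B=I_nH_{\rm coarse}I_n^\dagger+\gamma_5(1-I_nI_n^\dagger)$, so that
\begin{equation*}
H_{n,s}^2=(1-s)^2A^2+s^2B^2+s(1-s)\{A,B\}.
\end{equation*}
Both $A^2$ and $B^2$ are bounded below: $A$ and $H_{\rm coarse}$ by the known free Wilson gap, and $\gamma_5^2=1$ on the complement.

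The work is to show that $\{A,B\}$ cannot be too negative on any state where both $A^2$ and $B^2$ are small relative to it. I would decompose $\mathcal H$ by fine momentum modulo the coarse Brillouin zone, i.e.\ $81$-component blocks at each coarse momentum. This reduces $H_{n,s}$ to a finite matrix family depending on the coarse momentum $q$ and on $s$. I would then check $\det H_{n,s}(q)\neq0$ on the compact set of $(q,s)$. The regions to treat are:
\begin{itemize}
\item low $q$: both operators are $\approx-\gamma_5$ up to $O(q)$, and the Wilson term dominates;
\item doubler and high-momentum components: the sign structure is compared explicitly.
\end{itemize}
Compactness then gives a free gap $g_0$. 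This is the step I expect to be the main obstacle. A clean argument probably needs the observation that, in both operators, low-momentum states see $H\approx-\gamma_5$ while high-momentum or complement states see $H\approx+\gamma_5$-like behaviour; I would try to prove $\{A,B\}\ge-c$ with $c$ small on the low-momentum part by an explicit estimate of $A-B$ there, treating the rest numerically or by explicit bounds.

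\emph{Gap, general admissible fields.} I would extend to admissible fields by a localization argument in the spirit of~\cite{HJL}.
\begin{enumerate}
\item Choose a smooth partition of unity on the fine lattice at scale $\sim$ a few coarse sites.
\item On each patch, use the plaquette bound $C\eps h^2$ to gauge-transform the sampled links to within $C\eps$ of the identity. The coarse links near the patch are treated the same way, using the first estimate in \eqref{eq:refinement-estimates} at $n=0$. This is possible even with nontrivial holonomy, because a patch is contractible.
\item In that gauge, $H_{n,s}$ differs from the free operator by $O(\eps)$ in norm. The intertwining of $I_n$ with the transported bases is exact, since $I_n$ copies components.
\item The IMS-type localization error is $O(1/\text{patch size})$ in fine units and is controlled by the free gap with a fixed patch size.
\end{enumerate}
This gives $H_{n,s}^2\ge(g_0-C\eps-C')^2$. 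Choosing first the patch size and then $\eps$ yields a uniform $g$, independent of $n$, $s$, volume and component.
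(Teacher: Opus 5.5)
Your overall architecture is the paper's. The sampling statements are read off from the reconstruction of appendix~\ref{app:g:geometry}. The gap is proved first for the free comparison by Fourier reduction to a compact family of $324\times324$ matrices $\widehat H_s(p)$. It is then carried to admissible fields by localization in windows of fixed coarse size with local radial frames, choosing the window size before $\eps$.

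\textbf{The free gap is not proved.} This is the genuine gap: you leave the free comparison as ``the main obstacle.'' Excluding zero eigenvalues of $\widehat H_s(p)$ on the compact set of $(p,s)$ is the right target, but you give no mechanism for doing it.
\begin{itemize}
\item Using only $A^2,B^2\ge1$, the anticommutator route needs $\{A,B\}\ge-2+\delta$, because $(1-s)^2+s^2-2s(1-s)$ vanishes at $s=\tfrac12$. Any state-by-state refinement that weighs $\{A,B\}$ against $A^2$ and $B^2$ merely restates $(1-s)Av+sBv\neq0$.
\item Your heuristic that complement states see $+\gamma_5$-like behaviour is false. At $p=(\pi,0,0,0)$ the complement of $\im I$ contains the combination of $k=(\pm\pi/3,0,0,0)$ orthogonal to the inclusion amplitudes. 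There $M(k)=-\tfrac12$, while $B=\gamma_5$.
\item A numerical scan over a continuum of $(p,s)$ is not a proof without rigorous enclosures.
\end{itemize}

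\textbf{The missing idea.} Fine momenta mix only through the four-dimensional embedded coarse spinor.
\begin{itemize}
\item Write $\widehat H_s=L_s+sIC_pI^\dagger$, with $L_s(k)=(1-s)H(k)+s\gamma_5$ block diagonal and $C_p=H(p)-\gamma_5$. A zero mode $\psi$ then forces $(1+s\mathcal R_sC_p)\zeta=0$, with $\zeta=I^\dagger\psi$ and $\mathcal R_s=I^\dagger L_s^{-1}I$.
\item Because $L_s(k)^2=\Delta_s(k)\Id$, one has $\mathcal R_s=\sum_\mu V_{s,\mu}\Gamma_\mu+w_s\gamma_5$ and $C_p=\sum_\mu\sin p_\mu\,\Gamma_\mu+(M(p)-1)\gamma_5$. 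The Clifford relations turn the $4\times4$ condition into a scalar one: the real vectors $(V_{s,\mu},w_s)$ and $(\sin p_\mu,M(p)-1)$ would have to be antiparallel.
\item This is excluded by proving $\operatorname{sign}V_{s,\mu}=\operatorname{sign}\sin p_\mu$. The proof uses the weight identity $\bigl(\tfrac{1+2\cos t}{3}\bigr)^2=\sin^2(p_\mu/2)/(9\sin^2(t/2))$ and the strict decrease of $\cot(t/2)/\Delta_s(t)$ on $(0,\pi)$.
\item The corners $p\in\{0,\pi\}^4$ need separate treatment. At $p=0$ one has $L_{1/2}(0)=0$, so the reduction is unavailable, but the full operator is $-\gamma_5$ on the $k=0$ block.
\end{itemize}
Only after this does compactness give $g_*>0$.

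\textbf{Two steps of the interacting argument also need repair.}

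First, the local gauge change cannot be chosen independently on the fine and coarse lattices. $H_{n,s}$ contains $I_nH(U^{(n)})I_n^\dagger$ and $\gamma_5(1-I_nI_n^\dagger)$. Conjugating by $G_{\rm f}$ reproduces the free structure only if $G_{\rm f}I_n=I_nG_{\rm c}$, i.e.\ every child receives its parent's color matrix. Exactness of $I_n$ in the inherited bases does not survive an arbitrary further gauge change. With parent-copied coarse radial frames, the fine links are not made small by their own plaquettes. You must instead bound each fine coefficient as transport along the parent--child--child--parent path of length $O(h)$, expressed in the two parents' radial frames. This gives $C\eps h^2(L+1)$ for windows of $L$ coarse steps; at $n=0$ you add $\|U-F_1U\|\le C\eps$.

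Second, your patches are contractible only when the periods exceed the window size. For small original periods, especially at $n=0$, the windows wrap around the torus. Uniformity in volume then requires proving the localized bound on the infinite cover, where the lifted field is admissible, and transferring it to the torus. One way is to cut off the lifted spinor on $M^4$ fundamental domains and let $M\to\infty$.
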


The sampling estimates follow from the construction of the connection and
bases in appendix~\ref{app:g:geometry}. The $h^2$ factor in the plaquette
estimate reflects the area enclosed by a fine plaquette. It remains after
any fixed number of original-link derivatives. The factor $\eps$, however,
need not remain after differentiation of either estimate; for derivatives
of $F_1U-U$, we use only uniform boundedness.

Appendix~\ref{app:s:spectral} first proves the gap for the free field at
every momentum and every $s$. It then compares an admissible field locally
with the free one. In each region, it chooses coarse-site color bases by
transport from the region's center, applying the same change of basis to
all children of a coarse site. This leaves $I_n$ unchanged. In these
bases, the curvature bound controls the difference between the local
operator coefficients and their free values. The first comparison also
includes the difference $U-F_1U$, since its coarse endpoint is the actual
original field.

To combine these local estimates, the proof multiplies a fermion field
by overlapping scalar weights whose squares sum to one. Because the
operator hops between sites, it does not commute with these weights.
The proof first chooses the weights to vary slowly enough that these
commutator errors are small compared with the free gap. It then chooses
$\eps$ small enough to control the local operator differences on the
fixed regions. The free gap survives both errors, giving the asserted
bound for $H_{n,s}$.

\paragraph{Properties used in the locality and symmetry proofs.}
The construction below uses the gapped interpolation just obtained. Its
later locality proof also requires a gap bound when links are specified
only near a finite region. Appendix~\ref{app:s:supported}
proves the corresponding lower bound on the norm of the operator applied
to a fermion field supported there. All output components count in this
norm, including those outside the input support; the operator is not
compressed to the region. The specified links must cover the reconstruction,
the parent-to-child transports, and the local comparisons in the gap
proof, together with every plaquette needed to control these transports.
The bound holds for any assignment satisfying the plaquette conditions
throughout this neighborhood. No admissible completion of the links on
the rest of the lattice is assumed.

We will also use the exact transformation laws of $H_{n,s}$. Its behavior
under complex conjugation enters the color-trace cancellations in
section~\ref{sec:locality}; its behavior under conjugation and changes of
lattice coordinates determines the current's transformation laws in
section~\ref{sec:symmetries}.

Let $O$ be a coordinate permutation combined with any coordinate
reflections. The field $OU$ is obtained by relabeling links and inverting
a link matrix when its orientation is reversed. Let $W_{O,n}$ be the
corresponding unitary action on position and spin at level $n$, with the
lattice periods and fermion boundary signs transformed accordingly.

For complex conjugation, choose a fixed unitary spin matrix $\mathcal B$
such that $\mathcal B\overline{\gamma_\mu}\mathcal B^{-1}=\gamma_\mu$.
An overbar denotes entrywise complex conjugation; the corresponding
antiunitary operator is $J_{\rm spin}=\mathcal B\mathsf C$, where
$\mathsf C$ performs this conjugation. With the representation displayed
as a subscript and $R^*$ denoting its conjugate as in
section~\ref{sec:lattice}, the identities are
\begin{equation}
 \begin{split}
 H_{n,s}(OU)&=(\det O)W_{O,n+1}H_{n,s}(U)W_{O,n+1}^{-1},\\
 H_{n,s;R^*}(U)&=\mathcal B\,
       \overline{H_{n,s;R}(U)}\,\mathcal B^{-1}.
 \end{split}
 \label{eq:kernel-symmetry}
\end{equation}
Appendix~\ref{app:s:spectral} constructs $\mathcal B$ and proves these
identities.

\subsection{The measure term at each refinement step}

The gap in Proposition~\ref{prop:comparison} makes
$P_{n,s}=(1+\operatorname{sign}H_{n,s})/2$ a smooth projector in both
$s$ and the original field $U$. We can therefore insert it into the
comparison formula \eqref{eq:transgression}. The two derivatives in that
formula play different roles: $\partial_s$ moves along the interpolation
at fixed $U$, whereas $\delta_\eta$ varies the original links according
to \eqref{eq:link-variation}, at fixed $s$. The latter changes both
endpoint operators, since both are determined by $U$.

Integrating over $s$ gives one contribution $B_{n,\eta}$, linear in the
original link variation. We propose its sum as the measure term:
\begin{equation}
 B_{n,\eta}=\ii\int_0^1\Tr\{P_{n,s}
       [\partial_sP_{n,s},\delta_\eta P_{n,s}]\}\,\dd s,
 \qquad
 \mathfrak L_\eta=\sum_{n=0}^\infty B_{n,\eta}.
 \label{eq:scale-current}
\end{equation}
The trace is over sites, spin, and color in the fine fermion space at
level $n+1$. We write $B_n(\eta)=B_{n,\eta}$ when treating $B_n$ as
a one-form on the original gauge-field space. Convergence of the sum,
including its original-link derivatives, will be proved in
section~\ref{sec:locality}.

The derivative $\delta_\eta$ must include the entire dependence of the
sampled links on $U$. In \eqref{eq:sampled-link-field}, it acts on the
transport $\tau$ and on both endpoint bases $f_h$. Differentiating those
bases in turn differentiates their initial values $C_i(i,U)$ and every
parent-to-child transport used to construct them. The copying matrix
alone is fixed: $\delta_\eta I_n=0$. Thus the constant matrix $I_n$
does not remove any original-link derivatives of the geometric transport.

The current is the coefficient of each original link variation in
\eqref{eq:scale-current}. Using the elementary derivative
$D_{x,\mu,a}$ defined in \eqref{eq:link-derivative}, we obtain
\begin{equation}
 \begin{split}
 j_{\mu,n}^a(x;U)
 &=\ii\int_0^1\Tr\{P_{n,s}
             [\partial_sP_{n,s},D_{x,\mu,a}P_{n,s}]\}\,\dd s,\\
 B_{n,\eta}
 &=\sum_{x,\mu,a}\eta_\mu^a(x)j_{\mu,n}^a(x;U),
 \qquad j_\mu^a(x;U)=\sum_{n\ge0}j_{\mu,n}^a(x;U).
 \end{split}
 \label{eq:comparison-current-components}
\end{equation}
Although the trace uses the fine lattice, $x$ and $\mu$ label an original
link, and $a$ labels its Lie-algebra generator. The index $n$ labels
successive comparisons; it is distinct from the support radius $r$ in
the locality decomposition \eqref{eq:current-locality}.

\subsection{Curvature and phase identities}

We next apply Lemma~\ref{lem:transgression} to each $B_n$. To identify
its two endpoint curvatures and phases with those of the coarse and fine
overlap projectors, we must first remove the fixed states added to the
coarse space.

At the fine endpoint, $P_{n,0}=\widehat P_-[F_{h_{n+1}}U]$. At the
coarse endpoint, $H_{n,1}$ acts as $\gamma_5$ outside the embedded coarse
space, so $P_{n,1}$ selects the positive-chirality states there. Denote
the reference projector on the level-$n$ fermion space by
$P_{+,n}=(1+\gamma_5)/2$. Then
\begin{equation}
 \begin{split}
 P_{n,1}&=I_n\widehat P_-[U^{(n)}]I_n^\dagger
                 +(1-I_nI_n^\dagger)P_{+,n+1},\\
 P_{+,n+1}&=I_nP_{+,n}I_n^\dagger
                 +(1-I_nI_n^\dagger)P_{+,n+1}.
 \end{split}
\end{equation}
The second term is the same in both lines. In the relative determinant
line, this subspace contributes a determinant factor for the fermions
and its dual for the reference states; the two cancel. The remaining
factors are exactly those of the coarse lattice, identified through $I_n$.
Because $I_n$ is constant in component coordinates, this identification
adds no connection term, curvature, or transport phase. No equality
between the dimensions of the fermion and reference subspaces is needed.

There is also a distinction between the lattice on which a projector acts
and the field we vary. We use $\Omega$ and $\mathcal T$ for the curvature
and projector determinant phase on each lattice, but all fields in the
construction depend on the same original $U$. Their curvatures must
therefore be evaluated on the variations induced by changing $U$.

To make this explicit, let $F$ map original fields to link fields on
another lattice. The variation $\delta_\eta U=\eta U$ induces a target
link variation with generator $F_*\eta$. For a target link $(x,\mu)$,
\begin{equation}
 \begin{split}
 (F_*\eta)_\mu(x)
 &=(\delta_\eta F(U)(x,\mu))F(U)(x,\mu)^{-1},\\
 (F^*\Omega)_U(\eta,\zeta)
 &=\Omega_{F(U)}(F_*\eta,F_*\zeta).
 \end{split}
\end{equation}
The second line defines $F^*\Omega$, the curvature pulled back to the
original gauge-field space: it is simply the target curvature evaluated
on the two induced variations. For a closed, piecewise smooth curve
$\gamma:t\mapsto U_t$ of original admissible fields, similarly write
$F\gamma$ for $t\mapsto F(U_t)$. Its phase $\mathcal T[F\gamma]$ is
computed by projector transport on the target lattice.

With these conventions, Lemma~\ref{lem:transgression} gives
\begin{equation}
 \dd B_n=(U^{(n)})^*\Omega-F_{h_{n+1}}^*\Omega,
 \qquad
 e^{\ii\oint_\gamma B_n}
  =\frac{\mathcal T[U^{(n)}\gamma]}
         {\mathcal T[F_{h_{n+1}}\gamma]}.
 \label{eq:level-telescope}
\end{equation}
Here $U^{(n)}$ denotes the map $U\mapsto U^{(n)}[U]$, and each curvature
or phase belongs to the lattice specified by its field argument. The
coarse-minus-fine difference and the coarse-to-fine ratio follow from
the interpolation's direction: $s=0$ is fine and $s=1$ is coarse.

Now sum the contributions for $n=0,\ldots,n_{\max}-1$, with
$n_{\max}\ge1$. Each intermediate field is the fine endpoint of one
comparison and the coarse endpoint of the next. Its curvature cancels
in the sum. Exponentiating the integral of that sum multiplies the phase
ratios, so its determinant phase cancels as well. Since $U^{(0)}[U]=U$,
we obtain
\begin{equation}
 \dd\sum_{n=0}^{n_{\max}-1}B_n=\Omega-F_{h_{n_{\max}}}^*\Omega,
 \qquad
 e^{\ii\oint_\gamma\sum_{n=0}^{n_{\max}-1}B_n}
       =\frac{\mathcal T[\gamma]}{\mathcal T[F_{h_{n_{\max}}}\gamma]}.
 \label{eq:telescope}
\end{equation}
These identities are exact at every finite $n_{\max}$. Only the original
projector and the finest sampled projector remain. Completing the
construction requires convergence of the sum and control of the remaining
fine-lattice terms.

Section~\ref{sec:locality} proves that the sum converges to a smooth,
exponentially local current, with bounds on every fixed order of
original-link derivatives. One common positive locality rate works
independently of derivative order and lattice volume. It also proves
$F_{h_{n_{\max}}}^*\Omega\to0$ in the corresponding norms. The first
identity in \eqref{eq:telescope} then gives the differential integrability
condition \eqref{eq:curvature}.

Section~\ref{sec:symmetries} evaluates the same comparisons on gauge
variations. Successive anomaly contributions cancel, leaving the original
anomaly and a remainder on the finest lattice. For each fixed original
field, gauge parameter, and volume, the remainder tends to zero. This
gives the covariant divergence equation \eqref{eq:anomaly-equation}.

Finally, section~\ref{sec:global} proves
$\mathcal T[F_{h_{n_{\max}}}\gamma]\to1$ for every fixed closed,
piecewise smooth curve $\gamma$, including those that cannot be contracted
within the admissible field space. Together with convergence of the
measure term, this turns the second identity in \eqref{eq:telescope}
into the full integrability condition \eqref{eq:integrability}.
\section{Convergence and locality}
\label{sec:locality}

We have defined a contribution $B_n$ from each comparison of adjacent
lattice spacings. We must now show that their sum gives a smooth,
exponentially local current on the original lattice. Small fine-lattice
plaquettes are not sufficient by themselves: the number of sites in
the trace defining $B_n$ grows as the lattice is refined. The contribution
from each trace position must decrease fast enough to compensate.

For $R=\mathbf{10}\oplus\bar{\mathbf5}$, an exact color-trace
cancellation supplies this decrease. For comparisons with $n\ge1$
and coarse spacing $h=h_n$, it
gives a factor $h^8$ in the bound at each trace position. The density of
positions grows as $h^{-4}$, leaving a summable $h^4$ bound on $B_n$.
The corresponding curvature estimate gives $h^6$ before the site sum
and $h^2$ afterward. The gap in the Hermitian operators supplies
exponential decay with distance, and both estimates survive any fixed
number of original-link derivatives. We explain these steps below;
appendix~\ref{app:a:locality} gives the complete bounds.

The first comparison is different because its coarse endpoint is the
actual original field $U$. For every later comparison, both endpoints
sample the same reconstructed connection. Only those later terms need
to decrease with $n$: one finite bound on $B_0$ is enough.

\subsection{Separating contributions by their range}

Fix an original link and generator $I=(x,\mu,a)$ and a comparison
$n\ge1$. Its contribution to the current is
$j_{\mu,n}^a(x;U)$ in \eqref{eq:comparison-current-components}.
The original-link derivative $D_I$ changes both the coarse and fine
fields, including their transported bases. We first bound that entire
induced variation, before taking the trace over lattice sites.

For a covariant shift $T_\nu[V]$, its variation acts by the same unitary
shift followed by multiplication, at each output site, by the induced
link generator in representation $R$. The norm of this multiplication
operator is the largest generator norm. It does not grow with the number
of affected links. Each Hermitian Wilson operator contains a fixed number
of shifts, and embedding the coarse variation with $I_n$ preserves its
norm. The gap then controls the corresponding projector variation. Thus
the many fine links changed by one original link do not introduce a site
count at this stage. That count enters when we take the trace.

To control the range of the trace contributions, expand the projector
in polynomials of its Hermitian operator. Write $\mathcal K=H_{n,s}$
for the comparison; the same expansion will later be used with
$\mathcal K=H(F_hU)$ for the sampled curvature. If $P_{\mathcal K}$
is its positive spectral projector, then
\begin{equation}
 \begin{gathered}
 P_{\mathcal K}=\frac12\left(1+\sum_{m\ge0}S_m\right),
 \qquad S_m=c_m\mathcal K(1-\mathcal K^2/\Lambda^2)^m,\\
 c_m=\frac{4^{-m}\binom{2m}{m}}{\Lambda}.
 \end{gathered}
 \label{eq:local-polynomial-projector}
\end{equation}
Here $m$ is a nonnegative integer and $\Lambda>13$ is a fixed number
larger than the common operator-norm bound in
appendix~\ref{app:s:spectral}. This is the convergent
inverse-square-root series for $\operatorname{sign}\mathcal K$.
The gap bounds the norm of $1-\mathcal K^2/\Lambda^2$ strictly below
one. Consequently the polynomial terms decrease exponentially in $m$.
Their propagation range grows only linearly: $S_m$ contains at most
$2m+1$ operator steps.

Substitute this series into the trace
$\ii\Tr\{P_{n,s}[\partial_sP_{n,s},D_IP_{n,s}]\}$.
The constant part of the first projector multiplies a commutator with
zero full trace. We remove that term before separating the trace into
sites; its trace need not vanish at an individual site. The remaining
terms have three polynomial factors and a prefactor $\ii/8$.

Group the 81 fine children of each coarse parent into one block, and let
$y$ denote that parent's spacetime position. At fixed $I,n,s$, write
$f_{y,M}$ for the contribution with total series order
$M=m_1+m_2+m_3$:
\begin{equation}
 f_{y,M}
 =\operatorname{Re}\frac{\ii}{8}
   \sum_{m_1+m_2+m_3=M}
   \tr\!\left\{
       \bigl(S_{m_1}[\partial_sS_{m_2},D_IS_{m_3}]\bigr)(y,y)
             \right\}.
 \label{eq:local-trace-example}
\end{equation}
The trace at $y$ includes spin, color, and all child indices. Summing
$f_{y,M}$ over $y,M$ and integrating over $s$ gives
$j_{\mu,n}^a(x;U)$. Taking the real part term by term leaves this real
current unchanged. We suppress $I,n,s$ in $f_{y,M}$ only to simplify
the notation within this argument.

At total order $M$, the polynomial contribution contains at most
$2M+3$ operator steps. The gap gives an exponentially decreasing bound
in $M$, multiplied by powers of $M+1$ from the choices of polynomial
orders and derivative positions. We estimate the polynomial products as
matrices, without taking an absolute value separately for every path.
The possible paths are used to determine which links can affect a term;
their cancellations remain in the matrix norm estimate.

For $(F_h^*\Omega)(D_I,D_J)$, use the same polynomials with
$\mathcal K=H(F_hU)$ and replace the commutator by
$[D_IS_{m_2},D_JS_{m_3}]$. Now $y$ is a sampled site, the trace is
over spin and color, and there is no comparison parameter or $s$
integral. Both variations are induced by changing the original links.

\subsection{Color cancellation and powers of the spacing}

The polynomial bound controls range but does not yet supply a small
power of $h_n$. That comes from cancellation of the cubic gauge anomaly
in the complete fermion multiplet. We first show why this condition
also controls the ordered color traces in the polynomial expansion.
For a finite-dimensional unitary representation $R$ of $\SU(N_{\rm c})$,
the cubic condition is
\begin{equation}
 \tr_R R(X)\{R(Y),R(Z)\}=0,
 \qquad X,Y,Z\in\mathfrak{su}(N_{\rm c}).
 \label{eq:local-cubic-anomaly}
\end{equation}

For anti-Hermitian generators, $R^*(X)=-R(X)^T$, so conjugating the
representation reverses an ordered trace and multiplies it by $(-1)^b$,
where $b$ is the number of factors. For zero, one, or two factors,
equality follows from equal dimensions, tracelessness, and cyclicity,
respectively. For three represented generators $A=R(X_1)$,
$B=R(X_2)$, and $C=R(X_3)$, the difference is
$\tr_R(ABC+CBA)=\tr_R A\{B,C\}$, which vanishes by
\eqref{eq:local-cubic-anomaly}. With a fourth generator $D=R(X_4)$,
cyclicity gives
\begin{equation}
 \tr_R(ABCD-DCBA)
 =\frac12\tr_R\!\left([A,B]\{C,D\}+[C,D]\{A,B\}\right)=0.
\end{equation}
Both terms vanish by the same cubic condition, because
$[R(X),R(Y)]=R([X,Y])$ is again a represented Lie-algebra element.
Thus
\begin{equation}
 \tr_R R(X_1)\cdots R(X_b)
 =\tr_{R^*}R^*(X_1)\cdots R^*(X_b),
 \qquad 0\le b\le4,
 \label{gen:ordered-traces}
\end{equation}
where an empty product is the identity. The matrix order is retained,
and the quartic traces need not vanish individually. This is the
trace-reduction consequence of cubic anomaly cancellation that we
need~\cite[sec.~3]{vanRitbergenGroupTheory}. L\"uscher used the same charge-conjugation and ordered-trace cancellation in the perturbative analysis of the lattice curvature~\cite[secs.~4.1--4.2]{LuscherPerturbative}.

For $R=\mathbf{10}\oplus\bar{\mathbf5}$, the cubic condition follows
from the character identity
\begin{equation}
 \chi_R(g)-\chi_{R^*}(g)=\det(1-g),\qquad
 \chi_R(g)=\tr R(g),\qquad g\in\SU(5).
 \label{eq:character}
\end{equation}
Expanding the determinant in the five-dimensional fundamental
representation gives this identity: the exterior powers of degrees
zero and five cancel, and the other four give the character difference.
For $g=e^{tX}$, its right-hand side is $O(t^5)$, so the cubic traces
in $R$ and $R^*$ agree. They also have opposite signs, hence
$\tr_R R(X)^3=0$. Polarization gives
\eqref{eq:local-cubic-anomaly} for arbitrary $X,Y,Z\in\mathfrak{su}(5)$.

The spin conjugation identity \eqref{eq:kernel-symmetry} gives the
opposite relation: each real contribution $f_{y,M}$ changes sign when
$R$ is replaced by $R^*$. A coefficient containing at most four
Lie-algebra factors in its color trace must therefore vanish by
\eqref{gen:ordered-traces}. The trace includes the full fermion multiplet;
taking absolute values separately for its species would lose the
cancellation. We will
apply this cancellation to an expansion of the background transports,
while retaining the original-link variation that defines the current.

To turn this cancellation into a bound, consider a polynomial term whose
transport paths fit inside a spacetime patch of fixed small radius,
together with the sampled-lattice regions and outgoing steps needed for
the local gap estimate below. The original-link inputs used to reconstruct
these local data may extend beyond the patch. Set $h=h_n$.
Choose a color basis by radial transport from the trace position $y$.
In this basis the connection has a local potential $A^{\rm rad}$, with
$A^{\rm rad}=\sum_\mu A_\mu^{\rm rad}(z)\,\dd z^\mu$ at a point
$z$ in the patch. Radial transport expresses this potential as curvature
integrated along the radius, so its norm is bounded by a constant times
$|z-y|$. This distance factor survives every fixed number of
original-link derivatives.

As in section~\ref{sec:construction}, change the basis at each coarse
site and copy that change to its children, keeping $I_n$ constant.
A fine coefficient then transports from a coarse parent to its child,
across the fine edge, and back to the other parent. These paths and
the coarse edges have length $O(h)$. At polynomial order $M$, they
remain within distance $O(h(M+1))$ of $y$. Integrating the radial
potential along one coefficient path therefore contributes $h^2$,
up to a power of $M+1$: one power of $h$ comes from the path length
and one from the distance to the radial origin.

We will vary these background transports to estimate the current, while
retaining the response to the chosen original-link variation $D_I$.
First evaluate that response at the actual field $U$. For a positive
sampled link $V_e$, transporting from $z_e$ to $y_e$, define its
induced generator
\begin{equation}
 \Xi_e=(D_I V_e)V_e^{-1}.
\end{equation}
This includes derivatives of the reconstructed connection, the initial
color bases, and all transports used to obtain the refined bases. These
generators and their fixed-order original-link derivatives are bounded
uniformly under refinement; they need not become small with $h$.

For any fixed color-basis change $g$, the same variation has generator
$\Xi'_e=g(y_e)\Xi_e g(y_e)^{-1}$ in the new basis. Transforming
the links and their generators together leaves the trace unchanged.
We then substitute the actual radial basis change $g(U)$. This transforms
the already computed variation, rather than redefining it by
differentiating the moving basis. The resulting identity depends on $U$,
so further original-link derivatives act on both $g(U)$ and $\Xi_e(U)$.

We now introduce an auxiliary parameter $0\le u\le1$ and replace the
connection on the patch by $\dd_z+uA^{\rm rad}$, where $\dd_z$
differentiates spacetime coordinates. Recompute the coarse and fine
transports along their fixed paths. They are identities at $u=0$ and
the actual transports in the copied radial bases at $u=1$. This local
interpolation estimates one finite polynomial term; it does not change
$U$, reconstruct a new connection, or require a global admissible path
to the vacuum.

Only during the auxiliary variation in $u$ do we hold $\Xi'_e$ fixed.
On the positive background link $V'_e(u)$, the prescribed variation is
$\Xi'_e V'_e(u)$. Its action on the reverse link follows by
differentiating the inverse, giving the reverse-link generator
\begin{equation}
 -V'_e(u)^{-1}\Xi'_e V'_e(u).
\end{equation}
Thus the reverse generator still varies with $u$. This prescription
keeps the two edge orientations consistent and reproduces the required
original variation at $u=1$.

The prescribed variation $D_I$ supplies one Lie-algebra factor in the
color trace. At the identity background, each $u$ derivative supplies
at most one more. The derivative $\partial_s$ supplies none: it changes
only the scalar coefficients multiplying the endpoint operators.
Consequently the first possible nonzero term needs four background
derivatives.

Let $f(u)$ denote the resulting contribution $f_{y,M}$, with its
induced generators $\Xi'_e$ on positive links held fixed during this
auxiliary variation.
The color identity sets its Taylor coefficients of degrees zero through
three to zero. Taylor's theorem therefore gives the exact expression
\begin{equation}
 f(1)=\frac1{3!}\int_0^1(1-u)^3 f^{(4)}(u)\,\dd u.
 \label{eq:radial-current-remainder}
\end{equation}
Here $f^{(4)}$ is the fourth derivative with respect to $u$.
Each derivative inserts one path integral of $A^{\rm rad}$.
The four insertions supply $h^8$, up to powers of $M+1$.
This is a bound on the full remainder. Since the cancellation holds for arbitrary
Lie-algebra generators, it remains valid after original-link derivatives
act on the actual generators and bases.
Original-link derivatives can also act on the inserted potentials, but
each retains its radial distance factor. Differentiating the other
transport factors does not remove these insertions. Thus the $h^8$
bound survives every fixed order of original-link differentiation.

A curvature coefficient has two prescribed link variations instead of
one, so only three background factors are needed to reach five color
factors. Its remainder is bounded by $h^6$. The anomaly pairing
\eqref{eq:ward} contains one prescribed site generator $R(\omega)$;
the same count gives four background factors and $h^8$ for a bounded
generator. These bounds apply at a trace position, before summing sites.

\subsection{Uniform bounds and the sum over sites}

The powers of $h$ are useful only if the exponential bound in polynomial
order also survives the entire Taylor interpolation. This cannot be
deduced by assuming that the locally interpolated links extend to a
globally admissible field. Instead we use the local gap bounds of
sections~\ref{app:s:supported} and \ref{app:s:local-family}, retaining
every output of a spinor with finite support.

Choose a finite input region containing every path allowed by the
polynomial term, independently of the values of the link matrices.
Let $K_{\rm out}$ apply the Hermitian operator to spinors supported
there, while keeping all output components, including those just outside
the region. The local gap bounds the norm of this full output from
below, uniformly for $0\le u\le1$. Together with the upper operator
bound, it makes $1-K_{\rm out}^\dagger K_{\rm out}/\Lambda^2$
a contraction with a uniform factor strictly less than one.

Retaining the outgoing components matters even for a trace inside the
region. A two-step path can leave the input region at its middle site
and return at its final site. The matrix
$K_{\rm out}^\dagger K_{\rm out}$ includes that intermediate output,
so its entries between input sites agree with those of the full
$\mathcal K^2$. Squaring an operator first compressed to the input
region would omit it. The contracting matrix and the remaining single
operator factors reproduce the chosen polynomial contribution, since
the input region contains every path permitted by its degree.
Appendix~\ref{app:a:locality} gives this finite-matrix construction and
its derivative bounds. No projector for a truncated local operator is
introduced.

This proves the required bounds when the sampled paths and regions
needed for this estimate fit in the radial patch. A term for which they
do not fit, including a path
winding around the torus, must have $h(M+1)\ge c$ for a fixed $c>0$.
We bound it at the actual
gapped field, retaining its transports and fermion boundary signs.
Since $[h(M+1)/c]^{2p}\ge1$, its polynomial-order bound also
gives a factor $h^{2p}$ after increasing only the power of $M+1$.
Taking $p=4$ for the current and $p=3$ for curvature gives the same
spacing estimates for short and long terms, with the exponential decay
in $M$ retained.

We can now sum the trace positions. The chosen original variation $D_I$
changes sampled data only in its fixed reconstruction neighborhood.
A polynomial term can connect its trace position to that neighborhood
only through its finitely many operator steps. Its trace position can
thus lie no farther from the original link than a fixed reconstruction
radius plus $O(h(M+1))$. The possible positions occupy an original
four-volume bounded by a polynomial in $M+1$. The density of coarse parent
blocks is $O(h^{-4})$, and the fixed 81 child indices have already
been traced within each block. The current thus gains $h^{8-4}=h^4$.
The sampled curvature similarly gains $h^{6-4}=h^2$. There is no
second fine-site count for the links affected by $D_I$: their combined
effect was bounded by the maximum generator norm before taking the trace.
On a fixed torus, the anomaly pairing with a uniformly bounded site
parameter also has the $h^4$ bound.

Further original-link derivatives and the sums over their indices add
polynomial factors in $M+1$. Below we include these factors in the
locality norm and show that one positive decay rate works for every
fixed derivative order. The resulting decay with spacing and the
original-link support decomposition are illustrated in
Fig.~\ref{fig:locality-refinement}.

\begin{figure}[htbp]
  \centering
  \includegraphics[width=\linewidth]{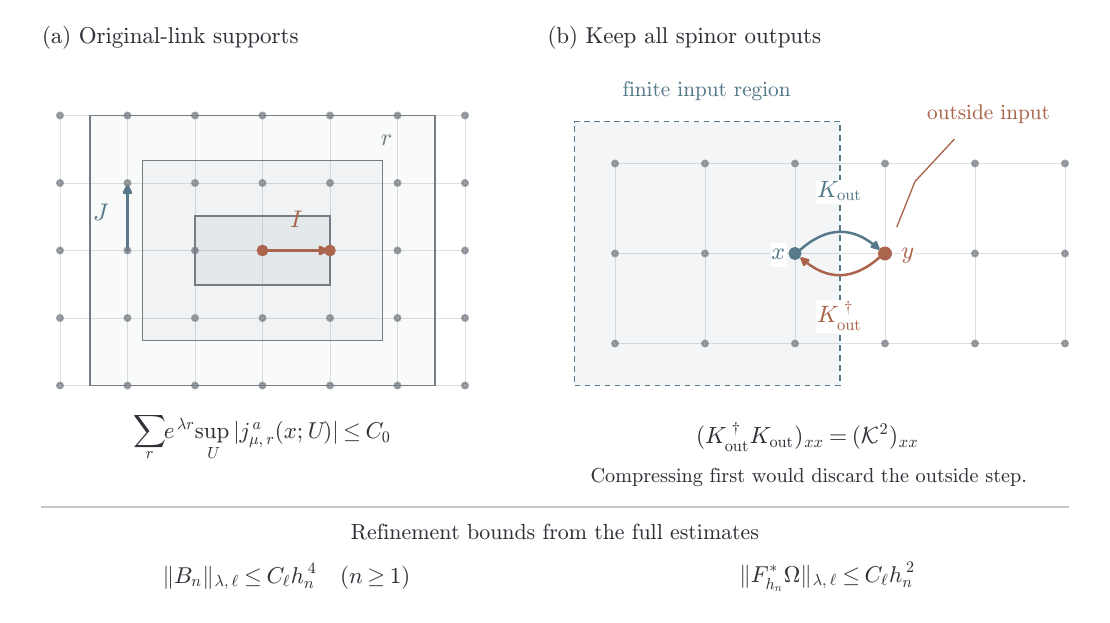}
  \caption{(a) Growing original-link supports of $j_{\mu,r}^a(x;U)$,
with output index $I=(x,\mu,a)$ and differentiated input index $J$.
The radius is $r$ and $\lambda>0$ is the locality rate.
(b) On the operator's lattice, $K_{\rm out}$ applies the Hermitian
operator $\mathcal K$ to a spinor supported in the shaded region while
keeping all outputs. The return path $x\to y\to x$ includes an
intermediate site outside that region. Compressing the operator before
squaring would lose this contribution to $\mathcal K^2$.
The bounds below summarize the separate refinement estimates:
$h_n=3^{-n}$, the norm includes $\ell$ original-link derivatives, and
$C_0,C_\ell$ are bounding constants. There is a separate finite bound
for $B_0$.}
  \label{fig:locality-refinement}
\end{figure}

\subsection{Bounds on the current and curvature}

The locality condition in section~\ref{sec:locality-definition} controls
the current at one original link and its response to changes in other
links. We need the same control for each $B_n$ and for the remaining
fine-lattice curvature $F_h^*\Omega$ in \eqref{eq:telescope}. The
curvature has two variation arguments, so its bound must also sum over
the position and generator of the second variation. This extra sum
controls the total response, rather than only a response at one chosen
pair of links.

We use one notation for these bounds. Let $\beta$ be a differential
form with $p$ variation arguments: $p=1$ for a measure term and $p=2$
for a curvature. As in the current decomposition, write
$\beta=\sum_{r\ge0}\beta_r$, where $r$ is a support radius in
original lattice units. The coefficient $\beta_{I_1\ldots I_p,r}$
is obtained by inserting the elementary variations labeled by
$I_1,\ldots,I_p$; each index specifies an original link and its
Lie-algebra generator, as in \eqref{eq:link-derivative}. Its inputs lie
within radius $r$ of the first link, and it vanishes unless the other
variation links also lie there.

Choose a smooth extension of each coefficient to independent $\SU(5)$
matrices on its finite set of input links, as required in
section~\ref{sec:locality-definition}. The extension obeys the same
support condition. For a positive locality rate
$\lambda$ and $\ell$ additional original-link derivatives, define
\begin{equation}
 \|\beta\|_{\lambda,\ell}
 =\sup_{I_1}\sum_{r\ge0}e^{\lambda r}
 \sum_{\substack{I_2,\ldots,I_p\\J_1,\ldots,J_\ell}}
 \sup_U\bigl|D_{J_1}\cdots D_{J_\ell}
              \beta_{I_1\ldots I_p,r}(U)\bigr|.
 \label{eq:locality}
\end{equation}
This is the current bound \eqref{eq:current-locality}, with the extra
variation indices included when needed. The exponential weight requires
small contributions from large supports; the index sums include all
positions and generator directions. Empty sums are omitted, and the
order of the derivatives is retained. The supremum over $U$ is over the
independent inputs of the chosen extension, without an admissibility
condition. The norm refers to that extension and support decomposition.

The estimates we need are the following.
\begin{proposition}
\label{prop:locality}
On $\mathcal U_\eps$ for sufficiently small $\eps>0$, there is a
$\lambda>0$ such that, for every fixed nonnegative derivative order
$\ell$,
\begin{equation}
 \|B_0\|_{\lambda,\ell}\le C_\ell,\qquad
 \|B_n\|_{\lambda,\ell}\le C_\ell h_n^4\quad(n\ge1),\qquad
 \|F_h^*\Omega\|_{\lambda,\ell}\le C_\ell h^2
       \quad(h=3^{-n},\ n\ge0).
 \label{eq:scale-bounds}
\end{equation}
Each finite-support coefficient has a smooth extension to all independent
input link matrices as required in \eqref{eq:locality}. The admissibility
domain and the positive rate $\lambda$ are independent of volume,
component, and derivative order. The constants $C_\ell$ may depend on
the derivative order but are independent of volume, component, and
refinement level.
\end{proposition}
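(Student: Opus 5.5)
The plan is to write each coefficient $B_{n,I}$ and $(F_h^*\Omega)_{I_1I_2}$ as a sum over trace positions $y$ and total polynomial orders $M$ of the local quantities $f_{y,M}$ of \eqref{eq:local-trace-example}. The three facts we need are then proved separately for each $(y,M)$: the exponential bound in $M$, the power of $h$, and the support radius. After that we sum.

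\textbf{Step 1: support decomposition and extensions.} For fixed $I$, a term of order $M$ involves at most $2M+3$ operator steps on the level-$(n+1)$ lattice, so its paths have length $O(h(M+1))$. By Proposition~\ref{prop:comparison}, every sampled link depends on original links in a fixed neighborhood. Hence the term depends only on original links within radius $r\le r_0+C h(M+1)$ of the link in $I$ and of $y$, and it vanishes unless $y$ lies within that radius of $I$. We assign it to the support index $r=\lceil r_0+Ch(M+1)\rceil$, which defines $\beta_r$. The extension to independent link inputs is the natural one: the same polynomial formula, built from the reconstruction of appendix~\ref{app:g:geometry} applied to arbitrary inputs near the patch. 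Its exponential control in $M$ comes from the supported gap bound of appendix~\ref{app:s:supported} through the operator $K_{\rm out}$, which keeps all outputs. This gives $|D_{J_1}\cdots D_{J_\ell} f_{y,M}|\le C_\ell (M+1)^{k_\ell}\theta^M$ with $\theta<1$ uniform. The derivatives $D_I$ and $D_J$ act as multiplication operators bounded by the maximal generator norm, times projector-variation factors controlled by the gap. Derivatives at other links produce a sum over $J$ that is bounded by a polynomial in $M+1$, because only links met by the paths contribute.

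\textbf{Step 2: powers of $h$.} For $n\ge1$, if the term's region fits in the radial patch, we use the radial gauge and the $u$-interpolation with $\Xi'_e$ held fixed. Charge conjugation \eqref{eq:kernel-symmetry} makes $f_{y,M}$ odd under $R\to R^*$, while \eqref{gen:ordered-traces} makes every color trace with at most four Lie-algebra factors even. So the Taylor coefficients vanish up to degree three for the current and degree two for the curvature. The remainder \eqref{eq:radial-current-remainder} then gives $h^8$ (respectively $h^6$), each insertion of $A^{\rm rad}$ contributing $h^2(M+1)^2$. The uniform local gap along $u$ keeps the factor $\theta^M$ in this remainder. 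If the region does not fit, $h(M+1)\ge c$, and we trade a power of $(M+1)$ for $h^{8}$ or $h^{6}$. For $B_0$ no power of $h$ is needed, and the bound is just the $M$-series with the first comparison's gap.

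\textbf{Step 3: summing.} The number of admissible $y$ at fixed $M$ is $O(h^{-4}(1+h(M+1))^4)$. We choose $\lambda$ with $e^{\lambda C h(M+1)}\theta^{M}\le\theta^{M/2}$ uniformly; this works since $h\le1$ and $\lambda\le|\log\theta|/(2C)$. Summing over $M$ and integrating over $s$ then gives $\|B_n\|_{\lambda,\ell}\le C_\ell h^4$ and $\|F_h^*\Omega\|_{\lambda,\ell}\le C_\ell h^2$. Here $\lambda$ depends only on $\theta$ and the geometric constant, not on $\ell$.

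The main obstacle is Step 2 combined with Step 1: the Taylor remainder must retain both $h^8$ and the exponential decay in $M$ under arbitrary original-link derivatives. This requires the supported gap to hold uniformly for the locally interpolated, non-admissible link data. I would settle it first by proving an entrywise bound on derivatives of $(1-K_{\rm out}^\dagger K_{\rm out}/\Lambda^2)^m$, with $u$-derivatives distributed by the Leibniz rule.
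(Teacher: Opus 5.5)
Your Steps 2 and 3 follow the paper's own route: the binomial sign series, the full-output matrices $K_{\rm out}$, the radial $u$-interpolation with the induced generators frozen, the $R\to R^*$ cancellation with $p=5-r_{\rm ins}$, the trade for long terms, the $h^{-4}$ site count, and a rate $\lambda$ that spends only half of the decay in $M$.

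The genuine gap is in Step 1, in the extension that the statement itself asserts. The norm \eqref{eq:locality} takes the supremum over \emph{all} independent matrices in $\SU(5)^{\mathcal E}$, with no plaquette condition. Your ``natural'' extension is not defined there. The polar normalization $M_{ij}(M_{ij}^\dagger M_{ij})^{-1/2}$, the branch of $\operatorname{Log}\det V_{ij}$, the contour $|z-1|=1/2$ in \eqref{app:g:projection-formula}, the Gram inverse square root and $\operatorname{Log}d_i$ all need small local plaquettes. Outside that region they break down through singular inverses, branch jumps, and eigenvalues crossing the contour, and their derivatives are not uniformly bounded near its edge. Even where the formulas make sense, the supported gap of appendix~\ref{app:s:supported} needs the plaquette bound on the whole enlarged neighborhood (localization boxes, radial segments, filled holes). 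Outside it you lose the contraction, and with it both the uniform bounds on $D^{\boldsymbol K}f_{y,M}$ and convergence in $M$. If ``arbitrary inputs near the patch'' means locally admissible inputs, you get a function only on an open subset, which is not what the norm requires.

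The missing idea is a two-threshold cutoff. Fix an outer threshold $\eps_{\rm out}$ on which every reconstruction branch and every local gap holds, both at the actual field and along the $u$-family. Take the target threshold strictly smaller; the paper uses $\eps=\eps_{\rm out}/10$. Then multiply each finite polynomial contribution by $\prod_{p\in\mathcal P}\chi(z_p)$, where:
\begin{itemize}
\item $z_p=\tr_{\mathbf 5}[(1-U(p))^\dagger(1-U(p))]$;
\item $\chi$ is smooth, equal to one for $z_p\le\eps_{\rm out}^2/16$ and zero for $z_p\ge\eps_{\rm out}^2/8$;
\item $\mathcal P$ contains every plaquette used in the gap proof, not just those on the visible paths.
\end{itemize}
Since $z_p\le5\eps^2<\eps_{\rm out}^2/16$ on $\mathcal U_\eps$, nothing changes on the target domain, including derivatives. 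The product extends by zero to a smooth function on $\SU(5)^{\mathcal E}$. Differentiating the cutoffs costs only powers of $|\mathcal P|$, which grows polynomially in $M+1$, and the leftover $q^{M/2}$ absorbs these without changing $\lambda$.

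Separately, the obstacle you flag at the end is not an entrywise derivative estimate. Distributing $u$-derivatives by Leibniz helps only if the undifferentiated factors of $Z_{\mathcal S}(u)$ contract, which requires a lower bound on $\|K_{\rm out}(u)v\|$ along the family. The paper obtains this bound, \eqref{app:s:weak-gap}, by using the interpolation only on radial patches of a fixed small radius $r_0$. There $\|uA^{\rm rad}\|\le Cr_0$ and coefficient paths of length $O(h)$ keep $K_{\rm out}(u)$ within $Cr_0$ of the free operator, so its full-output gap survives. Terms that do not fit in such a patch are bounded at the actual field instead.
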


\subsection{Summation and differential integrability}

To finish the locality estimate, we need the smooth extensions to
unrestricted input links used in \eqref{eq:locality}. The preceding
bounds require plaquette conditions only in the neighborhood used by
each term. Multiply that term by smooth functions of those original
plaquettes, equal to one near the target admissible domain and zero
before the local construction loses its bounds. The product then extends
smoothly by zero to all independent input matrices. These functions
change neither the current nor any of its derivatives on the target
domain.

Differentiating these cutoff functions a fixed number of times adds only
a polynomial in the number of input links. To see why one positive
locality rate works for every derivative order, compare this cost with
the decay in polynomial order. The range estimate used in the site count
also bounds the support of each coefficient. Centering that support on
the chosen original link changes only the constants, so its radius grows
at most as a fixed constant times $1+h(M+1)$. Choosing $\lambda$
sufficiently small, with $h\le1$, allows the
exponential support weight $e^{\lambda r}$ to use only part of the
exponential decay in $M$. The remaining decay sums every fixed power
of $M+1$, including those from derivatives, index sums, and cutoffs.
Increasing the derivative order changes the constant $C_\ell$, not
the required admissibility domain or the chosen positive rate.

Appendix~\ref{app:a:locality} groups these contributions by their
original-link support radius and establishes \eqref{eq:scale-bounds}.
The same polynomial estimates and extensions give the finite bound on
$B_0$, without a small-spacing factor. Since $\sum_{n\ge1}h_n^4$
converges, so do $\mathfrak L=\sum_nB_n$ and
$j_\mu^a(x)=\sum_nj_{\mu,n}^a(x)$, with every derivative controlled
by the locality norm. They define a smooth, exponentially local measure
term and current.

We can therefore differentiate the sum and take the limit in the finite
curvature identity \eqref{eq:telescope}. Its remaining term obeys
$\|F_{h_{n_{\max}}}^*\Omega\|_{\lambda,\ell}
\le C_\ell h_{n_{\max}}^2$ and tends to zero. Hence
\begin{equation}
 \dd\mathfrak L=\Omega,
\end{equation}
which is the differential integrability condition \eqref{eq:curvature}.

\paragraph{A closed extension of the curvature.}
As a further consequence, the extended coefficients define a smooth
one-form $\mathfrak L_{\rm ext}$ on unrestricted links. Its exterior derivative
agrees with $\Omega$ on $\mathcal U_\eps$ and is closed because
$\dd^2=0$. It remains exponentially local: if differentiation makes
an input link the first variation index, a support of radius $r$
about the old center lies within radius $2r+1$ of the new one.
A further fixed reduction of the positive locality rate covers this
enlargement and the index sums at every derivative order. Thus
$\dd\mathfrak L_{\rm ext}$ is a closed local extension of the
overlap curvature. This local result does not fix phases around
noncontractible loops; full integrability requires the separate argument
in section~\ref{sec:global}.

\section{Gauge invariance and lattice symmetries}
\label{sec:symmetries}

The comparison sum now defines a smooth, exponentially local current
with the required curvature. We must also establish its gauge covariance
and the covariant divergence equation. We prove these directly from
$B_n$, then derive the lattice symmetry laws and check their consistency
with the projector determinant phases.

\subsection{Gauge covariance}

Consider a fixed gauge transformation $g(x)$ of the original links.
In the transported color bases of section~\ref{sec:construction},
fermion components at a refined site transform by $R(g(i))$, where
$i$ is the original vertex reached by following that site's parents.
A coarse site and its fine children have the same original ancestor,
so the same gauge matrix acts on their components. Copying a coarse
spinor with $I_n$ therefore commutes with applying the gauge transformation.

For fixed $g$, this action on the numerical components is independent
of $U$, even though the transported color bases depend on $U$.
Consequently, transforming $U$ and its variation conjugates
$H_{n,s}$, $P_{n,s}$, and both projector derivatives in
\eqref{eq:scale-current} by the same fixed unitary operator.
Their trace is unchanged. Each $B_n$ is therefore invariant under this
simultaneous transformation, as is the convergent sum:
\begin{equation}
 \mathfrak L_{\eta^g}[U^g]=\mathfrak L_\eta[U],\qquad
 \eta^g_\mu(x)=g(x)\eta_\mu(x)g(x)^{-1}.
\end{equation}
Since the original link variations can be chosen independently,
invariance of their pairing with the current gives
\begin{equation}
 j_\mu(x;U^g)=g(x)j_\mu(x;U)g(x)^{-1}.
\end{equation}

\subsection{The covariant divergence equation}

Covariance alone does not determine the divergence of the current.
We must show that it equals the lattice gauge anomaly, as required by
\eqref{eq:anomaly-equation}. As in \eqref{eq:ward}, we do this by
evaluating the measure term on the gauge variation $\eta=-\nabla\omega$.
Each comparison will give the anomaly on its coarse lattice minus the
anomaly on its fine lattice. Summing then leaves the original anomaly
and a fine-lattice remainder.

Keep the original gauge parameter $\omega(x)$ fixed. Let $\omega_n(y)$
be its value at the original ancestor of a level-$n$ site $y$, and
let $G_n=R(\omega_n)$ act pointwise on the fermions at that level.
The anomaly summed against this copied parameter is
\begin{equation}
 m_n(U,\omega)=\ii\Tr\{
       (\widehat P_-[U^{(n)}]-P_{+,n})G_n\}.
\end{equation}
The trace is on the level-$n$ fermion space, and $P_{+,n}$ is its
fixed antifermion projector. At the original level, this is precisely
the quantity defined in \eqref{eq:ward}:
\begin{equation}
 m_0=m_\omega=\sum_x\sum_{a=1}^{24}
                  \omega^a(x)\mathcal A^a(x).
\end{equation}

The copied gauge action gives
\begin{equation}
 G_{n+1}I_n=I_nG_n,\qquad
 \delta_{-\nabla\omega}P_{n,s}=[G_{n+1},P_{n,s}].
\end{equation}
The first identity expresses the agreement of the gauge actions before
and after copying a spinor. The second follows by differentiating gauge
covariance: an infinitesimal gauge transformation acts on the comparison
projector by a commutator. It includes the gauge variation of all sampled
links and their transported bases.

Substituting this commutator into $B_{n,-\nabla\omega}$ turns the
comparison integral into an endpoint difference. For an operator $G$
and a variation $X$ of a projector $P$ on the same fermion space,
differentiating $P^2=P$ gives $PX+XP=X$.
Using this identity and cyclicity of the trace yields
\begin{equation}
 \Tr\{P[X,[G,P]]\}=\Tr(GX).
\end{equation}
Here we take $P=P_{n,s}$, $X=\partial_sP_{n,s}$, and $G=G_{n+1}$.
The integrand in \eqref{eq:scale-current} is therefore
$\ii\Tr(G_{n+1}\partial_sP_{n,s})$. Since $G_{n+1}$ is independent
of $s$, integrating simply subtracts its traces on the two endpoint
subspaces.

To identify those traces with $m_n$ and $m_{n+1}$, subtract the same
reference trace $\ii\Tr(P_{+,n+1}G_{n+1})$ at both endpoints.
This does not change their difference. At $s=0$, the projector is
the fine overlap projector, giving $m_{n+1}$. At $s=1$, it is the
embedded coarse projector together with $P_{+,n+1}$ on the added
fine-space states. The added states cancel against the corresponding
part of the reference trace. The relation $G_{n+1}I_n=I_nG_n$
then identifies the remaining trace with $m_n$ on the coarse space.
Thus
\begin{equation}
 B_{n,-\nabla\omega}=m_n-m_{n+1},\qquad
 \sum_{n=0}^{n_{\max}-1}B_{n,-\nabla\omega}=m_\omega-m_{n_{\max}} .
 \label{eq:ward-telescope}
\end{equation}
Every intermediate anomaly cancels in the sum. This includes the first
comparison, whose coarse field is the actual original $U$ and whose
fine field is $U^{(1)}=F_{h_1}U$.

It remains to remove $m_{n_{\max}}$. The trace estimate with one
inserted site generator, explained in section~\ref{sec:locality} and
proved in appendix~\ref{app:d:cylinder-traces}, gives
$m_{n_{\max}}=O(h_{n_{\max}}^4)$ for fixed original field, gauge
parameter, and volume. The copied parameter can jump between regions
with different original ancestors, but its maximum norm is the same
as that of $\omega$. This is the only norm of the parameter used in
the estimate; no spatial differences of $\omega_{n_{\max}}$ are taken.
The remainder therefore tends to zero, and convergence of the current
allows us to take the limit in \eqref{eq:ward-telescope}:
\begin{equation}
 \mathfrak L_{-\nabla\omega}
       =\sum_x\sum_{a=1}^{24}\omega^a(x)\mathcal A^a(x).
\end{equation}
Discrete integration by parts on the left gives
\eqref{eq:anomaly-equation} at each site and for each generator,
since $\omega$ was arbitrary. This is the divergence equation needed
to cancel the operator contribution to the gauge variation of the
fermion integral. Section~\ref{sec:measure-reconstruction} uses it
to prove gauge invariance of the reconstructed measure, including in
zero-mode sectors.

\subsection{Rotations, reflections, and charge conjugation}

We next transform the entire comparison under a change of lattice
coordinates. Let $O$ be a coordinate permutation combined with any
coordinate reflections, as in \eqref{eq:kernel-symmetry}. The induced
link variation is
$\eta^O_\nu(y)=\delta_\eta(OU)(y,\nu)(OU)(y,\nu)^{-1}$.
An axial vector transforms like an ordinary vector under rotations,
but has an additional factor $\det O$ under reflections. We will
derive this factor for the measure term, then translate the result
into components of the current on positive links.

The first identity in \eqref{eq:kernel-symmetry} transforms
$H_{n,s}$ by a fixed unitary change of position and spin coordinates,
with an overall factor $\det O$. For $\det O=1$, the positive
spectral projector $P_{n,s}$ is therefore unitarily conjugated, leaving
the trace defining $B_n$ unchanged. For $\det O=-1$, the sign of
the Hermitian operator is reversed, so its positive spectral projector
becomes a unitarily conjugated copy of $1-P_{n,s}$.

Replacing $P$ by $1-P$ changes the signs of both differentiated
projectors in $\Tr\{P[\partial_sP,\delta_\eta P]\}$.
These two signs cancel inside the commutator. The remaining factor
$1-P$ reverses the trace, because the full trace of the commutator
vanishes. Thus each $B_n$, and hence $\mathfrak L$, acquires exactly
the factor $\det O$.

Charge conjugation follows from the second identity in
\eqref{eq:kernel-symmetry}. Replacing $R$ by $R^*$ complex
conjugates the comparison projector and its derivatives, up to the
fixed spin matrix $\mathcal B$. The trace in $B_n$ is purely
imaginary: the projector and its derivatives are Hermitian, while
their commutator is anti-Hermitian. Conjugating the trace therefore
reverses the sign of the real measure term,
$\mathfrak L^{R^*}_\eta[U]=-\mathfrak L^R_\eta[U]$.
Equivalently, for $R=\mathbf{10}\oplus\bar{\mathbf5}$,
complex conjugating the links realizes the conjugate representation,
up to a fixed unitary color-basis change. At fixed $R$, the two laws are
\begin{equation}
 \mathfrak L_{\eta^O}[OU]=(\det O)\mathfrak L_\eta[U],\qquad
 \mathfrak L_{\bar\eta}[\bar U]=-\mathfrak L_\eta[U].
 \label{eq:axial}
\end{equation}
Translations simply translate the sampled links and comparison
operators, giving translation covariance. These transformations act
on the added fine-space states as well as the embedded coarse states;
the fixed complement introduces no extra contribution.

There is one further step in expressing the coordinate-transformation
law for $j_\mu(x;U)$: the current is assigned to a positive link
and acts on color vectors at its starting vertex.
Consider $e=(x,x+\hat\mu)$, with $U_e=U(x,\mu)$.
If $O\hat\mu=+\hat\nu$, its image is a positive link starting
at $Ox$. If $O\hat\mu=-\hat\nu$, its image points in the
negative direction. The corresponding positive link starts at
$Ox-\hat\nu$ and carries the inverse matrix $U_e^{-1}$.
Differentiating that inverse gives its variation generator:
\begin{equation}
 \eta_{e^{-1}}=(\delta_\eta U_e^{-1})U_e
              =-U_e^{-1}\eta_e U_e.
\end{equation}
The minus sign comes from inversion, and conjugation by $U_e^{-1}$
expresses the generator at the other endpoint.

Substitute this relation into the invariant pairing
$\sum_{x,\mu}\langle j_\mu(x),\eta_\mu(x)\rangle$ of
section~\ref{sec:measure}. Comparing the independent variations in
\eqref{eq:axial} gives
\begin{equation}
\begin{aligned}
 j_\nu(Ox;OU)&=(\det O)j_\mu(x;U),
       &&O\hat\mu=+\hat\nu,\\
 j_\nu(Ox-\hat\nu;OU)&=-(\det O)U_e^{-1}j_\mu(x;U)U_e,
       &&O\hat\mu=-\hat\nu .
\end{aligned}
\label{eq:axial-components}
\end{equation}
For the reflection $x_1\mapsto-x_1$, the current in direction 1
has two minus signs: one from $\det O=-1$ and one from reversing
the link. They cancel, leaving the conjugation that moves the current
to its new starting vertex. A transverse positive link is not reversed
and retains the axial minus sign. Similarly, comparing the independent
variations in the conjugation law gives
$j_\mu(x;\bar U)=-\overline{j_\mu(x;U)}$.

These transformations are symmetries of a fixed torus when they preserve
its periods and assigned fermion boundary signs. Otherwise the laws
relate the two transformed tori with their corresponding boundary
conditions. Complex conjugation preserves the real periodic or
antiperiodic signs. The same qualification applies to the finite-volume
statements below.

\subsection{Consistency of the loop phases}

The symmetry laws must also respect the full integrability condition,
which compares the measure Wilson line $W[\gamma]$ with the projector
determinant phase $\mathcal T[\gamma]$ around a closed admissible
curve $\gamma$. A sign change of $\mathfrak L$ inverts
$W[\gamma]=\exp(\ii\oint_\gamma\mathfrak L)$. We check that
the corresponding transformation of the projector inverts $\mathcal T$
as well.

For a closed path of projectors $P_t$, replacing $P_t$ by $1-P_t$
leaves the transport generator $[\partial_tP_t,P_t]$ in
\eqref{eq:projector-transport} unchanged. The two complementary
subspaces therefore have the same full transport operator $Q_t$.
Its generator has zero trace, so $\det Q_1=1$. Since the path is
closed, $Q_1$ preserves the initial subspace and its complement.
Its full determinant is the product of the determinants on those two
subspaces; their return phases must therefore be inverses.
A fixed antiunitary conjugation instead complex conjugates $Q_1$
and its subspace determinant, which also inverts the phase.

Consequently an orientation-reversing coordinate transformation or
complex conjugation inverts both $W$ and $\mathcal T$;
orientation-preserving coordinate transformations and translations leave both
unchanged on the transformed curve. This establishes consistent
transformation laws for both phases. Their equality for every closed
curve remains to be proved in section~\ref{sec:global}.
\section{Global integrability}
\label{sec:global}

The current satisfies the differential integrability condition and the
covariant divergence equation. To reconstruct the measure, we must still
show that its Wilson line $W[\gamma]$ equals the projector determinant
phase $\mathcal T[\gamma]$ for every closed admissible curve, including
curves that cannot be contracted. The finite comparison identity
\eqref{eq:telescope} reduces this question to a phase on the finest
auxiliary lattice. For a fixed loop $\gamma:t\mapsto U_t$, write
\begin{equation}
 \gamma_h:t\longmapsto F_hU_t,\qquad h=3^{-n}.
\end{equation}
The identity expresses $\mathcal T[\gamma]$ as the Wilson line of a
finite comparison sum multiplied by $\mathcal T[\gamma_h]$. The sum
already converges to $\mathfrak L$, so the remaining task is to prove
$\mathcal T[\gamma_h]\to1$ as $h\to0$.

We compare the reconstructed connection along $\gamma$ with an auxiliary
$\SU(3)$ connection for which the charged fermions occur in conjugate
pairs. Including color transport along the loop preserves this pairing
and fixes the auxiliary return phase to $+1$. We then remove the auxiliary color transport to recover the ordinary
phase at that boundary, and use the ordinary projector curvature to
compare it with the reconstructed boundary. The trace cancellations of section~\ref{sec:locality} make both
phase corrections vanish under refinement. The comparison is illustrated
in Fig.~\ref{fig:history-connection-cylinder}.

\subsection{An auxiliary connection with conjugate fermion pairs}

Fix a smooth closed admissible loop on the original finite lattice.
Reparametrize it to be stationary near $t=0$ and $t=1$, so these endpoints
can be joined smoothly. At every point $y$ of spacetime and every loop
parameter $t$, reconstruction supplies the five-dimensional color space
$E_y(U_t)$. Together these spaces form a rank-five Hermitian bundle $E$
over
\begin{equation}
 Y=T^4\times S^1.
\end{equation}
Here $S^1$ is the circle obtained by identifying the endpoints of the
loop parameter; it is not an additional physical spacetime direction.
This family of color spaces is what we call the history bundle.
The family construction in section~\ref{sec:construction} supplies a
connection $\nabla^0$ along $t$ as well as along the four spacetime
directions. It preserves the normalized volume element fixed by the
reconstruction.

For a related use of subgroup reduction in five-dimensional continuum calculations of global anomalies, see ref.~\cite[sec.~3.1.2]{GarciaEtxebarriaMontero}. To obtain an auxiliary connection acting through $\SU(3)$, we will choose
two orthonormal color vectors smoothly over $Y$ and make them parallel.
The connection will then act only on the remaining three colors.

The existence of the two vectors follows from the rank of $E$ and the dimension of
$Y$. Using local color bases and a partition of unity, choose finitely
many smooth sections that span each color space. A generic linear
combination is transverse to zero~\cite[Ch.~3, Theorem~2.7]{Hirsch}:
its zeros, if present, would have real codimension ten, the real dimension
of a five-component complex vector. Since $Y$ is only five-dimensional,
there are no zeros, and the section can be normalized everywhere.
Apply the same argument to its rank-four orthogonal complement. A generic
section there would have zeros of codimension eight, so it too is nowhere
zero. Normalizing it gives the second color vector, orthogonal to the first.

The two vectors span a trivial two-dimensional color subbundle. Let $V$
be their three-dimensional orthogonal complement. The normalized volume
element of $E$, together with the ordered pair of vectors, fixes a
normalized volume element in $V$. We can therefore use local bases of $V$
whose changes of basis lie in $\SU(3)$, and write
$E=V\oplus\mathbb C^2$. The space $V$ need not admit one global basis;
only the two distinguished color vectors have been chosen globally.

Under this $\SU(3)$, the fundamental representation decomposes as
$\mathbf5=\mathbf3\oplus\mathbf1\oplus\mathbf1$. The $\mathbf{10}$
is the antisymmetric product of two fundamental colors: two colors in
$\mathbf3$ give $\bar{\mathbf3}$, one in $\mathbf3$ and one in either
singlet give two copies of $\mathbf3$, and the two singlets give one
singlet. Adding $\bar{\mathbf5}$ therefore gives
\begin{equation}
 (\mathbf{10}\oplus\bar{\mathbf5})|_{\SU(3)}
       =2\mathbf3\oplus2\bar{\mathbf3}\oplus3\mathbf1 .
 \label{eq:real-restriction}
\end{equation}
Each triplet can be paired with an antitriplet. Complex conjugation
exchanges the two members of each pair and conjugates the singlets,
defining an antiunitary map $J_{\rm col}$ on the color space in the
fermion representation, with $J_{\rm col}^2=1$. It preserves norms and
is antilinear in complex scalars. This is the color conjugation symmetry
we will use to determine the phase.

Choose an $\SU(3)$ connection on $V$ and require the two distinguished
color vectors to be parallel. The resulting connection $\nabla^1$ on $E$
preserves $J_{\rm col}$ in the fermion representation, in spacetime and
along $t$. The two connections live on the same bundle; no gauge transformation
relating them is required. The auxiliary
$\SU(3)$ need not be the physical color subgroup.

We can now interpolate between the two connections. Their difference
$\mathfrak a=\nabla^1-\nabla^0$ is a one-form on $Y$ whose components
are traceless anti-Hermitian maps of the color space to itself. Set
\begin{equation}
 \nabla^r=\nabla^0+r\mathfrak a,\qquad 0\le r\le1.
\end{equation}
The parameter $r$ changes the connection; $t$ still runs around the
original loop. We use the same bundle and the same changes of local
basis at every $r$, and set the connection component along $r$ to zero.

To obtain matrices on a fixed fine-lattice fermion space, use the color
bases $f_h$ of section~\ref{sec:construction}, transported from original
vertices along the parent-to-child paths by $\nabla^0$. Keep these bases
fixed as $r$ varies. Sampling the spatial transports of $\nabla^r$ in
them gives links $U_h(r,t)$, with
\begin{equation}
 U_h(0,t)=F_hU_t=\gamma_h(t).
\end{equation}
The bases and paths are periodic in $t$, so each fixed $r$ gives a closed
loop of sampled links. Thus $(r,t)$ describes a cylinder with the
reconstructed connection at one boundary and the auxiliary connection
at the other. It does not contract the original loop to a point.

We choose this cylinder once for the fixed loop and then let only $h$
tend to zero. Its spatial curvature is bounded; we do not require its
samples at the original spacing to be admissible. Sufficiently fine
samples will be admissible, as shown below. The bounds and the required sampling
spacing may depend on the loop, the original spacing and finite volume,
and the auxiliary connection. We need no smooth choice of the auxiliary
connection over the whole gauge-field space: it is used to prove a phase
identity.

\subsection{Transport on the cylinder and its auxiliary boundary}

The auxiliary connection preserves the conjugate pairing, but its matrix
expression in the sampled color bases can change around the loop.
Ordinary projector transport differentiates components in these bases and
need not preserve that pairing. We therefore include color transport when
determining the auxiliary return phase. To compare the two boundaries of
the cylinder, ordinary projector transport already supplies the comparison
identity of section~\ref{sec:comparison}.

At a fine vertex $y$, let $f=f_h(t,y):\mathbb C^5\to E_{(y,t)}$
be the inherited color basis. It is independent of $r$ and periodic in
$t$. The connection matrices in the parameter directions are
\begin{equation}
 A_r=0,\qquad
 A_t(r,t)=f^\dagger\nabla^0_{\partial_t}f
                       +r f^\dagger\mathfrak a_t f,
 \qquad \mathfrak a_t=\mathfrak a(\partial_t).
 \label{eq:sampled-parameter-potentials}
\end{equation}
Both terms in $A_t$ are needed. The first measures how the inherited
basis changes relative to $\nabla^0$; transporting this basis along
spatial paths does not make it parallel around the loop.

These matrices are bounded uniformly in the sampling spacing.
Differentiating an inherited basis along $t$ gives the original-vertex
contribution $C_i(i,U_t)^\dagger\nabla^0_{\partial_t}C_i(i,U_t)$,
together with an integral of mixed spatial and $t$ curvature along its
parent-to-child path. The color factors are transported into the same
basis. The original-vertex term and curvature are bounded on the fixed
family, and the path has total length at most two in original units.
The extra matrix $f^\dagger\mathfrak a_t f$ is bounded as well.

The spatial curvature of the fixed cylinder is bounded, so its sampled
plaquettes differ from the identity by $O_\gamma(h^2)$. For sufficiently
small $h$, the ordinary Hermitian Wilson operator therefore has the gap
\eqref{app:s:ordinary-gap} uniformly on the cylinder and defines the
smooth projector $P(r,t)=\widehat P_-[U_h(r,t)]$.
The intermediate links need not be reconstructed from original fields;
the gap follows from their plaquette bound.

The inherited bases identify the color space at each fine vertex with
$\mathbb C^5$ around the entire $t$ circle. They need not extend to a
single smooth frame over spacetime. Using the same vertex bases for every
$r$, with the fixed spatial periodic identifications and real fermion
boundary signs, puts all these projectors on one numerical Hilbert space.
In particular, $P(r,1)=P(r,0)$ without an additional endpoint change of
basis. The determinant comparison below uses this fixed space.

For an edge $e=(y,z)$, with $z=y+h\hat\mu$ and terminal-to-initial
link matrix $U_e$, define its ordinary variation generators by
\begin{equation}
 \Xi_{\tau,e}=(\partial_\tau U_e)U_e^{-1},
 \qquad \tau\in\{r,t\}.
 \label{eq:ordinary-cylinder-variation}
\end{equation}
They are traceless and anti-Hermitian, and
$\partial_\tau P=\delta_{\Xi_\tau}P$.
The variation formula for parallel transport writes $\Xi_{\tau,e}$
as an integral of mixed parameter and spatial curvature along the edge,
minus $A_\tau(y)$ plus $U_eA_\tau(z)U_e^{-1}$.
The curvature is bounded on the fixed cylinder, and the edge has length
$h$. Together with the preceding endpoint bounds, this gives
$\|\Xi_\tau\|_\infty\le C_\gamma$ uniformly in $h,r,t$.
We will use only this maximum-norm bound.

We now include color transport only at the auxiliary boundary $r=1$.
Write $\mathsf A_t=R(A_t(1,t))$ and
$D_tP=\partial_tP+[\mathsf A_t,P]$ there. The covariant projector
transport is
\begin{equation}
 \partial_t\widehat Q_t
       =([D_tP,P]-\mathsf A_t)\widehat Q_t,\qquad
 \widehat Q_0=\Id.
\end{equation}
Its generator is anti-Hermitian, and it satisfies
$P(1,t)\widehat Q_t=\widehat Q_tP(1,0)$.
For an initially allowed vector $v_t=\widehat Q_tv_0$, it gives
$P(\partial_t+\mathsf A_t)v_t=0$: the vector is transported in the
allowed subspace with its change of color basis measured covariantly.

Apply the same rule to the reference subspace $\im P_+$.
Since $[\mathsf A_t,P_+]=0$, its transport generator is
$-\mathsf A_t$. The relative determinant line contains the dual of
this reference factor. Its auxiliary covariant return phase
$\widehat{\mathcal T}_h(1)$ is therefore the fermion return determinant
divided by the reference return determinant.

\subsection{Comparing the determinant phases}

Let $\mathcal T_h(r)=\mathcal T[t\mapsto U_h(r,t)]$ be the ordinary
projector phase, with the reference vectors fixed. Across the cylinder,
define the ordinary projector curvature
\begin{equation}
 \kappa_h(r,t)=\ii\Tr\{P[\partial_rP,\partial_tP]\}.
 \label{eq:ordinary-cylinder-curvature}
\end{equation}
The projectors form a smooth family on one fixed numerical Hilbert
space. The cylinder argument in the proof of the comparison identity
in section~\ref{sec:comparison} therefore applies directly, with $r$
as its interpolation parameter. It needs only this fixed-loop family,
not an auxiliary connection defined near the loop in field space.
With orientation
$\dd r\wedge\dd t$, it gives
\begin{equation}
 \frac{\mathcal T_h(1)}{\mathcal T_h(0)}
 =\exp\!\left(\ii\int_{[0,1]\times S^1}
                              \kappa_h\,\dd r\,\dd t\right).
 \label{eq:ordinary-cylinder-holonomy}
\end{equation}
The boundary loops are traversed in opposite directions. This comparison
uses their cylinder of projectors; it does not require either loop to
bound a disk in the admissible field space.

At the auxiliary boundary, we must also compare ordinary and covariant
transport. Adding color transport adds $\Tr(P\mathsf A_t)$ to the
fermion determinant connection and subtracts
$\Tr(P_+\mathsf A_t)$ for its reference factor. Set
\begin{equation}
 m_h^1(t)=\ii\Tr\{(P(1,t)-P_+)\mathsf A_t\}.
 \label{eq:auxiliary-color-correction}
\end{equation}
This is the anomaly pairing \eqref{eq:ward}, evaluated on the auxiliary
sampled field with the site parameter $A_t(1,t)$. Parallel transport
exponentiates the negative of the connection coefficient, so
\begin{equation}
 \widehat{\mathcal T}_h(1)=\mathcal T_h(1)
                 \exp\!\left(\ii\int_{S^1}m_h^1\,\dd t\right).
\end{equation}
Combining the two phase comparisons yields
\begin{equation}
 \mathcal T[\gamma_h]=\widehat{\mathcal T}_h(1)
 \exp\!\left[-\ii\int_{[0,1]\times S^1}\kappa_h\,\dd r\,\dd t
              -\ii\int_{S^1}m_h^1\,\dd t\right].
 \label{eq:phase-comparison-ordinary-cylinder}
\end{equation}
We will determine the auxiliary covariant phase exactly and bound both
integrals. The curvature trace contains two ordinary link generators,
$\Xi_r$ and $\Xi_t$; the correction has one site generator.
All traces are over fine sites, spin, and color.

\subsection{The auxiliary phase and the refinement limit}

At $r=1$, the connection preserves the color conjugation map
$J_{\rm col}$ with square $+1$. Combine it with the spin antiunitary
$J_{\rm spin}$ introduced with \eqref{eq:kernel-symmetry}.
As shown in appendix~\ref{app:s:spectral}, $J_{\rm spin}$ commutes
with the Euclidean gamma matrices and has square $-1$. The combined
antiunitary $J$ on the fine fermion space consequently obeys
\begin{equation}
 J^2=-1,\qquad [J,P]=[J,P_+]=0,\qquad
 \partial_tJ+[\mathsf A_t,J]=0 .
\end{equation}
The sampled $\nabla^1$ transports preserve the color pairing, so the
Hermitian Wilson operator and its spectral projector commute with $J$.
The last equality says that $J$ is parallel along $t$. The prescribed
periodic or antiperiodic fermion boundary signs are real and are therefore
also preserved by $J$.

We can now see why color transport was needed. Covariantly differentiating
$[J,P]=0$, and using that $J$ is parallel, gives $[J,D_tP]=0$.
The equation for $\widehat Q_t$ then implies that
$\widehat Q_t^{-1}J(t)\widehat Q_t$ is constant. Since the loop is
closed, $J(1)=J(0)$, and the return operator commutes with the initial
$J$. The same argument applies to transport in $\im P_+$.

Consider either unitary return operator. If $v$ is an eigenvector with
eigenvalue $\lambda$, antiunitarity makes $Jv$ an eigenvector with
eigenvalue $\bar\lambda$~\cite[sect.~3]{SuzukiReal}.
Moreover,
$(Jv,v)=(Jv,J^2v)=-(Jv,v)$, where the first equality uses
antiunitarity and the second uses $J^2=-1$. Thus the two vectors are
orthogonal, even for the real eigenvalues $\lambda=\pm1$.
Their orthogonal complement is preserved by both the return operator
and $J$, so repeating the argument gives a complete paired eigenbasis.
Every pair contributes $\lambda\bar\lambda=1$ to the determinant.
Both return determinants are therefore $+1$, and
\begin{equation}
 \widehat{\mathcal T}_h(1)=1 .
\end{equation}
This fixes the sign as well as the phase. It is a statement about the
auxiliary covariant transport, not a reality condition on the physical
chiral determinant. The ordinary projector phase still requires the two
corrections in \eqref{eq:phase-comparison-ordinary-cylinder}.

\begin{lemma}[Trace bounds on the comparison cylinder]
\label{lem:fixed-family-traces}
For all sufficiently small $h$, uniformly in $(r,t)$,
\begin{equation}
 |\kappa_h(r,t)|\le C_\gamma h^2,\qquad
 |m_h^1(t)|\le C_\gamma h^4 .
 \label{eq:phase-bounds}
\end{equation}
The constant may depend on the fixed loop, its original lattice spacing
and finite volume, and the chosen cylinder, but not on $h$.
\end{lemma}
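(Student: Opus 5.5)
The plan is to rerun the trace-position estimates of section~\ref{sec:locality} on the fixed cylinder family $U_h(r,t)$. The radial Taylor argument of that section goes through without change. Two things differ: the prescribed variations are the ordinary generators $\Xi_r,\Xi_t$ of \eqref{eq:ordinary-cylinder-variation}, and the site generator is $\mathsf A_t$. No original-link derivatives appear, so only the $\ell=0$ version of the bounds is needed.

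First I would fix $h$ small enough that every sampled plaquette of $U_h(r,t)$ obeys $\|1-U_h(p)\|\le C_\gamma h^2<\eps$. The ordinary gap \eqref{app:s:ordinary-gap} then holds uniformly in $(r,t)$. With it, the projector $P(r,t)$ has the polynomial expansion \eqref{eq:local-polynomial-projector} with $\mathcal K=H(U_h(r,t))$.

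Substituting this expansion into $\kappa_h=\ii\Tr\{P[\partial_rP,\partial_tP]\}$, I would remove the constant part of the first projector, as before. The remainder is a sum over fine sites $y$ and series orders $M$ of ordered traces. Each trace contains two insertions of the maximum-norm-bounded generators $\Xi_r,\Xi_t$.

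Next comes the color cancellation at each trace position.
\begin{itemize}
\item I would choose radial bases from $y$, copied to the vertex bases and held fixed in $u$ only, and keep $\Xi'_{\tau,e}=g\Xi_{\tau,e}g^{-1}$ fixed.
\item I would interpolate the background transports by $\dd_z+uA^{\rm rad}$.
\item The charge-conjugation identity \eqref{eq:kernel-symmetry}, together with the ordered-trace identity \eqref{gen:ordered-traces}, kills all terms with at most four Lie-algebra factors. With two prescribed generators, three background derivatives are required, and the Taylor remainder gives $h^6$ per site.
\item The bounded spatial curvature of the fixed cylinder gives $\|A^{\rm rad}(z)\|\le C_\gamma|z-y|$. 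Each insertion along a path of length $O(h(M+1))$ at distance $O(h(M+1))$ supplies $h^2$ times a power of $M+1$.
\end{itemize}
Terms whose paths do not fit in the patch, including paths winding around the torus, have $h(M+1)\ge c$. I would bound these at the actual gapped field and convert $[h(M+1)/c]^{6}$ into the same factor.

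Summing over sites is where the cylinder case differs from Proposition~\ref{prop:locality}. There, the variation was localized near one original link. Here $\Xi_r$ and $\Xi_t$ are spread over the whole torus, so the trace position ranges over all $O(h^{-4})$ parent blocks of the fixed finite volume. Each block is bounded by $h^6$ times a summable function of $M$, since the exponential decay in $M$ absorbs all polynomial factors. This yields $|\kappa_h|\le C_\gamma h^{6-4}=C_\gamma h^2$, with $C_\gamma$ containing the volume.

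For $m_h^1$, the anomaly pairing \eqref{eq:auxiliary-color-correction} has the single site generator $\mathsf A_t=R(A_t(1,t))$. This generator is uniformly bounded by the estimates following \eqref{eq:sampled-parameter-potentials}. The same count as in section~\ref{sec:locality} applies: four background factors give $h^8$ per site, and the site sum gives $h^4$. This is exactly the estimate already used for $m_{n_{\max}}$ in the divergence equation.

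The hardest step will be justifying the local gap along the auxiliary $u$-interpolation for the cylinder family. The interpolated local links are neither reconstructed from original fields nor obviously admissible beyond the patch. To handle this I would invoke the supported-region bounds of appendices~\ref{app:s:supported} and~\ref{app:s:local-family}, using the full-output operator $K_{\rm out}$. Their hypotheses need only local plaquette bounds, which hold here because the cylinder's spatial curvature is bounded and $h$ is small.

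A second point to check is that the generators retain no hidden $h^{-1}$ dependence. The endpoint terms $-A_\tau(y)+U_eA_\tau(z)U_e^{-1}$ are $O(1)$ rather than $O(h)$, but the argument uses only their maximum norm, which is exactly what the color-cancellation count requires.
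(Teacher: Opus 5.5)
Your proposal follows the paper's proof in Appendix~\ref{app:d:cylinder-traces} essentially step for step. The matching steps are the polynomial expansion with the constant commutator removed, and radial frames with the prescribed generators $\Xi_r,\Xi_t$ and $A_t(1,t)$ conjugated as insertions and held fixed in $u$. They continue with the order-three and order-four Taylor remainders giving $h^6$ and $h^8$ per trace position, the long-path factor $[h(M+1)/c]^{2p}$, and a single site count $h^{-4}\prod_\mu L_\mu$ on the fixed torus.

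The one place you argue differently is the full-output gap along $\dd_y+uA^{\rm rad}$. The paper simply compares with the free operator, $\|K_{\rm out}(u)-K_{\rm out}(0)\|\le C_\gamma hr_0\le\tfrac12$, so only the radial bound on $A^{\rm rad}$ and small $h$ are needed. Your plaquette-bound route also works, but note that the relevant plaquettes are those of $\dd_y+uA^{\rm rad}$, not of the cylinder itself; their curvature $uF^{\rm rad}+(u^2-u)A^{\rm rad}\wedge A^{\rm rad}$ is still bounded on the patch.
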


These bounds use the same color cancellation as
section~\ref{sec:locality}, now applied to the connections on this fixed
cylinder. At a given $(r,t)$, expand the overlap projector in the
polynomials \eqref{eq:local-polynomial-projector}. For a contribution
of total order $M$, choose a spatial radial color basis centered on its
trace position. Each transport path has steps of length $O(h)$ and
reaches distance $O(h(M+1))$ from that center. The radial potential is
bounded by the distance from the center. Integrating it along a step
of length $O(h)$ therefore supplies $h^2$, up to powers of $M+1$, for
each background insertion in the exact Taylor remainder.

Together with spin conjugation, the ordered-trace identity \eqref{gen:ordered-traces}
cancels these real trace contributions whenever they contain fewer than
five Lie-algebra factors. The curvature $\kappa_h$ already contains two prescribed ordinary link
variations, $\Xi_r$ and $\Xi_t$. It therefore needs three additional
background factors, leaving a bound proportional to $h^6$ at each trace
position. The correction $m_h^1$ has one site generator $A_t(1,t)$,
so it needs four background factors and is bounded by $h^8$ before the
site sum. There are $O(h^{-4})$ sites on the fixed torus. Summing them,
together with the exponentially decreasing polynomial orders, gives
$h^2$ for $\kappa_h$ and $h^4$ for $m_h^1$.
The Taylor remainder is exact, so these are bounds on the full traces,
not on a truncated weak-field expansion.

The actual matrices $\Xi_r,\Xi_t,A_t(1,t)$ enter these estimates
through their uniformly bounded maximum norms. No spatial differences
of these insertion fields are taken. In using a radial basis to estimate
a finite trace, we conjugate these matrices as already specified
insertions; we do not redefine $A_t(1,t)$ as the connection coefficient in
a moving radial basis. Appendix~\ref{app:d:cylinder-traces} establishes
the needed polynomial bounds on the cylinder, including the terms whose
paths leave a radial patch. It does not assume that the cylinder is
reconstructed from original lattice fields.

Insert the auxiliary value $\widehat{\mathcal T}_h(1)=1$ into
\eqref{eq:phase-comparison-ordinary-cylinder}. Removing the color
transport at the auxiliary boundary and comparing the ordinary phases
across the cylinder gives
\begin{equation}
 \mathcal T[\gamma_h]=
 \exp\!\left[-\ii\int_{[0,1]\times S^1}\kappa_h\,\dd r\,\dd t
             -\ii\int_{S^1}m_h^1\,\dd t\right]
 \longrightarrow1 .
 \label{eq:limiting-twist}
\end{equation}
Both parameter domains are fixed, so \eqref{eq:phase-bounds} sends both
integrals to zero. The identity determines the phase itself; there is
no choice of logarithm and no remaining sign ambiguity.

\begin{figure}[htbp]
  \centering
  \includegraphics[width=\linewidth]{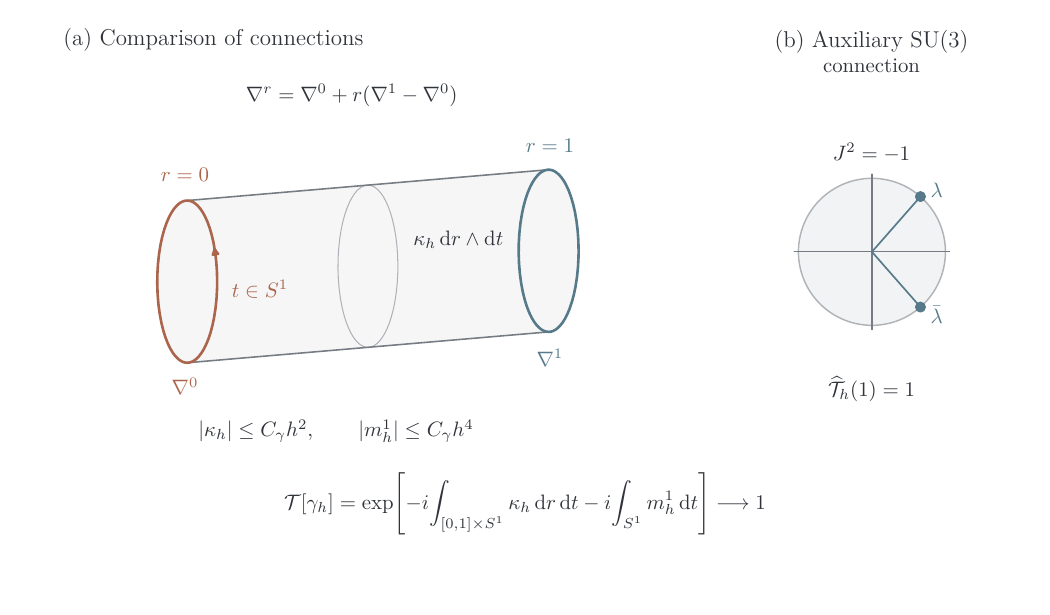}
  \caption{Comparison of phases around a fixed loop $\gamma$ of original gauge fields. The parameter $t$ runs around the loop, while $r$ interpolates between the reconstructed connection $\nabla^0$ and the auxiliary $\SU(3)$ connection $\nabla^1$. At spacing $h$, the original boundary is the sampled loop $\gamma_h=F_h\gamma$. The ordinary projector curvature $\kappa_h$ compares the boundary phases; $m_h^1$ corrects for color transport at the auxiliary boundary $r=1$. In (b), the antiunitary $J$ pairs return eigenvalues $\lambda,\bar\lambda$, fixing the auxiliary covariant phase $\widehat{\mathcal T}_h(1)=1$. The two corrections give the ordinary phase $\mathcal T[\gamma_h]$. The constant $C_\gamma$ may depend on the fixed loop and cylinder, but not on $h$. Only the sampling spacing tends to zero; the cylinder does not contract the original loop.}
  \label{fig:history-connection-cylinder}
\end{figure}

Finally take the refinement limit in \eqref{eq:telescope}. Convergence
of the current allows us to take its sum through the integral along the
fixed loop. Since the remaining projector phase tends to one,
\begin{equation}
 W[\gamma]
 =\lim_{n_{\max}\to\infty}\exp\!\left(
      \ii\int_0^1\sum_{n=0}^{n_{\max}-1}B_{n,\eta(t)}\,\dd t\right)
 =\mathcal T[\gamma],
\end{equation}
where $\eta_\mu(x,t)=\partial_tU_t(x,\mu)U_t(x,\mu)^{-1}$.
This proves full integrability for every smooth closed admissible loop,
including noncontractible loops. For a piecewise smooth loop,
reparametrize each piece to be stationary near its joins. This leaves
both transport phases unchanged and gives the same result.

The same argument will be used for other representations in
section~\ref{sec:other-representations}. Once an auxiliary connection
preserving a color antiunitary with square $+1$ has been found, the phase
argument needs only the ordered-trace cancellation
\eqref{gen:ordered-traces} and the gap on sufficiently fine samples of
the fixed cylinder. The spin antiunitary then fixes the auxiliary
covariant phase, and the two trace bounds remove the corrections.
The particular subgroup used to construct the auxiliary connection does
not enter these remaining steps.

\subsection{Reconstruction of the measure}
\label{sec:measure-reconstruction}

The current now determines a measure at every admissible original field.
Choose a base field $U_0$ in each connected component and a normalized
measure $s_0$ there. Projector transport along a path from $U_0$ to $U$
provides a measure at $U$, but its phase can depend on the path.
Multiplying by the inverse Wilson line removes that dependence, because
$W[\gamma]=\mathcal T[\gamma]$ on every closed curve.

Using the determinant connection $\nabla^{\det}$ of
section~\ref{sec:comparison}, take
$s_0\in\mathcal D(\widehat P_-[U_0])$ with unit norm and define
\begin{equation}
 s(U)=W[\gamma_{0U}]^{-1}
            \operatorname{PT}_{\nabla^{\det}}(\gamma_{0U})s_0 .
 \label{eq:section}
\end{equation}
Here $\gamma_{0U}$ is any path from $U_0$ to $U$, and
$\operatorname{PT}_{\nabla^{\det}}$ transports the fermion basis
exterior product while keeping the antifermion factor fixed. The scalar
correction is
$W[\gamma_{0U}]^{-1}=\exp(-\ii\int_{\gamma_{0U}}\mathfrak L)$.

To compare two paths, follow one to $U$ and return along the reverse
of the other. Their projector transports differ by
$\mathcal T[\gamma]$ for this closed curve, and their correcting
factors differ by $W[\gamma]^{-1}$. The product is one. Individual
transported basis vectors can differ, but the measure, represented by
their exterior product and the fixed antifermion factor, is the same.
Thus \eqref{eq:section} is independent of the path.

Equivalently, the measure is parallel for the corrected connection
$\nabla^{\det}+\ii\mathfrak L$. Its curvature is
$-\ii\Omega+\ii\dd\mathfrak L=0$, and its phase around every
closed curve is $\mathcal T[\gamma]/W[\gamma]=1$.
Smoothness follows locally by choosing paths that vary smoothly with
the endpoint; path independence makes these local constructions agree.
No global choice of individual fermion basis vectors is required.

This also gives a concrete prescription for the phase in Grassmann
coordinates. Choose an orthonormal fermion basis at $U_0$ and express
$s_0$ as its reference measure times a phase. Transport this basis to
$U$ and express the transported vectors in the endpoint reference basis
$v_k(U)$. The determinant of that unitary change-of-basis matrix gives
the transported reference measure relative to $s_v(U)$. Multiplying
by the initial phase specifying $s_0$ and by $W[\gamma_{0U}]^{-1}$
gives $e^{\ii\varphi_v(U)}$ in
$s(U)=e^{\ii\varphi_v(U)}s_v(U)$. Thus the fermion integration measure is
\begin{equation}
 \boxed{
 \mathcal D[\psi]\,\mathcal D[\bar\psi]
 =e^{\ii\varphi_v(U)}
   \left(\prod_{k=1}^N\dd c_k\right)
   \left(\prod_{l=1}^{\bar N}\dd\bar c_l\right)
 },
 \label{eq:reconstructed-measure}
\end{equation}
where $N,\bar N$, the Grassmann coefficients, and their ordering are
those of section~\ref{sec:fermion-measure}. The equation
$(\nabla^{\det}+\ii\mathfrak L)s=0$ gives exactly the measure term
\eqref{eq:measure-basis}. Changes of reference basis obey
\eqref{eq:measure-phase-change}, so these coordinate expressions describe
one measure on overlapping neighborhoods. For the fixed current,
changing $s_0$ changes the measure only by a constant phase on its
connected component.

It remains to check gauge invariance. Along a gauge orbit, transform
the fermion basis columns by $R(g)$ and the antifermion rows by
$R(g)^{-1}$. For an infinitesimal gauge parameter $\omega$, the
connection coefficient of their relative determinant element is
\begin{equation}
 \Tr\{(\widehat P_--P_+)R(\omega)\}=-\ii m_\omega .
\end{equation}
The current adds $\ii\mathfrak L_{-\nabla\omega}=\ii m_\omega$
by \eqref{eq:ward}, canceling this coefficient. The natural gauge action
therefore transports the measure by the same corrected connection as
\eqref{eq:section}. Every element of the full site gauge group
$\prod_{x\in\Gamma}\SU(5)$ can be joined to the identity. Following
such a path gives $g\,s(U)=s(U^g)$, the gauge-invariance condition of
section~\ref{sec:fermion-measure}.

This also covers a gauge transformation that leaves $U$ fixed. Even if
it cannot be connected to the identity through transformations that fix
$U$, it can be connected within the full site gauge group. Acting on $U$
along this path produces a closed admissible curve. The corrected
transport has unit phase on that curve, so the transformation acts
trivially on the measure at $U$.

Together with the smoothness and locality bounds of
section~\ref{sec:locality} and the transformation laws of
section~\ref{sec:symmetries}, this proves theorem~\ref{thm:measure}.
The reconstruction uses no inverse of $D$ and applies in zero-mode
sectors, including when the fermion and antifermion ranks differ.
The current's discrete symmetry laws leave the relative constant phases
of different connected components to be chosen; they do not make every
choice of those phases symmetric.

\section{The Standard Model groups}
\label{sec:standard-model}

The $\SU(5)$ measure also gives us a measure for the Standard Model.
One left-handed generation is the restriction of
$\mathbf{10}\oplus\bar{\mathbf5}$~\cite{GeorgiGlashow}, so we can map
each Standard Model link into $\SU(5)$ and evaluate the measure on
these image links. The gauge links themselves remain Standard Model
variables; using the $\SU(5)$ measure introduces no additional gauge
fields. We must choose the plaquette bound so that the image links are
admissible, and check that restricting the measure preserves the
reconstruction conditions.

\subsection{The four gauge groups and their fermions}

The four global forms of the Standard Model gauge group are~\cite{HucksGlobal,TongGlobal}
\begin{equation}
 G_q=[\SU(3)\times\SU(2)\times\U(1)]/\mathbb Z_q,
 \qquad q\in\{1,2,3,6\}.
 \label{eq:sm-groups}
\end{equation}
Here $\mathbb Z_q$ is generated by $c_6^{6/q}$, where
\begin{equation}
 c_6=(e^{2\pi\ii/3}\Id_3,-\Id_2,e^{\pi\ii/3}),
\end{equation}
and $\Id_m$ is the $m$-dimensional identity matrix. These are the
choices $\Gamma_{\rm SM}=\mathbb Z_q$ introduced above. We work in
terms of integer hypercharge $y=6Y$. In the notation $(r_3,r_2)_y$,
the entries specify the $\SU(3)$ and $\SU(2)$ representations, and
$z\in\U(1)$ acts as $z^y$.

For $g_3\in\SU(3)$, $g_2\in\SU(2)$, and $z\in\U(1)$, the map
\begin{equation}
 \phi(g_3,g_2,z)=\operatorname{diag}(z^{-2}g_3,z^3g_2)\in\SU(5)
 \label{eq:sm-map}
\end{equation}
has kernel $\mathbb Z_6$. It therefore defines a homomorphism
$\phi_q:G_q\to\SU(5)$ by the same formula on quotient elements
$[g_3,g_2,z]$, independent of the representative. Its kernel is
$\mathbb Z_6/\mathbb Z_q$, of order $6/q$, so only $\phi_6$ is
injective. For the other three groups, we must account for this kernel
when choosing the plaquette bound.

The familiar $\SU(5)$ branching rules give one left-handed Standard
Model generation~\cite{GeorgiGlashow}:
\begin{equation}
 \begin{split}
 \mathbf{10}|_{\phi_q}
    &=(\bar{\mathbf3},\mathbf1)_{-4}\oplus(\mathbf3,\mathbf2)_1
      \oplus(\mathbf1,\mathbf1)_6,\\
 \bar{\mathbf5}|_{\phi_q}
    &=(\bar{\mathbf3},\mathbf1)_2\oplus(\mathbf1,\mathbf2)_{-3}.
 \end{split}
 \label{eq:sm-matter}
\end{equation}
Here $|_{\phi_q}$ means that $G_q$ acts through its image in $\SU(5)$;
the direct sum of the two lines is $R_{\rm SM}$ for every $q$.

A fixed unitary change of color basis identifies these multiplets,
independently of the links and compatibly with complex conjugation.
In these bases, the Standard Model Wilson operator is the $\SU(5)$
Wilson operator in $\mathbf{10}\oplus\bar{\mathbf5}$ evaluated on the
image links. It follows that the overlap operator $D$ and chiral
projector $\widehat P_-$ agree in the two descriptions, as does the fixed
antifermion projector $P_+$. We can therefore use the same fermion
transport and measure phases. Since the change of basis is independent
of the links, it contributes no basis derivative to $\mathfrak L$.

\subsection{Admissibility}

For $q<6$, a nonidentity plaquette in $\ker\phi_q$ is mapped to the
identity in $\SU(5)$. A bound on the image plaquette therefore does not
ensure that the Standard Model plaquette itself is close to the
identity. To impose this requirement, we use a faithful unitary
representation, which distinguishes all elements of $G_q$.

A convenient choice is $\iota_q:G_q\to\U(6)$, defined by
\begin{equation}
 \iota_q([g_3,g_2,z])
       =\operatorname{diag}\bigl(\phi_q([g_3,g_2,z]),z^q\bigr).
\end{equation}
The extra scalar entry $z^q$ is unchanged by multiplication by
$c_6^{6/q}$ and is therefore well-defined on the quotient. Among the
six powers of $c_6$, exactly the subgroup $\mathbb Z_q$ also has
$z^q=1$. Thus the two blocks together have trivial kernel on $G_q$,
and $\iota_q$ is faithful. The extra scalar entry lets us test
admissibility in the full group,
but it does not add a fermion: the overlap fermions still couple through
$R\circ\phi_q$, with $R=\mathbf{10}\oplus\bar{\mathbf5}$.

Because the first block of $\iota_q$ is $\phi_q$, the operator norms obey
\begin{equation}
 \|1-\phi_q(g)\|\le\|1-\iota_q(g)\|,\qquad g\in G_q .
\end{equation}
Thus imposing $\|1-\iota_q(U(p))\|<\eps$ on every plaquette,
with the same $\eps$ as in theorem~\ref{thm:measure}, makes the image
field admissible in $\SU(5)$. The threshold remains independent of
lattice volume. We could equally use another fixed faithful unitary
representation, with a fixed reduction of its threshold. Near the
identity, its injective Lie-algebra map controls distances; compactness
ensures that elements far from the identity cannot have representative
matrices approaching it.

\subsection{Restricting the current and measure}

Let $U$ now denote a $G_q$ link field satisfying this plaquette bound,
and write $(\phi_qU)(x,\mu)=\phi_q(U(x,\mu))$. Under the fixed
fermion identification above, define
\begin{equation}
 s_q(U)=s(\phi_qU).
\end{equation}
The phase $e^{\ii\varphi_v}$ of the $\SU(5)$ measure now multiplies
the Grassmann integration differentials of the five Standard Model
multiplets in the corresponding reference basis, as in
\eqref{eq:reconstructed-measure}. A $G_q$ gauge transformation of $U$
induces an $\SU(5)$ gauge transformation of $\phi_qU$, so gauge
invariance of $s$ ensures gauge invariance of $s_q$.

To obtain its measure term, vary an original $G_q$ link by
$\delta_\eta U=\eta U$. The Lie algebra is the same for all four
groups. Write one link generator as
$\eta=(\eta_3,\eta_2,\ii\theta)$, where
$\eta_3\in\mathfrak{su}(3)$, $\eta_2\in\mathfrak{su}(2)$, and
$\theta$ is real. Differentiating \eqref{eq:sm-map} gives the image
generator
\begin{equation}
 \phi_{q,*}\eta
   =\operatorname{diag}(\eta_3-2\ii\theta\Id_3,
                        \eta_2+3\ii\theta\Id_2).
\end{equation}
Here $\phi_{q,*}$ denotes the induced Lie-algebra map. Applying it
at every link and using the $\SU(5)$ current gives
\begin{equation}
 \mathfrak L^{(q)}_\eta[U]
 =\sum_x\sum_{\mu=1}^4
       \left\langle j_\mu(x;\phi_qU),
                    \phi_{q,*}\eta_\mu(x)\right\rangle
 =\mathfrak L_{\phi_{q,*}\eta}[\phi_qU].
\end{equation}
The pairing is the $\mathfrak{su}(5)$ inner product defined in
section~\ref{sec:measure}. As before, the coefficients of the independent
Standard Model link variations define its measure current.

The map $\phi_q$ is smooth and acts on each link separately.
Varying a Standard Model link therefore changes only the corresponding
image link, and differentiating the current introduces only fixed factors
in the bounds of section~\ref{sec:locality}. Its smoothness and exponential
locality follow without enlarging the support of any contribution.

Gauge transformations and lattice transformations of the original links
induce the corresponding transformations of their images. Applying the
$\SU(5)$ current identities to these images gives gauge covariance and
the lattice transformation laws for the restricted current. Likewise,
the $\SU(5)$ divergence equation applies to every image of a $G_q$
site gauge parameter. Choosing the integer-hypercharge direction, for
example, gives the hypercharge divergence equation from \eqref{eq:ward}.

To check full integrability, consider a closed admissible $G_q$ loop
$\gamma$. Its image $\phi_q\gamma$ is a closed admissible $\SU(5)$
loop, and exponentiating $\ii\int_\gamma\mathfrak L^{(q)}$ gives
$W[\phi_q\gamma]$. Since we have identified the fermion spaces by a
fixed change of basis, the Standard Model projector determinant phase
around $\gamma$ is $\mathcal T[\phi_q\gamma]$. The equality of these
two phases, proved in section~\ref{sec:global}, therefore gives the
required integrability, including for noncontractible loops. The auxiliary
connections used in that proof may leave $\phi_q(G_q)$, since their
purpose is to establish the phase identity for the image loop. They do
not enter the measure as additional Standard Model gauge configurations.
A gauge transformation in $\ker\phi_q$ acts trivially on both the
fermions and $s_q$.

This construction also applies when the gauge bundle cannot be lifted
to $\SU(3)\times\SU(2)\times\U(1)$. For the nontrivial quotients,
it may be impossible to choose product-group transition functions
consistently on all overlapping spacetime patches. We can still apply
$\phi_q$ directly to the $G_q$ transition functions, obtaining an
$\SU(5)$ bundle without making such a choice. Because the fermion
representations agree, its associated fermion bundle in
$\mathbf{10}\oplus\bar{\mathbf5}$ is the original Standard Model
fermion bundle.

To express a connection on such a bundle as lattice links, choose a gauge
frame for the $G_q$ bundle independently at each lattice vertex. Its edge
transports are then $G_q$ elements in these frames. Applying $\phi_q$
to these link elements gives the edge transports of the associated
$\SU(5)$ connection in the induced frames.
Choosing representatives for individual links does not require a globally
consistent lift of the bundle. Thus nonliftable sectors are included
whenever their admissible lattice components are nonempty.

We retain Wilson mass one and the common real periodic or antiperiodic
fermion boundary signs chosen at the outset. Independent generations
have a product measure, so three copies give the usual Standard Model
matter content. Their determinant lines form a tensor product, and their
measure terms and currents add. As before, each connected admissible
component permits an overall constant phase. Gauge singlets contribute
no gauge-field-dependent measure term.

\section{Other gauge groups and fermion representations}
\label{sec:other-representations}

The same measure construction applies to other $\SU(N_{\rm c})$ theories
when their fermion representations have two properties: cancellation of
the cubic gauge anomaly gives the color-trace cancellations needed for
convergence and locality, while a further condition on the fermions under an
$\SU(2)$ subgroup lets us choose an auxiliary connection whose return
phase is fixed by color conjugation. We can then follow the global
argument of section~\ref{sec:global}. The comparison current is still
given by \eqref{eq:scale-current}; the subgroup enters only in the
proof of integrability, and does not restrict the physical gauge fields.

\subsection{Conditions on the fermion representation}

Let $N_{\rm c}\geq2$ be the dimension of the fundamental representation
of $\SU(N_{\rm c})$, and let $R$ be a fixed finite-dimensional unitary
representation containing the complete left-handed fermion multiplet.
We use $R(g)$ for the action of a group element and $R(X)$ for that of
an anti-Hermitian Lie-algebra generator $X$. The first condition is
cancellation of the total cubic gauge anomaly:
\begin{equation}
 \tr_R R(X)\{R(Y),R(Z)\}=0,
 \qquad X,Y,Z\in\mathfrak{su}(N_{\rm c}).
 \label{gen:anomaly}
\end{equation}
The trace includes all fermion species in $R$. For $N_{\rm c}=2$,
this cubic anomaly vanishes identically.

For the second condition, let $\SU(2)$ act on two fundamental colors
and leave the other $N_{\rm c}-2$ fixed. Its embedding is
$g\mapsto\operatorname{diag}(g,\Id_{N_{\rm c}-2})$, with
$g\in\SU(2)$. At $N_{\rm c}=2$ this is the identity map. Decompose
the fermion representation under this subgroup:
\begin{equation}
 R|_{\SU(2)}=\bigoplus_{j=0,\frac12,1,\ldots}m_j V_j,
 \qquad m_j\ \hbox{even for }j=\tfrac12,\tfrac32,\ldots .
 \label{gen:restriction}
\end{equation}
Here $V_j$ is the $(2j+1)$-dimensional irreducible representation and
$m_j V_j$ means $m_j$ identical copies. Integer-isospin copies
need not occur in pairs. Half-integer-isospin copies must occur with even
multiplicity so that we can pair them to construct the color conjugation
map $J_{\rm col}$ with $J_{\rm col}^2=1$ used in the global argument.

\begin{proposition}
\label{gen:measure}
For each fixed $N_{\rm c}$ and $R$ satisfying
\eqref{gen:anomaly} and \eqref{gen:restriction}, the conclusions of
theorem~\ref{thm:measure} hold with gauge group $\SU(N_{\rm c})$ and
fermion representation $R$. The plaquette norm and generator normalization
use the fundamental representation. The admissibility threshold and
positive locality rate may depend on $N_{\rm c}$ and $R$, but not on
volume, component, or derivative order. Choose one periodic or antiperiodic
fermion boundary sign in each coordinate direction, common to the full
multiplet.
\end{proposition}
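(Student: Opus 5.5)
The plan is to rerun the proof of theorem~\ref{thm:measure} step by step for $\SU(N_{\rm c})$ and $R$, isolating the two places where the specific representation $\mathbf{10}\oplus\bar{\mathbf5}$ was used. Those places are the ordered-trace identity \eqref{gen:ordered-traces} and the color antiunitary $J_{\rm col}$ with $J_{\rm col}^2=1$. Everything else carries over once constants are allowed to depend on $N_{\rm c}$ and $R$.

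The reconstruction of $\nabla^0[U]$ in appendix~\ref{app:g:geometry} works for rank $N_{\rm c}$ with the same estimates. The rank-$N_{\rm c}$ near-projection and the normalized volume element are unchanged. The bound $\|1-R(g)\|\le C_R\|1-g\|$ holds for any fixed finite-dimensional $R$. Hence the sampling properties and the gap of Proposition~\ref{prop:comparison} hold with $\eps$ and $g$ depending on $N_{\rm c}$ and $R$. The common boundary sign keeps the whole multiplet on one numerical Hilbert space, with $I_n$ constant. Lemma~\ref{lem:transgression} and the telescoping identities \eqref{eq:telescope} and \eqref{eq:ward-telescope} are representation independent. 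So are gauge covariance and the symmetry laws \eqref{eq:axial}, using \eqref{eq:kernel-symmetry} with $R^*(g)=\overline{R(g)}$.

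For Proposition~\ref{prop:locality}, the only input was \eqref{gen:ordered-traces} for $b\le4$. The derivation from the cubic condition given before \eqref{gen:ordered-traces} uses only \eqref{gen:anomaly}, cyclicity, tracelessness and $R([X,Y])=[R(X),R(Y)]$, so it applies verbatim. This yields the $h^8$, $h^6$ and $h^8$ Taylor remainders at each trace position. The site sum then gives \eqref{eq:scale-bounds}, and Lemma~\ref{lem:fixed-family-traces} follows with constants depending on $N_{\rm c}$ and $R$. Together these give smoothness, locality, differential integrability and the divergence equation.

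The main step is global integrability, namely constructing the auxiliary connection. Over $Y=T^4\times S^1$, choose $N_{\rm c}-2$ orthonormal sections of the rank-$N_{\rm c}$ bundle by the same transversality argument. At stage $k$ the complement has rank $N_{\rm c}-k$, so zeros have real codimension $2(N_{\rm c}-k)$. This exceeds $5$ as long as $N_{\rm c}-k\ge3$. The last required section, at $k=N_{\rm c}-3$, lives in a rank-$3$ complement, with codimension $6>5$. The procedure therefore leaves a rank-$2$ complement $V$ carrying an $\SU(2)$ structure from the volume element. We then take $\nabla^1$ to be an $\SU(2)$ connection on $V$ with the chosen sections parallel, so the structure group reduces to the embedded $\SU(2)$ of \eqref{gen:restriction}. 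For $N_{\rm c}=2$ no sections are needed.

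Next we build $J_{\rm col}$ on $R|_{\SU(2)}$. On integer-isospin $V_j$ we use the real structure, which squares to $+1$ and commutes with $\SU(2)$. On each pair $V_j\oplus V_j$ with $j$ half-integer, the quaternionic structure $\epsilon$ satisfies $\epsilon^2=-1$. We set $J_{\rm col}(v,w)=(\epsilon w,-\epsilon v)$, which squares to $+1$. This is where the even multiplicity in \eqref{gen:restriction} is needed. We must check that $J_{\rm col}$ is defined globally on the associated bundle: it is $\SU(2)$-equivariant and independent of frame, so it is. Then $J=J_{\rm spin}J_{\rm col}$ has $J^2=-1$ and is parallel for $\nabla^1$. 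The Kramers pairing argument gives $\widehat{\mathcal T}_h(1)=1$. Combined with the trace bounds, it gives \eqref{eq:limiting-twist} and $W=\mathcal T$.

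Reconstruction then proceeds exactly as in section~\ref{sec:measure-reconstruction}. The expected obstacle is verifying that every step of appendices~\ref{app:s:spectral}--\ref{app:d:cylinder-traces} depends on $R$ only through $\dim R$, generator norms and \eqref{gen:ordered-traces}. In particular, the free gap and operator-norm bound $\Lambda$ must be insensitive to $R$, since they are color-diagonal.
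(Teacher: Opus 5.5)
Your proposal follows the paper's proof essentially step for step. The ordered-trace identity \eqref{gen:ordered-traces}, derived from \eqref{gen:anomaly}, carries the local estimates over unchanged. Choosing $N_{\rm c}-2$ generic orthonormal sections reduces the history bundle to the two-color $\SU(2)$, and the paired antiunitary $J_{\rm col}(v,w)=(\epsilon w,-\epsilon v)$ on the even-multiplicity half-integer isospins fixes the auxiliary return phase.

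The one detail you leave implicit concerns the fixed-$R$ conjugation law $\mathfrak L_{\bar\eta}[\bar U]=-\mathfrak L_\eta[U]$ in \eqref{eq:axial}. Beyond \eqref{eq:kernel-symmetry}, it needs $R(\bar g)$ and $\overline{R(g)}$ to be related by a constant unitary. The paper obtains this from equality of their characters, since $\bar g$ and $g^{-1}$ are conjugate in $\SU(N_{\rm c})$.
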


Our global argument uses the even-multiplicity requirement. We do not
claim that it is necessary for a chiral gauge theory to admit a measure.
The local estimates require only cancellation
of the cubic anomaly. As in theorem~\ref{thm:measure}, lattice symmetries
may relate different tori, and one constant measure phase remains free
on each disconnected admissible component.

\subsection{Applying the preceding construction}

Anomaly cancellation lets us reuse the local estimates of
section~\ref{sec:locality}. By \eqref{gen:ordered-traces}, the ordered
color traces in $R$ and $R^*$ agree through degree four whenever
\eqref{gen:anomaly} holds. Together with the opposite sign under spin
conjugation, this cancels the same Taylor coefficients as before and
leaves the same powers of the auxiliary spacing.
The remaining check is that the reconstruction and operator bounds
hold on one admissible domain for the chosen group and representation.

We reconstruct the connection in the fundamental representation using
appendix~\ref{app:g:geometry}, with $1/N_{\rm c}$ replacing $1/5$ in
the determinant normalizations and trace subtraction. The paths and
neighborhoods of original links remain the same. The group and
representation dependence is controlled by the bounds
\begin{equation}
 \begin{aligned}
 |\det V-\det W|&\le N_{\rm c}\|V-W\|,
       &&V,W\in\mathrm{U}(N_{\rm c}),\\
 \|R(V)-R(W)\|&\le C_R\|V-W\|,
       &&V,W\in\SU(N_{\rm c}).
 \end{aligned}
\end{equation}
where $C_R$ is finite because $R$ is smooth and the group is compact.
These bounds let us choose an outer plaquette threshold $\eps_{\rm out}$
small enough to keep the logarithms on their chosen branches, the
required matrices invertible, and the interacting comparison operators
gapped. Applying $R$ to the fundamental transport matrices in
determinant-one bases gives the fermion transports. At the free field,
this changes only the number of color copies of the operator, so its
comparison gap is unchanged.

We also need the coefficient extensions of appendix~\ref{app:a:locality}
to agree with the original coefficients near the admissible fields.
Use the same cutoff as before, now applied to
$z_p=\tr_{\mathbb C^{N_{\rm c}}}[(1-U(p))^\dagger(1-U(p))]$,
and choose $\eps=\eps_{\rm out}/(4N_{\rm c})$ for the admissibility
threshold. On this domain,
$z_p\le N_{\rm c}\eps^2<\eps_{\rm out}^2/16$, so the cutoff leaves
the coefficients unchanged. Wherever the cutoff is nonzero,
$\|1-U(p)\|^2\le z_p\le\eps_{\rm out}^2/8$, which keeps its support
inside the outer domain where the construction is smooth.

We make these choices once, before taking any derivatives. Further
derivatives change the constants and introduce polynomial factors in
the expansion order, but the exponential bounds still hold at one
fixed positive locality rate. The arguments of
appendices~\ref{app:s:spectral} and~\ref{app:a:locality} therefore give
the current and curvature bounds, including the separate finite bound
for $B_0$. Summing the finite comparison identities then gives
differential integrability and the required divergence equation.

The lattice transformation laws follow as before. Charge conjugation
requires comparing $R(\bar g)$ with $\overline{R(g)}$. These are related
by one constant unitary change of basis: their characters agree because
$\bar g$ and $g^{-1}$ are conjugate in $\SU(N_{\rm c})$. The traces
in the current are unchanged by this basis transformation, so the proof
does not require $R$ to be real.

For the global argument, we choose an auxiliary connection that acts
only on two color directions. Along a fixed loop of gauge fields, the
fundamental color spaces form a rank-$N_{\rm c}$ bundle over
$T^4\times S^1$. When $N_{\rm c}>2$, the argument of
section~\ref{sec:global} lets us choose $N_{\rm c}-2$ orthonormal
sections of this bundle. At each step the remaining complex rank is at
least three, so the zeros of a generic section would have real
codimension greater than the five-dimensional base and are absent.
The two-dimensional orthogonal complement inherits a normalized
determinant from the full bundle. We choose an $\SU(2)$ connection on
this complement and make the selected sections parallel, both in
spacetime and around the loop. For $N_{\rm c}=2$, we simply choose the
connection on the full rank-two bundle.

The condition on subgroup isospins now gives the color conjugation
symmetry. An integer-isospin representation admits an invariant
antiunitary with square $+1$. For a half-integer isospin, the invariant
antiunitary $C_j$ instead has $C_j^2=-1$. We obtain square $+1$ by
acting on two identical copies of $V_j$ together:
\begin{equation}
 J_{\rm col}(v,w)=(C_jw,-C_jv),\qquad v,w\in V_j.
\end{equation}
The exchange of copies and the minus sign give $J_{\rm col}^2=+1$.
Because each half-integer isospin has even multiplicity, every copy
can be paired in this way. Combining these pairings with the
integer-isospin conjugation maps gives a color antiunitary preserved
by the auxiliary connection and its transition functions.
Figure~\ref{fig:pseudoreal-pairs} illustrates the pairing.

\begin{figure}[htbp]
  \centering
  \includegraphics[width=\linewidth]{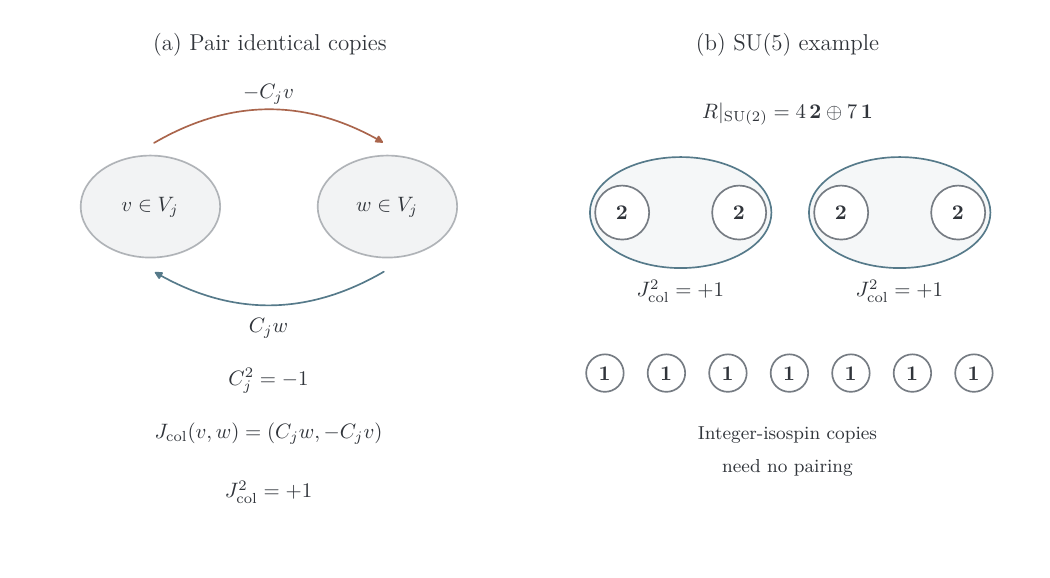}
  \caption{Color conjugation for the auxiliary $\mathrm{SU}(2)$ connection. (a) The vectors $v,w$ belong to two copies of the isospin-$j$ representation $V_j$, with $j$ half-integer. Its invariant antiunitary $C_j$ has square $-1$. Exchanging the copies with the indicated sign gives $J_{\rm col}(v,w)=(C_jw,-C_jv)$ with square $+1$. (b) Restricting $R=\mathbf{10}\oplus\bar{\mathbf5}$ to this subgroup gives four doublets and seven singlets. The blue enclosures pair the doublets; the singlets need no pairing.}
  \label{fig:pseudoreal-pairs}
\end{figure}

Combining $J_{\rm col}$ with spin conjugation fixes the auxiliary return
phase to $+1$ when color transport is included, as in
section~\ref{sec:global}. We join this auxiliary connection to the
reconstructed one by a fixed cylinder. Sufficiently
fine samples have the Wilson gap throughout the cylinder, and the
prescribed link and site generators have bounded maximum norms.
The ordered-trace identity makes both phase corrections in
\eqref{eq:phase-comparison-ordinary-cylinder} vanish under refinement:
the correction from changing the connection, and that from removing
the added color transport. It follows that
$\mathcal T[F_h\gamma]\to1$, so \eqref{eq:telescope} gives full
integrability. The cylinder is chosen for each fixed loop; no bound
uniform over all loops is required. The measure is then reconstructed
as in section~\ref{sec:measure-reconstruction}, including in components
with nonzero overlap index.

\subsection{Covered theories and restriction to other groups}

To apply the proposition to a theory, we test its complete left-handed
fermion content: the separate complex irreducible representations need
not satisfy either condition on their own. Even for $\SU(3)$,
$\SU(4)$, and $\SU(5)$, infinitely many choices pass these tests.
We begin with some useful families, then give a finite test that applies
to any fermion multiplet, including those with higher representations.

For $N_{\rm c}\ge3$, consider fermions in the fundamental representation
$F=\mathbf{N_{\rm c}}$, its antisymmetric and symmetric two-index
tensors $A_2$ and $S_2$, and their conjugates. Let $f,a,s$ count copies
of $F,A_2,S_2$, respectively, minus copies of their conjugates. Anomaly
cancellation requires
\begin{equation}
 f+(N_{\rm c}-4)a+(N_{\rm c}+4)s=0.
 \label{gen:two-index-condition}
\end{equation}
For these representations, anomaly cancellation also guarantees the
required pairing. Under our $\SU(2)$ subgroup, $F,A_2,S_2$ contain
$1,N_{\rm c}-2,N_{\rm c}-2$ doublets, respectively, and only integer
isospins otherwise. These doublet counts have the same parity as the
coefficients in \eqref{gen:two-index-condition}. A vanishing anomaly
therefore forces an even total number of doublets. Adding real
multiplets or vectorlike pairs preserves both conditions.

Table~\ref{tab:small-rank-tensors} gives all solutions built from these
representations for $\SU(3)$, $\SU(4)$, and $\SU(5)$: take finite sums
of the listed multiplets and their conjugates, and add any real
multiplets or vectorlike pairs within this class. For $\SU(3)$,
$A_2=\bar{\mathbf3}$ is already an antifundamental, so it is included
in $f$. For $\SU(4)$, $A_2=\mathbf6$ is real and can occur any number
of times.

\begin{table}[htbp]
\centering
\renewcommand{\arraystretch}{1.2}
\begin{tabular}{p{0.13\linewidth}p{0.25\linewidth}p{0.53\linewidth}}
\hline
Group & Cubic condition & Chiral generating multiplets \\
\hline
$\SU(3)$ & $f+7s=0$ & $\mathbf6\oplus7\bar{\mathbf3}$ \\
$\SU(4)$ & $f+8s=0$ & $\mathbf{10}\oplus8\bar{\mathbf4}$ \\
$\SU(5)$ & $f+a+9s=0$ & $\mathbf{10}\oplus\bar{\mathbf5}$ and
$\mathbf{15}\oplus k\bar{\mathbf{10}}\oplus(9-k)\bar{\mathbf5}$,
$k=0,\ldots,9$ \\
\hline
\end{tabular}
\caption{Complete classification within the fundamental and two-index
class for $\SU(3)$, $\SU(4)$, and $\SU(5)$. The
$\SU(3)$ $\mathbf6$, $\SU(4)$ $\mathbf{10}$, and $\SU(5)$
$\mathbf{15}$ are symmetric tensors. The
$\SU(5)$ $\mathbf{10}$ is antisymmetric. Higher representations are
not excluded from the theorem; the test below determines whether their
complete multiplet satisfies its conditions.}
\label{tab:small-rank-tensors}
\end{table}

The same approach gives families for arbitrary $N_{\rm c}\ge3$,
including higher tensors. Let $A_k$ and $S_k$ be the antisymmetric and
symmetric rank-$k$ tensors of $F$. Their cubic anomaly coefficients,
normalized to one for $F$, are~\cite[appendix~B]{ShiShrock2015}
\begin{equation}
 a_k=\frac{(N_{\rm c}-3)!(N_{\rm c}-2k)}
 {(N_{\rm c}-k-1)!(k-1)!},\qquad
 b_k=\frac{(N_{\rm c}+k)!(N_{\rm c}+2k)}
 {(N_{\rm c}+2)!(k-1)!}.
\end{equation}
The first formula holds for $1\le k\le N_{\rm c}-1$ and the second
for $k\ge1$. They tell us how many antifundamentals cancel the anomaly
of each tensor when its coefficient is nonnegative. For
$A_k\oplus a_k\bar F$ with $k\ge2$ and $N_{\rm c}\ge2k$,
the $\SU(2)$ decomposition contains
$2\binom{N_{\rm c}-3}{k-1}$ doublets and otherwise only singlets.
The pairing condition is therefore satisfied. At $N_{\rm c}=2k$,
$a_k=0$, so no antifundamentals are needed and the tensor is
self-conjugate.

Table~\ref{tab:general-families} lists these and other families.
Here $Q$ denotes a fixed finite-dimensional unitary multiplet and $Q^*$
its conjugate. A real multiplet has even multiplicities of the
half-integer subgroup isospins, and so satisfies the pairing condition.
This includes every vectorlike pair $Q\oplus Q^*$. Alternatively,
taking $Q\oplus Q$ doubles all subgroup multiplicities and gives a
measure whenever the cubic anomaly of $Q$ vanishes. Applying this to
$Q=S_k\oplus b_k\bar F$ gives the doubled symmetric family in the
table. A single copy must be checked separately.

\begin{table}[htbp]
\centering
\renewcommand{\arraystretch}{1.2}
\begin{tabular}{p{0.53\linewidth}p{0.38\linewidth}}
\hline
Fermion multiplet & Group and conditions \\
\hline
Fundamentals and two-index tensors &
$N_{\rm c}\ge3$, equation~\eqref{gen:two-index-condition} \\
$A_k\oplus a_k\bar F$ & $k\ge2$, $N_{\rm c}\ge2k$ \\
$S_2\oplus(N_{\rm c}+4)\bar F$ & $N_{\rm c}\ge3$ \\
$2S_k\oplus2b_k\bar F$ & $N_{\rm c}\ge3$, $k\ge2$ \\
Any real multiplet; $Q\oplus Q^*$ & $N_{\rm c}\ge3$, arbitrary $Q$ \\
$Q\oplus Q$ & {\raggedright $N_{\rm c}\ge3$, cubic anomaly of $Q$ vanishes\par} \\
{\raggedright Integer isospins arbitrary; each half-integer isospin with even multiplicity\par}
 & $\SU(2)$ \\
Any number of Standard Model generations &
$[\SU(3)\times\SU(2)\times\mathrm{U}(1)]/\Gamma_{\rm SM}$ \\
\hline
\end{tabular}
\caption{Covered families. The gauge group is $\SU(N_{\rm c})$
unless otherwise specified, and
$\Gamma_{\rm SM}=1,\mathbb Z_2,\mathbb Z_3,\mathbb Z_6$.
The families can be combined, conjugated, and supplemented with gauge
singlets. The test below determines whether arbitrary higher
representations satisfy the proposition's conditions.}
\label{tab:general-families}
\end{table}

For $\SU(2)$, the proposition covers arbitrary integer-isospin
representations and even numbers of each half-integer isospin. These are exactly the real multiplets, for which a measure is
already known~\cite{SuzukiReal}. The Standard Model entry follows from
section~\ref{sec:standard-model} and includes all four global forms of
the gauge group.

There is also a simple way to test a multiplet that does not appear in
the tables. For $N_{\rm c}\ge3$, we can check both conditions by counting
its charges under two one-parameter subgroups, without finding its full
$\SU(2)$ decomposition. Use the group elements
\begin{equation}
 \begin{split}
 g_3(\theta)&=\operatorname{diag}
 (e^{i\theta},e^{i\theta},e^{-2i\theta},1,\ldots,1),\\
 g_2(\theta)&=\operatorname{diag}
 (e^{i\theta},e^{-i\theta},1,\ldots,1),
 \end{split}
\end{equation}
where $\theta$ is real. A state of charge $q$ acquires a phase
$e^{iq\theta}$ under the corresponding subgroup. We can therefore count
the charges by taking the character, or trace in the representation:
\begin{equation}
 \chi_R(g_r(\theta))=\tr R(g_r(\theta))
 =\sum_{q\in\mathbb Z}c_q^{(r)}e^{iq\theta},\qquad r=2,3.
\end{equation}
Here $c_q^{(r)}$ is the number of color states of charge $q$ under
$g_r$, counted across all fermion species, without spin or lattice-site
multiplicities. For a tensor, add the charges of its indices; an index
in the conjugate representation contributes the opposite charge.
The two conditions of the proposition are equivalent to
\begin{enumerate}
\item the sum of the cubes of the charges for $g_3$ is zero;
\item for $g_2$, every odd integer charge occurs with even multiplicity.
\end{enumerate}
For any chosen finite multiplet, these two checks decide whether the
proposition applies, including for all higher representations of
$\SU(3)$, $\SU(4)$, and $\SU(5)$. Failing the test means that the
present proposition does not apply; it does not rule out a measure
obtained by another argument.

The first test is sufficient because $\SU(N_{\rm c})$ has only one
cubic invariant for $N_{\rm c}\ge3$. The $g_3$ charges detect its
coefficient: in the fundamental representation their cubes sum to
$1+1-8=-6$, rather than zero. For the second test, write
$c_\ell=c_\ell^{(2)}$ for the number of states of nonnegative
$g_2$ charge $\ell$. An isospin-$j$ representation contains one state
of each charge $2j,2j-2,\ldots,-2j$, so subtracting consecutive counts
gives $m_{\ell/2}=c_\ell-c_{\ell+2}$. Thus every odd charge has even
multiplicity exactly when every half-integer isospin does.

For example, the $\SU(3)$ multiplet
$\mathbf{10}\oplus27\bar{\mathbf3}$ passes the cubic test but fails
the pairing test: under $g_2$, the symmetric rank-three tensor has a
single charge-$3$ state, and the antifundamentals have none. Taking two copies makes all subgroup multiplicities
even, so the doubled multiplet passes both tests.

Finally, we can obtain theories with other compact gauge groups in the
same way that we obtained the Standard Model. Given a fixed smooth group
homomorphism $\phi:H\to\SU(N_{\rm c})$, map the $H$ links into
$\SU(N_{\rm c})$ and evaluate the constructed measure on them. This
gives the measure for fermions in $R\circ\phi$, since their overlap
operator and projector are exactly those of the image field.

On fields whose images are admissible, gauge invariance and integrability
follow as in section~\ref{sec:standard-model}. The measure term is
obtained by applying the induced Lie-algebra map to each link variation.
Because the map acts on each link separately, it preserves the support
of every contribution and changes only the constants in the locality
bounds. It also acts directly on transition functions, so quotient
groups require no lift to a covering group. A prescribed charge
conjugation on $H$ is inherited provided its image agrees with complex
conjugation in $\SU(N_{\rm c})$, up to one fixed change of basis in
the fundamental representation.

\section{Verification}
\label{sec:verification}

In writing this paper, we have been inspired by two quotes from Carl Sagan. The first is his observation that ``\dots the human use of artificial intelligence seems to be one of the two practicable major advances in human intelligence available in the near future.''\footnote{In his view the second practicable major advance ``is enrichment of the preschool and school learning environments of children''\cite[chap.~8]{SaganDragonsOfEden}. Make of that what you will!} The second is the far more famous ``extraordinary claims require extraordinary evidence.'' To supplement human scrutiny of the results in this paper, we have pursued  three complementary avenues for verification.

The first is a Lean formalization of the mathematical statements in the paper. The accompanying formalized \emph{Mathematical companion} contains the mathematical statements associated with the $\SU(5)$ construction and its restriction to the four Standard Model gauge groups. All sixteen modules of the corresponding Lean~4 formalization compiled without errors
or compiler warnings. Separate checks of the axioms and a replay of the
proofs in Lean's kernel accepted the formal proofs, using only Lean's
standard foundational axioms and no unproved placeholders. These checks verify the formal statements under their recorded assumptions. Their correspondence with the mathematical companion, including those assumptions, was reviewed separately by AI agents. The formal-verification supplement \emph{Lean verification of the mathematical companion} provides the source, software versions,
verification records, and correspondence review. 

This is not to suggest, however, that Lean verification is the final word in physics, where there is a great deal more that can go wrong beyond the failure of a formally stated conclusion to follow from its specified definitions and assumptions. To build greater confidence in the physical results, in the supplement \emph{A two-dimensional chiral $\SU(5)$ measure} we develop and analyze an analogous construction in two dimensions. This gives a comparable measure construction on an explicit admissible domain and compares directly evaluated currents with gauge responses and full determinant phases in specified examples. 

Finally, perhaps the most convincing check of any result in theoretical physics is to obtain the same result by different means. In this paper, we first construct a local rule for how the measure’s phase changes when the gauge field changes, then show that this rule is globally consistent and use it to construct the measure. Alternately, one can build the fermion measure directly from a simple reference state, using auxiliary local choices and averaging their phase corrections; this effectively amounts to doing things in the opposite order. This second, independent approach yields a fermion measure that agrees with the one presented here up to exponentially local counterterms, which is precisely the level of agreement one should expect for equivalent measures.

\section{Conclusions}
\label{sec:conclusions}

The Standard Model is a triumph of human understanding, providing a wildly successful description of nature on scales stretching from quarks to the cosmos. Yet the elusiveness of a nonperturbative lattice definition makes one wonder whether we have overlooked something fundamental. In this paper, we have sought to close this gap for the Standard Model (and other chiral gauge theories) by constructing a gauge-invariant measure for overlap chiral fermions on admissible nonabelian gauge fields. Our approach builds on L\"uscher's method and delivers his reconstruction criteria for a smooth, gauge-invariant measure in a theory with lattice chiral fermions.

There are numerous directions for further study. More elegant solutions to the same problem undoubtedly exist, and comparison to other existing proposals is bound to be fruitful. It goes without saying that our formulation of lattice chiral fermions is certainly one of principle, rather than practice. It would be useful to establish optimal numerical admissibility bounds, distill lessons for realistic lattice simulations, and work towards a practical implementation of the ideas presented here.  It would also be worthwhile to explore the extension to additional chiral gauge theories; fix the remaining phase freedom of disconnected components via physical considerations; and establish reflection positivity compatible with admissibility~\cite{CreutzPositivity,BietenholzTopology}. And, of course, a definition of the interacting continuum theory remains open. But a candidate lattice theory is certainly a start.

\FloatBarrier
\acknowledgments
OpenAI's GPT-6 Astra model was essential to this result. I proposed a strategy to extend L\"{u}scher's abelian result using refinement methods that underwent several rounds of iteration with Astra and outside experts before arriving at the current Astra-led proof.  The manuscript combines author-written sections with extensive rewrites of the Astra-composed proof. My colleagues and I proposed tests of the method, including Lean verification, that were carried out by Astra; supplementary materials documenting some of these checks were written by Astra.

Human experts were also essential to this result. I am deeply indebted to Maarten Golterman and Yigal Shamir for generously contributing their time, expertise, and probing questions throughout the entire arc of this project. I would also like to thank Xiao-Liang Qi and Eva Silverstein for helpful conversations; Andreas Kronfeld for providing his expertise and comments on the manuscript; and Kevin Weil, Zak Weiler, and especially Dan Roberts for their encouragement.

\appendix

\section{A local connection from the lattice links}
\label{app:g:geometry}

Section~\ref{sec:construction} uses the original links $U$ to construct
one smooth connection $\nabla^0[U]$, which is then sampled at every
auxiliary spacing. We give that construction here and establish the
bounds needed for the comparison current. The links initially specify
transport only between original lattice vertices. We must extend this
to arbitrary points of the spacetime torus, without choosing a global
gauge or introducing dependence on distant links.

The construction has three steps. First, we combine short lattice
transports to define a five-dimensional color space $E_y(U)$ at each
continuous point $y$. The plaquette bound makes the different short
routes nearly consistent; a spectral projection turns this approximate
consistency into an exactly defined color space. Second, we obtain a
connection by projecting derivatives into these spaces and removing its
trace part. Finally, we transport color bases from each original vertex
to its descendants on the finer lattices. Sampling the connection in
these bases gives $F_hU$ and the constant copying matrix used in
section~\ref{sec:construction}.

We will show that the spatial curvature is $O(\eps)$ in original lattice units,
and that the sampled plaquette deviations are $O(\eps h^2)$. Derivatives
with respect to the original links need not retain the factor $\eps$,
but they remain bounded and preserve the plaquette's $h^2$ area factor.
All these bounds are uniform in the refinement level. Earlier
constructions of bundles and gauge-covariant continuum connections from
lattice gauge fields appear in
refs.~\cite{PhillipsStone1986,GockelerKronfeldSchierholzWiese}.

We work first on $\mathbb R^4$, before identifying points separated by
a torus period. The point $y\in\mathbb R^4$ and vertices
$i,j,k\in\mathbb Z^4$ are measured in original lattice units, as in
section~\ref{sec:lattice}. Throughout this appendix, a spatial derivative
means a derivative in any of these four Euclidean spacetime directions;
it is distinct from a derivative that changes $U$. We impose the torus
identifications only after constructing the color spaces and connection.
For an edge $e=(i,i+\hat\mu)$, $U_e$ transports from its terminal vertex
to its initial vertex. Write
$|v|_\infty=\max_\mu|v_\mu|$ and $|v|_1=\sum_\mu|v_\mu|$;
unmarked lengths are Euclidean, and $\mu,\nu=1,\ldots,4$.

The geometric construction requires
$\|1-U(p)\|<\eps_{\rm g}$ on the original plaquettes used below.
The norm is the fundamental-representation operator norm of
section~\ref{sec:lattice}. We choose one positive threshold $\eps_{\rm g}$
that keeps all inverse matrices and phase choices well defined. In the
estimates, $\eps\le\eps_{\rm g}$ bounds the required plaquette deviations.
Neither individual links nor transports around the periodic directions
need be close to the identity.

We use the original-link derivatives of \eqref{eq:link-derivative}.
The label $I=(x,\mu,a)$ specifies the link $(x,\mu)$ and generator
$T^a$, $a=1,\ldots,24$; $D_I$ differentiates
$U(x,\mu)\mapsto e^{tT^a}U(x,\mu)$ at $t=0$, holding the other
links fixed. For an ordered list $\boldsymbol I=(I_1,\ldots,I_\ell)$,
write $D^{\boldsymbol I}=D_{I_1}\cdots D_{I_\ell}$. The empty list
means no differentiation. The order matters because derivatives on the
same link need not commute. Constants denoted by $C$ or $C_\ell$ may
change between estimates. They may depend on the derivative order and the fixed construction,
but are uniform in the derivative labels, lattice volume, refinement
level, and admissible component.

\subsection{Local weights and transporters}
\label{app:g:weights}

To define a color space at $y$, we will compare vectors at nearby
vertices. Two ingredients are needed: smooth weights that determine how
much each vertex contributes, and short transport matrices that compare
their color vectors. Both must respect translations, coordinate
permutations, and reflections.

\paragraph{Smooth weights.}
The weights must vanish outside a fixed neighborhood and sum to one.
They must also change smoothly when $y$ crosses a cell boundary. We will
copy a vector into several vertex spaces, using the square roots of
these weights. The normalization then preserves the vector's norm.

Let $b$ be the indicator function of $[-1/2,1/2]$, and define convolution
by $(f*g)(t)=\int_{\mathbb R}f(t-u)g(u)\,\dd u$. Fix
$\delta_b=1/8$ and the smooth even function
\begin{equation}
 \beta(t)=
 \begin{cases}
 C_\beta\exp[-1/(\delta_b^2-t^2)],&|t|<\delta_b,\\
 0,&|t|\ge\delta_b,
 \end{cases}
 \qquad \int_{\mathbb R}\beta(t)\,\dd t=1 .
\end{equation}
The integral fixes $C_\beta>0$. Define
\begin{equation}
 \kappa=b*b*b*\beta,\qquad
 \phi(t)=\sqrt{\kappa(t)},\qquad
 \phi_i(y)=\prod_{\mu=1}^4\phi(y_\mu-i_\mu),\qquad
 \rho_i(y)=\phi_i(y)^2 .
 \label{app:g:partition}
\end{equation}
Here $\rho_i(y)$ is the weight assigned to vertex $i$, and $\phi_i(y)$
is the factor used to copy a vector into its color space. 

The translates of $b$ tile the line except at endpoints, which do not
affect the convolution integral. Since $b*b*\beta$ has integral one,
\begin{equation}
 \sum_{j\in\mathbb Z}\kappa(t-j)
 =\int_{\mathbb R}(b*b*\beta)(u)
       \sum_{j\in\mathbb Z}b(t-u-j)\,\dd u=1 .
\end{equation}
Taking the product over the four coordinates therefore gives
\begin{equation}
 \sum_{i\in\mathbb Z^4}\rho_i(y)
 =\prod_{\mu=1}^4\sum_{j\in\mathbb Z}\kappa(y_\mu-j)=1 .
\end{equation}

We also need the copying factors $\phi_i$ to be smooth where they vanish. The support of $\kappa$ is
$[-13/8,13/8]$. Near its right endpoint, put $t=13/8-r$ with $r>0$.
Using $(b*b*b)(t)=\tfrac12(3/2-t)^2$ near $t=3/2$, we obtain
\begin{equation}
 \kappa(13/8-r)=\frac{C_\beta}{2}
       \int_0^r(r-v)^2e^{-1/[v(2\delta_b-v)]}\,\dd v.
\end{equation}
This formula controls the square root as well as $\kappa$. For small
$r$, the negative derivative of $1/[v(2\delta_b-v)]$ is bounded above
and below by positive constants times $v^{-2}$. Integrating from $v$ to
$r$ gives
$1/[v(2\delta_b-v)]-1/[r(2\delta_b-r)]\ge c(r-v)/r^2$.
The appreciable contribution thus comes from an interval of width
$O(r^2)$ next to $v=r$. Integrating this exponential bound gives an
upper bound $Cr^6e^{-1/[r(2\delta_b-r)]}$. Restricting instead to
$r-r^2\le v\le r-r^2/2$ gives a lower bound of the same order:
the quadratic factor is bounded above and below by positive constants
times $r^4$, the interval has width $r^2/2$,
and the exponential differs from its value at $r$ by bounded factors.

Differentiating the integral and then its square root gives, for each
fixed $m\ge0$,
\begin{equation}
 \begin{split}
 |\kappa^{(m)}(13/8-r)|&\le C_mr^{6-2m}e^{-1/[r(2\delta_b-r)]},\\
 |\phi^{(m)}(13/8-r)|&\le C_mr^{3-2m}e^{-1/[2r(2\delta_b-r)]}.
 \end{split}
\end{equation}
The exponential dominates every inverse power of $r$, so all derivatives
of $\phi$ tend to zero at the endpoint. Evenness gives the same result
at the other endpoint. The extension by zero is therefore smooth.

Only vertices with $\phi_i(y)\ne0$ contribute at $y$; we call these
vertices active. There are at most four in each coordinate and hence
at most $4^4$ in total. Around each vertex use the open neighborhood
\begin{equation}
 V_i=\{y:|y-i|_\infty<7/4\}.
\end{equation}
We call $V_i$ a chart because it will carry a local color basis.
The support of $\phi_i$ lies strictly inside it. For $y\in V_k$,
the vertex $k$ and all active vertices lie in one integer-coordinate
box of side at most three. On an overlap $V_k\cap V_j$, this box can
also include both $k$ and $j$: their coordinate separations are integers
strictly smaller than $7/2$. Thus all required comparisons take place
in boxes of fixed size. A vertex used to label a chart will be called
its anchor; it need not itself have nonzero weight at every point of
the chart.

\paragraph{Short transport matrices.}
For two vertices with $|i-j|_\infty\le3$, take lattice paths from
$i$ to $j$ that complete all steps in one coordinate before moving in
the next. Form each ordered link product with the convention of
section~\ref{sec:lattice}: a step $x\to x+\hat\mu$ contributes
$U(x,\mu)$, while a step $x\to x-\hat\mu$ contributes
$U(x-\hat\mu,\mu)^{-1}$. The product transports color vectors
back from $j$ to $i$. Average these products over all $4!=24$ coordinate
orders and call the result $M_{ij}$. A zero displacement uses no step.
If several orders give the same path, each order still contributes to the
average. Every path contains at most twelve links.

This average treats all coordinate directions equally, but it need not
be unitary. First restore unitarity by the polar decomposition, then
remove the determinant phase:
\begin{equation}
 \begin{aligned}
 V_{ij}&=M_{ij}(M_{ij}^{\dagger}M_{ij})^{-1/2},\\
 u_{ij}&=V_{ij}\exp\{-\tfrac15\operatorname{Log}\det V_{ij}\},
 \qquad u_{ii}=\Id .
 \end{aligned}
 \label{app:g:polar}
\end{equation}
The inverse square root is positive. The logarithm uses the branch near
one, with $\operatorname{Log}1=0$. The first line makes $V_{ij}$
unitary, and the second makes $u_{ij}$ special unitary. To justify
these operations, we now bound the differences between the paths.

Exchanging two neighboring coordinate runs sweeps a rectangle of at
most $3\times3$ plaquettes. Any two orders are related by at most six
such exchanges, so their link products differ by at most $54\eps$.
The other transport factors are unitary and do not enlarge this bound.
Thus $M_{ij}$ is within $54\eps$ of each path product. For sufficiently
small $\eps$, $M_{ij}^\dagger M_{ij}$ stays uniformly positive and
$\det V_{ij}$ stays in a fixed disk about one. Both normalizations are
then well defined and smooth. Reversing all paths gives
$M_{ji}=M_{ij}^\dagger$, and the chosen matrix functions give
$u_{ji}=u_{ij}^\dagger$.

These normalized matrices compare color vectors at different vertices,
but they need not compose exactly. For three vertices in the same
local box, their failure to compose obeys
\begin{equation}
 \|u_{ij}-u_{ik}u_{kj}\|\le C\eps .
 \label{app:g:triangle}
\end{equation}
To see this, compare direct transport from $j$ to $i$ with transport through
$k$. Together they make a closed path of at most 36 links. Sort its
steps by coordinate while retaining their order within each coordinate.
At most $\binom{36}{2}=630$ exchanges are needed; each inserts one
oriented plaquette. Every intermediate coordinate is the endpoint of
a prefix of the original walk in that coordinate, so the exchanged
paths and inserted plaquettes remain in the same box. After sorting,
each coordinate run is a closed one-dimensional walk, whose forward
and reverse link factors cancel. The original loop differs from the
identity by at most $630\eps$. The three polar and determinant
corrections add only $O(\eps)$, proving
\eqref{app:g:triangle}. This is the local consistency estimate used
in the next step.

\subsection{The color spaces, their bases, and the connection}
\label{app:g:projection}

We now use the weights and short transports to define one color space at
$y$, independently of which vertex is used to describe it. To compare
all the candidate vectors in a fixed space, place one copy of
$\mathbb C^5$ at each original vertex:
$\mathcal H_{\rm g}=\ell^2(\mathbb Z^4)\otimes\mathbb C^5$.
A vector in this space is a list of five-component vectors, one per
vertex, with the sum inner product. This space is only a device for
constructing the color fibers; it is not the fermion Hilbert space.
Only finitely many of its vertex components occur near any given $y$.

\paragraph{Selecting five color directions.} The weighted projection construction for approximate cocycles and the recovery of exact frames used here have precedents in refs.~\cite{ManuilovYou,ScoccolaPerea}.
Choose $y\in V_k$. A vector at the anchor $k$ can be transported to
each active vertex $i$ and multiplied by $\phi_i(y)$. This defines a
map $W_k:\mathbb C^5\to\mathcal H_{\rm g}$. We also form a matrix
$K$ that compares all active vertices without selecting an anchor:
\begin{equation}
 K_{ij}=\phi_i u_{ij}\phi_j,\qquad
 (W_k)_i=\phi_i u_{ik},\qquad \Pi_k=W_kW_k^{\dagger} .
 \label{app:g:gram}
\end{equation}
The indices $i,j$ label five-by-five blocks of $K$. A block with a
zero weight is zero, so only the short transports already constructed
are needed. The identity $u_{ji}=u_{ij}^\dagger$ makes $K$ Hermitian.
The weight normalization gives $W_k^\dagger W_k=\Id$: copying a
vector in this way preserves its norm. Consequently $\Pi_k$ is the
orthogonal projection onto the five-dimensional image of $W_k$.

If all short transports composed exactly, $K$ would equal $\Pi_k$.
It would retain precisely the weighted copies of a single anchor
vector. For an admissible field, the difference has blocks
$\phi_i\phi_j(u_{ij}-u_{ik}u_{kj})$. Applied to a vector
$v=(v_j)\in\mathcal H_{\rm g}$, \eqref{app:g:triangle} and
Cauchy--Schwarz therefore give
\begin{equation}
 \|(K-\Pi_k)v\|^2
 \le C^2\eps^2
       \left(\sum_j|\phi_j|\|v_j\|\right)^2
 \le C^2\eps^2\|v\|^2 .
\end{equation}
The sum over output components contributes $\sum_i\phi_i^2=1$;
the second inequality uses the same normalization for the input
components. The bound does not grow with the number of vertex spaces.
Choose $\eps_{\rm g}$ so that $\|K-\Pi_k\|<1/8$.

The image of $K$ itself is not suitable: a small perturbation can
produce extra nonzero eigenvalues near zero. Instead, keep only the
five eigenvalues near one. With the counterclockwise contour
$\mathcal C=\{z:|z-1|=1/2\}$, their spectral projection is
\begin{equation}
 \Pi=\frac1{2\pi\ii}\int_{\mathcal C}(z-K)^{-1}\,\dd z .
 \label{app:g:projection-formula}
\end{equation}
The contour remains separated from both parts of the spectrum. Eigenvalue
crossings within the selected group therefore do not spoil smoothness.
Since $K$ is independent of the anchor and acts on the same space in
every chart, so does $\Pi$. Define $E_y(U)=\im\Pi(y,U)$.
These smoothly varying five-dimensional spaces form the color bundle
$E(U)$. The projector $\Pi$ acts on color vectors in $\mathcal H_{\rm g}$;
it is distinct from the chiral projector $\widehat P_-$ on fermion fields.
The resolvent identity gives $\|\Pi-\Pi_k\|\le C\eps$.
Decrease $\eps_{\rm g}$ once more so that this bound is also below $1/8$.

\paragraph{Local orthonormal bases.}
To obtain a basis of $E_y(U)$ on $V_i$, project the five columns of
$W_i$ into that space and orthonormalize them:
\begin{equation}
 w_i=\Pi W_i(W_i^{\dagger}\Pi W_i)^{-1/2}.
 \label{app:g:safe-frame}
\end{equation}
The Gram matrix $W_i^\dagger\Pi W_i$, whose entries are their inner
products, has spectrum in $[7/8,9/8]$. Its positive inverse square
root is therefore smooth. The columns of $w_i$ obey
$w_i^\dagger w_i=\Id$ and $w_iw_i^\dagger=\Pi$. They define a local
frame throughout $V_i$, including points where $\phi_i$ itself vanishes.

On an overlap, the matrix $t_{ij}=w_i^\dagger w_j$ converts components
in frame $w_j$ to components in frame $w_i$. The two frames span the
same space, so $t_{ij}t_{jk}=t_{ik}$ exactly on triple overlaps.
The triangle estimate also gives $t_{ij}=u_{ij}+O(\eps)$.
Thus the $u_{ij}$, which only approximately compose, have supplied
color spaces whose actual changes of basis compose exactly.

\paragraph{Fixing the determinant phase.}
The frames $w_i$ are orthonormal, but their changes of basis lie in
$\U(5)$ rather than necessarily in $\SU(5)$. We must make their
ordered exterior products agree. Write $s_i=\bigwedge^5w_i$ for the
exterior product of the five columns of $w_i$. It is a unit vector
in the one-dimensional space $\det E_y=\bigwedge^5E_y$.
This determinant space is built from the fundamental color space;
it is not the fermion determinant line of section~\ref{sec:fermion-measure}.
On overlaps, $s_j=s_i\det t_{ij}$. Since
$\det t_{ij}=1+O(\eps)$, these unit vectors have nearly the same phase.

We can therefore average them without cancellation and normalize the
result to define a common volume element $\sigma$. Adjust the phase
of each frame to have that exterior product:
\begin{equation}
 \begin{gathered}
 S=\sum_i\rho_i s_i,\qquad \sigma=S/\|S\|,\\
 r_i=\sum_j\rho_j\det t_{ij},\qquad
 d_i=r_i/|r_i|,\qquad
 C_i=w_i\exp\{\tfrac15\operatorname{Log}d_i\}.
 \end{gathered}
 \label{app:g:determinant}
\end{equation}
The scalar $r_i$ expresses $S$ in frame $s_i$: $S=s_i r_i$.
Each weighted vector $\rho_j s_j$ extends smoothly by zero outside
its chart. The identities $\det u_{ij}=1$ and $\sum_j\rho_j=1$
give $r_i=1+O(\eps)$. Taking $\eps_{\rm g}$ small enough that
$|r_i-1|<1/2$ ensures $S\ne0$ and keeps the logarithm of $d_i$ on
the branch near one. Multiplying each of the five columns by the
same scalar multiplies their exterior product by $d_i$. Hence
$\bigwedge^5C_i=s_i d_i=\sigma$, with the positive sign in
\eqref{app:g:determinant}. All transitions $C_i^\dagger C_j$ now
have determinant one. This fixes a volume element, not a global color
basis; nontrivial special-unitary transitions are still allowed.

\paragraph{The reconstructed connection.}
We have defined color spaces at all $y$ and bases on overlapping
neighborhoods. A connection tells us how to differentiate a color-vector
field taking values in those spaces. First differentiate in the fixed
space $\mathcal H_{\rm g}$, then project back with $\Pi$. For a
smooth family of gauge fields, apply the same operation in both
spacetime and family directions. Write
$\dd_{\rm tot}=\dd_y+\dd_U$. Here $\dd_y$ changes the spacetime
point at fixed $U$, while $\dd_U$ changes the original links as the
family parameters vary.

The projected derivative preserves inner products but can change the
phase of the volume element $\sigma$. Its action on that phase is
measured by the scalar one-form $\theta$. Subtracting one fifth of
$\theta$ from each color direction removes the total phase change:
\begin{equation}
 \theta=(\sigma,\dd_{\rm tot}\sigma),\qquad
 \nabla^0=\Pi\dd_{\rm tot}-\tfrac15\theta\,\Id_E .
 \label{app:g:connection}
\end{equation}
Here $\nabla^0$ acts on fields taking values in $E$, and $\Id_E$
is the identity on each color space. The inner product on exterior
products is inherited from $\mathcal H_{\rm g}$ and is antilinear
in its first argument. Differentiating $(\sigma,\sigma)=1$ shows
that $\theta$ is imaginary-valued. Its subtraction preserves the
Hermitian norm and makes the induced derivative of $\sigma$ vanish.
In the corrected frame $C_i$, the connection potential is
\begin{equation}
 C_i^{\dagger}\dd_{\rm tot} C_i
       -\tfrac15\tr(C_i^{\dagger}\dd_{\rm tot} C_i)\Id .
\end{equation}
This is traceless and anti-Hermitian, so $\nabla^0$ is an $\SU(5)$
connection. Its spatial part supplies the sampled links; its family
components will also be used in section~\ref{sec:global}.

\paragraph{Bounds on the connection.}
Every fixed number of spatial and ordered original-link derivatives of
the construction is bounded. On one chart only a fixed finite set of
vertex components contributes, and all inverses, normalizations, and
logarithms stay away from their singularities. For spatial derivatives
alone we need more: the potential in frame $C_i$ must remain $O(\eps)$,
with the same small factor in its spatial derivatives.

To prove this, hold $U$ fixed. Spatial differentiation acts only on the
weights in $K-\Pi_i$, leaving its transporter defect $O(\eps)$.
The difference of spectral projections can be written as
\begin{equation}
 \Pi-\Pi_i=\frac1{2\pi\ii}\int_{\mathcal C}
 (z-K)^{-1}(K-\Pi_i)(z-\Pi_i)^{-1}\,\dd z .
 \label{app:g:projection-difference}
\end{equation}
Every spatial derivative still contains $K-\Pi_i$ or one of its
spatial derivatives; the other factors are bounded. The same
$O(\eps)$ estimate therefore holds for $\Pi-\Pi_i$ and, after
the smooth frame normalizations, for $C_i-W_i$, with every fixed
number of spatial derivatives.

The reference frame $W_i$ has zero spatial connection:
\begin{equation}
 W_i^{\dagger}\dd_yW_i
    =\sum_j\phi_j\,\dd_y\phi_j\,\Id=0.
\end{equation}
The last equality follows by differentiating $\sum_j\phi_j^2=1$.
Replacing $W_i$ by $C_i=W_i+O(\eps)$ gives an $O(\eps)$ spatial
potential and fixed-order spatial derivatives. Its curvature, formed
from a derivative of the potential and its commutator, has the same
bound, as do its fixed-order spatial derivatives.

An original-link derivative behaves differently: it can differentiate
the transporter defect itself and remove its small factor $\eps$.
These derivatives are bounded, but are not asserted to be $O(\eps)$.
We retain this distinction when bounding the sampled plaquettes.

\subsection{The original lattice and the periodic identifications}
\label{app:g:edges}

We next relate the reconstructed connection to the original links and
check its transformation laws. At an original vertex $i$, choose
$C_i(i,U)$ as the color basis. Let
$\tau(y_2\leftarrow y_1):E_{y_1}\to E_{y_2}$ denote straight-line
spatial transport by $\nabla^0$, with $U$ fixed. The reconstructed
sample on an original edge $e=(i,j)$, $j=i+\hat\mu$, is
\begin{equation}
 (F_1U)_e=C_i(i)^{\dagger}\tau(i\leftarrow j)C_j(j).
\end{equation}
The bases at both endpoints depend on $U$ and must be differentiated
along with $\tau$ whenever an original link varies.

The whole edge lies in $V_i$. In its frame $C_i$, the connection
potential is $O(\eps)$, so transport along that edge is
$\Id+O(\eps)$. At the terminal vertex we must also convert the
basis $C_j(j)$ to $C_i(j)$; this contributes
$C_i(j)^\dagger C_j(j)=u_{ij}+O(\eps)$. For this single-axis edge
all 24 paths coincide, giving $u_{ij}=U_e$ exactly. Thus
\begin{equation}
 \|(F_1U)_e-U_e\|\le C\eps .
 \label{app:g:root-consistency}
\end{equation}
The reconstruction approximates the original links; it does not replace
them as the first field in the comparison sequence. As in
\eqref{eq:coarse-fields}, that sequence starts with $U$ itself and
then $F_{1/3}U$. Every fixed ordered derivative of $F_1U-U$ is
bounded, but need not contain $\eps$.

For flat input, the short transports on the cover compose exactly.
Then $K=\Pi_i$, the spatial connection vanishes in the frames $W_i$, and
$F_1U=U$ exactly. This includes nontrivial flat transport around the
torus: that information remains in the identifications between local
color bases, rather than in their local connection potentials.

\paragraph{Gauge transformations.}
For a fixed original gauge transformation $g_i\in\SU(5)$, let
$\mathcal G=\bigoplus_i g_i$ act on the vertex components of
$\mathcal H_{\rm g}$. Every short transporter transforms as
$u_{ij}\mapsto g_i u_{ij}g_j^\dagger$. The spectral projection
and determinant normalization then give
\begin{equation}
 \Pi\mapsto\mathcal G\Pi\mathcal G^{\dagger},\qquad
 C_i\mapsto\mathcal G C_i g_i^{\dagger},\qquad
 \sigma\mapsto(\bigwedge^5\mathcal G)\sigma .
\end{equation}
These identities imply covariance of spatial transport.

If the gauge transformation also varies along a family of input fields,
its parameter derivative contributes to the connection in that direction.
Expressing the transformed connection on the original bundle gives
\begin{equation}
 \begin{split}
 \mathcal G^{-1}\nabla^0[U^g]\mathcal G
 ={}&\nabla^0[U]
   +\left.\Pi\mathcal G^{-1}\dd_U\mathcal G\Pi\right|_E\\
 &-\tfrac15\tr_E(\Pi\mathcal G^{-1}\dd_U\mathcal G\Pi)\Id_E .
 \end{split}
 \label{app:g:parameter-gauge}
\end{equation}
Here $\tr_E$ is the trace on the five-dimensional image of $\Pi$.
The additional term is the derivative of the change of vertex bases,
projected into the reconstructed color space; its trace is subtracted
to preserve the volume element. The matrices $g_i$ depend on the vertex
label and family parameters, but not on the continuous position $y$.
Thus $\dd_y\mathcal G=0$, and there is no such additional term in
the spatial connection. This is why parameter-dependent color transport
must be retained in section~\ref{sec:global}, even though sampled
spatial links obey the usual gauge transformation law.

\paragraph{Periodicity and lattice symmetries.}
For periodic input, we now identify points separated by a torus period.
Let $\Delta$ be a vector with $\Delta_\mu\in L_\mu\mathbb Z$, and
let $(\mathcal D_\Delta v)_i=v_{i-\Delta}$ shift the vertex components.
Translating the point and all vertex labels gives
\begin{equation}
 \Pi(y+\Delta)=\mathcal D_\Delta\Pi(y)\mathcal D_\Delta^{\dagger} .
\end{equation}
The volume element and connection transform in the same way. We can
therefore identify $(y,v)$, with $v\in E_y$, and
$(y+\Delta,\mathcal D_\Delta v)$. These identifications define the
color bundle and connection on $T^4$. If necessary, restrict the charts
to smaller neighborhoods whose projection to the torus is one-to-one.
For periods one or two, distinct lifted vertices may represent the same
torus vertex. We keep their components and paths separate until these
identifications are made. Fermion boundary signs are imposed separately
on the sampled fermion spaces.

Integer translations, signed coordinate permutations, and complex
conjugation preserve the weights, permute the paths, and commute with
the chosen matrix functions. The reconstructed bundle and connection
therefore have the corresponding symmetry laws. On the torus, a coordinate
permutation also permutes the periods and the prescribed fermion boundary
signs, as in section~\ref{sec:symmetries}. If the links preserve
a fixed $3+2$ color splitting, every path, projection, and frame preserves
it. The connection is then in $S(\U(3)\times\U(2))$, as required
in section~\ref{sec:standard-model}. Removing the total trace does
not set the traces of the two blocks to zero separately.

\subsection{Sampling and plaquette bounds}
\label{app:g:sampling}
\label{app:g:comparison}

We now choose the color bases at the refined sites. Transporting a
parent's basis to its children makes the coarse-to-fine inclusion a
constant copying matrix in component coordinates. The bases themselves
still depend on $U$. We must therefore control their original-link
derivatives, including the transport at every preceding refinement step.
The reason these derivatives remain bounded is that the successive
transport segments become shorter: their total length stays bounded as
the number of refinements increases.

First specify the parent of each refined site on the covering space $\mathbb R^4$. At
spacing $h=3^{-n}$, $n=0,1,\ldots$, let $\pi_h(y)$ be the
coordinatewise nearest point of $h\mathbb Z^4$ to a point
$y\in h3^{-m}\mathbb Z^4$, $m\ge1$. For $m=1$ it is the
immediate parent. In one coordinate, a child integer has the unique form
$3p+r$, $r\in\{-1,0,1\}$, and its parent integer is $p$.
There is no nearest-point tie, and reflecting the coordinate reflects
the parent as well.

Start with the original bases $f_1(i)=C_i(i,U)$. At each step,
transport the parent's five basis vectors along the straight segment to
the child:
\begin{equation}
 f_{h/3}(y)=\tau(y\leftarrow\pi_h(y))f_h(\pi_h(y)).
\end{equation}
The connection $\nabla^0[U]$ is the same at every step. A color vector
and its basis undergo the same transport, so the vector's numerical
components do not change. This is the reason no color matrix appears in
the inclusion \eqref{eq:constant-inclusion}.

To bound the transport accumulated in this recursion, follow a site's
ancestors from the original lattice to level $n$. The $m$-th step
changes each coordinate by at most $3^{-m}$, so the total variation
in that coordinate is at most
\begin{equation}
 \sum_{m=1}^n3^{-m}=\frac{1-3^{-n}}2<\frac12 .
\end{equation}
Thus the original ancestor is $\pi_1(y)$, and the entire piecewise
straight path lies in $\pi_1(y)+[-1/2,1/2]^4$. Summing the
coordinate variations gives a one-norm path length less than two.
Applying the same argument in units of a coarser spacing $h$ also
shows that repeated immediate-parent choices give the same ancestor as
$\pi_h(y)$.

To differentiate this transport, we need a description that includes the
variation of its endpoint fibers $E_y(U)$. The ambient space
$\mathcal H_{\rm g}$ is fixed as $U$ varies, so we first represent
transport there. Along a prescribed curve $y(t)$, let $T(t)$ be
the unitary ambient operator with $T(0)=\Id$ satisfying
\begin{equation}
 \dot T=G_y(\dot y)T,\qquad
 G_y(v)=[\partial_v\Pi,\Pi]+\tfrac15\theta(v)\Pi .
 \label{app:g:transport}
\end{equation}
Here $t$ parametrizes this spacetime curve, a dot differentiates along it, and
$\partial_v=\sum_\mu v^\mu\partial/\partial y_\mu$. The
commutator moves the range of $\Pi$ along the curve; the scalar term
implements the determinant correction in $\nabla^0$. The generator
is anti-Hermitian. Using $\Pi\dot\Pi\Pi=0$, one finds that
$T(t)$ maps the initial fiber to the current fiber and, on vectors in
the initial fiber, obeys
$\Pi\dot T=\theta(\dot y)T/5$. Its restriction to that fiber is
therefore precisely the parallel transport of
\eqref{app:g:connection}.

Now keep the geometric curve fixed and vary the original links. At each
fixed derivative order the preceding construction gives
$\|D^{\boldsymbol I}G_y(v)\|\le C_\ell|v|$. A first derivative
of transport inserts the differentiated generator along the path. With
$T(t,u)=T(t)T(u)^\dagger$, this is
\begin{equation}
 D_I T(t)=\int_0^t T(t,u)
       D_I\{G_{y(u)}(\dot y(u))\}T(u,0)\,\dd u .
\end{equation}
Both surrounding propagators are unitary, so the norm is bounded by a
constant times the path length, denoted by $L_{\rm path}$. Each further derivative either acts on
an existing insertion or inserts another differentiated generator in a
propagator. At order $\ell\ge1$, there are finitely many terms with
$r$ insertions, $1\le r\le\ell$. Their ordered integrals of
the speed factors are bounded by $L_{\rm path}^r/r!$, giving
$\|D^{\boldsymbol I}T\|\le
C_\ell\sum_{r=1}^{\ell}L_{\rm path}^r/r!$.
This argument keeps the prescribed order of the link derivatives; it
does not require them to commute. For the ancestral paths,
$L_{\rm path}\le2$. Combining this bound with the bounded
derivatives of $C_i(i,U)$ proves that every fixed ordered derivative
of $f_h$ is bounded independently of the refinement level.

We can now express the straight-edge transport in these bases. As in
section~\ref{sec:construction}, let $x$ be the integer site label
and $y=hx$ its position in original units. The sampled link is
\begin{equation}
 (F_hU)(x,\mu)=
 f_h(y)^{\dagger}\tau(y\leftarrow y+h\hat\mu)f_h(y+h\hat\mu),
 \qquad y=hx.
 \label{app:g:sample}
\end{equation}
It transports components from the terminal site to the initial site.
The two bases and the transport preserve $\sigma$, so the matrix
lies in $\SU(5)$. Its dependence on $U$ includes the initial
bases, their two ancestral paths, and the edge between the sampled
sites. These paths have bounded total length and stay within a fixed
neighborhood of $(\pi_1(y),\pi_1(y)+\hat\mu)$. The geometric
formulas along them use only a bounded number of original links.
Consequently, each sampled link has bounded local dependence on the
original field, with every fixed ordered derivative bounded uniformly
in $h$.

At $h=1$, \eqref{app:g:sample} gives the reconstructed sample
$F_1U$ in \eqref{app:g:root-consistency}.

Covariance of the connection and the symmetry of the straight segments
and nearest-parent rule pass to $F_h$. In particular, an original
gauge transformation acts on sampled components at $y$ through
$g_{\pi_1(y)}$. A parent and its children have the same original
ancestor, so they transform by the same matrix. This is the covariance
condition needed for the constant inclusion.

The bounds just obtained for an individual sampled link do not require
any small power of $h$. A plaquette has an additional cancellation:
the endpoint bases on its successive edges cancel, leaving transport
around a small closed loop. Its departure from the identity is governed
by curvature through the enclosed area. We express this cancellation
before taking original-link derivatives, so that the area factor remains
even when differentiation removes the small factor $\eps$ in the
curvature.

To estimate a sampled plaquette, take its base point $y$ on the cover
and extend its inherited basis radially across the square:
\begin{equation}
 e(z,U)=\tau(z\leftarrow y;U)f_h(y,U).
\end{equation}
The connection written in this frame is $\dd_z+A^{\rm rad}$.
Radial transport sets the potential along each ray to zero; its
remaining components are determined by the curvature:
\begin{equation}
 A_\mu^{\rm rad}(y+r,U)
 =\int_0^1 t\sum_\nu r^\nu
 F_{\nu\mu}^{\rm rad}(y+tr,U)\,\dd t .
 \label{app:g:radial-identity}
\end{equation}
This is the radial-gauge identity~\cite[eq.~(1.2)]{LeupoldWeigert}.
Here the square and its radial segments are fixed while $U$ varies.

This choice also accounts for the changing color fibers. The ambient
transport equation above gives bounded original-link derivatives of
$e$, including those of its initial value $f_h(y,U)$. Thus the
curvature matrix in this frame has norm at most $C\eps$, and every
positive ordered link derivative has norm at most $C_\ell$.
Since $|r|\le Ch$ on the square, the radial identity gives
\begin{equation}
 \|A^{\rm rad}\|\le C\eps h,\qquad
 \|D^{\boldsymbol I}A^{\rm rad}\|\le C_\ell h
 \quad(|\boldsymbol I|=\ell\ge1).
\end{equation}

In this frame the plaquette is transport around a loop of length $4h$,
based in the original inherited basis. Integrating its transport
equation gives a defect $C\eps h^2$. At positive derivative order,
use the insertion formula above with generator
$-A^{\rm rad}(\dot\gamma)$. Each insertion has integral norm at most
$C_\ell h^2$; a term with $k$ insertions is therefore bounded by
$C_\ell h^{2k}$. There are finitely many such terms, with
$1\le k\le\ell$, and $h\le1$. This proves the $C_\ell h^2$
derivative bound. The bases at intermediate vertices cancel in the
plaquette product, so the calculation includes exactly the base-frame
dependence of $\mathcal P_h$.

\begin{lemma}
\label{app:g:plaquette-derivatives}
Let $\mathcal P_h$ be a plaquette product of $F_hU$. Assume the
original plaquettes required by its construction and by these radial
segments satisfy the common geometric bound. Then
\begin{equation}
 \|\mathcal P_h-\Id\|\le C\eps h^2,\qquad
 \|D^{\boldsymbol I}(\mathcal P_h-\Id)\|\le C_{\ell}h^2
 \quad(|\boldsymbol I|=\ell\ge1).
 \label{app:g:plaquette-jets}
\end{equation}
The bounds hold for every ordered derivative list and are uniform in
refinement level and lattice periods.
\end{lemma}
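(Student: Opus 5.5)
The plan is to reduce the plaquette to parallel transport around a closed square loop, expressed in a single frame, and then apply the transport-insertion formula of this appendix. Let the sampled plaquette at base site $y=hx$ in directions $\mu,\nu$ be $\mathcal P_h=(F_hU)(x,\mu)(F_hU)(x+\hat\mu,\nu)(F_hU)(x+\hat\nu,\mu)^{-1}(F_hU)(x,\nu)^{-1}$. Substituting \eqref{app:g:sample}, each interior factor $f_h(z)f_h(z)^\dagger$ equals the projector $\Pi$ on $E_z$ and is absorbed into the adjacent transports. Hence
\[
 \mathcal P_h=f_h(y)^\dagger\,\tau_{\partial\square}\,f_h(y),
\]
where $\tau_{\partial\square}$ is $\nabla^0$-transport around the boundary of the $h\times h$ square based at $y$. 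This identity holds exactly for every $U$, so it can be differentiated with respect to the original links. The intermediate bases, together with their ancestral paths and unbounded-order derivative structure, therefore drop out, and only the base frame $f_h(y,U)$ remains.

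Next I will rewrite the loop transport in the radial frame $e(z,U)=\tau(z\leftarrow y;U)f_h(y,U)$. Because $e(y)=f_h(y)$, the plaquette matrix is the ordered exponential of $-A^{\rm rad}(\dot\gamma)$ around the loop $\gamma=\partial\square$, which has length $4h$. The radial identity \eqref{app:g:radial-identity} bounds $A^{\rm rad}$ by $|r|\,\sup\|F^{\rm rad}\|$ with $|r|\le Ch$. Combining this with the curvature estimate $\|F^{\rm rad}\|\le C\eps$ gives $\|A^{\rm rad}\|\le C\eps h$. For $\ell\ge1$ ordered link derivatives, the same identity gives $\|D^{\boldsymbol I}A^{\rm rad}\|\le C_\ell h$, using bounded derivatives of $F^{\rm rad}$. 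That bound in turn follows from the bounded derivatives of the ambient generator $G_y$ in \eqref{app:g:transport} and of $f_h(y,U)$, both established above uniformly in $h$. The undifferentiated bound is then immediate: $\|\mathcal P_h-\Id\|\le\int_\gamma\|A^{\rm rad}\|\le 4h\cdot C\eps h$.

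For derivatives, I will expand $\mathcal P_h-\Id$ with the Duhamel/insertion formula, exactly as was done for $D^{\boldsymbol I}T$. Each ordered derivative either differentiates an existing insertion of $A^{\rm rad}$ or creates a new one, and the propagators between insertions are unitary. A term with $k$ insertions, where $1\le k\le\ell$ once at least one derivative is present, is therefore bounded by a product of $k$ loop integrals. Each such integral is at most $4h\cdot C_\ell h$, so the term is $O(h^{2k})\le O(h^2)$ because $h\le1$. Only finitely many terms occur at fixed $\ell$, which gives \eqref{app:g:plaquette-jets}. The argument keeps the order of the derivatives throughout, and it never uses the factor $\eps$ once any derivative is present.

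The main obstacle is justifying $\|D^{\boldsymbol I}F^{\rm rad}\|\le C_\ell$ uniformly in $h$ and the lattice periods. Here $F^{\rm rad}$ is the curvature conjugated into the moving frame $e(z,U)$, so its $U$-derivatives pick up derivatives of the radial transports and of $f_h(y,U)$. I plan to write $F^{\rm rad}(z)=e(z)^\dagger\,\mathcal F(z)\,e(z)$, where $\mathcal F$ is the ambient curvature built from $\Pi$ and $\theta$, whose jets are bounded on each chart. I will then bound the derivatives of $e$ by the ambient transport estimate with path length at most $Ch$ plus the $\le 2$ length of the ancestral paths. The local support of all inputs holds because the square and its radial segments lie in a fixed neighborhood. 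Since the construction of $\Pi$ is local, the constants do not depend on the periods.
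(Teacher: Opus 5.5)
Your proposal is correct and follows essentially the same route as the paper. The paper also cancels the intermediate bases, so that only loop transport in the base frame $f_h(y,U)$ remains. It then passes to the radial frame $e(z,U)=\tau(z\leftarrow y;U)f_h(y,U)$ and uses the radial-gauge identity to get $\|A^{\rm rad}\|\le C\eps h$ and $\|D^{\boldsymbol I}A^{\rm rad}\|\le C_\ell h$. Finally, the insertion formula with generator $-A^{\rm rad}(\dot\gamma)$ bounds a term with $k$ insertions by $C_\ell h^{2k}$.

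Your extra paragraph on bounding $D^{\boldsymbol I}F^{\rm rad}$ spells out a step the paper states in one line. There the conjugated curvature $e^\dagger\mathcal F e$ is controlled using the ambient transport estimate for $e$ and for its initial value $f_h(y,U)$.
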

At order zero, the spatial curvature bound supplies $\eps$.
Original-link derivatives of curvature need only be bounded, so the
differentiated estimate does not assert that factor.

The cylinder in section~\ref{sec:global} requires the same area
estimate for connections other than $\nabla^0$. Fix a smooth family
over a compact parameter set and the fixed spacetime torus, and write
$\nabla=\nabla^0+\mathfrak a$, where $\mathfrak a$ is a
one-form whose value in each spacetime or family direction is a traceless
anti-Hermitian map from the color space to itself. The corresponding
spatial transport uses the generator
$G_y(v)-\widehat{\mathfrak a}_y(v)$, with the hat denoting
extension by zero on $E^\perp$. Bounded family derivatives give
the same insertion estimates for this generator. Bounded spatial
curvature then gives plaquette defects $Ch^2$. These constants may
depend on the chosen family; this general case has no asserted
plaquette-derived factor $\eps$.

\subsection{Domain of the local construction}
\label{app:g:local-domain}

The reconstruction and sampling formulas need only the original links
in a bounded neighborhood of the region where they are evaluated. We
now specify that input neighborhood. This matters in the later
finite-matrix estimates, where the links are given only locally: the
formulas and bounds must apply to every allowed local assignment,
including one with no global admissible extension.

Fix a bounded region $\mathcal O$ on the cover and a larger bounded
open neighborhood containing its closure. To evaluate the reconstruction
throughout this neighborhood, include every original edge used by its
chart boxes, coordinate-order paths, and plaquette fillings. For
sampling, also include the edges needed by the initial frames and by
the original parent cubes visited by the sampling paths. Include all
four edges of every required plaquette. The region, paths, and resulting
input set are fixed before the links vary.

This finite input set suffices to construct the fibers themselves. Only finitely many ambient vertex rows
are active anywhere in the bounded open neighborhood. Restrict the
ambient space to a finite set containing all these rows. An omitted row
contributes only $z^{-1}\Id$ to the resolvent of $K$, whose
integral around $\mathcal C$ vanishes. Thus the restriction leaves
the spectral projection and its range unchanged. This removes only
inactive components from the auxiliary color space; no fermion operator
is being truncated. On overlaps, the frames still span the same $E_y$ and have the same determinant volume
$\sigma(y)$, so they have the previously constructed
special-unitary transitions. The connection on $\mathcal O$ is
therefore defined directly from the finite input set.

One admissibility threshold suffices for all the derivatives used below.
Choose a positive $\eps_{\rm g}$ satisfying all the conditions
on inverse matrices, normalizations, and logarithm branches above.
Differentiating these smooth formulas changes the constants in their
bounds, but introduces no new branch condition. It also introduces no
new input links. For a frame or sampled link, the ancestral paths stay
in their original parent cubes and have length at most two. Thus the
input-neighborhood size and the threshold are independent of both
refinement level and derivative order.

For later use, consider the variation of one sampled link. Use its
position in original lattice units as the argument by writing
$V_h(y,\nu)=(F_hU)(y/h,\nu)$, $y\in h\mathbb Z^4$.
This is the same sampled field as \eqref{app:g:sample}, with the integer
label replaced by its position. For fixed original generators
$\eta_\mu(x)=\sum_a\eta_\mu^a(x)T^a$, define the induced
generator by
\begin{equation}
 \Xi_{\eta,h,\nu}(y)
  =[\delta_\eta V_h(y,\nu)]V_h(y,\nu)^{-1}.
\end{equation}
Multiplication by $V_h^{-1}$ puts the variation in the Lie algebra,
just as for an original link. The derivative includes both ancestral
bases and the edge transport. Since only a bounded number of original
links enters and their fixed ordered derivatives are bounded, for
$|\boldsymbol I|=\ell$ we have
\begin{equation}
 \max_{y,\nu}\|D^{\boldsymbol I}\Xi_{\eta,h,\nu}(y)\|
       \le C_\ell\max_{x,\mu}\|\eta_\mu(x)\|.
\end{equation}
The generators on both sides are dimensionless. This bound has no
additional power of $h$; the stronger area suppression proved above
belongs to closed small loops. For example, at $h=1/3$ the bases at
$y=\hat\mu/3$ and $y=2\hat\mu/3$ are inherited from the original
vertices $0$ and $\hat\mu$. Their transports and the intervening edge
are collinear, so the sampled link matrix is exactly $(F_1U)(0,\mu)$.
The edge is shorter, but its component matrix can retain the full
original-scale transport because the endpoint bases have different parents.

For a torus, construct these formulas on the cover and then identify
repeated link inputs. Differentiating one torus link sums the derivatives
with respect to its lifted copies. Only a uniformly bounded number of
copies enters near one sampled link. More generally, a region of radius
$r_{\rm geo}$, measured in original lattice units, contains at most
$C(1+r_{\rm geo})^4$ required links and plaquettes. The same bounds therefore hold for short periods.

Finally, the supported comparison-gap argument requires the connection on
the paths used to compare operator coefficients, as well as on the
original support. For that application, enlarge $\mathcal O$ to
contain every localization box and radial segment in
section~\ref{app:s:supported}, together with their filling plaquettes.
Plaquettes inside holes of the original support must be included too.
This enlarges the region by a fixed width in original units. Every link
assignment satisfying the plaquette bound throughout this complete
neighborhood then has the geometric construction and derivative bounds
used by the spectral and polynomial estimates below.

\section{The comparison operator and its spectral gap}
\label{app:s:spectral}

We prove a uniform spectral gap for the Hermitian operator $H_{n,s}$
used in section~\ref{sec:construction}. As $s$ increases from zero to
one, this operator interpolates from the fine Hermitian Wilson operator
to the extended coarse operator. Its eigenvalues must stay away from
zero so that $P_{n,s}=(1+\operatorname{sign}H_{n,s})/2$ remains
smooth. Gaps at the two endpoints alone do not exclude a zero crossing
between them. We establish a lower bound valid throughout the
interpolation, independently of refinement level and lattice volume.

We first exclude zero eigenvalues for the free comparison at every
momentum. For an admissible interacting field, we then choose color bases
in each finite box so that the operator coefficients are close to their
free values. Dividing a fermion field into overlapping pieces lets us
apply the free bound locally and control the error from this division.
This avoids requiring a gauge in which the field is close to the free
field throughout the torus.

The locality proof also needs this estimate when the gauge links are
specified only near a finite set of sites. We will state which
links are needed and retain every output component when the operator
acts on a spinor supported there. The last subsections adapt the bound
to the other operators used in the paper and check the determinant
factors at the comparison endpoints.

As in section~\ref{sec:lattice}, we use the site-sum inner product,
without a factor of the spacing to the fourth power. On the infinite
cover the coarse fermion space is
$\mathcal H_{\rm c}=\ell^2(\mathbb Z^4)\otimes\mathbb C^4\otimes V_R$;
on the torus we impose the prescribed fermion boundary signs.
The integer argument labels a coarse site, $\mathbb C^4$ carries spin,
and $V_R\simeq\mathbb C^{15}$ carries the color representation
$R=\mathbf{10}\oplus\bar{\mathbf5}$. A gap $g>0$ for a Hermitian
operator means $H^2\ge g^2 1$, or equivalently
$\|H\psi\|\ge g\|\psi\|$ for every vector. Identity factors on
indices on which an operator does not act are left implicit. As in
appendix~\ref{app:g:geometry}, spatial positions refer to all four
Euclidean spacetime directions.

\subsection{Centered blocks and exact symmetries}

We group the 81 children of each coarse site into one block. This is
only a relabeling of the fine fermion components, but it makes the
inclusion $I_n$ and the range of $H_{n,s}$ explicit. We first use
these coordinates to verify the transformation laws
\eqref{eq:kernel-symmetry}.

At coarse spacing $h=3^{-n}$, write $y=hm$,
$m\in\mathbb Z^4$. As in the
physical-position notation at the end of appendix~\ref{app:g:geometry},
write the coarse links as $V(y,\mu)\in\SU(5)$. The corresponding
covariant shift is
\begin{equation}
 (T_\mu\psi)(m)=R(V(hm,\mu))\psi(m+\hat\mu).
\end{equation}
The Hermitian Wilson operator $H=\gamma_5(D_W-1)$ of
\eqref{eq:wilson} is
\begin{equation}
 H(T)=\gamma_5\left\{
 \sum_{\mu=1}^4\gamma_\mu\frac{T_\mu-T_\mu^{\dagger}}{2}
 +\sum_{\mu=1}^4\left(1-\frac{T_\mu+T_\mu^{\dagger}}{2}\right)-1
 \right\}.
 \label{app:s:wilson}
\end{equation}
Both the coarse and fine Wilson operators have this dimensionless
normalization, with mass shift $-1$ in their own lattice units.
The change of spacing therefore introduces no factor of three in
\eqref{eq:direct-comparison}.

A fine vertex has a unique description $h(m+r/3)$, with nearest
parent $m$ and child label $r\in\{-1,0,1\}^4$. Regrouping the
fine values gives $\mathcal H_{\rm f}=\mathbb C^{81}\otimes
\mathcal H_{\rm c}$. At this scale abbreviate $I_n$ to $I$:
\begin{equation}
 (I\psi)_r(m)=\psi(m)/9,\qquad I^{\dagger} I=1.
\end{equation}
The 81 copies each contribute $1/81$ of the parent's squared norm.
Thus $I$ preserves the norm and does not mix spin or color.
The projector $II^\dagger$ selects fields constant within each child
block; its orthogonal complement consists of fields whose sum over the
children vanishes at each parent.

\paragraph{Coordinate transformations.}
Consider a signed coordinate permutation
$O\hat\mu=\epsilon_\mu\widehat{\pi(\mu)}$,
$\epsilon_\mu=\pm1$. Here $\pi$ permutes the axes and $\epsilon_\mu$ records their
orientations. The induced action on child labels
$R_O|r\rangle=|Or\rangle$ leaves the constant inclusion vector
unchanged. Choose the unitary spin transformation $S_O$ to satisfy
\begin{equation}
 S_O\gamma_\mu S_O^{-1}=\sum_\nu O_{\nu\mu}\gamma_\nu,
 \qquad S_O\gamma_5S_O^{-1}=(\det O)\gamma_5.
\end{equation}
For a reflection with real unit normal $v$, one may take
$S_O=\gamma_5\gamma(v)$, where
$\gamma(v)=\sum_\mu v_\mu\gamma_\mu$. Products of these reflection matrices implement the required coordinate
transformations. Their overall sign does not affect conjugation of an
operator. A reflected covariant shift becomes its adjoint. Together
with the spin transformation, this gives the Hermitian Wilson operator
an overall factor $\det O$, coming from $\gamma_5$. The centered
parent rule transforms the coarse and fine sites in the same way.
Copying a spinor and then transforming it is therefore the same as
transforming it and then copying it. Both endpoints, including the
$\gamma_5$ complement, obey the first identity in
\eqref{eq:kernel-symmetry}, and so does their interpolation at every $s$.

\paragraph{Complex conjugation and gauge transformations.}
To verify the conjugation identity, choose Pauli matrices
$\sigma_1,\sigma_2,\sigma_3$, set
$\gamma_\mu=\sigma_1\otimes\sigma_\mu$ for $\mu=1,2,3$,
and $\gamma_4=\sigma_2\otimes1$. Put
$B_{\rm sp}=\gamma_2\gamma_4$. Choose the reflection products
$S_O$ above with real coefficients in the gamma matrices and no
added complex phase. Direct multiplication gives
\begin{equation}
 B_{\rm sp}\overline{B_{\rm sp}}=-1,\qquad
 B_{\rm sp}\overline{\gamma_\mu}B_{\rm sp}^{-1}=\gamma_\mu,
 \qquad B_{\rm sp}\overline{S_O}B_{\rm sp}^{-1}=S_O.
\end{equation}
Extending $B_{\rm sp}$ by the identity on position, child, and color
indices gives the matrix $\mathcal B$ used in
section~\ref{sec:construction}. With $\mathsf C$ denoting entrywise
complex conjugation, the corresponding antiunitary is
$J_{\rm spin}=\mathcal B\mathsf C$, with square $-1$.
The reconstruction commutes with conjugation, and $I$ is real.
Consequently, conjugating all matrix entries and applying $\mathcal B$
replaces $R$ by $R^*$ without otherwise changing the comparison.
This is the second identity in \eqref{eq:kernel-symmetry}.

Under the inherited gauge action of appendix~\ref{app:g:geometry},
the same gauge matrix is assigned to a parent and its children. Transforming and copying
a spinor therefore commute in the same sense as for the coordinate
actions above. The spin matrices carry no color indices, so both Wilson
operators and the added $\gamma_5$ term transform by conjugation
with the inherited gauge action. This proves gauge covariance of the
whole interpolation.

\subsection{The free comparison}

Set all gauge links to the identity. Fourier transformation then reduces
the comparison to a finite matrix at each coarse momentum. This matrix
contains 81 fine momenta, each with four spin components. The mixing
between these momenta occurs only through the four-dimensional embedded
coarse space. We exploit this fact to reduce the zero-mode equation to
four components. A sign argument excludes a solution at momenta with a
nonzero sine component; the remaining momenta are checked separately.

\paragraph{The free Wilson operators and the inclusion.}
Let $p\in(\mathbb R/2\pi\mathbb Z)^4$ be dimensionless momentum
conjugate to the parent label. Fine momentum $k$ is in fine-lattice
units. The Fourier matrix of the fine Hermitian Wilson operator is
\begin{equation}
 H(k)=\sum_{\mu=1}^4\Gamma_\mu\sin k_\mu+\gamma_5 M(k),
 \qquad M(k)=\sum_{\mu=1}^4(1-\cos k_\mu)-1,
 \qquad \Gamma_\mu=\ii\gamma_5\gamma_\mu .
\end{equation}
The five Hermitian matrices
$\Gamma_1,\ldots,\Gamma_4,\gamma_5$ square to one and
anticommute. Thus, with $d_\mu=1-\cos k_\mu\ge0$,
\begin{equation}
 H(k)^2=
 \left(1+2\sum_{\mu<\nu}d_\mu d_\nu\right)\Id\ge\Id .
\end{equation}
Thus each free Wilson endpoint has gap at least one. The interpolation
requires a separate argument, because the coarse and fine operators
are diagonal in different momentum bases.

Translation by one parent spacing is translation by three fine
spacings. For a given coarse momentum $p$, the fine operator therefore
has the 81 momentum blocks with $3k=p$ modulo $2\pi$:
$\mathcal M(p)=\{(p+2\pi q)/3:q\in\{0,1,2\}^4\}$, with each
component taken modulo $2\pi$. Write $\sum_{3k=p}$ for this
set. In the fine-momentum basis the constant inclusion is
\begin{equation}
 I_k=w(k)\Id_4,\qquad
 w(k)=\prod_{\mu=1}^4\frac{1+2\cos k_\mu}{3},\qquad
 \sum_{3k=p}|w(k)|^2=1.
\end{equation}
In one coordinate the normalized child plane wave has components
$e^{\ii k_\mu r}/\sqrt3$, $r=-1,0,1$. Its overlap with the
normalized constant child vector is $(1+2\cos k_\mu)/3$; multiplying
the four overlaps gives $w(k)$. The amplitudes $w(k)$ are real but need not be positive. Their squares
give the fractions of the embedded spinor's squared norm in the fine momentum
blocks. If $w(k)=0$, a copied coarse spinor has no component in that
block, but the block remains part of the fine fermion space.

With $H_{\rm f}=\bigoplus_{3k=p}H(k)$, the comparison on the
$81\times4=324$ child and spin components is
\begin{equation}
 \widehat H_s(p)=(1-s)H_{\rm f}
       +s\{IH(p)I^{\dagger}+\gamma_5(1-II^{\dagger})\}.
\end{equation}
We have suppressed the identity factor on color. We seek the uniform
bound
\begin{equation}
 g_*:=\min_{p,\ 0\le s\le1}\sigma_{\min}(\widehat H_s(p))>0,
 \qquad \widehat H_s(p)^2\ge g_*^2\Id .
 \label{app:s:free-gap}
\end{equation}
Here $\sigma_{\min}$ is the least singular value, equal to the
smallest absolute eigenvalue for a Hermitian matrix. Once we exclude
zero eigenvalues for every $p,s$, continuity on their compact domain
will give the positive minimum. No numerical gap estimate is required.

\paragraph{Reducing the zero-mode equation.}
Separate the part diagonal in fine momentum from the part acting
through the embedded coarse spinor. With $\bar s=1-s$, define
\begin{equation}
 L_s(k)=\bar s H(k)+s\gamma_5,\qquad
 C_p=H(p)-\gamma_5 .
\end{equation}
Then $L_s=\bigoplus_{3k=p}L_s(k)$ and
$\widehat H_s=L_s+sIC_pI^\dagger$. The only mixing of fine momenta is in $sIC_pI^\dagger$. Suppose
first that every diagonal block $L_s(k)$ is invertible. The
four-by-four matrix $\mathcal R_s=I^\dagger L_s^{-1}I$ applies
this inverse to an embedded coarse spinor and projects the result
back to the coarse space. For a possible zero mode
$\widehat H_s\psi=0$, let $\zeta=I^\dagger\psi$ be its
coarse component. The full equation and its coarse component then give
\begin{equation}
 \psi=-sL_s^{-1}IC_p\zeta,\qquad
 (1+s\mathcal R_sC_p)\zeta=0.
\end{equation}
If $\zeta=0$, the first equation forces $\psi=0$. A nonzero
zero mode therefore requires a nonzero solution of the second,
four-component equation. We check that $L_s$ is invertible in each
case before using this reduction.

The anticommuting gamma matrices make the inverse explicit:
\begin{equation}
 L_s(k)^2=\Delta_s(k)\Id,\qquad
 \Delta_s(k)=\bar s^2\sum_\mu\sin^2 k_\mu+
                  \{\bar s M(k)+s\}^2 .
\end{equation}
Where $\Delta_s(k)>0$, one has
$L_s(k)^{-1}=L_s(k)/\Delta_s(k)$. Consequently
\begin{equation}
 \mathcal R_s=\sum_\mu\Gamma_\mu V_{s,\mu}+\gamma_5 w_s,\qquad
 V_{s,\mu}=\sum_{3k=p}|w(k)|^2
                  \frac{\bar s\sin k_\mu}{\Delta_s(k)},
 \qquad
 w_s=\sum_{3k=p}|w(k)|^2
                  \frac{\bar s M(k)+s}{\Delta_s(k)} .
\end{equation}
The real numbers $V_{s,\mu}$ and $w_s$ are the coefficients of
$\Gamma_\mu$ and $\gamma_5$ in this reduced inverse; $w_s$
is distinct from the inclusion amplitude $w(k)$. Define the list of
gamma matrices
$\boldsymbol\Gamma=(\Gamma_1,\allowbreak\ldots,\allowbreak\Gamma_4,\allowbreak\gamma_5)$
and the real coefficient vectors
$b=(V_{s,1},\allowbreak\ldots,\allowbreak V_{s,4},\allowbreak w_s)$, and
$c=(\sin p_1,\allowbreak\ldots,\allowbreak\sin p_4,\allowbreak M(p)-1)$. Then
$\mathcal R_s=b\cdot\boldsymbol\Gamma$ and
$C_p=c\cdot\boldsymbol\Gamma$. The gamma-matrix anticommutation relations now give
\begin{equation}
 (1+s\mathcal R_sC_p)^{\dagger}(1+s\mathcal R_sC_p)
 =\left\{(1+s\,b\cdot c)^2+
       s^2\bigl[|b|^2|c|^2-(b\cdot c)^2\bigr]\right\}\Id .
\end{equation}
The right-hand side is a nonnegative scalar times the identity.
At $s>0$, a zero would require
$b\cdot c=-1/s$ and equality in Cauchy--Schwarz: $b$ and $c$
would have to be nonzero and point in opposite directions. Thus it suffices to find one coordinate in which the two vectors have
the same nonzero sign. Their first four coordinates are
$V_{s,\mu}$ and $\sin p_\mu$, which we compare next.

\paragraph{Momenta with a nonzero sine component.}
Suppose at least one component of $\sin p$ is nonzero. Every fine
momentum then has a nonzero sine component, because
$\sin k_\mu=0$ would imply $\sin p_\mu=\sin(3k_\mu)=0$.
Hence $\Delta_s(k)>0$ for $s<1$, while at $s=1$ each
$L_s(k)=\gamma_5$. The reduction is valid.

For $s<1$, it remains to prove the common-sign relation
\begin{equation}
 \operatorname{sign}V_{s,\mu}
       =\operatorname{sign}\sin p_\mu
       \quad\hbox{whenever }\sin p_\mu\ne0 .
\end{equation}
We first treat $0<p_\mu<\pi$. Hold the other three fine
components fixed and compare the three possible values of $k_\mu$.
The three terms share the squared inclusion weights from those
other coordinates. We factor these weights out and show that the
remaining sum is positive. Put
\begin{equation}
 c_{\mu,s}=\bar s\sum_{\nu\ne\mu}(1-\cos k_\nu)+s\ge0 .
\end{equation}
As a function of $t=k_\mu$, the denominator in the weighted sum is
\begin{equation}
 \Delta_s(t)=\bar s^2\left(1+\sum_{\nu\ne\mu}\sin^2 k_\nu\right)
                    +c_{\mu,s}^2-2\bar s c_{\mu,s}\cos t .
\end{equation}
It is positive and nondecreasing for $0<t<\pi$.

The three fine values contributing in this coordinate can be written
$t_1,-t_2,t_3$, where
\begin{equation}
 t_1=p_\mu/3,\qquad t_2=(2\pi-p_\mu)/3,\qquad
 t_3=(2\pi+p_\mu)/3,\qquad
 0<t_1<t_2<\pi,\quad 0<t_3<\pi .
\end{equation}
There is one negative sine contribution, at $-t_2$. We compare it
directly with the positive contribution at $t_1$.

The inclusion weight identity
\begin{equation}
 \left(\frac{1+2\cos t}{3}\right)^2
       =\frac{\sin^2(p_\mu/2)}{9\sin^2(t/2)}
       \quad (3t=p_\mu\ {\rm mod}\ 2\pi)
\end{equation}
and $\sin t/\sin^2(t/2)=2\cot(t/2)$ write their three-term sum
as a positive factor times
\begin{equation}
 \frac{\cot(t_1/2)}{\Delta_s(t_1)}
 -\frac{\cot(t_2/2)}{\Delta_s(t_2)}
 +\frac{\cot(t_3/2)}{\Delta_s(t_3)}.
\end{equation}
For the negative fine value the squared weight depends only on
$\sin^2(t_2/2)$, and its denominator equals that at $t_2$;
the sine supplies the displayed minus sign.

On $0<t<\pi$,
$\cot(t/2)$ is positive and strictly decreasing, while
$\Delta_s(t)$ is positive and nondecreasing. Their ratio is
therefore strictly decreasing. Since $t_1<t_2$, the first term
already exceeds the magnitude of the negative term, and the third is
positive. Summing over the other three coordinates preserves strict
positivity: their squared inclusion weights are nonnegative and sum
to one.

Changing $p_\mu,k_\mu$ to their negatives reverses
$V_{s,\mu}$, proving the negative-sine case. If $p_\mu=0$,
fine values with nonzero sine have zero inclusion weight. If
$p_\mu=\pi$, opposite sine values have equal weights and
denominators and cancel. Thus $V_{s,\mu}=0$ in those two cases whenever the reduced
inverse is defined.
For a momentum with at least one nonzero $\sin p_\mu$ and
$0<s<1$, the common sign prevents $b$ and $c$ from pointing in
opposite directions. The comparison therefore has no zero mode. At $s=0$, the operator
is $H_{\rm f}$, with gap at least one. At $s=1$, it is
$H(p)$ on the coarse subspace and $\gamma_5$ on its orthogonal
complement; both blocks also have gap at least one.

\paragraph{Corners of the coarse momentum domain.}
The remaining coarse momenta have only zero or $\pi$ components.
First let $N_\pi\ge1$ be the number of $\pi$ components.
Each contributes at least $1/2$ to $1-\cos k_\mu$, so every
fine momentum has $M(k)\ge N_\pi/2-1$. At $s<1$, a
nonzero fine sine makes $L_s(k)$ invertible. If all fine sines
vanish, the fine momentum is the corresponding corner and
$M(k)=2N_\pi-1>0$, again giving invertibility. At $s=1$,
$L_s(k)=\gamma_5$. The reduced equation may therefore be used,
and the cancellation of all its sine components gives
\begin{equation}
 C_p=(2N_\pi-2)\gamma_5,\qquad (V_{s,1},\ldots,V_{s,4})=0.
\end{equation}
For $N_\pi=1$, the correction $C_p$ vanishes, so the full
comparison is just the already invertible $L_s$. For
$N_\pi\ge2$, the lower bound on $M(k)$ is nonnegative and
hence $w_s\ge0$. The remaining scalar reduced equation is positive:
\begin{equation}
 1+s\mathcal R_sC_p=\{1+sw_s(2N_\pi-2)\}\Id>0 .
\end{equation}

The last case is $p=0$, where one block of the diagonal part does
become singular:
$L_{1/2}(0)=0$. This is a singularity of the chosen decomposition,
not of the full comparison. Here the inclusion has weight one at
$k=0$ and zero at every other fine momentum. On the entire
$k=0$ four-spinor block the full comparison is therefore
$H(0)=-\gamma_5$ for every $s$, so no inverse of
$L_{1/2}(0)$ is taken. Every other fine momentum remains in the
fine space; the inclusion vanishes there and its comparison block is
$L_s(k)$. Such a momentum has a nonzero sine component and is
invertible for $s<1$, and equals $\gamma_5$ at $s=1$.

\paragraph{The uniform free gap.}
We have excluded a zero mode at every $p,s$. In the original
centered-block coordinates, $\widehat H_s(p)$ is continuous on the
compact momentum torus times $[0,1]$. Changing the fine-momentum
labels at a torus seam is only a unitary coordinate change. Its least
singular value thus has a strictly positive minimum, proving
\eqref{app:s:free-gap}. Finite periods and periodic or antiperiodic
fermion signs select subsets of the same coarse momenta, so this gap is
volume independent. Fourier transformation gives the same lower bound
for finitely supported vectors on the cover. Only positivity of $g_*$,
not its numerical value, is used below.

\subsection{The gap for admissible gauge fields}
\label{app:s:local-comparison}
\label{app:s:supported}

The free gap gives a lower bound for an interacting operator wherever its
coefficients can be made sufficiently close to their free values. Small curvature
provides this comparison on a finite box: transport a color basis from
the center, and the connection in the resulting radial frame is bounded
by curvature times distance. We must choose the fine-site bases with
care, however. The interpolation contains the fixed copying matrix
$I_n$, so a change of coarse components must be applied equally to
all children of that coarse site.

Let $H_s^{\rm free}$ have Fourier symbol $\widehat H_s(p)$.
At coarse spacing $h=h_n$, choose a positive integer $L$, measured
in coarse steps. We will fix it from $g_*$ in the localization
argument below. A box on the cover of radius $Ch(L+1)$ contains
one localization window and every coarse and fine path needed to
apply the operator to it. If $z$ is the central coarse site, transport
the inherited basis $f_h(z,U)$ along the straight segment from $z$
to each point $y$ in this box. Denote the resulting radial frame by
$e_{\rm rad}(y,U)$.

At a coarse site, the matrix
$g_{\rm c}(y)=e_{\rm rad}(y,U)^\dagger f_h(y,U)$ converts
inherited components into radial components. At each child $y'$,
use the parent's matrix:
$g_{\rm f}(y')=g_{\rm c}(\pi_h(y'))$. Let
$G_{\rm c}$ and $G_{\rm f}$ multiply fermions by the corresponding
representation matrices. Because the same color matrix acts before and
after copying, $G_{\rm f}I_n=I_nG_{\rm c}$. Thus the transformed
operator $H_{n,s}^G=G_{\rm f}H_{n,s}G_{\rm f}^\dagger$ has the
same numerical inclusion $I_n$. A coarse link becomes
\begin{equation}
 V^{g_{\rm c}}(y,\mu)
   =g_{\rm c}(y)V(y,\mu)g_{\rm c}(y+h\hat\mu)^{-1}.
\end{equation}
Fine links transform in the same way, with $g_{\rm f}$ and spacing
$h/3$.

Copying the parent's component change gives the child the basis obtained
by transporting the new parent basis along the parent-child segment.
In general this differs from transporting directly from $z$ to the
child. We therefore estimate a fine coefficient by following its path
between the two parents, whose bases are radial.

For a positive fine edge from $a$ to $b=a+(h/3)\hat\mu$, put
$a_{\rm c}=\pi_h(a)$ and $b_{\rm c}=\pi_h(b)$. Substituting
the inherited child bases into the sampled link and making the copied
component changes gives
\begin{equation}
 e_{\rm rad}(a_{\rm c})^\dagger
 \tau(a_{\rm c}\leftarrow a)\,
 \tau(a\leftarrow b)\,
 \tau(b\leftarrow b_{\rm c})
 e_{\rm rad}(b_{\rm c}).
\end{equation}
Read from right to left, this matrix starts with components at the
terminal parent, transports to its child, across the fine edge to the
initial child, and back to the initial parent. It then expresses the
result in that parent's radial basis. The inherited bases at the
intermediate sites have canceled. Each parent-child segment has
Euclidean length at most $2h/3$, and the fine edge has length $h/3$,
so the total length is at most $5h/3$. A coarse sampled coefficient
uses one edge of length $h$. Both coefficients are therefore
transports along paths of length $O(h)$, expressed in radial bases
at their endpoints.

Write the connection in this radial frame as
$\dd_y+A^{\rm rad}$. The radial identity
\eqref{app:g:radial-identity}, now based at $z$, expresses this
potential in terms of the reconstructed curvature, which is at most
$C\eps$ in original units.
Since the box has radius $Ch(L+1)$, this gives
$\max_\mu\|A^{\rm rad}_\mu\|\le C\eps h(L+1)$ there.
Integrating the potential along either kind of coefficient path, with
unitary transport on both sides of the insertion, bounds the link
matrix's departure from the identity by $C\eps h^2(L+1)$.

At $n=0$ the coarse endpoint uses the original $U$. To compare it
with the free coefficient, first compare it with the reconstructed
sample $F_1U$ and then apply the radial estimate at $h=1$.
The unit-spacing consistency bound is unchanged by the unitary endpoint
matrices, so
\begin{equation}
 \|U^{g_{\rm c}}_e-1\|
 \le \|U_e-(F_1U)_e\|+\|(F_1U)^{g_{\rm c}}_e-1\|
 \le C\eps(L+1).
\end{equation}

These bounds are in the fundamental representation. For $\mathbf{10}$, the antisymmetric square of the fundamental,
replacing $V$ by $W\in\SU(5)$ changes
its two fundamental factors by at most $2\|V-W\|$ in total.
For $\bar{\mathbf5}$ the bound is $\|V-W\|$. Hence
$\|R(V)-R(W)\|\le2\|V-W\|$ for their direct sum.
Apply this to the fixed finite number of hopping coefficients acting
on a vector $v$ supported in the box. Counting every component of the
output, including components outside its support, gives
\begin{equation}
 \|(H_{n,s}^G-H_s^{\rm free})v\|
       \le C_{\rm loc}\eps(L+1)\|v\|.
 \label{app:s:radial-perturbation}
\end{equation}
Here $C_{\rm loc}$ is independent of $n,s$ and volume. The estimate
uses only the original links required by reconstruction, the inherited
bases, and the specified coefficient paths in this box. Each box may
use its own radial frame.

We now divide a fermion vector into overlapping pieces supported in
these boxes. Each piece has the free lower bound, reduced by the local
error in \eqref{app:s:radial-perturbation}. To recover a bound for the
original vector, we must also control the difference between applying
the operator before and after making this division. That difference is
small when the scalar weights vary slowly between neighboring parents.

Use the one-dimensional triangular weight and its squared site-sum norm
\begin{equation}
 w_L(m)=\max(0,1-|m|/L),\qquad
 s_L=\sum_m w_L(m)^2=\frac{2L^2+1}{3L}.
\end{equation}
The window centered at $hm_0$ is
\begin{equation}
 \chi_{m_0}(m)=s_L^{-2}\prod_{\mu=1}^4w_L(m_\mu-(m_0)_\mu).
\end{equation}
It multiplies every child, spin and color component at a parent by
the same scalar. The normalization gives
$\sum_{m_0}\chi_{m_0}(m)^2=1$, so the sum of the squared norms of
the windowed vectors equals the original squared norm.

A hopping term sees the difference of the window values at its two
parents. For one parent step the one-dimensional difference satisfies
\begin{equation}
 s_L^{-1}\sum_m [w_L(m+1)-w_L(m)]^2
                  =\frac6{2L^2+1}.
\end{equation}
In four dimensions the other three coordinates contribute their
normalized squared sums. Thus, for every parent $m$,
\begin{equation}
 \sum_{m_0}\bigl[\chi_{m_0}(m+\hat\mu)-\chi_{m_0}(m)\bigr]^2
 =\frac6{2L^2+1}\le\frac3{L^2}.
\end{equation}

Group the hops between parent blocks according to the eight oriented
coordinate directions. In each direction, the coefficient contains the
fine hops across that block boundary and the embedded coarse hop, with
weights $1-s$ and $s$. Each has norm at most one: the color transport
is unitary, the Wilson spin factor is a projection times a sign and
$\gamma_5$, and the inclusion preserves norms. Their weighted sum
therefore has norm at most $(1-s)+s=1$. All within-block terms,
including $\gamma_5(1-II^\dagger)$, commute with the scalar window.

For a single hopping term, sum the squared commutator norm over window
centers and then over parent sites. The preceding window difference
bound gives $3\|v\|^2/L^2$. Its contribution to the norm in the
direct sum over windows is therefore at most $\sqrt3\|v\|/L$.
Adding the eight contributions by the triangle inequality gives
\begin{equation}
 \left(\sum_{m_0}\|[H_{n,s},\chi_{m_0}]v\|^2\right)^{1/2}
       \le \frac{8\sqrt3}{L}\|v\|.
\end{equation}

For each piece $\chi_{m_0}v$, make the unitary change to its box's
radial components and apply the free gap together with
\eqref{app:s:radial-perturbation}. Sum the resulting lower bounds in
squares. To compare this sum with the full image, write
$H_{n,s}\chi_{m_0}v$ as $\chi_{m_0}H_{n,s}v$ plus the
commutator. The window normalization applies to both $v$ and
$H_{n,s}v$, and yields
\begin{equation}
 [g_*-C_{\rm loc}\eps(L+1)]\|v\|
 \le\left(\sum_{m_0}\|H_{n,s}\chi_{m_0}v\|^2\right)^{1/2}
 \le\|H_{n,s}v\|+\frac{8\sqrt3}{L}\|v\|.
\end{equation}
There are two losses from the free gap. The commutator error
$8\sqrt3/L$ decreases with $L$, whereas the radial comparison error
$C_{\rm loc}\eps(L+1)$ increases with it. First fix $L$ so that
the commutator error is at most $g_*/4$. Then choose
$0<\eps_{\rm out}<\eps_{\rm g}$ so that
$C_{\rm loc}\eps_{\rm out}(L+1)\le g_*/4$. This also keeps the
reconstruction on its smooth branches. We obtain
\begin{equation}
 \|H_{n,s}v\|\ge g\|v\|,\qquad
 g=g_*/2>0,\qquad \|H_{n,s}\|\le13.
 \label{app:s:gap}
\end{equation}
For the upper bound, each of the four kinetic terms has norm at most
one, each of the four Wilson terms at most two, and the mass term has
norm one. The coarse endpoint is an orthogonal sum of its Wilson
operator and $\gamma_5$, so it and the convex interpolation obey the
same bound 13. The free gap and the choices of $L$ and
$\eps_{\rm out}$ are independent of any later link-derivative order.
The larger admissible domain defined by $\eps_{\rm out}$ will be
used for the smooth coefficient extensions in
appendix~\ref{app:a:locality}. The measure itself uses a strictly
smaller plaquette threshold.

The cover bound also gives the same gap on every finite torus.
Lift a torus spinor $v$ to $\widetilde v$ with its prescribed
boundary signs, and let $\xi_M$ be the indicator of a rectangular
array of $M^4$ translated copies of one fundamental domain of the
torus, with $M$ full torus periods in each parent coordinate.
The lifted coefficients are periodic and respect those
signs, so
\begin{equation}
 \begin{gathered}
 H_{n,s}\widetilde v=\widetilde{H_{n,s}v},\\
 \|\xi_M\widetilde v\|=M^2\|v\|_{\rm torus},\qquad
 \|\xi_M\widetilde{H_{n,s}v}\|
       =M^2\|H_{n,s}v\|_{\rm torus}.
 \end{gathered}
\end{equation}
Because the operator has fixed range in parent blocks, its commutator
with $\xi_M$ involves only the $O(M^3)$ cells next to the boundary.
Its coefficients are uniformly bounded, giving
\begin{equation}
 \|[H_{n,s},\xi_M]\widetilde v\|
       \le C M^{3/2}\|v\|_{\rm torus}.
\end{equation}
Apply the cover gap to the finitely supported vector
$\xi_M\widetilde v$, divide by $M^2$, and let $M$ tend to infinity.
The boundary contribution vanishes, leaving
$\|H_{n,s}v\|_{\rm torus}\ge g\|v\|_{\rm torus}$.
In the lifted image, hops reaching the same torus site have already
been added with their boundary signs. The argument therefore includes
periods one and two, with the same $g$.

The cover localization proof also applies when original links are
specified only near a finite input support. To state precisely which
links are needed, let
$\mathcal S$ be a fixed finite union of full parent blocks, and let
$\mathcal S^+$ contain every output block reached in one application
of the operator. The proof uses more than the links needed just to form
these output coefficients. A window meeting $\mathcal S$ can have
its center up to $L$ coarse steps away and extend another $L$ steps;
its radial comparisons also need the segments from that center.

Enlarge $\mathcal S$ in coarse coordinates by a box of radius
$2L+c$, where $c$ accounts for the fixed hopping range. Then include
the fixed original-link neighborhoods required for reconstruction and
ancestry. Include every parent-child path and every filling plaquette
in these boxes, including plaquettes inside holes of $\mathcal S$.
Impose the outer plaquette bound throughout this complete neighborhood.
For every local assignment of links satisfying these inequalities,
the geometric formulas of appendix~\ref{app:g:geometry} define all
coefficients used in the proof.

Let $\mathcal H_{\mathcal S}$ contain all child, spin and color
components on the input blocks. Retain each matrix entry
$H_{n,s}(m,m')$ with $m'\in\mathcal S$ and
$m\in\mathcal S^+$ to form $K_{\rm out}$. For a vector supported
in $\mathcal S$, this matrix keeps its entire operator image. In
particular, it keeps a component carried by a hopping term outside
$\mathcal S$. Discarding that component would reduce the norm on
which the free lower bound depends. The input and output spaces must
therefore remain distinct, and the local proof gives
\begin{equation}
 g\|v\|\le\|K_{\rm out}v\|\le13\|v\|,
 \qquad
 K_{\rm out}:\mathcal H_{\mathcal S}
                    \longrightarrow\mathcal H_{\mathcal S^+}.
\end{equation}
The bound holds for every local assignment satisfying the stated plaquette
bounds, whether or not it extends to an admissible field on the whole
lattice. A plaquette bound only on
$\mathcal S^+$ is insufficient: the larger neighborhood is what
controls the reconstruction and radial paths used to prove the bound.

\subsection{Gaps for sampled and interpolated connections}
\label{app:s:ordinary-cylinder}
\label{app:s:local-family}

For ordinary Wilson operators the required gap follows directly from
the plaquette estimate. In the color representation entering the
kernel, put $\delta=\sup_p\|1-R(U_h(p))\|$.
In dimensionless units, the operator called $A$ by
Hern\'andez, Jansen and L\"uscher is $1-D_W$ at their mass parameter
$s=0$. Thus $A^\dagger A=H^2$ for our $H=\gamma_5(D_W-1)$;
their mass parameter is unrelated to our comparison parameter $s$.
Their bound~\cite[eq.~(2.16), appendix~C]{HJL} gives
\begin{equation}
 H[U_h]^2\ge(1-30\delta)\Id .
\end{equation}
The norm estimate for $R=\mathbf{10}\oplus\bar{\mathbf5}$ gives
$\delta\le2C\eps_{\rm out}h^2$ for reconstructed samples.
Decrease the common $\eps_{\rm out}$ once so that $\delta\le1/40$
even at $h=1$. Every reconstructed sample then has gap at least $1/2$.

The same bound applies to a vector of finite support when every output
component is retained. The proof of the quoted inequality consists of
finite-shift identities: after pairing with the vector, its nonnegative
terms remain nonnegative and its commutator terms use only plaquettes
in a fixed neighboring halo. Impose the original plaquette bound on
all links needed to construct this halo, within the complete input
neighborhood specified above. The conclusion therefore holds for
local assignments without an admissible extension.

On the comparison cylinder, the connections form a fixed smooth family
over a compact parameter set and a fixed torus. Their spatial curvature
has a common bound, so their sampled represented plaquettes satisfy
$\delta\le2C_{\rm cyl}h^2$. Choosing $h$ sufficiently small to make
this at most $1/40$ gives
\begin{equation}
 \|H[U_h]v\|\ge\tfrac12\|v\|,
 \label{app:s:ordinary-gap}
\end{equation}
uniformly over the family. The same finite-halo argument gives the
supported-input version wherever the connection is defined on the
required paths. This spacing may depend on the chosen family.

The Taylor estimate of appendix~\ref{app:a:locality} requires one
further lower bound. There we vary the background transports of a
finite matrix from their actual values to the identity. For $n\ge1$,
both comparison endpoints sample the same connection. We can therefore
vary them together by multiplying its local radial potential by a
parameter, while retaining the fixed copying matrix and the full
output space.

Fix the original field $U$ and an open radial patch containing every
coarse segment and every three-segment fine path used by the retained
matrix entries, including entries whose output lies outside
$\mathcal S$. Keep the radial
bases, the numerical inclusion $I_n$, and the sets
$\mathcal S,\mathcal S^+$ fixed. Replace $\dd_y+A^{\rm rad}$
on this patch by $\dd_y+uA^{\rm rad}$, with $0\le u\le1$.
This parameter is separate from $s$: $s$ still interpolates the two
endpoint operators, while $u$ changes the connection sampled in both.

For each $u$, recompute the short coarse transports and all three
transports in each fine coefficient path. These local formulas define
$K_{\rm out}(u)$ directly. They give the free coefficients at $u=0$
and the actual coefficients in the copied radial bases at $u=1$.
Throughout this operation $U$ is held fixed; no new connection is
reconstructed from altered original links.

Choose a positive radius $r_0$ in original units, independent of scale
and derivative order. Use this interpolation only when all retained
paths lie in a patch of radius at most $r_0$. The radial bound gives
$\max_\mu\|A^{\rm rad}_\mu\|\le Cr_0$, and multiplication by
$0\le u\le1$ cannot increase it. Each coefficient path has length
$O(h)$, so the resulting operator differs from the free one by at
most $Chr_0\le Cr_0$. The free lower norm bound applies to a
supported vector when all its output components are retained.
Choosing $r_0$ sufficiently small consequently gives
\begin{equation}
 \|K_{\rm out}(u)v\|\ge(g_*/2)\|v\|,
 \qquad 0\le u,s\le1.
 \label{app:s:weak-gap}
\end{equation}
For an ordinary Wilson operator, sample $\dd_y+uA^{\rm rad}$ along
its edges in the same chart; the corresponding argument starts from
free gap one.

These statements bound the full output norms of finite matrices using
only the connection on the patch. They do not define an exact spectral
projector for the local input, and no global admissible interpolation
to the vacuum is needed. Only contributions whose complete coefficient
paths fit in the patch use this construction. As described in
section~\ref{sec:locality}, longer contributions are bounded at the
actual field by exponential decay in polynomial order.

\subsection{Finite periods and endpoint determinant factors}

The gap proof used finite-period fermions through their lifts to the
cover. We finish by checking that this regrouping also preserves the
coarse inclusion and the cancellation of the extra endpoint determinant
factors. The boundary signs must be kept when identifying the fine and
coarse component spaces.

At spacing $h$, the original period $L_\mu$ gives parent period
$N_\mu(h)=L_\mu/h$ and fine period $3N_\mu(h)$. Fine spinors
obey $\psi(y+L_\mu\hat\mu)=\varsigma_\mu\psi(y)$, with the
fixed signs $\varsigma_\mu=\pm1$. Group fine sites by nearest
parent on the infinite cover, then identify translated values with
these signs. The resulting change of components is a permutation
multiplied by real signs, hence is unitary for the site-sum norm and
independent of $U$. It identifies the fine space with the
finite-period version of $\mathbb C^{81}\otimes\mathcal H_{\rm c}$.
Children across an antiperiodic seam retain their signs in the coarse
shifts $T_\mu$. This remains true when a coarse period is one and
the shift returns to the same parent label.

The constant child vector defining $I_n$ is unchanged by permutations
of the children. The position, spin, and copied gauge transformations
therefore respect the inclusion, just as on the cover. Signed coordinate permutations
either preserve the periods and signs or relate the corresponding
transformed tori, as in section~\ref{sec:measure}; original integer
translations are coarse translations at every level. Thus these
finite-period identifications preserve the embedded coarse space and
its orthogonal complement.

At the coarse endpoint, the projector identities of
section~\ref{sec:construction} put the same constant
positive-chirality subspace of $(\im I_n)^\perp$ in both
$\im P_{n,1}$ and the reference space $\im P_{+,n+1}$.
Using the same ordered basis for this common subspace, its exterior
product contributes to the fermion determinant factor and its dual
contributes to the reference factor. They cancel in the relative
determinant line, as in section~\ref{sec:comparison}. The inherited
gauge action and orientation-preserving spin transformations act equally
on the two copies, so their determinant factors cancel as well.

A reflection exchanges the reference chiralities. Its relation between
determinant phases instead follows from the complementary-projector
identity in section~\ref{sec:symmetries}; it does not preserve this
$P_+$ factor.

Neither the inclusion nor the period regrouping depends on $U$, so
neither adds a parameter connection term. The inherited color frames
do depend on $U$; their parameter-direction connection is included
in the covariant transport of section~\ref{sec:global}. After removing
the common complementary determinant factor, the ratio in
\eqref{eq:level-telescope} has the coarse phase in the numerator
and the fine phase in the denominator. Reconstructing the measure still
leaves one constant phase on each connected component. The comparison
does not determine how these constants transform under reflections.

\section{Derivative and exponential locality estimates}
\label{app:a:locality}

The current constructed in section~\ref{sec:construction} is a sum of
comparison terms $B_n$. This appendix proves that the sum and all its
fixed-order derivatives are exponentially local in the original links,
as stated in Proposition~\ref{prop:locality}. We also bound the sampled
curvature $F_h^*\Omega$, whose vanishing under refinement gives
differential integrability.

There are two estimates to combine. The spectral gap makes contributions
with many operator steps exponentially small. Color cancellation supplies
powers of the auxiliary spacing: $h^8$ for a current contribution at
one trace position and $h^6$ for a curvature contribution. These powers
must survive differentiation with respect to the original links. After
summing the trace positions, whose density grows as $h^{-4}$, they give
the required $h^4$ and $h^2$ bounds. The first comparison $B_0$ needs
only a finite bound, since it occurs once at the original spacing.

We prove these statements for finite polynomial contributions before
summing them. This also lets us specify exactly which original links
each contribution needs. Its estimate will require admissibility only
on that finite set, without assuming that the links extend to a globally
admissible field. Smooth cutoff functions will then extend each
coefficient to unrestricted inputs, as required by the locality norm
\eqref{eq:locality}. The last two subsections treat the exterior
derivative and finite-volume corrections.

The original spacing is one, and $h_n=3^{-n}$ is the coarse spacing
at comparison level $n$. We use the elementary original-link derivatives
of \eqref{eq:link-derivative}. An index $I=(x,\mu,a)$ specifies the
positive link $(x,x+\hat\mu)$ and its Lie-algebra generator $T^a$.
The link matrix transports from $x+\hat\mu$ to $x$:
\begin{equation}
 \eta^{[I]}_\nu(z)=\delta_{zx}\delta_{\nu\mu}T^a,
 \qquad D_I=\delta_{\eta^{[I]}}.
\end{equation}
For a list $\boldsymbol K=(K_1,\ldots,K_\ell)$ of such indices,
write $D^{\boldsymbol K}=D_{K_1}\cdots D_{K_\ell}$. The displayed
order is retained because derivatives on the same link need not commute.
Each estimate fixes the derivative order $\ell$. Its constant
$C_\ell$ may depend on that order, but not on volume, admissible
component, or refinement level. For a field of matrices,
$\|\cdot\|_\infty$ is the maximum over sites or links of the
operator norm.

The original link at $x$ labels the current we evaluate. The position
$y$ instead labels a term in the operator trace. For a sampled Wilson
operator, $y\in h\mathbb Z^4$ is a sampled site and the diagonal
trace includes spin and color. For a comparison operator $H_{n,s}$,
we group the 81 fine children of each coarse parent into a block labeled
by $y\in h_n\mathbb Z^4$; its diagonal trace also includes the child
indices. Coordinates are taken on the covering lattice and identified
modulo the original periods. A fixed number of operator steps covers
a distance $O(h)$ in original units, where $h=h_n$ for a comparison.
The factor of three between its two spacings is included in constants.

\subsection{Induced link variations and their domain}

Changing one original link changes every sampled link that uses it in
the reconstruction. The variation includes the reconstructed connection
and the bases transported to the two ends of each sampled link. We first
bound this entire response; counting the affected links separately
would introduce an unnecessary factor from the fine-lattice density.

At comparison level $n$, the coarse and fine endpoint fields are
\begin{equation}
 V_n^{\rm c}=U^{(n)},\qquad V_n^{\rm f}=F_{h_{n+1}}U.
\end{equation}
For $q={\rm c},{\rm f}$, the variation $D_I$ of a positive endpoint
link has generator
\begin{equation}
 \Xi^{I,q}_{n,\nu}(y)
  =[D_I V_n^q(y,\nu)]V_n^q(y,\nu)^{-1}.
 \label{app:a:receiving-source}
\end{equation}
Here $y$ is the position in original lattice units and $\nu$ is the
link direction. The integer site label is $y/h_n$ at the coarse endpoint
and $y/h_{n+1}$ at the fine endpoint. Thus
$D_I V_n^q=\Xi^{I,q}_n V_n^q$. For a single sampled field, write
$V_h(y,\nu)=(F_hU)(y/h,\nu)$ as in appendix~\ref{app:g:geometry}
and use the same definition,
$\Xi^I_{h,\nu}(y)=[D_IV_h(y,\nu)]V_h(y,\nu)^{-1}$.
These dimensionless Lie-algebra fields include the derivatives of the
initial bases and every transport used to inherit the finer bases.
A second original variation $D_J$ gives its own field $\Xi^J$.

Appendix~\ref{app:g:geometry} bounds the dependence of each sampled
link on its original inputs. Choose fixed numbers $N_{\rm in}$ and
$R_0$ so that every endpoint link and its induced generator use at most
$N_{\rm in}$ original links within radius $R_0$. Their derivatives
of every fixed order are bounded uniformly under refinement. Enlarging
these constants covers the original coarse endpoint $U^{(0)}=U$ as
well. Since $\mathfrak{su}(5)$ has 24 generators, at most
$24N_{\rm in}$ elementary derivative indices can act on any one of
these functions. Summing a list of $\ell$ derivative indices therefore
costs at most $(24N_{\rm in})^\ell$.

For a general original variation $\eta$, the bound on the induced
generators is proportional to
$\max_{x,\mu}\|\eta_\mu(x)\|$. It is a maximum over sampled
links, not a sum over them, and is consequently uniform in their number.
The number of sampled sites will enter later, when we sum diagonal
traces.

We also need these estimates when only finitely many original links
are specified. Let $\eps_{\rm out}$ be the final outer admissibility
threshold of appendix~\ref{app:s:spectral}, including the choice for
sampled Wilson operators in section~\ref{app:s:ordinary-cylinder}.
When collecting the inputs, use its comparison localization width $L$
and the fixed sampled-plaquette halo needed by the Wilson estimate.
Decrease $\eps_{\rm out}$ to a
positive rational number if necessary, and choose the target threshold
\begin{equation}
 \eps=\eps_{\rm out}/10 .
 \label{app:a:domains}
\end{equation}
The two thresholds, localization width, and fixed enlargements of
supports are chosen once. They will not depend on polynomial order,
refinement, volume, or derivative order. The gap between the thresholds
will allow a smooth cutoff without changing the target-domain current.

For a given finite polynomial contribution, collect the original links
needed both to evaluate it and to prove its local gap bound. This
includes the links reconstructing its transport paths, endpoint bases,
and all possible one-step outputs. It also includes the intersecting
localization boxes, the radial paths and filling plaquettes used to
compare transports there, and the original neighborhoods needed to
reconstruct those data. Fill holes where the transport comparison uses
a surface: a plaquette bound only along the visible paths would not
control its interior.

Call this set of positive original links $\mathcal E$ and its required
plaquettes $\mathcal P$. Include all four edges of every plaquette
in $\mathcal E$, and define the local domain
\begin{equation}
 \mathcal V_{\mathcal E}
 =\{(U_e)_{e\in\mathcal E}:\|1-U(p)\|<\eps_{\rm out}
                                  \text{ for }p\in\mathcal P\}.
\end{equation}
The geometric formulas and their fixed-order derivatives are defined
throughout $\mathcal V_{\mathcal E}$. The supported-vector estimate
of section~\ref{app:s:supported} gives the gap when all output
components are retained. Both statements use only the specified links
in $\mathcal E$; neither requires a globally admissible completion.

\subsection{Polynomial expansion and color cancellation}

To control the spatial range, we expand the positive spectral projector
in polynomials of its Hermitian operator. Let $\mathcal K$ be an
original or sampled Hermitian Wilson operator $H(V)$, or the comparison
operator $H_{n,s}$. Choose $\Lambda>13$, above their common norm
bound. For a nonnegative integer $m$, define
\begin{equation}
 \begin{gathered}
 Z=1-\mathcal K^2/\Lambda^2,\qquad
 c_m=\frac{4^{-m}\binom{2m}{m}}{\Lambda},\qquad
 S_m=c_m\mathcal K Z^m,\\
 P_{\mathcal K}=\frac12\left(1+\sum_{m\ge0}S_m\right).
 \end{gathered}
 \label{app:a:series}
\end{equation}
This is the inverse-square-root series used in
\eqref{eq:local-polynomial-projector}. On a globally gapped field, $Z$ is nonnegative and its norm is
strictly less than one, so the series converges to
$P_{\mathcal K}=(1+\operatorname{sign}\mathcal K)/2$.
For a comparison this is $P_{n,s}$; for the original Wilson operator
it is the chiral projector $\widehat P_-$.
When only a local tuple in $\mathcal V_{\mathcal E}$ is given,
we use the individual finite polynomials $S_m$. The finite-matrix
estimate below will bound these polynomials directly.

The use of a polynomial series to prove overlap locality follows
ref.~\cite{HJL}; here we use the binomial rather than the Legendre
series. These are polynomials in $\mathcal K$, not yet polynomials
in the original links. The latter representation is constructed
separately in appendix~\ref{poly:section}.

Before assigning contributions to individual sites, substitute
\eqref{app:a:series} into the full curvature trace
$\ii\Tr\{P_{\mathcal K}[\delta_1P_{\mathcal K},
\delta_2P_{\mathcal K}]\}$, where $\delta_1,\delta_2$ are
variations of the operator. The constant $1/2$ in the first projector
multiplies a commutator with zero full trace. Remove that term first:
its diagonal trace at one site need not vanish. The remaining terms
contain three polynomial factors with prefactor $\ii/8$.

For a matrix $A$, write $\tau_y A=\tr\{A(y,y)\}$ for the trace
over the internal indices at $y$ specified above. At total order
$M=m_1+m_2+m_3$, define
\begin{equation}
 \mathcal F^R_{y,M}[\delta_1,\delta_2;\mathcal K]
  =\frac{\ii}{8}\sum_{m_1+m_2+m_3=M}
       \tau_y\{S_{m_1}[\delta_1S_{m_2},\delta_2S_{m_3}]\}.
 \label{app:a:diagonal}
\end{equation}
The superscript records the fermion representation. This decomposition
reproduces the full trace after summing over $y$ and $M$. Since the full
current and curvature are real, we may take the real part of each
finite contribution without changing their sums.

The two quantities needed for Proposition~\ref{prop:locality} are
\begin{equation}
 \begin{split}
 f^I_{y,M,n,s}
  &=\operatorname{Re}\mathcal F^R_{y,M}
                     [\partial_s,D_I;H_{n,s}],\\
 c^{IJ}_{y,M,h}
  &=\operatorname{Re}\mathcal F^R_{y,M}
                     [D_I,D_J;H(F_hU)].
 \end{split}
\end{equation}
In the first line, $\partial_s$ changes the comparison parameter at
fixed $U$, while $D_I$ varies the original links at fixed $s$.
In the second, both $D_I$ and $D_J$ vary the original links, inducing
the complete generator fields $\Xi_h^I$ and $\Xi_h^J$ on the
sampled lattice. Thus these are the contributions to the prescribed
current and pulled-back curvature, not to a response in which fine
links are varied independently. Their sums are
\begin{equation}
 \begin{split}
 j_{\mu,n}^a(x)
   &=\int_0^1\!\dd s\sum_{y,M}f^{(x,\mu,a)}_{y,M,n,s},\\
 B_{n,\eta}
   &=\sum_{x,\mu,a}\eta_\mu^a(x)j_{\mu,n}^a(x),\\
 (F_h^*\Omega)_{IJ}
   &=\sum_{y,M}c^{IJ}_{y,M,h}.
 \end{split}
 \label{app:a:current-components}
\end{equation}
Here $(F_h^*\Omega)_{IJ}$ denotes
$(F_h^*\Omega)(\eta^{[I]},\eta^{[J]})$. The sums over $M$ run
over nonnegative integers, and the sums over $y$ use the sampled sites
or coarse parent blocks appropriate to the operator.

The gauge-variation identity also needs a trace with a site generator
inserted, rather than a link variation. For a prescribed field
$\xi(y)\in\mathfrak{su}(5)$ on the sampled lattice, define
\begin{equation}
 m_{y,M,h}[\xi]
   =\operatorname{Re}\frac{\ii}{2}
          \tau_y\{S_M R(\xi(y))\}.
\end{equation}
Here $\mathcal K=H(F_hU)$. The matrix $R(\xi(y))$ acts on color
and is the identity on spin. To identify the sum, use
\begin{equation}
 P_{\mathcal K}-P_+
   =\frac12\left(\sum_{m\ge0}S_m-\gamma_5\right).
\end{equation}
The term containing $\gamma_5$ has zero spin trace against the site
insertion. Summing $m_{y,M,h}[\xi]$ therefore gives
$\ii\Tr\{(P_{\mathcal K}-P_+)R(\xi)\}$, the anomaly pairing
in \eqref{eq:ward}. The bounds below justify these sums and their
fixed-order original-link derivatives.

Exponential decay in polynomial order controls range. The decrease
with spacing uses \eqref{gen:ordered-traces}, proved from cubic anomaly
cancellation in section~\ref{sec:locality}: ordered color traces in
$R$ and $R^*$ agree whenever they contain at most four Lie-algebra
factors.

Spin conjugation gives an opposite relation for the real diagonal
contributions. It leaves the trace position fixed; the matrix $\mathcal B$ in
\eqref{eq:kernel-symmetry} acts only on spin, while complex conjugation
also conjugates the color matrices. Thus
\begin{equation}
 \mathcal F^{R^*}_{y,M}
       =-\overline{\mathcal F^R_{y,M}}.
\end{equation}
Consequently $\operatorname{Re}\mathcal F^R_{y,M}$ is half the
difference of the real contributions in $R$ and $R^*$. The anomaly
term $m_{y,M,h}[\xi]$ has the same property. A Taylor coefficient
with at most four Lie-algebra factors in its color trace must therefore
vanish: \eqref{gen:ordered-traces} equates the two traces, while the
contribution is their difference.

We use this fact by expanding the background transports about the
identity while keeping the prescribed generators in the trace. Each
prescribed generator supplies one color factor; each derivative of the
background transports supplies at most one more. If $r_{\rm ins}$
is the number of prescribed Lie-algebra insertions, the first background
order that can survive is
\begin{equation}
 p=5-r_{\rm ins}.
\end{equation}
For the current, the induced variation $D_I$ supplies one insertion.
The comparison derivative $\partial_s$ changes only scalar
interpolation coefficients and supplies no color factor. Thus
$r_{\rm ins}=1$ and $p=4$. For the curvature, the two variations
$D_I,D_J$ give $r_{\rm ins}=2$ and $p=3$. The anomaly trace has
one site insertion and again $p=4$.

The cancellation holds for arbitrary numerical Lie-algebra generators
$X\in\mathfrak{su}(5)$ inserted as $R(X)$; it is not a statement
about arbitrary matrices on the fermion representation space. We will
apply it to the actual generators induced by the original links. An
auxiliary parameter will vary the background while holding those
generators fixed. Their dependence on $U$ is still differentiated
when we take original-link derivatives. Because the cancellation is
valid for every generator value, it remains an identity after those
derivatives as well. The exact Taylor remainder of order $p$ will
then turn it into a bound on the full contribution.

\subsection{Finite operators with all output sites retained}

The color identity tells us which background orders vanish. To bound
what remains, we also need exponential decay in the polynomial order
$M$. We establish that bound using only the contribution's specified
finite set of original inputs. The local gap of
section~\ref{app:s:supported} applies to a spinor with finite support
provided that its entire image under the operator is retained. We will
use this image to construct a contracting finite matrix, then show how
its powers reproduce the chosen diagonal contribution.

First consider the response to a prescribed link variation. Let $V$
be a link field of spacing $h$, let $T_\nu$ be its covariant shift,
and let $\xi_\nu(y)$ be the Lie-algebra generator of the variation
on the positive link $(y,\nu)$. Denote this variation by $\delta_\xi$.
Acting on a spinor $\psi$, the varied shift is
\begin{equation}
 (\delta_\xi T_\nu\psi)(y)
   =R(\xi_\nu(y))R(V(y,\nu))\psi(y+h\hat\nu).
\end{equation}
The unvaried shift is unitary. Multiplication by $R(\xi_\nu(y))$
has norm bounded by a representation-dependent constant times the
largest generator norm. Adding the fixed number of terms in the Wilson
operator therefore gives
\begin{equation}
 \|\delta_\xi\mathcal K\|\le C\|\xi\|_\infty.
 \label{app:a:operator-derivatives}
\end{equation}
Here $\mathcal K$ is the Hermitian operator whose polynomial we are
estimating, and $\|\xi\|_\infty$ is the maximum over positive links.
For a comparison, take the larger maximum for the coarse and fine
fields. The inclusion $I_n$ is constant and has norm one, so embedding
the coarse variation preserves this bound. The same reasoning applies
after any fixed number of original-link derivatives: differentiate the
endpoint coefficients and the complete induced generators, whose
bounds were established above. One original link may change many
sampled links, but their combined operator response is controlled by
this maximum. A count of sampled sites is needed only when we later
sum the diagonal traces.

We next choose the finite matrices. A factor $S_m$ contains at most
$2m+1$ operator steps. Thus the contribution of total order $M$ in
\eqref{app:a:diagonal} contains at most $2M+3$ steps. Choose a finite
set of input sites $\mathcal S$ containing every allowed path of
these lengths that starts and ends at the chosen trace position $y$,
including the endpoints of each varied link along those paths. For a comparison,
include complete parent blocks and all 81 children of each parent.
The set is chosen from the possible operator steps, before evaluating
any link matrices. It therefore still contains a path if that path's
coefficient happens to vanish at a particular field. Let
$\mathcal S^+$ contain every site reachable in one operator step from
$\mathcal S$.

Let $\mathcal H_{\mathcal S}$ be the spinor space on these input
sites, including spin, color, and any child indices, and let
$E_{\mathcal S}$ be its coordinate projection. We use two restrictions
of the same operator:
\begin{equation}
 \begin{gathered}
 K_{\rm out}:\mathcal H_{\mathcal S}
                 \longrightarrow\mathcal H_{\mathcal S^+},\\
 (K_{\rm out}v)(y)
     =\sum_{z\in\mathcal S}\mathcal K(y,z)v(z),
             \qquad y\in\mathcal S^+,\\
 K_{\mathcal S}=E_{\mathcal S}\mathcal K E_{\mathcal S}.
 \end{gathered}
\end{equation}
The map $K_{\rm out}$ acts on input spinors in
$\mathcal H_{\mathcal S}$ and keeps all their output components in
$\mathcal H_{\mathcal S^+}$. The square matrix $K_{\mathcal S}$
keeps only entries with both endpoints in $\mathcal S$. For example,
a spinor near the edge of $\mathcal S$ can acquire a component just
outside it after one step. That component contributes to
$\|K_{\rm out}v\|$, but is absent from
$\|K_{\mathcal S}v\|$. The local lower bound is a statement about
the full output norm, and hence about $K_{\rm out}$.

On a torus, the sites in these matrices are periodic sites. First add
all hopping terms with the same periodic input and output, with their
fermion boundary signs, and then form
$K_{\rm out}^\dagger K_{\rm out}$. This counts each output once
and retains interference between hops reaching it. It also covers
periods one and two, for which distinct hops on the cover can have
the same periodic endpoints.

The original inputs in $\mathcal E$ include all the neighborhoods
needed for the local gap proof, so its lower bound holds on the entire
domain $\mathcal V_{\mathcal E}$. Together with the upper norm
bound, it gives a contraction on the input space:
\begin{equation}
 \begin{gathered}
 Z_{\mathcal S}
   =1_{\mathcal S}-\Lambda^{-2}K_{\rm out}^{\dagger} K_{\rm out},\\
 0\le Z_{\mathcal S}\le q_{\rm ad}1_{\mathcal S},
 \qquad q_{\rm ad}=1-g^2/\Lambda^2<1.
 \end{gathered}
 \label{app:a:compression}
\end{equation}
Here $1_{\mathcal S}$ is the identity on
$\mathcal H_{\mathcal S}$. We may use the same
$g=g_*/2\le1/2$ for the comparison and ordinary Wilson operators,
where $g_*$ is the free comparison gap. The bound
$\|K_{\rm out}\|\le13<\Lambda$ gives $Z_{\mathcal S}\ge0$;
the lower bound $\|K_{\rm out}v\|\ge g\|v\|$ gives its upper
bound $q_{\rm ad}<1$. No links outside $\mathcal E$ or globally
admissible completion are used.

This construction also preserves the two-step coefficients required by
the polynomial. Since $\mathcal K$ is Hermitian, the matrix entries
of $K_{\rm out}^\dagger K_{\rm out}$ between input sites are
those of $\mathcal K^2$: the product sums over every possible
intermediate output in $\mathcal S^+$. In particular, a path that
leaves $\mathcal S$ at its middle site and returns at its last site
is included. Squaring $K_{\mathcal S}$ would remove that path.
The contraction is therefore built from the full output map, while
$K_{\mathcal S}$ supplies the remaining single-operator factor.

With the same coefficient $c_m$ as in \eqref{app:a:series}, replace
each polynomial factor by
\begin{equation}
 S_{m,\mathcal S}=c_mK_{\mathcal S}Z_{\mathcal S}^m.
\end{equation}
These replacements leave the complete diagonal contribution unchanged.
Its allowed paths all lie in the chosen set, and each squared factor
still sums its full intermediate outputs. This is an identity for the
chosen finite contribution.
The sets are fixed as $U$, $s$, and the auxiliary background parameter
introduced below vary, so the identity also holds after differentiating.
In particular, derivatives of the squared factor act on both
$K_{\rm out}^\dagger$ and $K_{\rm out}$. The displayed product
order is retained; $K_{\mathcal S}$ need not commute with
$Z_{\mathcal S}$.

The contraction now controls derivatives of the polynomial. If $b$
ordered derivatives are distributed over the $m$ copies of
$Z_{\mathcal S}$, there are at most $(m+1)^b$ choices of their
locations. At most $b$ copies are differentiated. Each undifferentiated
copy still has norm at most $q_{\rm ad}$, while the differentiated
copies have the fixed-order bounds just established. Their product is
therefore bounded by $C_b(m+1)^bq_{\rm ad}^{m-b}$. In the full
contribution, the two prescribed variations in
\eqref{app:a:diagonal} and the $O((M+1)^2)$ choices of
$(m_1,m_2,m_3)$ add only fixed powers of $M+1$. At any fixed
derivative order, losing finitely many contraction factors changes
the constant, but leaves exponential decay in $M$.

We will need this decay along the auxiliary interpolation of the local
potential as well. The bound retaining all output components \eqref{app:s:weak-gap} and
its ordinary Wilson counterpart give a contraction factor
$q_{\rm loc}<1$ throughout that family. Choose one factor for both
the actual field and the family:
\begin{equation}
 q=\max(q_{\rm ad},q_{\rm loc})<1.
 \label{app:a:common-q}
\end{equation}
Write $f_{y,M}$ for any current, curvature, or anomaly contribution,
suppressing its fixed variation indices and other parameters. For every
ordered original-link derivative list $\boldsymbol K$ of length
$\ell$, the finite products then obey
\begin{equation}
 |D^{\boldsymbol K}f_{y,M}|
       \le C_\ell(M+1)^{a_\ell}q^M.
 \label{app:a:polynomial-bound}
\end{equation}
The nonnegative integer $a_\ell$ and constant $C_\ell$ may depend
on the derivative order. For link variations, this estimate uses the
complete induced generators. For a site insertion, its generator and
all original-link derivatives under consideration are assumed bounded.
The paths determine the finite input support, but the norm is taken on
the whole matrix product, so the cancellations that give $q^M$ remain
present. This estimate also applies to the first comparison $B_0$,
whose coarse endpoint is the original field $U$.

\subsection{Radial interpolation and Taylor estimates}

We now combine color cancellation with a small-spacing estimate. The
polynomial bound alone has no power of $h$. For a contribution confined
to a small spacetime patch, radial transport makes the local potential
proportional to distance from the patch center. Scaling this potential
and differentiating a short transport inserts the potential along its
path. Its bound then supplies two powers of $h$: one from the distance
to the center and one from the path length. The color identity will
require several such derivatives.
Throughout this operation we retain the original-link variation whose
current or curvature coefficient is being estimated.

For a comparison with $n\ge1$, both endpoints sample the same
connection; set $h=h_n$. For an ordinary sampled Wilson term, $h$
is its sampling spacing. Let $r_0>0$ be the fixed radius in
\eqref{app:s:weak-gap}, decreased if needed to remain within the
common chart and injectivity margins. Choose a constant
$C_{\rm supp}$ such that a box of radius
$hC_{\rm supp}(M+1)$ contains all order-$M$ paths, their one-step
outputs, the parent-to-child segments, the comparison localization
boxes, and the fixed Wilson plaquette halo. These constants do not depend on refinement or
derivative order. The complete original input set $\mathcal E$ also
contains the fixed neighborhoods needed to reconstruct these data and
the inherited frames. Those original inputs need not themselves lie
inside the radial patch.

First assume $hC_{\rm supp}(M+1)<r_0$. Choose the patch center
and radial paths independently of $U$. For a comparison, use a coarse
vertex $z$ as center and start with its actual inherited frame
$f_h(z,U)$. For an ordinary sampled Wilson term, use its trace site
$z=y$ and the actual frame there. Transport this frame radially through
the patch. The initial frame has uniformly bounded original-link
derivatives, and the geometric transport bounds give the same control
for the resulting radial frames.

In these frames write the connection along spacetime as
$\dd_y+A^{\rm rad}$, where $\dd_y$ differentiates spacetime
coordinates and
$A^{\rm rad}=\sum_\mu A_\mu^{\rm rad}(y,U)\,\dd y_\mu$.
Let $F^{\rm rad}_{\nu\mu}$ be its curvature components. At a
point $y=z+w$, with displacement $w$ from the radial center, the
radial transport identity and its original-link derivatives give
\begin{equation}
 \begin{split}
 A^{\rm rad}_\mu(z+w,U)
  &=\int_0^1t\sum_{\nu=1}^4 w^\nu
          F^{\rm rad}_{\nu\mu}(z+tw,U)\,\dd t,\\
 \|D^{\boldsymbol K}A^{\rm rad}_\mu(z+w,U)\|
  &\le C_\ell|w|.
 \end{split}
 \label{app:a:weak-jets}
\end{equation}
The parameter $t$ runs along the radius from $z$ to $z+w$.
The explicit displacement $w$ is what makes the potential small near
$z$. An original-link derivative acts on the curvature and on the
actual color frame, but the chosen radial path and its displacement
are fixed. Thus the distance factor survives every fixed derivative
order $\ell$. This argument requires bounded original-link derivatives
of the curvature.

Introduce a separate parameter $0\le u\le1$ and, at fixed $U$,
replace the connection on the patch by $\dd_y+uA^{\rm rad}$.
Recompute its transports along the fixed coarse edges and the three
segments representing each fine coefficient: from one parent to its
child, across the fine edge, and back to the other parent. At $u=1$
these are the actual endpoint coefficients in the copied radial frames
of section~\ref{app:s:local-comparison}; at $u=0$ all background
color transports are identities. The coarse change of frame is copied
to its children, keeping $I_n$ constant. This is a local family of
finite matrices evaluated from the original inputs already specified.
The original field is held fixed while $u$ varies; $s$ still labels
the interpolation between the coarse and fine operators.

It remains to specify the insertion on this auxiliary background.
We first evaluate the full induced generator field $\Xi^I(U)$ at the actual
original field, including the derivatives of all endpoint frames and
parent-to-child transports. For a fixed change of color frame $g$, its
positive-link generator becomes
$\xi^I_\nu(y)=\Ad_{g(y)}\Xi^I_\nu(y)$, where
$\Ad_gX=gXg^{-1}$. Transforming the background links and their
prescribed generators together leaves each diagonal trace unchanged.
Write $V$ for the endpoint field, or the pair of endpoint fields in a
comparison, and $V^g$ for these links in the changed frames. With the
other indices suppressed, the current and curvature contributions obey
\begin{equation}
 \begin{split}
 f(V;\Xi^I)&=f(V^g;\Ad_g\Xi^I),\\
 c(V;\Xi^I,\Xi^J)
     &=c(V^g;\Ad_g\Xi^I,\Ad_g\Xi^J).
 \end{split}
\end{equation}
These equalities first hold for any fixed $g$. We then substitute the
actual radial frame change $g(U)$, copied from parents to children for
a comparison. The result is an identity as a function of $U$.
Consequently, subsequent original-link derivatives act on both $g(U)$
and the complete induced generator fields, including the original and inherited
frames on which they depend.

For each fixed $U$, keep the numerical positive-link generator
$\xi^I(U)=\Ad_{g(U)}\Xi^I(U)$ constant as $u$ varies. For a
curvature contribution keep both $\xi^I(U)$ and $\xi^J(U)$
constant in this way. These matrices retain their dependence on $U$,
so an original-link derivative $D^{\boldsymbol K}$ differentiates
them. Holding an insertion fixed in $u$ therefore has no implication
that it is fixed under the original-link derivatives. At $u=1$ this
prescription gives exactly the transformed variation entering the
original contribution.

For clarity, consider one positive link $V(u)$ in the auxiliary
background, with prescribed generator $\xi$. Its insertion and that
on the inverse link are
\begin{equation}
 \delta_\xi V(u)=\xi V(u),\qquad
 \delta_\xi V(u)^{-1}=-V(u)^{-1}\xi.
\end{equation}
The generator on the reverse edge is consequently
$-\Ad_{V(u)^{-1}}\xi$. Although the positive generator $\xi$
is independent of $u$, this reverse generator varies through $V(u)$.
Its background derivatives must therefore be included. They insert the
same potential factors as derivatives of the positive transport, and
keep the two edge orientations consistent.

This prescription is also why the insertion is not recomputed as
$[D_IV(u)]V(u)^{-1}$. That expression would differentiate a new
sampling map with its potential scaled by $u$, instead of carrying
the actual induced variation through the estimate. Nor is a separate
$g^{-1}\dd g$ insertion added to the one-form: the change of frame
was made in the pointwise covariance identity, whose $U$ dependence
is already differentiated in full. Here $\dd$ denotes differentiation
in the original-link variables.

We can now bound the derivatives with respect to $u$. Let $T_u$ be
transport along any one coefficient path. Its length is $O(h)$ and
its points lie at distance $O(h(M+1))$ from $z$. Differentiating
with respect to $u$ inserts one $A^{\rm rad}$ integrated along
this path, with unitary transports before and after the insertion.
The distance bound in \eqref{app:a:weak-jets} and the path length
therefore give a factor $h^2(M+1)$ per background derivative. For
fixed nonnegative derivative orders $\ell,b$ and
$|\boldsymbol K|=\ell$, the estimate is
\begin{equation}
 \|D^{\boldsymbol K}\partial_u^b T_u\|
       \le C_{\ell,b}[h^2(M+1)]^b.
\end{equation}
An original-link derivative can act on an inserted potential or on a
transport between insertions. In the first case
\eqref{app:a:weak-jets} keeps its distance factor; in the second,
the $b$ potentials already inserted by $\partial_u^b$ are still
present. Thus all $b$ powers of $h^2$ survive. This is the feature
that will make the refinement estimate hold together with every fixed
order of original-link derivatives.

Each operator row contains a fixed number of coefficient transports, so the
same bound controls the finite operator and its response to a prescribed
generator field. For $\xi=\xi^I$ or $\xi^J$, fixed $b\ge0$,
$|\boldsymbol K|=\ell$, and $k=0,1$, we obtain
\begin{equation}
 \bigl\|D^{\boldsymbol K}\partial_u^b\partial_s^k
                  K_{\rm out}\bigr\|
 +\bigl\|D^{\boldsymbol K}\partial_u^b\partial_s^k
                  \delta_\xi K_{\rm out}\bigr\|
       \le C_{\ell,b,k}[h^2(M+1)]^b.
\end{equation}
The generator field is held fixed in both $u$ and $s$, while its full
original-link dependence is retained. The constants at these derivative
orders are uniform in $u,s,h,M$ and volume. For a comparison,
$\partial_s$ differentiates only the scalar interpolation coefficients;
for an ordinary sampled Wilson operator, omit it. Compressing the input
and output indices and applying the product rule gives the same bound
for $K_{\mathcal S}$, $Z_{\mathcal S}$, and their first
variations. A site insertion is treated using the maximum norms of its
generator and the original-link derivatives under consideration.

The auxiliary family now has the three properties needed for the
estimate: at $u=1$ it gives the required contribution, at $u=0$
its background transports are identities, and throughout the interval
its finite matrices have a common contraction bound. Let $f(u,U)$
denote a current, curvature, or anomaly contribution in this family,
with its positive-link generators or site generator held fixed as $u$
varies. As above, $p=5-r_{\rm ins}$, where $r_{\rm ins}$ counts
the prescribed Lie-algebra insertions.

At $u=0$, the ordered color-trace identity gives
$\partial_u^bf(0,U)=0$ for every $b<p$. The identity holds for
every admitted original field and arbitrary numerical values of the
insertions. It therefore also holds when those values are the actual
induced generator fields depending on $U$. Differentiating in the original-link
variables still gives zero, including the terms in which derivatives
act on an induced generator or a frame. This is how the color cancellation remains
available for the derivative estimates.

With the lower Taylor coefficients absent, the entire value at $u=1$
is given by the integral remainder:
\begin{equation}
 f(1,U)=\frac1{(p-1)!}\int_0^1(1-u)^{p-1}
                       \partial_u^p f(u,U)\,\dd u.
 \label{app:a:remainder}
\end{equation}
No approximation of the connection dependence is made. Differentiating
this equality with respect to the original links gives the corresponding
integral for each $D^{\boldsymbol K}f(1,U)$. We can use the
finite-matrix contraction uniformly at every $u$ in this integral.

To bound the differentiated integrand, keep the polynomial products
intact and distribute the derivatives among their factors. At total
order $M$, the $\ell$ additional original-link derivatives, the
$p$ background derivatives, and the two prescribed variations in
\eqref{app:a:diagonal} can differentiate at most $\ell+p+2$
of the $Z_{\mathcal S}$ factors. Every other such factor still
contracts by $q$. The $p$ background derivatives supply
$[h^2(M+1)]^p$ in total, also when they act on a varied operator
$\delta_\xi K_{\rm out}$. Choosing their locations and summing
the triples of polynomial orders costs only a fixed power of $M+1$.
The resulting bound on the differentiated integrand is
\begin{equation}
 C_\ell h^{2p}(M+1)^{b_\ell}q^{M-(\ell+p+2)},
\end{equation}
with $b_\ell$ a fixed nonnegative integer. The anomaly site trace
contains fewer prescribed variations and obeys this bound as well.
Absorb the fixed factor $q^{-(\ell+p+2)}$ into $C_\ell$ and
integrate over $u$. Restoring the trace position, polynomial order,
and spacing in the notation gives
\begin{equation}
 |D^{\boldsymbol K}f_{y,M,h}|
       \le C_\ell h^{2p}(M+1)^{b_\ell}q^M.
 \label{app:a:diagonal-bound}
\end{equation}
The spacing $h$ is the ordinary sampling spacing or the coarse spacing
of a comparison with $n\ge1$; in the latter case the bound is uniform
in $s$. Raising the original-link derivative order changes $b_\ell$
and $C_\ell$, but leaves the factor $h^{2p}$.

Finally consider a contribution whose required paths do not fit in the
radial patch, including a winding contribution. It has
$hC_{\rm supp}(M+1)\ge r_0$. Estimate it at the actual field
using \eqref{app:a:polynomial-bound}, and multiply the right-hand
side by $[hC_{\rm supp}(M+1)/r_0]^{2p}\ge1$. This produces
the same $h^{2p}$ factor at the cost of increasing the polynomial
power of $M+1$. Thus \eqref{app:a:diagonal-bound} holds for both
short and long contributions. It gives $p=4$ for the current and
site trace and $p=3$ for the curvature. For the first comparison
$B_0$, use \eqref{app:a:polynomial-bound} directly: its spacing is
fixed at $h_0=1$, and no refinement gain is required.

The estimate also applies when the prescribed generators are supplied
independently of an original-link variation. At derivative order zero,
the proof uses only their maximum norms, the radial bound on the spatial
potential, and the operator gaps at the actual field and throughout the
local background interpolation. It uses no spatial derivatives of the
generators. Thus the same finite-polynomial estimate holds for samples
of any fixed smooth family of spatial connections once these bounds
have been checked; its constants then depend on that family. A site
generator may also depend on the sampling spacing. The operator and
Taylor estimates are linear in its maximum norm, so this dependence
introduces no further condition.

\subsection{Smooth extensions and convergence in the locality norms}

The bound \eqref{app:a:diagonal-bound} applies at one trace position
$y$, with the original variation indices fixed. To obtain the locality
norm, we must sum the positions that can contribute to a given original
output, as well as the original links on which further derivatives can
act. We must also define each coefficient for unrestricted values of
its finite set of input matrices. We first count these inputs, then
construct the smooth extension without losing the spacing or polynomial
order bounds.

Fix the first original variation index $I=(x,\mu,a)$. For total
polynomial order $M$ and spacing $h$, choose
$\rho_{h,M}=R_*+Ch(M+1)$ large enough to contain the complete
original-input neighborhood about the trace position $y$. The constant
$R_*$ includes the fixed enlargements needed for reconstruction and
the local gap, while the second term bounds the distance reached by the
polynomial paths. A contribution to this component vanishes unless
that neighborhood contains the original link in $I$. Its possible trace
positions therefore lie within a region of original four-volume at
most $C(1+\rho_{h,M})^4$. Since the trace positions have density
$h^{-4}$, their number is at most
\begin{equation}
 C h^{-4}(1+\rho_{h,M})^4.
 \label{app:a:density}
\end{equation}
For a comparison, these positions are coarse parent blocks; the 81 child
indices are already included in each block trace. The sampled links
changed by $D_I$ do not require another density factor. Their combined
effect entered the operator estimate
\eqref{app:a:operator-derivatives} through the maximum induced
generator norm, before the trace was taken.

All subsequent index sums concern the original lattice. The same
neighborhood contains at most $C(1+\rho_{h,M})^4$ original link
and generator indices, where $C$ includes the 24 generator directions.
Each of the $\ell$ derivatives in an ordered list can act only at one
of these indices. For curvature, the second variation index $J$ must
also lie there. Thus these sums add fixed powers of the number of
original inputs. On a torus, several lifted links can represent the same
periodic link. Differentiating that periodic link sums the derivatives
with respect to all its occurrences; the same count bounds these
sums, with the derivative order retained throughout.

The estimates so far hold on $\mathcal V_{\mathcal E}$, where the
required local plaquettes obey the outer admissibility bound. The
supremum in \eqref{eq:locality}, however, ranges over all independent
matrices on the finite input set $\mathcal E$. To supply a function
on that larger domain, we multiply each coefficient by a smooth factor
that equals one near every target admissible tuple and vanishes before
the local formulas lose their bounds. We can then set the product to
zero outside $\mathcal V_{\mathcal E}$. This changes neither the
coefficient nor its derivatives on the target domain.

The required plaquette set $\mathcal P$ was fixed with the input
neighborhood. For each $p\in\mathcal P$, define
$z_p=\tr_{\mathbf5}[(1-U(p))^\dagger(1-U(p))]$.
This is a smooth nonnegative function of the four input links of the
plaquette, and $\|1-U(p)\|^2\le z_p$. We use it to turn off
the coefficient while the plaquette is still inside the outer domain.
Define the smooth scalar function $\rho$ and the two thresholds by
\begin{equation}
 \rho(t)=
 \begin{cases}e^{-1/t},&t>0,\\0,&t\le0,\end{cases}
 \qquad u_-=\eps_{\rm out}^2/16,
 \qquad u_+=\eps_{\rm out}^2/8,
\end{equation}
and set
\begin{equation}
 \chi(z)=\frac{\rho(u_+-z)}
                   {\rho(u_+-z)+\rho(z-u_-)}.
 \label{poly:cutoff}
\end{equation}
The denominator is positive for every real $z$, because
$u_-<u_+$. Hence $\chi$ is smooth, equals one for $z\le u_-$,
and equals zero for $z\ge u_+$. Between these thresholds it provides
a smooth transition with $0\le\chi\le1$.

Both thresholds lie in the required places. On a target admissible
tuple, $\eps=\eps_{\rm out}/10$ gives
\begin{equation}
 z_p\le5\eps^2=\eps_{\rm out}^2/20<u_-.
\end{equation}
Every factor is therefore identically one in a neighborhood of that
tuple. Wherever a factor has nonzero support, including the closure of
that support, its plaquette satisfies
$\|1-U(p)\|^2\le z_p\le u_+<\eps_{\rm out}^2$.
Thus the transition ends strictly before the boundary of the local
domain on which the coefficient is defined.

Multiply the coefficient by
$\prod_{p\in\mathcal P}\chi(z_p)$ and extend the product by zero.
Its support closure lies inside $\mathcal V_{\mathcal E}$, so the
result is smooth on $\prod_{e\in\mathcal E}\SU(5)$. The operation
acts on a coefficient as a function of its original-link inputs. It
does not change the spatial connection used to compute that coefficient.
Some inputs may appear only in the cutoff plaquettes. They enter through
those plaquette matrices and require no further reconstruction or
enlargement of the neighborhood.

The extension retains the estimate needed for the sums. There are at
most $C(1+\rho_{h,M})^4$ cutoff factors. Their derivatives of each
fixed order are uniformly bounded, since every factor uses four link
matrices and the thresholds are fixed. Undifferentiated factors are at
most one. Distributing $\ell$ derivatives over the product therefore
introduces only a fixed power of the number of factors. In every
product-rule term, the original coefficient or one of its derivatives
remains, carrying $h^{2p}q^M$ from
\eqref{app:a:diagonal-bound}. Summing the derivative indices adds
only further powers of the number of original inputs, including links
that occur only in the cutoffs.

We can now combine the small-spacing factor, the input counts and the
smooth extensions in the norm \eqref{eq:locality}. That norm measures
the support from the first original variation link, whereas
$\rho_{h,M}$ was measured from the trace position. Because this
original link lies in the same neighborhood, centering the support on
it gives radius at most $2\rho_{h,M}+O(1)$. The fixed part of this
radius contributes only a constant to the exponential weight. The part
that grows with $M$ contributes
$\exp\{C_{\rm rad}\lambda h(M+1)\}$, for a fixed
$C_{\rm rad}>0$.

At each order $M$, the trace-position sum converts $h^{2p}$ to
$h^{2p-4}$. The remaining counts, including all derivative indices and
cutoff derivatives, are bounded by powers of $M+1$, since $h\le1$.
The complete contribution to the norm is consequently bounded by
\begin{equation}
 C_\ell h^{2p-4}(M+1)^{A_\ell}q^M
       \exp\{C_{\rm rad}\lambda h(M+1)\}.
 \label{app:a:summed-bound}
\end{equation}
Here the nonnegative integer $A_\ell$ collects those powers and can
grow with the fixed derivative order $\ell$. The contraction factor
$q<1$ does not change with $\ell$.

The support weight must leave enough decay in $M$ to sum these powers.
Choose one positive rate satisfying
$0<\lambda<-\log q/(2C_{\rm rad})$. Then, uniformly for $h\le1$,
$q^M\exp\{C_{\rm rad}\lambda h(M+1)\}\le Cq^{M/2}$.
The remaining exponential $q^{M/2}$ sums every fixed power of $M+1$.
Thus increasing the derivative order changes the bounding constant, but
does not require a smaller admissibility domain or a new locality rate.

Sum over the polynomial orders and trace positions, and integrate over
$0\le s\le1$ for the current. The current has $p=4$ and hence
$h^{2p-4}=h^4$; the curvature has $p=3$ and hence $h^2$. This gives
\begin{equation}
 \|B_0\|_{\lambda,\ell}\le C_\ell,\qquad
 \|B_n\|_{\lambda,\ell}\le C_\ell h_n^4\quad(n\ge1),\qquad
 \|F_h^*\Omega\|_{\lambda,\ell}\le C_\ell h^2.
 \label{app:a:scale-bounds}
\end{equation}
The sampled Wilson estimate applies at every $h=3^{-n}\le1$, including
$h=1$. For the first comparison $B_0$, use
\eqref{app:a:polynomial-bound} at its fixed spacing $h_0=1$ and
perform the same counts and extension. It needs only a finite bound.
All these estimates use domains and a positive locality rate common to
every fixed $\ell$.

The convergent bounds also give the finite-support decomposition required
in section~\ref{sec:locality-definition}. At each original output,
assign every extended contribution to the integer ceiling $r$ of its
chosen support radius and add the contributions with that value of $r$.
The supports were chosen independently of the field, so this grouping
is unchanged when the inputs vary or are differentiated. Even though
infinitely many refinement levels can enter one group, their
$\sum_{n\ge1}h_n^4$ bound is summable. Together with the remaining
$q^{M/2}$ decay, this gives uniform convergence with every fixed
number of derivatives on the unrestricted finite product of links in
the radius-$r$ neighborhood. Including the finite $B_0$ contribution,
write the resulting smooth coefficient as $j_{\mu,r}^a(x)$ and set
\begin{equation}
 (\mathfrak L_r)_\eta
    =\sum_{x,\mu,a}\eta_\mu^a(x)j_{\mu,r}^a(x),
 \qquad \mathfrak L_{\rm ext}=\sum_{r\ge0}\mathfrak L_r.
\end{equation}
Each $j_{\mu,r}^a(x)$ uses only finitely many original links near its
own output, and these coefficients satisfy \eqref{eq:current-locality}. The form
$\mathfrak L_r$ then sums these coefficients against the variation
at every output. On the target domain all the cutoffs are one, so
$\mathfrak L_{\rm ext}$ agrees with the measure term obtained from
\eqref{app:a:current-components}. This proves
Proposition~\ref{prop:locality}.

\subsection{Locality of the exterior derivative}

The smooth extension of the measure term gives a closed extension of
the overlap curvature by taking its exterior derivative. To use this
extension in the locality norm, we must check what differentiation does
to the center of a coefficient's support. One of the two derivative
terms retains the old center; the other makes a formerly differentiated
input link the first variation index of the two-form.

Write $j_{I,r}=j_{\mu,r}^a(x)$ and $j_I=\sum_rj_{I,r}$ for the
extended coefficients, where $I=(x,\mu,a)$. Consider $D_Ij_{J,r}$
at fixed $I$. The coefficient before differentiation was centered at
$J$. It can depend on the link in $I$ only if that link lies within
its radius-$r$ input neighborhood. Thus only $J$ within distance $r$
of $I$ can contribute, giving at most $C(1+r)^4$ possible indices
$J$. Each input was within $r$ of $J$, so the full support lies
within $2r+1$ of $I$. The extra constant allows for the endpoints
of the links in the graph-distance convention.

The estimate for $j_{J,r}$ with $\ell+1$ derivatives controls
$D^{\boldsymbol K}D_Ij_{J,r}$, where $\boldsymbol K$ is an ordered
list of length $\ell$. To compensate for the larger radius and the
sum over $J$, choose a reduced rate $0<\lambda_d<\lambda/2$.
Then
\begin{equation}
 \begin{split}
 &\sup_I\sum_{r\ge0}e^{\lambda_d(2r+1)}
       \sum_{J,\boldsymbol K}\sup_U
                 |D^{\boldsymbol K}D_Ij_{J,r}|\\
 &\quad\le C e^{\lambda_d}
       \|\mathfrak L_{\rm ext}\|_{\lambda,\ell+1}
       \sum_{r\ge0}(1+r)^4e^{-(\lambda-2\lambda_d)r}<\infty.
 \end{split}
 \label{eq:exterior-locality-bound}
\end{equation}
The leftover decay $e^{-(\lambda-2\lambda_d)r}$ sums the count of
possible centers. The other derivative term, $D_Jj_{I,r}$, is already
centered at $I$ and is bounded directly by the original locality norm.

An exterior derivative also contains the bracket of the two variation
vector fields. With structure constants defined by
$[T^a,T^b]=\sum_c f^{ab}{}_cT^c$, and with
$I=(x,\mu,a)$ and $J=(y,\nu,b)$, the full expression is
\begin{equation}
 (\dd\mathfrak L_{\rm ext})_{IJ}
  =D_Ij_J-D_Jj_I
       +\delta_{xy}\delta_{\mu\nu}\sum_c f^{ab}{}_c j_{(x,\mu,c)}.
\end{equation}
The sign follows from \eqref{eq:link-derivative}: the vector fields
generated by left multiplication of a link have the opposite bracket
sign to the matrices $T^a$. The exterior derivative subtracts their
vector-field bracket, giving the positive structure-constant term in
this formula, as in \eqref{eq:curvature}. That term couples indices
on the same original link and has fixed coefficients, so it satisfies
the same locality estimate without enlarging the support.

For example, $\lambda_d=\lambda/3$ works at every fixed derivative
order. The differentiated series therefore defines a smooth local
two-form on unrestricted links. It is closed because $\dd^2=0$.
On the target domain $\mathfrak L_{\rm ext}$ and all its derivatives
agree with the measure term, whose exterior derivative is $\Omega$
by differential integrability. Hence $\dd\mathfrak L_{\rm ext}$
is the required closed local extension of the curvature.

\subsection{Finite volume}

Locality also bounds how much a finite torus changes the current. Fix
a periodic original field, an original output link, and one lift of
that link to the covering space. Evaluate the current on the torus and
compare it with the same convergent finite-support expansion evaluated
on the periodic lift of the field. In the covering-space expression,
trace paths must close in the covering space. In the torus expression,
they may also close after a translation by a period. Both expressions
are functions of the original periodic field, so a derivative with
respect to one torus link acts on every lifted copy of that link in
the covering-space expression.

A contribution whose enlarged input neighborhood projects injectively
to the torus has the same local inputs and paths in both expressions,
and the two contributions agree. A difference requires either a path
that closes by winding around a period or distinct lifted inputs that
become identified on the torus. Winding paths retain their actual
parallel transports and fermion boundary signs. Thus the difference
between the currents comes entirely from terms large enough to reach
one of these identifications.

In terms of the support bound already used, such a term must obey
\begin{equation}
 Ch(M+1)\ge cL_{\min}-C'R_*-C'',
 \qquad L_{\min}=\min_\mu L_\mu,
\end{equation}
Here $L_\mu$ is the period in direction $\mu$, in original units,
and $c,C',C''$ are fixed positive constants. The terms involving
$R_*$ and the fixed offsets account for the enlargements of the
polynomial paths. For sufficiently large $L_{\min}$, they can be
absorbed into the period term, giving
$M+1\ge c'L_{\min}/h$ with fixed $c'>0$. The factor $1/h$
has a direct geometric meaning: traversing a fixed original length
requires more operator steps as the sampled lattice is refined.

Reserve a further fixed fraction of the exponential decay in $M$ for
this lower bound on the order, leaving the rest to sum the polynomial
factors from derivatives and input counts. For current comparisons with
$n\ge1$, the resulting tail bound is
\begin{equation}
 C_\ell\sum_{n\ge1}h_n^4
         e^{-\lambda_{\rm vol}L_{\min}/h_n}
       \le C'_\ell e^{-\lambda_{\rm vol}L_{\min}}.
 \label{app:a:finite-volume}
\end{equation}
The positive rate $\lambda_{\rm vol}$ can be chosen independently
of derivative order, because every fixed polynomial factor can still
be summed by the decay left in $M$. The first comparison $B_0$ has
the same exponential volume bound at its fixed spacing. For the sampled
curvature $F_h^*\Omega$ at one fixed spacing $h$, the corresponding
large-period bound is
$C_\ell h^2e^{-\lambda_{\rm vol}L_{\min}/h}$; no sum of
curvatures is needed. For bounded $L_{\min}$, the uniform total norm gives the same final
exponential volume bound for the summed current after increasing the
constant. For the sampled curvature in this range, we use only the
uniform $C_\ell h^2$ bound. The additional factor
$e^{-\lambda_{\rm vol}L_{\min}/h}$ is asserted only for sufficiently
large $L_{\min}$.

The derivative-index sums include all occurrences of an identified
periodic link, as required when differentiating the lifted expression.
These finite-volume bounds compare values and derivatives on the target
admissible domain, where every cutoff is identically one. Identifying
lifted input matrices outside that domain need not make the separately
chosen smooth extensions agree; no such agreement is needed for this
comparison.

\section{Trace bounds on a fixed connection cylinder}
\label{app:d:cylinder-traces}

The phase corrections in section~\ref{sec:global} contain the same
two-link and site-generator traces estimated in
appendix~\ref{app:a:locality}. We apply that finite-polynomial estimate
to the fixed connection cylinder and prove
Lemma~\ref{lem:fixed-family-traces}. Only the sampling spacing tends to
zero. Constants $C_\gamma$ may depend on the chosen cylinder, loop,
original spacing and finite volume, but are uniform in $h,r,t$.
No original-link derivatives are needed here.

\begin{proof}
Write $H=H[U_h(r,t)]$ and $P=\widehat P_-[U_h(r,t)]$.
The ordinary curvature $\kappa_h$ has the two prescribed positive-link
generators $\Xi_r,\Xi_t$, since
$\partial_\tau P=\delta_{\Xi_\tau}P$. At the auxiliary boundary,
$m_h^1$ has the site generator $A_t(1,t)$.
Section~\ref{sec:global} bounds their maximum norms uniformly in
$h,r,t$. We use only these bounds, with no spatial derivatives or
positive powers of $h$ assigned to the generators.

Expand the traces as in \eqref{app:a:diagonal} and its site-insertion
counterpart. The full trace of the constant-projector commutator is
removed before the site decomposition. The reference subtraction gives
no additional site term because $P_+-\tfrac12=\gamma_5/2$ has zero
spin trace against a color generator. To use the estimate of
appendix~\ref{app:a:locality}, we check its actual-field gap, radial
bound and gap throughout the local background interpolation.

The cylinder has bounded spatial curvature. Its sampled plaquettes
therefore differ from the identity by $O_\gamma(h^2)$, and the ordinary
Wilson gap holds everywhere on the cylinder for sufficiently small $h$.
Choose a fixed radial-patch radius $r_0$ within the common spatial chart
and injectivity margins. At a trace position $y_0$, spatial radial
transport gives
\begin{equation}
 \|A^{\rm rad}_\mu(y_0+w)\|\le C_\gamma|w|.
\end{equation}
An order-$M$ contribution uses at most $2M+3$ operator steps.
For $hC(M+1)<r_0$, choose the finite input and full-output sets of
appendix~\ref{app:a:locality} inside this patch. Their choice depends
only on the allowed steps, not on the link values.

Transform the links and their already specified insertion fields into
these radial bases. In particular, $A_t(1,t)$ is conjugated as the insertion
in $m_h^1$; it is not recomputed as a temporal connection coefficient
in the new bases. Introduce the same local family
$\dd_y+uA^{\rm rad}$, $0\le u\le1$, holding the numerical generators
on positive links and at sites fixed in $u$. Reverse-link variations
remain $\delta_\Xi V(u)^{-1}=-V(u)^{-1}\Xi$.
The family has identity background transports at $u=0$ and reproduces
the required contribution at $u=1$.

It remains to check its gap. Let $K_{\rm out}(u)$ retain every output
of spinors supported in the chosen input set. At $u=0$, extend an input
spinor by zero and apply the free Wilson gap to obtain
$\|K_{\rm out}(0)\psi\|\ge\|\psi\|$. The patch has no winding
paths, so any fermion boundary signs within it can be removed by fixed
signs on its components. Each edge has length $h$, while the radial
potential is bounded by $C_\gamma r_0$. Thus
\begin{equation}
 \|K_{\rm out}(u)-K_{\rm out}(0)\|
 \le C_\gamma h r_0\le\tfrac12
\end{equation}
for all sufficiently small $h$, uniformly in $u,r,t,M$. The full-output
gap is at least $1/2$. Together with the fixed upper operator bound,
this supplies a common contraction $q_\gamma<1$ for the finite matrices
used in appendix~\ref{app:a:locality}.

All inputs to its Taylor estimate have now been checked. There are two
prescribed color factors in the projector-variation trace, and one in
a site trace. Ordered color cancellation therefore requires, respectively,
three or four derivatives of the background transports. Applying
\eqref{app:a:diagonal-bound} at original-link derivative order zero gives
the following bounds. Here $\Phi^{\rm links}_{y_0,M}$ and
$\Phi^{\rm site}_{y_0,M}[\xi]$ denote the real finite diagonal
contributions at trace position $y_0$ and total polynomial order $M$,
with two prescribed link generators or one site generator $\xi$,
respectively. The fixed nonnegative powers $b,b'$ account for
the finite products and their derivatives:
\begin{equation}
 \begin{aligned}
 |\Phi^{\rm links}_{y_0,M}|&\le
 C_\gamma h^6(M+1)^bq_\gamma^M,\\
 |\Phi^{\rm site}_{y_0,M}[\xi]|&\le
 C_\gamma h^8(M+1)^{b'}q_\gamma^M\|\xi\|_\infty .
 \end{aligned}
\end{equation}
For the remaining orders, $hC(M+1)\ge r_0$, the actual-field gap
bounds the finite contribution. The long-order argument following
\eqref{app:a:diagonal-bound} gives the same estimates after increasing
$b,b'$. It includes winding contributions with their actual holonomies
and fermion boundary signs.

Sum the polynomial orders and then the $h^{-4}\prod_\mu L_\mu$
fine sites. This is the only fine-site count: the link variations were
already bounded by maximum generator norms. We obtain
\begin{equation}
 \begin{aligned}
 \left|\ii\Tr\{P[\partial_rP,\partial_tP]\}\right|&\le C_\gamma h^2,\\
 \left|\ii\Tr\{(P-P_+)R(\xi)\}\right|
       &\le C_\gamma h^4\|\xi\|_\infty .
 \end{aligned}
\end{equation}
The first bound is $|\kappa_h|\le C_\gamma h^2$. Taking
$\xi=A_t(1,t)$ in the second gives $|m_h^1|\le C_\gamma h^4$.

The site estimate requires only a maximum norm, so it also applies to
an $h$-dependent generator field. For the connection reconstructed from
a fixed original field $U$, take the copied Ward parameter
$\xi_{h_{n_{\max}}}=\omega_{n_{\max}}$. Copying preserves
$\|\omega_{n_{\max}}\|_\infty=\|\omega\|_\infty$, even across
changes of original ancestor. Hence the remaining anomaly in
\eqref{eq:ward-telescope} is $O(h_{n_{\max}}^4)$ and tends to zero.
\end{proof}

\section{A polynomial representation of the measure current}
\label{poly:section}

The measure current is already defined, but evaluating it still involves
reconstructed connections, transport equations, and infinitely many
comparisons. We now express this same current as a convergent series of
polynomials in the original link matrices and their adjoints. The
geometric construction enters the numerical coefficients. Each polynomial
then has explicit dependence on the field being evaluated.
This gives an exact representation on the admissible domain and bounds
for finite approximations, without expanding in the gauge coupling or
around the trivial field.

The construction has two steps. First, take a contribution depending on
finitely many links and average it with a polynomial weight. The result
is a polynomial in those links. Sharpening the weight makes the average
approach the original contribution. Second, write this limit as an
initial average plus successive corrections, and sum over the spatial
ranges of the contributions. The locality bounds control that last sum.
We preserve the gauge and lattice transformation laws throughout. The
full sum equals the previously constructed current, so its integrability
conditions are unchanged; a finite approximation need not satisfy them
exactly. Appendix~\ref{phys:section} uses these polynomials to study
the classical continuum limit and finite approximations to fermion
amplitudes.

\subsection{Local contributions and their symmetries}

We start with the local decomposition from
appendix~\ref{app:a:locality}. Fix an original positive edge
$e=(x,x+\hat\mu)$ and write $U_e=U(x,\mu)$. The integer $r\ge0$
now denotes an input radius in original lattice units. The contribution
at this radius depends on a finite set $S_{e,r}$ of links within radius
$r$ of $e$, including $e$ itself. On a finite torus, $S_{e,r}$
contains the actual independent periodic links: different copies of one
link on the covering lattice are identified before any averaging.
The contribution to the measure term is
\begin{equation}
 j_\mu^a(x;U)=\sum_{r\geq0}j_{\mu,r}^a(x;U),\qquad
 \mathfrak L_{e,r}(\eta)
   =\sum_{a=1}^{24}\eta_\mu^a(x)j_{\mu,r}^a(x;U).
\end{equation}
The label $e$ has two roles that must be distinguished. It is one of the
links on which the coefficient can depend, but it also specifies the
variation that this part of the measure term multiplies. Thus
$\mathfrak L_{e,r}$ depends only on input links in $S_{e,r}$, while
its tangent argument is only $\eta_\mu(x)$. Summing over all output
edges and radii recovers $\mathfrak L_\eta$.

Averaging will vary the input matrices independently over the whole group,
including tuples that are not admissible. This is why we need the smooth
extensions constructed in appendix~\ref{app:a:locality}. Its plaquette
cutoff \eqref{poly:cutoff} equals one near the target admissible domain
and turns off the local coefficient before its geometric formulas or
operator gap can fail. Extending the product by zero gives a smooth
function on $\SU(5)^{S_{e,r}}$, the space of independent input matrices.
This extends the coefficient, without extending the geometric construction
to arbitrary link fields.

We need to approximate both the coefficient and its derivatives. For a
scalar function $f$ of links in a finite set $S$ and a nonnegative integer
$\ell$, define
\begin{equation}
 \|f\|_{S,\leq\ell}
 =\sum_{d=0}^{\ell}\ \sum_{I_1,\ldots,I_d}
       \sup_{U\in\SU(5)^S}|D_{I_1}\cdots D_{I_d}f(U)|.
\end{equation}
Here $I=(y,\nu,b)$ specifies an input link and Lie-algebra generator,
and $D_I$ is the original-link derivative of
\eqref{eq:link-derivative}. The inner sum includes all ordered lists of
$d$ such indices in $S$; at $d=0$ it is just $\sup|f|$. The
supremum is over independent $\SU(5)$ matrices, with no admissibility
restriction. We omit $S$ when it is understood. Summing derivative
indices is useful because the averaging changes all the inputs: the
response to those changes is already included in this norm.

The locality estimates give, for every fixed $\ell$,
\begin{equation}
 \sup_{x,\mu,a}\sum_{r\geq0}e^{\lambda r}
       \|j_{\mu,r}^a(x;\cdot)\|_{\leq\ell}\leq C_\ell,
 \qquad |S_{e,r}|\leq C(1+r)^4 .
 \label{poly:input-bound}
\end{equation}
The number $|S_{e,r}|$ counts the input links. One positive decay rate
$\lambda$ works at every fixed derivative order, independently of
lattice volume and admissible component; $C_\ell$ may depend on that
order. Thus contributions with large input radius are exponentially
small even after all derivative locations have been summed.

Before approximating, we arrange that the contribution at each radius
has the transformation laws of the full current. For gauge covariance,
average each $\mathfrak L_{e,r}$ over gauge transformations at the
vertices of its input links, using normalized Haar measure. Transform
the input links and the tangent argument together. Acting with another
gauge transformation then only changes the integration variables, so the
averaged one-form is gauge invariant, or equivalently its current
coefficients are gauge covariant.

Translations use the same translated rule at every output site. For
coordinate permutations, reflections, and complex conjugation, transform
all output coefficients together, moving their output edges and input
sets along with them. Average over this finite symmetry group, including
the axial signs and reversed-link transports of
\eqref{eq:axial}--\eqref{eq:axial-components}. This is an operation
on the current at all output links. Give the resulting contributions
a common radius containing the transformed inputs. The support
enlargement and averages preserve \eqref{poly:input-bound}.

The full current already obeys these laws on the admissible domain, so
averaging its decomposition does not change its sum there. We use this
symmetrized decomposition from now on and write
\begin{equation}
 \mathfrak L_r=\sum_e\mathfrak L_{e,r}.
\end{equation}
Each $\mathfrak L_{e,r}$ has finitely many inputs, whereas
$\mathfrak L_r$ sums the output edges of the whole lattice. Its
coefficients at all sites and directions transform as a current. A
coordinate permutation that changes the periods or fermion boundary
signs compares the corresponding tori with transformed boundary
conditions, as in section~\ref{sec:symmetries}.

\subsection{Polynomial averaging of the measure term}

Consider first a smooth scalar function $f$ of one link matrix $U$.
We will average its values at other matrices $V$, using a weight that
is polynomial in both $U$ and $V$. After integration, the factors
in $V$ become numerical coefficients and the factors in $U$ remain.
This produces a polynomial even when $f$ itself is not polynomial.

Let $G=\SU(5)$ and normalize Haar measure by $\int_G\dd g=1$.
For a positive integer $m$, define
\begin{equation}
 b(g)=\frac{1+\operatorname{Re}\tr(g)/5}{2},\qquad
 Z_m=\int_Gb(g)^m\,\dd g,\qquad k_m(g)=Z_m^{-1}b(g)^m .
 \label{poly:kernel}
\end{equation}
Here the trace is in the fundamental representation. The function
$b$ lies between zero and one, and $Z_m$ makes the nonnegative weight
$k_m$ integrate to one. The weight is polynomial in the entries of
$g$ and $g^\dagger$. It is unchanged by conjugation
$g\mapsto hgh^{-1}$, inversion, or complex conjugation. A function
unchanged by conjugation is called a class function.

Hold $U$ fixed and average $f(gU)$. Changing the integration variable
to $V=gU$ gives
\begin{equation}
 \int_G k_m(g)f(gU)\,\dd g
       =\int_G k_m(VU^{-1})f(V)\,\dd V .
 \label{poly:one-link-average}
\end{equation}
The right-hand side makes the polynomial dependence visible: $f(V)$
does not depend on $U$, and all the $U$ dependence is in the weight.
Expand that weight into finitely many terms, with index set $A_m$:
\begin{equation}
 \begin{split}
 k_m(VU^{-1})
    &=\sum_{\alpha\in A_m}
        p_{m,\alpha}(U,U^\dagger)q_{m,\alpha}(V,V^\dagger),\\
 \int_G k_m(VU^{-1})f(V)\,\dd V
    &=\sum_{\alpha\in A_m}p_{m,\alpha}(U,U^\dagger)
       \int_G q_{m,\alpha}(V,V^\dagger)f(V)\,\dd V .
 \end{split}
 \label{poly:one-link-polynomial}
\end{equation}
The functions $p_{m,\alpha}$ and $q_{m,\alpha}$ are polynomials
collecting the $U$ and $V$ factors in each term. Every integral in
the last line is a number independent of $U$. This step did not expand
$f$ about the identity or restrict $U$ to be close to it.

For these polynomial averages to recover $f(U)$, the weight must
concentrate on $V$ close to $U$ as $m$ grows. Equivalently, $g$ must
concentrate near the identity. Using the Frobenius matrix norm
$\|M\|_F^2=\tr(M^\dagger M)$, we have
\begin{equation}
 1-b(g)=\|1-g\|_F^2/20 .
\end{equation}
Thus $b$ reaches its unique maximum at $g=1$. Let $d_G$ denote
distance measured by the shortest path on the group, using the metric
from the inner product $\langle\cdot,\cdot\rangle$ of
section~\ref{sec:measure}. This metric is invariant under left and right
multiplication. Near the identity, $1-b(g)$ is bounded above and below
by positive constants times $d_G(g,1)^2$. Consequently $b(g)^m$
is concentrated within distance $O(m^{-1/2})$. Away from the identity,
$b$ is uniformly less than one, so this part of the integral decreases
exponentially in $m$. Integration gives
\begin{equation}
 Z_m\geq c m^{-12},\qquad
 \int_Gd_G(g,1)k_m(g)\,\dd g\leq C m^{-1/2}.
 \label{poly:concentration}
\end{equation}
The constants $c,C>0$ are independent of $m$. The power $m^{-12}$
is the volume of a region of width $m^{-1/2}$ in the 24-dimensional
group. After normalization, the second bound states that the average
link displacement is $O(m^{-1/2})$. Multiplying this displacement
by a derivative bound for $f$ controls the approximation error.

The measure term is linear in a link variation, so averaging only its
scalar coefficients would miss part of the transformation. At the
output link, write $\delta U=\eta U$ as before. Holding $g$ fixed,
left multiplication $U\mapsto gU$ changes this variation to
$\delta(gU)=(g\eta g^{-1})(gU)$. Right multiplication
$U\mapsto Ug$ instead gives $\delta(Ug)=\eta(Ug)$.
Thus left averaging rotates the generator, whereas right averaging
does not. The pullback of a one-form means evaluating it on both the
transformed links and this transformed variation.

For an input set $S$ and a link $f\in S$, let $L_{f,g}$ replace
$U_f$ by $gU_f$ and $R_{f,g}$ replace it by $U_fg$, leaving the
other links fixed. Link reversal exchanges the two operations, since
$(gU_f)^{-1}=U_f^{-1}g^{-1}$. We therefore average them with equal
weight at each link:
\begin{equation}
 \mathcal H_m^S
 =\prod_{f\in S}\left\{\frac12\int_G k_m(g)
                 (L_{f,g}^*+R_{f,g}^*)\,\dd g\right\}.
 \label{poly:pullbacks}
\end{equation}
The stars denote the pullbacks just described. The product means that
we perform this averaging once for every input link; the operations
at different links commute. An added link on which the coefficient does
not depend and at which the form has no tangent component contributes
an identity operation. We may therefore enlarge input sets when needed
and omit the superscript $S$ when it is understood.

To see what this does to a current coefficient, take a real one-form
$\beta$ on $G^S$ and write
\begin{equation}
 \beta(\eta)=\sum_{\substack{x,\mu\\(x,x+\hat\mu)\in S}}
       \sum_{a=1}^{24}\eta_\mu^a(x)b_\mu^a(x;U),\qquad
 b_\mu(x;U)=\sum_a b_\mu^a(x;U)T^a .
\end{equation}
The matrices $T^a$ are the same orthonormal Lie-algebra basis used for
the current. The components $b_\mu^a$ here are real scalar functions;
$b_\mu$ is their Lie-algebra-valued matrix. Define
$\Theta_\mu(x)=\dd U(x,\mu)U(x,\mu)^{-1}$. This matrix-valued
one-form returns the generator when evaluated on a variation:
$\Theta_\mu(x)(\eta)=\eta_\mu(x)$. Hence
\begin{equation}
 \beta=\sum_{(x,x+\hat\mu)\in S}
                 \langle b_\mu(x),\Theta_\mu(x)\rangle .
\end{equation}
Left multiplication gives $L_g^*\Theta=\Ad_g\Theta$, where
$\Ad_g X=gXg^{-1}$. Invariance of the inner product moves this
rotation onto the coefficient as $\Ad_{g^{-1}}b_\mu$. Right
multiplication leaves $\Theta$ unchanged. This accounts for the
extra matrix factor in the averaging formula below.

Fix the output edge $e=(x,x+\hat\mu)$ and change the integration
variables to the translated input matrices $V=(V_f)_{f\in S}$.
Writing $\dd V=\prod_{f\in S}\dd V_f$, the matrix coefficient of
the averaged form is
\begin{equation}
 \begin{split}
 (\mathcal H_m\beta)_\mu(x;U)=\frac12\int_{G^S}
 &\prod_{f\in S}k_m(U_f^{-1}V_f)\\[-2pt]
 &\times\left[b_\mu(x;V)
       +U_eV_e^{-1}b_\mu(x;V)V_eU_e^{-1}\right]\,\dd V .
 \end{split}
 \label{poly:coefficient-integral}
\end{equation}
The notation $(\mathcal H_m\beta)_\mu(x;U)$ means the matrix
coefficient multiplying $\Theta_\mu(x)$, in the same way that
$b_\mu$ is the coefficient of $\beta$. The field $U$ is held fixed;
only $V$ is integrated. Right multiplication at the output gives
$b_\mu(x;V)$. Left multiplication gives
$\Ad_{g^{-1}}b_\mu(x;V)$ with $g=V_eU_e^{-1}$, which is the
second bracketed term. At other input links there is no tangent
component for this output. The class-function property equates their
left and right scalar averages and gives the common kernel product.

Expanding the kernels now gives polynomials in $U_f,U_f^\dagger$,
with numerical integrals as coefficients. Each kernel has degree at most
$m$ in its link argument, and the output rotation adds at most two
factors of $U_e$ or $U_e^\dagger$. The total degree is therefore at
most $m|S|+2$. The kernel symmetries ensure that $\mathcal H_m$
commutes with gauge transformations, link reversal, the lattice
transformations, and complex conjugation. It preserves reality and
introduces no dependence on links outside $S$. As an average of
pullbacks, it also commutes with exterior differentiation.

We must control approximation errors in derivatives as well as values.
Use $\|\beta\|_{\le\ell}=\sup_{x,\mu,a}
\|b_\mu^a(x;\cdot)\|_{S,\le\ell}$, with the derivative norm
defined above. Under a product of link translations, each derivative
generator is either unchanged or rotated in the adjoint representation.
At order $\ell$ there are at most $\ell$ such indices, together
with the one current index. The rotations therefore cost a constant
$C_\ell$ depending on the fixed 24-dimensional Lie algebra and the
derivative order, not on the number of input links.

To compare a translated coefficient with the original one, change the
links successively. For each multiplying matrix $g$, follow a shortest
group path from $1$ to $g$ and integrate the derivative along it.
The resulting difference is bounded by path length times a link
derivative. If a coefficient or derivative index is also rotated, its
rotation changes only when that link is the output or one of the
$\ell$ differentiated links. There are at most $\ell+1$ such
occurrences. The derivative norm already sums over every changed link,
so this estimate brings no additional factor of $|S|$.
Averaging the path lengths with \eqref{poly:concentration} gives
\begin{equation}
 \|\mathcal H_m\beta\|_{\leq\ell}
       \leq C_\ell\|\beta\|_{\leq\ell},
 \qquad
 \|\mathcal H_m\beta-\beta\|_{\leq\ell}
       \leq C_\ell m^{-1/2}\|\beta\|_{\leq\ell+1}.
 \label{poly:approximation}
\end{equation}
The first estimate keeps the averaging uniformly bounded; the second
makes it approach the original one-form as $m\to\infty$. At every
fixed derivative order, one additional derivative bounds the error.

\subsection{The polynomial series and finite approximations}

We now apply these averages to $\mathfrak L_{e,r}$ and write the
limit as a series. The parameter $r$ still controls the spatial input
radius. A new index $k$ controls how sharply we concentrate the
averaging weight; it is not the refinement level $n$ used to construct
the original current. Choose $m=4^k$, for which the approximation
error is $O(2^{-k})$, and define
\begin{equation}
 \begin{split}
 \mathfrak P_{e,r,-1}&=\mathcal H_1\mathfrak L_{e,r},\\
 \mathfrak P_{e,r,k}&=(\mathcal H_{4^{k+1}}-\mathcal H_{4^k})
                         \mathfrak L_{e,r}\quad(k\geq0),\\
 \mathfrak P_{r,k}&=\sum_e\mathfrak P_{e,r,k}.
 \end{split}
 \label{poly:blocks}
\end{equation}
The index $k=-1$ denotes the initial average $\mathcal H_1$;
$k\ge0$ labels the change between two successive averages. At output
$e=(x,x+\hat\mu)$, denote their scalar and matrix current
coefficients by
\begin{equation}
 \mathfrak P_{e,r,k}(\eta)
   =\sum_a\eta_\mu^a(x)j_\mu^{a;(r,k)}(x;U),\qquad
 j_\mu^{(r,k)}(x;U)=\sum_a j_\mu^{a;(r,k)}(x;U)T^a .
\end{equation}
Each term at fixed $e,r,k$ depends on a finite set of input links.
Summing over all $e$ gives a polynomial contribution to the current
on the whole lattice, with the same gauge, axial, and
complex-conjugation laws as the exact current. It is this collection
of output coefficients, transformed together, that carries the lattice
symmetries.

The spatial bound \eqref{poly:input-bound} controls the sum over $r$,
and the approximation bound controls the sum over $k$. With the fixed
positive rate $\lambda_{\rm pol}=\lambda/2$, they give
\begin{equation}
 \sup_{x,\mu,a}\sum_r e^{\lambda_{\rm pol}r}
    \|j_\mu^{a;(r,k)}(x;\cdot)\|_{\leq\ell}
 \leq C_\ell'2^{-k}\quad(k\geq0).
 \label{poly:block-bound}
\end{equation}
The initial average has the analogous finite bound without the
$2^{-k}$ factor. Both sums therefore converge absolutely, including
every fixed order of original-link derivatives and their summed
positions, at one positive exponential locality rate.

The cancellation between successive averages identifies their sum.
For example, the initial term plus the first correction is
$\mathcal H_1\mathfrak L_{e,r}
 +(\mathcal H_4-\mathcal H_1)\mathfrak L_{e,r}
 =\mathcal H_4\mathfrak L_{e,r}$.
In general, summing through $k=k_{\max}\ge0$ gives exactly
$\mathcal H_{4^{k_{\max}+1}}\mathfrak L_{e,r}$, whose limit is
$\mathfrak L_{e,r}$. Summing the radii and outputs gives
\begin{equation}
 \mathfrak L
   =\sum_{r\geq0}\left(\mathfrak P_{r,-1}
                     +\sum_{k\geq0}\mathfrak P_{r,k}\right).
 \label{poly:identity}
\end{equation}
This identity holds throughout the admissible domain, in every fixed
original-link derivative norm at the stated positive locality rate.
It establishes equality with the comparison current. Its curvature, gauge variation,
and integrals along paths are consequently the same. In particular,
the Wilson line $W[\gamma]$ is unchanged for every admissible loop.
The averaging and the sum of corrections are illustrated in
Fig.~\ref{fig:polynomial-group-averaging}.

\begin{figure}[htbp]
  \centering
  \includegraphics[width=\linewidth]{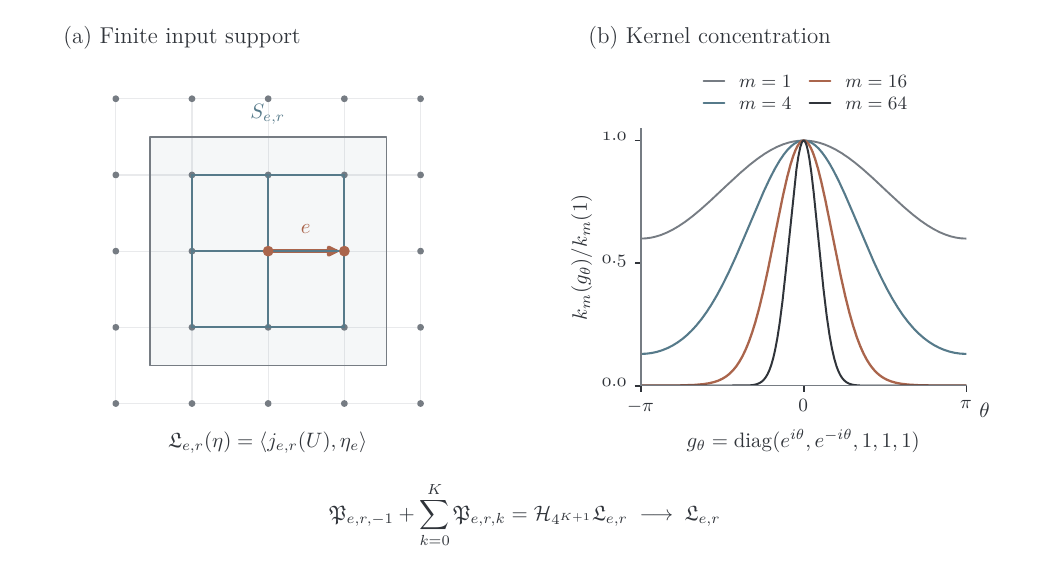}
  \caption{(a) The contribution $\mathfrak L_{e,r}$ depends on links in $S_{e,r}$ within radius $r$ of output edge $e$. Here $j_{e,r}=\sum_a j_{\mu,r}^a(x)T^a$ and $\eta_e=\eta_\mu(x)$ for $e=(x,x+\hat\mu)$. (b) The relative weights $k_m(g_\theta)/k_m(1)$ sharpen as $m$ increases; they are not a one-dimensional Haar density. The operator $\mathcal H_m$ averages the whole measure term, including its variation argument. Its polynomial weight leaves polynomial dependence on the original links and their adjoints, with numerical coefficients supplied by integration. Below, the initial polynomial $\mathfrak P_{e,r,-1}$ and corrections through $k=K$ give a sharper average, approaching $\mathfrak L_{e,r}$ as $K$ increases.}
  \label{fig:polynomial-group-averaging}
\end{figure}

For a finite approximation, choose nonnegative integers
$r_{\max},k_{\max}$ and retain
\begin{equation}
 \mathfrak L^{[r_{\max},k_{\max}]}
   =\sum_{r=0}^{r_{\max}}\left(\mathfrak P_{r,-1}
                       +\sum_{k=0}^{k_{\max}}\mathfrak P_{r,k}\right).
\end{equation}
The corresponding components are
$j_\mu^{a;[r_{\max},k_{\max}]}(x;U)$. Square brackets denote a
partial sum; parentheses $(r,k)$ label one polynomial contribution.
These cutoffs independently limit spatial range and the accuracy of
the averaging.

Use the locality norm \eqref{eq:locality}, summed over derivative
orders through $\ell$:
$\|\beta\|_{\lambda',\le\ell}
 =\sum_{d=0}^\ell\|\beta\|_{\lambda',d}$.
As before, it refers to the specified finite-range decomposition at
each output. The omitted radii have an exponentially small tail, and
the omitted accuracy corrections have a geometric tail. For each fixed
$\ell$ and $0\le\lambda'<\lambda_{\rm pol}$,
\begin{equation}
 \|\mathfrak L-\mathfrak L^{[r_{\max},k_{\max}]}\|_{\lambda',\leq\ell}
 \leq C_{\ell,\lambda'}
       \left(e^{-(\lambda_{\rm pol}-\lambda')r_{\max}}+2^{-k_{\max}}\right).
 \label{poly:prefix-error}
\end{equation}
At $\lambda'=0$ this bounds each current component, and the higher
orders bound its summed link derivatives. Finite sums retain the exact
transformation laws when their polynomial coefficients are defined by
the exact integrals. They need not satisfy the covariant divergence,
curvature, or full Wilson-line identities exactly. Those identities
hold for the infinite sum; \eqref{poly:prefix-error} controls the
current error when only finitely many terms are retained.

\subsection{Determining the numerical coefficients}

The coefficients in \eqref{poly:coefficient-integral} are numbers
independent of the field $U$ to be evaluated. Their defining integrals
still contain the geometric construction of the current, however, so
we must explain how to determine those numbers with controlled errors.
There are two tasks: approximate the integrand uniformly by finitely
many comparison calculations, then evaluate the remaining finite group
integral. This concerns the computability of the prescription, not its
practical efficiency.

All choices are fixed in original units of spacing one: the geometric
weights, the allowed operator paths, the symmetry averages, and the
plaquette cutoff. We first obtain explicit margins that keep the
geometric and spectral formulas away from their singularities. The
free comparison matrices depend continuously on momentum and the
comparison parameter in a compact domain. Subdivide that domain and
bound their entries on each piece by interval arithmetic, meaning
upper and lower enclosures for each numerical operation. The continuity
estimates then give a lower gap bound throughout the piece. The proved
positive gap ensures that a sufficiently fine finite subdivision will
certify a positive rational lower bound everywhere. This bound fixes
a localization length and a rational outer plaquette threshold within
the geometric and spectral margins, once for all refinement levels
and derivative orders.

These margins also control the derivatives needed for an error bound.
For each requested fixed order, bound the derivatives of the geometric
weights in appendix~\ref{app:g:weights} and the cutoff
\eqref{poly:cutoff}. Their explicit exponential and integral formulas
give rational upper bounds: use the endpoint estimates near the ends
of their supports and interval bounds on the remaining compact regions.
For matrix inverses, roots, and logarithms in the reconstruction, the
fixed margins allow the resolvent identities and product rule to bound
the derivatives. Differentiating the transport equations then carries
these bounds along the inherited parent-to-child paths. Their total
length is at most two in original units, independently of refinement,
so this step does not accumulate an increasing transport length.

Next truncate the two sums used to evaluate a local comparison
coefficient. The first is over refinement levels $n$. The second is
over the total order $M$ in the sign-function polynomial expansion of
appendix~\ref{app:a:locality}. This order $M$ is distinct from the
averaging power $m$ and accuracy index $k$ used in this appendix.
Choose a rational number strictly between the contraction factor in
\eqref{app:a:common-q} and one. At any fixed derivative order, counting
derivative positions in the finite matrix products and original links
in their inputs gives a computable constant and an integer power of
$M+1$. After the trace-site sum, the bound at each $n\ge1$ contains
$3^{-4n}$ times this power of $M+1$ and the chosen rational number
to the power $M$.

The first comparison $n=0$ has its own bound with the same exponential
decay in $M$. Cutoff products and symmetry averages change only the
computable constants and polynomial powers. Thus discarded refinement
levels are bounded by a geometric series in $3^{-4n}$, while discarded
polynomial orders are bounded by a polynomially weighted geometric
series in $M$. Given a requested uniform error, choose both truncations
to fit within it. The remaining calculation involves finitely many
matrix operations, transport equations, and parameter integrals, each
with bounded error. The argument works at any fixed derivative order
and gives a uniform approximation to the coefficient integrand.

The uniform integrand error must be converted into an error on each
numerical coefficient. For example, in
\eqref{poly:one-link-polynomial}, replacing $f$ by an approximation
$\widetilde f$ with $\sup|\widetilde f-f|\le\delta$ changes the
coefficient of $p_{m,\alpha}$ by at most
$\delta\int_G|q_{m,\alpha}(V,V^\dagger)|\,\dd V$.
The multiplier is bounded using the known polynomial
$q_{m,\alpha}$ on the compact group. The same argument applies to
each scalar component in \eqref{poly:coefficient-integral}.
Choose the integrand accuracy so that this contribution uses only part
of the assigned coefficient error.

It remains to integrate over the finite compact product $G^S$.
The continuity bounds control the numerical integration error, which
must fit within the remaining error allowance. Each retained polynomial has finitely many monomials of
known degree, so coefficient errors give bounds on its value and any
requested fixed order of link derivatives. Add this numerical error
to the truncation bound \eqref{poly:prefix-error}; the latter alone
assumes exact polynomial coefficients.

The approximate measure term must remain real if its line integral is
to define a phase of modulus one. This can be enforced by rounding
conjugate-paired monomials together, or by taking the real part of each
scalar current component in the fixed Lie-algebra basis. Reality by
itself does not preserve the exact gauge and lattice laws. Those laws
hold for the polynomials with exactly defined coefficients; errors from
numerically evaluating the coefficients are controlled separately.

There is one further choice when applying the rule on finite tori.
Choose the input neighborhoods, cutoff plaquettes, and assigned radii
on the covering lattice before imposing periodic identifications. The
input sets must contain all reconstruction and local gap neighborhoods.
If this enlarged support has no periodic identifications, the same
coefficients apply on every torus containing it. For a set that wraps
around the torus, integrate over the actual independent periodic link
variables. Its finite graph and fermion boundary signs may then label
the numerical coefficients. Impose those identifications before
averaging: two lifted copies of one periodic link are one integration
variable.

The exponential support bounds remain uniform in either case, and the
rule contains no continuous physical spacing. It specifies the
coefficients to arbitrary accuracy, with bounds on both discarded terms
and numerical evaluation. The practical computational cost remains to
be established.

For the $\SU(N_{\rm c})$ theories covered by
section~\ref{sec:other-representations}, replace the averaging weight
by $b(g)=(1+\operatorname{Re}\tr(g)/N_{\rm c})/2$ and carry out
the same construction. The concentration width is still $m^{-1/2}$,
but the group dimension is $N_{\rm c}^2-1$. Hence
$Z_m\ge c\,m^{-(N_{\rm c}^2-1)/2}$, while the approximation error
still decreases as $m^{-1/2}$. The constants may depend on the fixed
group and representation; the convergence and finite-error arguments
are otherwise unchanged.

\section{Classical continuum estimates and finite-cutoff amplitudes}
\label{phys:section}

Appendix~\ref{poly:section} expresses the measure current as an exact
series of polynomials in the original links. We now use that representation
to answer two questions. On a fixed smooth physical gauge background,
does the measure term survive as the original lattice spacing tends to
zero? On a fixed lattice, how accurately does a finite polynomial sum
reproduce the measure and fermion amplitudes? These are different limits.
Neither changes the role of auxiliary refinement, which has already
been used to construct the exact current.

For the classical continuum limit, restore the original physical spacing
$a$. The site $x$ has physical position $ax$, the periods
$\ell_\mu=aL_\mu>0$ are held fixed, and the auxiliary lattice at level
$n$ has physical spacing $ah_n$. Let $A$ range over a compact smooth
family of $\SU(5)$ connections on a fixed bundle over this four-torus.
The original field $U_a(A)$ consists of the actual edge holonomies of
$A$. For sufficiently small $a$, all these fields lie in the fixed
admissible domain. We evaluate the previously constructed current on
$U_a(A)$; we do not replace its reconstructed connection by $A$.

Write $\boldsymbol\vartheta=(\vartheta^1,\ldots,\vartheta^d)$ for the
finitely many commuting parameters of the background family. The
continuum estimates use derivatives with respect to these parameters,
rather than the original-link derivatives $D_I$. Each derivative order is fixed before taking $a\to0$. The
constants below may depend on the smooth family and these orders, but
are uniform within the family. For a Standard Model background, use
$\phi_q$ to form its associated $\SU(5)$ connection and pull the current
back along the corresponding link map. The same estimates then apply.

The current $j_\mu^b(x;U)$ defined in the main text multiplies a
dimensionless link variation. A physical current density instead enters
a spacetime sum with volume element $a^4$. To relate the two, use
L\"uscher's convention $\delta U=a\eta_{\rm L}U$. Thus the
dimensionless generator used in this paper is
\begin{equation}
 \eta_{{\rm edge},\mu}(x)
   =\delta U(x,\mu)U(x,\mu)^{-1}
   =a\,\eta_{{\rm L},\mu}(ax).
\end{equation}
Here $\eta_{\rm L}$ is the same variation expressed in physical units;
this equality defines it at the lattice sites. We use $b$ for the
Lie-algebra index in this appendix to distinguish it from the spacing
$a$. Matching the two expressions for the measure term gives
\begin{equation}
 j_{{\rm L},\mu}^b(ax;A)
     =a^{-3}j_\mu^b(x;U_a(A)),\qquad
 \mathfrak L_{\eta_{\rm edge}}
     =\sum_{x,\mu,b}\eta_{{\rm edge},\mu}^b(x)j_\mu^b(x)
     =a^4\sum_{x,\mu,b}\eta_{{\rm L},\mu}^b(ax)
                                      j_{{\rm L},\mu}^b(ax).
\end{equation}
One power of $a$ comes from the link variation and three from the
current. A dimensionless current of order $a^p$ therefore gives a
physical current of order $a^{p-3}$. We will show that the physical
current is $O(a^2)$ for the complete polynomial contributions of
appendix~\ref{poly:section}, and $O(a^6)$ for their exact sum. The
stronger result uses the additional cancellation established next.

\subsection{Vanishing derivatives at the vacuum}
\label{app:a:vacuum-jets}

At the vacuum $U(x,\mu)=1$, the exact current and its first three
original-link derivatives vanish. Appendix~\ref{app:a:locality}
proved the underlying color cancellation while treating the endpoint
links as independent variables. We must check that it still holds when
these links and their variations are computed from the original field.

Consider one finite contribution $f^I_{y,M,n,s}$ to the current in
\eqref{app:a:current-components}. The index $I$ specifies the original
link variation, $y$ the trace position, $M$ the total sign-series order,
$n$ the comparison level, and $s$ the comparison parameter. At the
vacuum, every coarse and fine endpoint link is the identity. Temporarily
use their independent logarithms as background variables and keep the
induced variation fixed. Each term is linear in one prescribed
Lie-algebra insertion. The color-trace cancellation from
\eqref{gen:ordered-traces} sets to zero every background derivative of
orders zero through three, for every value of that insertion: at most
four Lie-algebra factors occur in its color trace.

Now substitute the actual endpoint links and their induced generators
$\Xi^{I,q}_{n,\nu}(y)$ from \eqref{app:a:receiving-source}, where
$q={\rm c},{\rm f}$ labels the coarse and fine endpoints. Apply an
original-link derivative. It can differentiate an endpoint background,
or it can differentiate the induced generator, including its dependence
on the transported bases. In the second case there is still just one
Lie-algebra insertion, with a different numerical value. After at most
three such derivatives, every chain-rule term therefore contains at
most three background derivatives and one insertion. The same color
cancellation sets each term to zero. This includes the first comparison
$B_0$, since both $U$ and $F_{1/3}U$ are identities at the vacuum.

The local cutoffs equal one near the vacuum. The bounds
\eqref{app:a:scale-bounds} allow these derivatives to pass through the
sums over polynomial orders, trace positions, and refinement levels.
If $\boldsymbol K=(K_1,\ldots,K_\ell)$ is any ordered list of
original-link derivative indices, with
$D^{\boldsymbol K}=D_{K_1}\cdots D_{K_\ell}$ and
$|\boldsymbol K|=\ell$, the result is
\begin{equation}
 \left.D^{\boldsymbol K}j_\mu^b(x;U)\right|_{U=1}=0
 \qquad(0\le|\boldsymbol K|\le3).
\end{equation}
This is a statement about finitely many derivatives of the exact
current. It uses smoothness, not analyticity in the original links.

\subsection{Continuum bounds for polynomial contributions}

Take one polynomial contribution $j_\mu^{(r,k)}(x;U)$ from
appendix~\ref{poly:section}. Here $r$ is its input radius in original
lattice units and $k$ labels the initial average or a correction between
successive averages. We keep its coefficients at all output sites and
directions together, with the numerical coefficients defined by their
exact integrals. This complete collection obeys the gauge and axial
transformation laws. An isolated monomial need not do so.

We will expand this contribution on the smooth field $U_a(A)$ near an
output link and use its symmetries to eliminate the lowest powers of
$a$. First express the current at the midpoint of that link. The
current matrix $j_\mu(x)$ transforms in the color space at $ax$,
whereas reversing the link exchanges its endpoints. Transporting both
orientations to the midpoint lets us compare them in the same color
frame. For $e=(x,x+\hat\mu)$, define
$X_{x,\mu}=a(x+\tfrac12\hat\mu)$ and
$t_\mu(x;A)=\operatorname{PT}_A(X_{x,\mu}\leftarrow ax)$, the
parallel transport of the physical connection along the half edge. Set
\begin{equation}
 \widehat j_\mu^{(r,k)}(x;A)
   =t_\mu(x;A)\,j_\mu^{(r,k)}(x;U_a(A))\,t_\mu(x;A)^{-1}.
\end{equation}
Its scalar components are
$\widehat j_\mu^{b;(r,k)}=-\tr_{\mathbf5}(T^b\widehat
j_\mu^{(r,k)})$. The corresponding physical current is
$a^{-3}\widehat j_\mu^{(r,k)}$. The midpoint transport only changes
the color frame; it does not change the current or its norm.

Choose radial gauge for $A$ about this midpoint: identify nearby color
frames by parallel transport along straight radial segments, keeping the
midpoint frame fixed as the background parameters vary. Let $F$ denote
the curvature of $A$ and $\nabla^A$ its spacetime covariant derivative.
In a local frame,
$\nabla_\nu^A F_{\rho\sigma}
=\partial_\nu F_{\rho\sigma}+[A_\nu,F_{\rho\sigma}]$.
These are derivatives of physical fields, distinct from $D_I$ and from
the overlap operator $D$.

Radial gauge expresses the potential near the midpoint as curvature
integrated along radial segments. The potential vanishes at the
midpoint and grows linearly with distance there. At a fixed lattice
displacement it is consequently $O(a)$, and integration along an edge
of length $a$ gives a link logarithm beginning at $O(a^2)$.
For an input edge $f$, with fixed orientation and displacement in
original lattice units, this gives
\begin{equation}
 \log U_f=a^2L_f^{(2)}(F)
              +a^3L_f^{(3)}(\nabla^A F)+\cdots .
 \label{phys:true-input}
\end{equation}
The maps $L_f^{(2)}$ and $L_f^{(3)}$ are obtained by integrating the
radial expansion along that edge. They are linear in $F$ and
$\nabla^A F$, respectively, evaluated at the output midpoint. Work
inside a fixed physical ball where this radial gauge is defined. At
any fixed expansion order, the coefficients, the remainder, and their
fixed background-parameter derivatives grow at most polynomially with
the lattice distance of $f$ from the output.

The smooth coefficient extensions allow a Taylor expansion in the
independent input-link logarithms. Substituting \eqref{phys:true-input}
then expresses each power of $a$ in terms of curvature and its
covariant derivatives at the midpoint. Exponential locality controls
the sums over input links, including the powers of their relative
coordinates. For now retain only contributions whose enlarged input
supports lie inside the physical ball, with no periodic identifications.
Appendix~\ref{poly:section} assigns the same numerical coefficients
whenever the full enlarged support embeds in the torus. We may thus use
all axis reflections and permutations of the covering-lattice rule,
even when they do not preserve the periods of the chosen torus. We
bound the omitted supports below.

Consider the dimensionless current through order $a^4$. Its constant
term vanishes by gauge covariance: an adjoint-valued current at the
vacuum would have to be unchanged by every global $\SU(5)$ rotation,
and no nonzero such vector exists. There is no term of order $a$
because each input logarithm starts at order $a^2$.

Full inversion of the four coordinates about the midpoint excludes
every even power of $a$. To see this, assign weight two to a curvature
factor and weight one to each covariant derivative. A coefficient of
$a^p$ in the radial expansion has total weight $p$. Curvature is even
under inversion, while each derivative changes sign, so this
coefficient has parity $(-1)^p$. The axial current is odd under full
inversion: this transformation has determinant $+1$ but reverses its
vector index. Thus even coefficients cannot occur. In particular,
orders two and four vanish. The transports along both halves of the
output edge are identities in radial gauge, so reversing that edge
adds no color rotation to this argument.

Only order three remains. It is linear in
$\nabla_\nu^A F_{\rho\sigma}$, with the four spacetime indices
$\mu,\nu,\rho,\sigma$, where $\mu$ is the current direction. Its
coefficient must transform as a pseudotensor. Reflection of any one
axis requires that axis to occur an odd number of times among these
four indices. Every axis must therefore occur exactly once. Axis
permutations then fix the coefficient to be proportional to
$\epsilon_{\mu\nu\rho\sigma}$, with $\epsilon_{1234}=1$.

The color part is equally constrained. Gauge covariance requires its
linear map on $\mathfrak{su}(5)$ to commute with the adjoint action.
Such a map is a real scalar multiple of the identity. One way to see
this is to extend it complex-linearly to $\mathfrak{sl}_5(\mathbb C)$:
the adjoint representation is irreducible because this Lie algebra is
simple, and Schur's lemma makes the extension scalar. Preservation of
$\mathfrak{su}(5)$ makes the scalar real. The only possible
order-three term is therefore
\begin{equation}
 c\,\epsilon_{\mu\nu\rho\sigma}\nabla_\nu^A F_{\rho\sigma}=0,
\end{equation}
where $c$ is a real coefficient and the equality follows from the
covariant Bianchi identity. The absence of a gauge-covariant axial current of dimension three in the classical continuum analysis was already noted by L\"uscher~\cite[sec.~8]{LuscherNonabelian}. All powers through $a^4$ vanish. Hence
$\widehat j_\mu^{(r,k)}=O(a^5)$, or $O(a^2)$ in physical
normalization. This bound does not require the order-five coefficient
to be nonzero.

We also need the continuum bound for spacetime variations of the
current. Subtracting matrices at neighboring sites is not gauge
covariant unless we first bring them to a common color frame. For any
midpoint current $J_\mu(x)$, define the normalized physical difference
\begin{equation}
 (\Delta_\nu^{\rm cov}J)_\mu(x)
  =\frac{\Ad_{\tau_{\mu\nu}(x;A)}
                     J_\mu(x+\hat\nu)-J_\mu(x)}{a},
 \qquad
 \tau_{\mu\nu}(x;A)
   =\operatorname{PT}_A(X_{x,\mu}\leftarrow X_{x+\hat\nu,\mu}).
\end{equation}
Here $\Ad_\tau J=\tau J\tau^{-1}$, and the transport follows the
short straight segment between the midpoints. The numerator compares
the two currents in the color space at $X_{x,\mu}$; the denominator
is their physical separation. Write
$(\Delta_{\rm phys}^{\rm cov})^{\ell_x}$ for any ordered sequence
of $\ell_x$ such differences and
$\partial_{\boldsymbol\vartheta}^{\ell_{\rm src}}$ for any ordered
sequence of $\ell_{\rm src}$ background-parameter derivatives.
Both orders are fixed nonnegative integers.

For the explicit terms of the local expansion, a normalized difference
is bounded by a physical derivative of the smooth coefficient. It
therefore preserves the power of $a$. We need not assume this for the
remainder. Bound each difference of the remainder by a factor $a^{-1}$
and expand through a sufficiently high power $a^N$: with
$N\ge\ell_x+4$, an $O(a^{N+1})$ remainder becomes
$O(a^{N-\ell_x-2})=O(a^2)$ after the differences and the factor
$a^{-3}$ for physical normalization. Smooth-family bounds control
any fixed number of parameter derivatives in the same expansion.

For one physical difference, for example, write the dimensionless current
as $a^5 C_5+O(a^6)$, where $C_5$ is a smooth coefficient field.
The difference quotient of $C_5$ remains bounded, while that of the
remainder is at worst $O(a^5)$. Both contributions are therefore
$O(a^2)$ after physical normalization.

The polynomial factors in relative lattice coordinates are absorbed
by reducing the exponential locality rate once, to
$\lambda_{\rm phys}=\lambda_{\rm pol}/2>0$. The resulting estimate is
\begin{equation}
 \sup_{x,\mu}
 \left\|(\Delta_{\rm phys}^{\rm cov})^{\ell_x}
       \partial_{\boldsymbol\vartheta}^{\ell_{\rm src}}
       a^{-3}\widehat j_\mu^{(r,k)}(x;A)\right\|
 \leq C_{\ell_x,\ell_{\rm src}}a^2
           e^{-\lambda_{\rm phys}r}d_k,
 \quad d_{-1}=1,\quad d_k=2^{-k}\ (k\geq0).
 \label{phys:block-bound}
\end{equation}
The norm is the fixed matrix operator norm. The factor $d_k$ records
the initial average at $k=-1$ and the geometrically decreasing
corrections at $k\ge0$. Constants are uniform in the chosen smooth
family; the radius $r$ is still in original lattice units.

The inputs omitted from the physical ball obey the same estimate. Their
radius satisfies $r\ge c/a$ for a fixed $c>0$, so their exponential
locality bound is smaller than every fixed power of $a$. A fixed
portion of that decay absorbs both the difference quotients and the
physical normalization. Since the physical periods remain positive,
this also controls supports that wind around the torus and their
dependence on nontrivial flat holonomies. No global radial gauge or
expansion about a globally trivial connection is required.

Let $\widehat j_\mu(x;A)$ be the exact current in the midpoint frame,
and let $\widehat j_\mu^{[r_{\max},k_{\max}]}(x;A)$ denote the
finite sum defined in appendix~\ref{poly:section}, transported to the
same frame. Summing \eqref{phys:block-bound} over the omitted radii
and averaging corrections gives
\begin{equation}
 \sup_{x,\mu}
 \left\|(\Delta_{\rm phys}^{\rm cov})^{\ell_x}
       \partial_{\boldsymbol\vartheta}^{\ell_{\rm src}}
       a^{-3}\bigl(\widehat j_\mu-\widehat j_\mu^{[r_{\max},k_{\max}]}\bigr)
                              (x;A)\right\|
 \leq C_{\ell_x,\ell_{\rm src}}a^2
       \left(e^{-\lambda_{\rm phys}r_{\max}}+2^{-k_{\max}}\right).
 \label{phys:prefix-bound}
\end{equation}
The rates $\lambda_{\rm pol}$ and $\lambda_{\rm phys}$ are
independent of the fixed derivative orders. Thus polynomial truncation
errors retain the same $a^2$ suppression on smooth physical
backgrounds. The estimate concerns complete polynomial contributions
with their exact numerical coefficients, so that their gauge and
lattice transformation laws hold.

\subsection{The exact current and the measure response}

The exact current has more cancellation than an individual polynomial
contribution. Its first three original-link derivatives vanish at the
vacuum, as shown in subsection~\ref{app:a:vacuum-jets}. Its Taylor
expansion in link logarithms therefore begins at degree four. To use
this fact in the local radial expansion, we must account for the
supports outside the physical ball: the cancellation holds for the
full current, not necessarily for each term retained inside that ball.

For each Taylor coefficient of degree below four, the sum of retained
and omitted contributions is zero. The retained coefficient is
therefore minus the omitted tail. Every omitted support has radius
$r\ge c/a$, and the differentiated locality bounds make this tail
exponentially small in $1/a$. They also control the coordinate factors
from the radial expansion and the required fixed derivatives. Thus the
local expansion inherits the exact current's Taylor zeros up to a
remainder smaller than every fixed power of $a$.

Each link logarithm begins at $a^2$, so degree four first permits
$a^8$ in the dimensionless current. This coefficient is quartic in
$F$. Full inversion excludes it by the axial transformation law, just
as for the even powers above. The exact current is consequently
$\widehat j_\mu=O(a^9)$, giving the stronger physical bound
$a^{-3}\widehat j_\mu=O(a^6)$.

For $\ell_x$ physical differences, expand through power $a^N$
with $N\ge\ell_x+8$. The remainder then stays $O(a^6)$ after the
possible $a^{-\ell_x}$ loss and physical normalization. Fixed
background-parameter derivatives are controlled by the same
smooth-family bounds. By contrast, the averaging in
appendix~\ref{poly:section} samples a local coefficient at other input
matrices. A fixed $(r,k)$ contribution need not retain the full
current's vanishing derivatives at the vacuum. Its bound remains
$O(a^2)$ as in \eqref{phys:block-bound}.

To obtain the measure term, pair the current with the link variation
and sum over links. A smooth physical variation gives
$\eta_{{\rm edge},\mu}=O(a)$, while fixed physical volume contains
$O(a^{-4})$ links. Together with the dimensionless current bounds,
this gives an $O(a^2)$ response for finite complete polynomial sums
and an $O(a^6)$ response for the exact current. Where the fermion
operator is invertible, the exact response enters
$\delta_\eta(-\log Z_F)$ as $+\ii\mathfrak L_\eta$ in
\eqref{eq:determinant-variation}. There is thus no surviving measure
contribution to this variation on the stated smooth families as
$a\to0$. This is a classical background-field estimate.

The bounds survive fixed background-parameter derivatives and
integration along a fixed parameter interval of a smooth path of
connections, provided the variations and required derivatives remain
uniformly bounded. A changing choice of gauge frame adds a gauge
variation $X_\omega$. Its link generator can be spatially rough, so
it need not satisfy the $O(a)$ bound used above. Instead, sum the
pairing by parts to put the backward difference on the current:
\begin{equation}
 \sum_\mu(\nabla_\mu^*j_\mu)(x)
 =\sum_\mu\left[j_\mu(x)
    -\Ad_{U(x-\hat\mu,\mu)^{-1}}j_\mu(x-\hat\mu)\right].
\end{equation}
This is the dimensionless backward difference of
\eqref{eq:backward-difference}. Insert the half-edge transports that
relate $j_\mu$ to its midpoint current. Along direction $\mu$, these
transports lie on the same straight line and compose exactly: the
expression in brackets is $a$ times the midpoint covariant difference
at $x-\hat\mu$, transported to the color frame at $ax$. Thus the
unscaled backward difference supplies one additional power of $a$
relative to the normalized physical difference. Its size is $O(a^6)$
for a complete polynomial contribution and $O(a^{10})$ for the exact current. Pairing with
$\omega$ and summing over the $O(a^{-4})$ sites gives
$Ca^2\|\omega\|_\infty$ and $Ca^6\|\omega\|_\infty$,
respectively. Here $\|\omega\|_\infty$ is the maximum matrix norm
over sites. No spatial derivative of $\omega$ is needed.

Now allow the gauge frame to depend on both the path
parameter $t$ and the background parameters
$\boldsymbol\vartheta$. At each site write the change of frame as
$g(t,\boldsymbol\vartheta)$. Its generator along the path is
$\omega=(\partial_tg)g^{-1}$; along a parameter
$\vartheta^\alpha$ it is
$(\partial_{\vartheta^\alpha}g)g^{-1}$. Differentiating a covariant
current in this moving frame adds a commutator with the latter
generator. To retain a bound after $\ell_{\rm src}$ parameter
derivatives, it suffices to bound these generators and their parameter
derivatives through that order, uniformly over sites, the path, and the
background family as $a\to0$. Spatial derivatives of $g$ are not
required.

\subsection{Polynomial approximation of fixed-cutoff amplitudes}

We next keep the original lattice fixed and increase the polynomial
cutoffs $r_{\max},k_{\max}$. The physical spacing, the number of
sites, and the fermion operator $D[U]$ are unchanged. We ask first
how accurately the finite current determines the measure at a given
admissible field $U$, and then how this error affects fermion
amplitudes. Throughout the comparison, use the same base field and
normalized measure $s_0$ in each connected component as in
\eqref{eq:section}.

The infinite polynomial series already gives the exact measure.
Equation~\eqref{poly:identity} identifies its sum with
$\mathfrak L$, so inserting that sum in \eqref{eq:section}
leaves $s(U)$ unchanged. In particular, the phase obtained by
integrating the current still cancels the projector phase around
every closed loop. This equality holds along any admissible piecewise
smooth path, including paths through spatially rough fields. It
requires no path-length bound uniform as $a\to0$.

A finite sum need not cancel the projector phase exactly. Two paths
from the base field to the same endpoint can therefore give different
approximate measures. To define one approximation, choose a measurable
assignment $U\mapsto\gamma_U$ in each component. Each path joins
the fixed base field to $U$, remains admissible, is piecewise smooth,
and has finite length. Keep this assignment fixed as
$r_{\max},k_{\max}$ increase. Transport $s_0$ along the chosen
path using the original chiral projectors, and replace only the
scalar measure correction by the finite polynomial sum. For
$\gamma=\gamma_U$, the result is
\begin{equation}
\begin{aligned}
 s_\gamma^{[r_{\max},k_{\max}]}(U)
 &=\exp\left(-\ii\int_\gamma
                   \mathfrak L^{[r_{\max},k_{\max}]}\right)
       \operatorname{PT}_{\nabla^{\det}}(\gamma)s_0\\
 &=\exp\left(\ii\int_\gamma
         [\mathfrak L-\mathfrak L^{[r_{\max},k_{\max}]}]\right)
       s(U).
\end{aligned}
\label{phys:approximate-phase}
\end{equation}
The second line follows by using the same path to express the exact
measure. It shows that the exact and approximate measures belong to
the same determinant line at $U$ and differ only by a phase.
The phase has modulus one because both currents are real. A finite
approximation is therefore normalized, but it need not depend smoothly
on $U$ or satisfy the gauge-invariance condition $s(U^g)=g\,s(U)$.

The current error accumulates along the path and over the lattice
links. On a lattice with $V=\prod_\mu L_\mu$ sites, measure
the length of $\gamma$ by
\begin{equation}
 \ell_\infty(\gamma)
   =\int_0^1\max_{x,\mu}
        \|\partial_tU_t(x,\mu)U_t(x,\mu)^{-1}\|\,\dd t .
\end{equation}
The integrand is the largest dimensionless link generator along the
path. Equation~\eqref{poly:prefix-error}, with
$\lambda'=0,\ell=0$, bounds the error in each current
component uniformly on the admissible domain. Pairing with the
path generator and summing the $4V$ links therefore bounds the
phase error by the current error times $CV\ell_\infty(\gamma)$.
For a real phase difference $\theta$,
$\lvert e^{\ii\theta}-1\rvert\leq|\theta|$, so
\begin{equation}
 \|s_\gamma^{[r_{\max},k_{\max}]}(U)-s(U)\|
 \leq CV\ell_\infty(\gamma)
          \left(e^{-\lambda_{\rm pol}r_{\max}}+2^{-k_{\max}}\right).
 \label{phys:section-error}
\end{equation}
At fixed lattice and endpoint, the chosen path has finite length,
and the right-hand side tends to zero as both polynomial cutoffs
increase.

If the polynomial coefficients are evaluated numerically, their
errors must first be combined into a uniform error bound for each
retained current component, as described in
appendix~\ref{poly:section}. This additional current error adds
inside the parentheses in \eqref{phys:section-error}. Keep the
current real so that the approximate measure retains unit norm.
In a joint limit with $a\to0$, the section error tends to zero
if the entire right-hand side does, including
the growth in $V$ and the path length. If the paths themselves
vary with the cutoffs, use their actual lengths in this bound.
Figure~\ref{fig:distinct-limits} distinguishes this approximation
from the auxiliary refinement and classical continuum limits.

\begin{figure}[htbp]
  \centering
  \includegraphics[width=\linewidth]{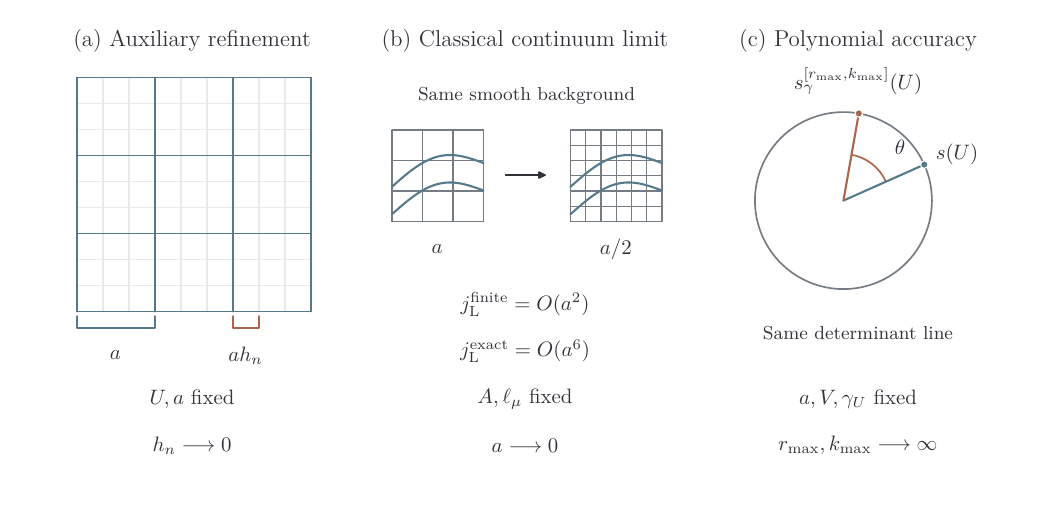}
  \caption{Three distinct limits. (a) Auxiliary refinement decreases $ah_n$, with $h_n=3^{-n}$, at fixed original links $U$ and spacing $a$. (b) The classical continuum limit decreases $a$ at fixed smooth connection $A$ and physical periods $\ell_\mu$. The displayed powers bound the physical current $j_{\rm L}$: finite complete polynomial sums use exact coefficients, and ``exact'' denotes the full current. (c) Polynomial approximation increases $r_{\max},k_{\max}$ at fixed lattice size $V$ and prescribed path $\gamma_U$. The circle shows the relative phase $\theta$ between the exact measure $s(U)$ and its finite approximation $s_\gamma^{[r_{\max},k_{\max}]}(U)$. All drawings are schematic.}
  \label{fig:distinct-limits}
\end{figure}

To obtain fermion amplitudes, couple independent external Grassmann
sources linearly to the fermion fields. These sources anticommute;
the background parameters used earlier in this appendix commute.
In the reference bases of section~\ref{sec:fermion-measure}, the
action is $\bar c M c$, and the source terms are linear in the
finite set of integration coordinates $c_k,\bar c_l$. Expand
the action and source exponentials before integrating. The expansions
terminate because every Grassmann coordinate squares to zero, and
Berezin integration selects the terms containing every integration
coordinate exactly once. Each coefficient of an external-source
monomial is consequently a finite sum of products of entries of
$M$ and reference basis vectors, multiplied by the measure phase.

For example, select source factors that leave $p$ coordinates
of each kind to be supplied by the action exponential. The remaining
integral is a $p\times p$ minor of $M$, with sign fixed by
the Grassmann ordering. A minor remains defined when $M$ is
singular or rectangular. When the fermion and antifermion ranks
differ, the source factors must supply the unmatched coordinates;
they can also supply the coordinates of fermion zero modes. This
finite polynomial description therefore includes zero-mode
amplitudes without introducing an inverse Dirac operator. For instance,
with two coordinates of each kind and $M=\operatorname{diag}(0,m)$,
where $m\ne0$, sources supplying $c_1,\bar c_1$ leave the second
pair to the action. Their coefficient is $m$, up to the fixed
Grassmann-ordering sign and measure phase, even though $\det M=0$.

Changing the measure while keeping the operator and insertions fixed
multiplies every source coefficient by the same phase in
\eqref{phys:approximate-phase}. The section bound thus gives
pointwise convergence of every coefficient at fixed lattice cutoff.
To integrate this convergence over gauge fields, we also need a
bound independent of $r_{\max},k_{\max}$. The Dirac operator
is uniformly bounded on the fixed lattice, and the reference bases
are orthonormal. The finite sums of their entries just described
therefore bound each source coefficient uniformly over admissible
fields. Replacing the measure changes only its unit-modulus phase,
so the same bound applies to every approximation.

With a fixed absolutely integrable gauge weight, the same component
normalizations in both measures, and bounded insertions, this common
bound permits dominated convergence for each unnormalized
gauge-integrated coefficient. Finite length of each selected path
is enough for pointwise convergence. The dominating bound comes
from unit norm, so the path lengths need not be bounded uniformly
over endpoints. If the exact gauge-integrated partition function
$Z$ on this fixed lattice is nonzero, the approximating partition
functions also become nonzero and division by them gives convergence
of normalized observables.

This amplitude comparison keeps the fermion operator, observable,
gauge weight, and component phases fixed. Equality of the comparison
and polynomial representations supplies no additional ultraviolet
bound for observables built from the auxiliary reconstructed
connection, and does not compare measures constructed with different
reconstruction weights.

\subsection{Other representations}

The same arguments apply to the $\SU(N_{\rm c})$ theories
covered by section~\ref{sec:other-representations}. Their adjoint
representation has no invariant vector, eliminating a constant
current, and its invariant linear maps are scalar. Together with
the same axial transformation law, this leaves
$\epsilon_{\mu\nu\rho\sigma}\nabla_\nu^A F_{\rho\sigma}$
as the only possible term through order $a^4$ in the
dimensionless polynomial current. It vanishes by the Bianchi
identity, giving the $O(a^2)$ physical current and measure
response for complete polynomial contributions.

For the exact current, the same identity
\eqref{gen:ordered-traces} applies. With one prescribed
Lie-algebra insertion, it gives the same vanishing original-link
derivatives through third order at the vacuum. The same radial expansion
and inversion argument then give the stronger $O(a^6)$
physical current and measure response. These estimates retain the
smooth-family and fixed-derivative conditions stated above. The
fixed-cutoff amplitude argument uses the finite Grassmann integral
and the current error bound. It applies under the same conditions on
prescribed paths, gauge weight, insertions, and component normalizations.

\bibliographystyle{JHEP}
\bibliography{lcf}
\end{document}